\newtheorem{theorem}{Theorem}
\newtheorem{corollary}{Corollary}
\newtheorem{conjecture}{Conjecture}
\newtheorem{observation}{Observation}
\newtheorem{lemma}{Lemma}
\newtheorem{claim}{Claim}
\newtheorem{proposition}{Proposition}
\theoremstyle{definition}
\newtheorem{definition}{Definition}
\theoremstyle{plain}
\newtheorem{property}{Property}
\newenvironment{claimproof}[1][\proofname]{%
  \begin{proof}[#1]%
}{%
  \end{proof}%
}
  \DeclareSymbolFont{stix@largesymbols}{LS2}{stixex}{m}{n}
  \DeclareMathDelimiter{\lBrace}{\mathopen} {stix@largesymbols}{"E8}%
                                            {stix@largesymbols}{"0E}
  \DeclareMathDelimiter{\rBrace}{\mathclose}{stix@largesymbols}{"E9}%
                                            {stix@largesymbols}{"0F}
\def\phi{\varphi}
\newcommand{\N}{\mathbb{N}}
\newcommand{\R}{\mathbb{R}}
\newcommand{\Z}{\mathbb{Z}}
\def\CC{\mathscr{C}}
\newcommand{\UU}{\mathcal{U}}
\def\tp{\textnormal{tp}}
\def\ctp{\overline{\textnormal{tp}}}
\newcommand{\local}{\textnormal{local}}
\newcommand{\HH}{\mathcal{H}}
\newcommand{\flip}[1]{\mathsf{#1}}
\newcommand{\II}{\mathcal{I}}
\newcommand{\XX}{\mathcal{X}}
\newcommand{\Xx}{\mathcal{X}}
\newcommand{\Cc}{\mathscr{C}}
\newcommand{\Ee}{\mathscr{E}}
\newcommand{\IFF}{\quad \Longleftrightarrow \quad}
\newcommand{\Oof}{\mathcal{O}}
\newcommand\wreach{{\rm WReach}}
\newcommand\wcol{{\rm wcol}}
\newcommand\adm{{\rm adm}}
\newcommand{\free}{\mathrm{free}}
\newcommand{\RR}{\mathcal R}
\newcommand{\YY}{\mathcal Y}
\newcommand{\ccenter}[1]{\mathrm{center}(#1)}
\newcommand{\cluster}[1]{\mathrm{cluster}(#1)}
\newcommand{\induced}[1]{\left[ #1 \right]}
\newcommand{\recolor}[1]{\langle #1 \rangle}
\newcommand{\leaves}[1]{L^{\mathsmaller #1}}
\newcommand{\from}{\colon}
\newcommand{\Aa}{\mathcal A}
\newcommand{\Dd}{\mathscr D}
\def\N{\mathbb N} 
\def\epsilon{\varepsilon}
\def\eps{\varepsilon}
\newcommand{\contract}[1]{\llbracket #1 \rrbracket}
\newcommand{\scontract}[1]{\lBrace #1 \rBrace}
\newcommand{\cp}{\mathrm{cp}}
\newcommand{\IN}{\mathrm{In}}
\newcommand{\OUT}{\mathrm{Out}}
\newcommand{\PP}{\mathcal P}
\newcommand{\Chi}{\mathcal{X}}
\newcommand{\cpuc}{\cp(B^*)}
\renewcommand{\emptyset}{\varnothing}
\newcommand{\niko}[1]{\todo[color=green!40]{Niko: #1}}
\newcommand{\strat}[1]{\mathsf{#1}}
\newcommand{\flipstar}{\strat{flip}^\star}
\newcommand{\stratcon}{\strat{con}}
\newcommand{\stratflip}{\strat{flip}}
\def\xInd#1#2{#1\setbox0=\hbox{$#1x$}\kern\wd0\hbox to 0pt{\hss$#1\mid$\hss}
	\lower.9\ht0\hbox to 0pt{\hss$#1\smile$\hss}\kern\wd0}
\def\xnotind#1#2{#1\setbox0=\hbox{$#1x$}\kern\wd0
	\hbox to 0pt{\mathchardef\nn=12854\hss$#1\nn$\kern1.4\wd0\hss}
	\hbox to 0pt{\hss$#1\mid$\hss}\lower.9\ht0 \hbox to 0pt{\hss$#1\smile$\hss}\kern\wd0}
\newcommand{\pmax}{\operatornamewithlimits{\vphantom{p}max}}
\begin{document}
\title{First-Order Model Checking on \\Structurally Sparse Graph Classes}
\date{}
\author{
  Jan Dreier
  \\
  \small{TU Wien}
  \\
  \small{\texttt{dreier@ac.tuwien.ac.at}}
  \and
  Nikolas M\"ahlmann
  \\
  \small{University of Bremen}
  \\
  \small{\texttt{maehlmann@uni-bremen.de}}
  \and
  Sebastian Siebertz
  \\
  \small{University of Bremen}
  \\
  \small{\texttt{siebertz@uni-bremen.de}}
}

\maketitle

\newcommand{\mctime}{\textsc{Time}}
\newcommand{\radius}{\sigma}
\newcommand{\dist}{\textnormal{dist}}

\newcommand{\anear}{\bar a_\textnormal{near}}
\newcommand{\afar}{\bar a_\textnormal{far}}
\newcommand{\bnear}{\bar b_\textnormal{near}}
\newcommand{\bfar}{\bar b_\textnormal{far}}

\newcommand{\mainRadius}{(6q\radius(2^q)+1)(2^q+1)}
\newcommand{\gameDepth}{\textnormal{game-depth}}
\newcommand{\stupidExponent}{{9.8}}
\newcommand{\finalStupidExponent}{{11}}
\newcommand{\attribution}[1]{\textbf{\textcolor{blue}{#1}}\niko{TODO: remove attribution environment, once it is Szymon approved}}

\pagenumbering{gobble}
\begin{abstract}
    A class of graphs is structurally nowhere dense if it can be constructed from a nowhere dense class by a first-order transduction. Structurally nowhere dense classes vastly generalize nowhere dense classes and constitute important examples of monadically stable classes. 
    We show that the first-order model checking problem is fixed-parameter tractable on every structurally nowhere dense class of graphs. 

    Our result builds on a recently developed game-theoretic characterization of monadically stable graph classes. 
    As a second key ingredient of independent interest, we provide a polynomial-time algorithm for approximating weak neighborhood covers (on general graphs). 
    We combine the two tools into a recursive locality-based model checking algorithm.
    This algorithm is efficient on every monadically stable graph class admitting flip-closed sparse weak neighborhood covers, where flip-closure is a mild additional assumption. 
    Thereby, establishing efficient first-order model checking on monadically stable classes is reduced to proving the existence of flip-closed sparse weak neighborhood covers on these classes -- a purely combinatorial problem. We complete the picture by proving the existence of the desired covers for structurally nowhere dense classes: 
    we show that every structurally nowhere dense class can be sparsified by contracting local sets of vertices, enabling us to lift the existence of covers from sparse classes.
\end{abstract}

\newpage
\setcounter{tocdepth}{2}
\tableofcontents

\pagebreak
\pagenumbering{arabic}
\setcounter{page}{1}

\section{Introduction}

Logic provides a versatile and elegant formalism for describing algorithmic problems. 
For example, the $k$-colorability problem (for every fixed $k$) and Hamiltonian path problem can be formulated in monadic second-order logic (MSO), while the $k$-independent set problem, $k$-dominating set problem, and many more, can be formulated in first-order logic (FO). 
Therefore, the tractability of the model checking problem for a logic, that is, the problem of deciding for a given structure and formula whether the formula is true in the structure, implies tractability for a whole class of problems, namely for all problems definable in that logic. 
For this reason, model checking results for logics are often called \emph{algorithmic meta-theorems}. 
Due to the rich structure theory that graph theory offers, many algorithmic meta-theorems are formulated for graphs, but most of them can easily be extended to general relational structures. In the following, we will hence restrict our discussion to (colored) graphs and graph classes.
Probably the best-known algorithmic meta-theorem is the celebrated theorem of Courcelle, stating that every graph property definable in MSO can be decided in linear time on every class of graphs with bounded treewidth~\cite{courcelle1990monadic}.
Due to their generality and wide applicability, algorithmic meta-theorems have received significant attention in contemporary research. We refer to the surveys~\cite{grohe2008logic,grohe2011methods,kreutzer2011algorithmic}. 


Every fixed first-order formula with quantifier rank $q$ can be tested on $n$-vertex graphs in time~$n^{\Oof(q)}$ by a straight-forward recursive algorithm. In general this running time cannot be expected to be improved, as, for example, the $k$-independent set problem cannot be solved in time $f(k)\cdot n^{o(k)}$ for any function $f$, assuming the exponential time hypothesis (ETH)~\cite{impagliazzo2001complexity,impagliazzo2001problems}, and the $k$-independent set problem is expressible by an FO formula with quantifier rank $k$. This raises the question of classifying graph classes on which we can evaluate the truth of an FO formula~$\phi$ in time $f(\phi)\cdot n^{\Oof(1)}$ for some function $f$. Phrased in terms of parameterized complexity theory, the question is on which graph classes the FO model checking problem is \emph{fixed-parameter tractable}\footnote{When the function $f$ is not computable one also speaks of \emph{non-uniform fixed-parameter tractability}. We will comment on the uniformity of our result in \Cref{sec:computability-f}.}. While the classification of tractable MSO model checking is essentially complete~\cite{ganian2014lower,kreutzer2010lower}, we are far from a full classification of tractable FO model checking.

The starting point for such a classification on sparse graph classes is Seese’s result that every FO property of graphs can be decided in linear time on every class of graphs of bounded degree~\cite{seese1996linear}. 
His result was extended to more and more general classes of sparse graphs~\cite{flum2001fixed,frick2001deciding,dawar2007locally,dvovrak2010deciding, kreutzer2011algorithmic,GroheKS17}. 
The last result of Grohe, Kreutzer and Siebertz~\cite{GroheKS17} shows that every FO definable property of graphs is decidable in nearly linear time on every nowhere dense class of graphs. Nowhere dense classes of graphs are very general classes of sparse graphs and turn out to be a tractability barrier for FO model checking on monotone classes (that is, classes that are closed under taking subgraphs): if a monotone class of graphs is not nowhere dense, then testing first-order properties for inputs from this class is as hard as for general graphs~\cite{dvovrak2010deciding, kreutzer2011algorithmic}. 

Hence, the classification of tractable FO model checking on monotone classes is complete, which led to a shift of attention to hereditary graph classes (that is, classes that are closed under taking induced subgraphs), where model checking beyond sparse classes may be possible. 
One important result for classes of dense graphs is the extension of Courcelle’s theorem for MSO
to graphs of bounded cliquewidth~\cite{courcelle2000linear}, which also implies tractability for FO on classes of locally bounded cliquewidth, for example, on map graphs~\cite{eickmeyer2017fo}. 
Positive results were also obtained for some dense graph classes definable by geometric means, for example, for restricted subclasses of interval graphs~\cite{ganian2013fo}, and for restricted subclasses of circular-arc, circle, box, disk, and polygon-visibility graphs~\cite{HlinenyPR17}. 
Many of these results for dense graph classes are generalized by the recent breakthrough showing that FO model checking is fixed-parameter tractable on every class with bounded twin-width~\cite{bonnet2021twin} if a witnessing contraction sequence is given as additional input.
In fact, it turns out that when considering classes of ordered graphs, then classes of bounded twin-width define the tractability border for efficient FO model~checking~\cite{bonnet2022twin}. 

In an attempt to extend the sparsity-based methods also to dense graph classes, researchers considered \emph{structurally sparse} graph classes, which are defined as first-order transductions of sparse graph classes, see~\cite{gajarsky2020first,nevsetvril2016structural}.
Intuitively, a first-order transduction creates a graph $H$ from a graph $G$ on (a subset of) the vertices of~$G$ by coloring the vertices of $G$, replacing the edge relation by a first-order definable edge relation, and finally restricting the vertex set to a definable subset. 
We call a binary relation $R$ definable in a graph $G$ if there is a formula $\eta(x,y)$ such that $(u,v)\in R\Leftrightarrow G\models \eta(u,v)$. Similarly, a set $U$ is definable if there is a formula $\nu(x)$ such that $v\in U\Leftrightarrow G\models \nu(v)$. 
We will formally define transductions in \Cref{sec:prelims}. 
Simple examples of first-order transductions are graph complementations and fixed graph powers.
Another elementary and yet particularly important example of a transduction is that of a \emph{flip}. Given two sets $A,B\subseteq V(G)$ that are each marked with a color, a flip complements the adjacency between vertices of $A$ and~$B$, which is easily definable by an atomic formula. 
We say that a class~$\Dd$ of graphs is \emph{structurally nowhere dense} if there exists a nowhere dense class $\Cc$ such that~$\Dd$ is a first-order transduction of $\Cc$. Note that structurally nowhere dense classes are vastly more general than nowhere dense classes, and in particular include classes of dense graphs. On the other hand, structurally nowhere dense classes are incomparable to classes with bounded twin-width. 

As nowhere dense classes stand on top of the hierarchy of tractable sparse classes, structurally nowhere dense classes arguably form the most general structurally sparse classes for which one could hope to achieve efficient model checking.
Until now, there was no indication on how to solve even special cases like the \textsc{Independent Set} problem, let alone the full FO model checking problem on these classes. 
In this work, we set a milestone in this line of research by proving the following theorem.

\begin{restatable}{theorem}{sndmc}
    \label{thm:sndmc}
    For every structurally nowhere dense class $\CC$ there exists a function $f \colon \N \to \N$
    such that, given a graph $G \in \CC$ and sentence $\phi$, one can decide whether $G \models \phi$ in time
    $f(|\phi|) \cdot |V(G)|^\finalStupidExponent$.
\end{restatable}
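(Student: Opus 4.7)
The plan is to set up a recursive locality-based model checking algorithm whose efficiency hinges on the existence of certain sparse weak neighborhood covers, and then to establish that such covers exist for structurally nowhere dense classes. Since structurally nowhere dense classes are monadically stable, I would first aim for the more general statement: efficient FO model checking on any monadically stable class admitting flip-closed sparse weak neighborhood covers.

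First I would set up the generic algorithm. It takes as input a graph $G$ from a monadically stable class, a sentence $\phi$, and a weak neighborhood cover of $G$ of small radius and bounded total size. Using Gaifman's normal form, $\phi$ reduces to a Boolean combination of sentences about $r$-neighborhoods, so evaluation reduces to checking simpler formulas on each cover piece. Since $G$ may be dense, I would invoke the game-theoretic characterization of monadic stability to find, for each cover piece, a bounded sequence of flips producing a graph whose local structure is well-behaved. After applying these flips, I would recurse on the flipped piece with a formula of lower quantifier rank. For this to run in polynomial time, two ingredients are needed: a polynomial-time approximation of weak neighborhood covers on general graphs, and the guarantee that after flips the pieces still admit covers for the next recursion level -- this is the \emph{flip-closure} property. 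With these in place, the recursion terminates in a number of levels depending only on $|\phi|$, and the total running time is polynomial in $|V(G)|$ with exponent coming from bookkeeping of cover sizes and recursion depth.

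The second and combinatorially harder task is to show that structurally nowhere dense classes admit flip-closed sparse weak neighborhood covers. Let $\Dd$ be a transduction of a nowhere dense class $\Cc$. My strategy is to \emph{sparsify} $G \in \Dd$ by contracting carefully chosen \emph{local} sets of vertices, producing a graph in a nowhere dense class. On the sparsified graph, sparse weak neighborhood covers exist by the structural theory of nowhere dense classes (uniform quasi-wideness, low tree-depth colorings, generalized coloring numbers). These covers can then be ``uncontracted'' to yield covers for $G$; locality of the contractions ensures that cover pieces remain small, and flip-closure should follow because the sparse underlying covers are robust under bounded flips inside $\Cc$.

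I expect the main obstacle to be precisely the sparsification step: identifying a notion of local contraction that is powerful enough to reduce an arbitrary transduct of a nowhere dense class to a nowhere dense graph, yet mild enough that the resulting covers survive the uncontraction and remain closed under the flips introduced by the algorithm. This is a purely combinatorial question about the fine structure of transductions of sparse classes and requires pushing the current structural theory substantially further. Once this is secured, the final exponent $\finalStupidExponent$ in the running time follows from combining the cover approximation guarantee with the depth of the recursion induced by the quantifier rank of $\phi$.
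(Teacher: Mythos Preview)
Your high-level architecture matches the paper's: prove a conditional theorem for monadically stable classes admitting flip-closed sparse weak neighborhood covers, then verify this condition for structurally nowhere dense classes via local contractions that sparsify the graph. The ingredients you name (Flipper game, polynomial-time cover approximation, lifting covers back through contractions) are exactly the ones the paper uses.

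There is, however, a real gap in the first half. You propose to localize via Gaifman's normal form and then ``recurse on the flipped piece with a formula of lower quantifier rank.'' These two aims are in tension: Gaifman localization \emph{increases} the quantifier rank (from $q$ to some $g(q)$), and hence the locality radius. In the Flipper-game recursion the radius must stay bounded by a fixed $\rho$ depending only on $\phi$ and $\CC$; otherwise the game need not terminate in a bounded number of rounds. In the nowhere-dense setting this was handled by a delicate rank-preserving normal form, but that construction relies on encoding distances around a \emph{single deleted vertex}, which breaks down when the Splitter move is replaced by a flip between two potentially large sets $A,B$. The paper sidesteps Gaifman entirely and instead develops a framework of \emph{local types} and \emph{guarded formulas}: it shows (\Cref{thm:rankPreservLocalSet}) that for far-apart sets $A,B$ with the same $q$-type on their $2^{q-1}$-neighborhoods, $\exists x\in A\,\phi$ and $\exists x\in B\,\phi$ are equivalent for every $\phi$ of rank $q-1$. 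This lets one replace each quantifier by a bounded number of guarded quantifiers ranging over cover clusters, without rank increase, and then a Feferman--Vaught-style decomposition (\Cref{thm:booleanCombRadius}) confines each resulting sentence to a subgraph of fixed radius $\rho$. Your plan needs an analogous rank-preservation mechanism; invoking Gaifman as stated will not close the recursion.

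On the second half, your sparsification idea is correct in spirit, but two points deserve sharpening. First, the contracted graph is not in a nowhere dense class but only in an \emph{almost} nowhere dense class (the weak coloring numbers are bounded by $c(r,\varepsilon)|G|^\varepsilon$, but subgraphs need not inherit this); this is still enough to get covers via $\wcol$. Second, finding the right local sets to contract is where the real work lies: the paper uses the quasi-bush decomposition of structurally nowhere dense classes and performs a two-phase sequence of \emph{sibling contractions} on the bush (eliminating ``coat hangers,'' then bounding in/out sets of connection points) to obtain an $8$-contraction with small $\wcol_r$. Your proposal correctly flags this as the main obstacle, but you should expect a substantial structural argument here rather than a direct appeal to uniform quasi-wideness or low tree-depth colorings of the preimage class. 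Finally, flip-closure in the paper follows not from robustness of covers under flips, but simply because $\CC_\ell$ is again structurally nowhere dense and the spread bound $34r$ is uniform across all such classes.
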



Until now, efficient model checking algorithms could be lifted from tractable base classes to  transductions of classes with bounded degree~\cite{gajarsky2020new} and transductions of classes with bounded local cliquewidth~\cite{bonnet2022model}. In the latter result, the degree of the polynomial running time is non-uniform and depends on the class under consideration.
For classes with structurally bounded expansion (every class of bounded expansion is nowhere dense but the notion of nowhere denseness is strictly more general than that of bounded expansion), efficient model checking is possible when additionally a special coloring is given with the input~\cite{gajarsky2020first}.
%
We remark that also the model checking algorithm on classes of unordered graphs of bounded twin-width requires a so-called contraction sequence as additional input and until now is only a conditional result~\cite{bonnet2021twin}.

At this point we want to highlight that our algorithm does not depend on any additional colorings or decompositions as part of the input and has a polynomial running time whose degree is independent of the class $\Cc$, setting it apart from the existing algorithms for classes with bounded twin-width, structurally bounded expansion and transductions of classes with bounded local cliquewidth.

\medskip
Our result builds on notions from classical model theory. As observed in~\cite{adler2014interpreting}, a monotone class of graphs is \emph{stable} if and only if it is nowhere dense. The notion of stability provides one of the most important dividing lines between wild and tame theories in model theory~\cite{shelah1990classification}. Another important notion is \emph{NIP}, 
which includes even larger graph classes. 
Surprisingly, it turns out that on monotone classes all three notions of dependence, stability and nowhere denseness are equivalent. Important subclasses of stable and NIP classes are \emph{monadically stable} and \emph{monadically NIP} classes. These remain stable/NIP when graphs from the class are expanded by an arbitrary number of unary predicates (colors). 
All concepts relevant for this work will be formally defined in \Cref{sec:prelims}. 

All hereditary classes on which the model checking problem is known to be fixed-parameter tractable are monadically NIP, which led to the conjecture that a hereditary class of graphs admits fixed-parameter tractable model checking if and only if it is monadically NIP, see for example~\cite{warwick-problems, gajarsky2022stable}. 
It turns out that a hereditary class of \emph{ordered} graphs is NIP if and only if it has bounded twin-width~\cite{bonnet2022twin}. 
Furthermore, a hereditary graph class is stable/NIP if and only if it is monadically stable/NIP~\cite{braunfeld2022existential}, which further highlights the importance of these notions on hereditary graph classes. 
Every structurally nowhere dense class is monadically stable \cite{podewski1978stable,adler2014interpreting} and in fact it has been conjectured that the notions of monadic stability and structural nowhere denseness coincide, see for example~\cite{gajarsky2022stable}. 

\smallskip
Let us give a brief sketch of our approach to prove \Cref{thm:sndmc}, which will allow us to highlight further contributions. 
By Gaifman's Locality Theorem~\cite{gaifman1982local}, the problem of deciding whether a ﬁrst-order sentence $\phi$ is true in a graph can be reduced to first testing whether other \emph{local} formulas are true in the graph and then solving a colored variant of the (distance-d) independent set problem.
The independent set problem itself can then also be reduced to the evaluation of local formulas in a coloring of the graph.
Local formulas can be evaluated in bounded-radius neighborhoods of the graph, where the radius depends only on $\phi$. Hence, whenever the local neighborhoods in graphs from a class $\Cc$ admit efficient model checking, then one immediately obtains an efficient model checking algorithm for $\Cc$. This technique was first employed in~\cite{frick2001deciding} and is also the basis of the model checking algorithm of~\cite{GroheKS17} on nowhere dense graph classes.
A key ingredient on these classes
was a characterization of these classes in terms of a recursive decomposition of local neighborhoods using a game called \emph{Splitter game}~\cite{GroheKS17}.
As the Splitter game can only decompose sparse graphs, 
we replace it in our algorithm by a recently developed characterization of monadically stable classes via the so-called \emph{Flipper game}~\cite{flippergame}. 
We will give a more detailed technical overview below, where we highlight in particular
the many significant differences with the approach of~\cite{GroheKS17}.

The algorithmic idea of~\cite{GroheKS17} is to reduce the evaluation of a formula $\phi$ to the evaluation of formulas in local neighborhoods 
and use the Splitter game decomposition to guide a recursion into local neighborhoods.
Doing this naively would lead to a branching degree of $n$ and a running time of $\Omega(n^{\ell})$, where~the depth $\ell$ of the recursion grows with $\phi$. 
The solution presented in~\cite{GroheKS17} was to work with a rank-preserving normal form of formulas and to cluster nearby neighborhoods and handle them together in one recursive call using so-called \emph{sparse $r$-neighborhood covers}. An $r$-neighborhood cover with \emph{degree} $d$ and \emph{spread} $s$ of a graph~$G$ is a family~$\Xx$ of subsets of $V(G)$, called \emph{clusters}, such that the $r$-neighborhood of every vertex is contained in some cluster, every cluster has radius at most $s$, and every vertex appears in at most $d$ clusters.
We say a class $\Cc$ admits \emph{sparse neighborhood covers} if for every~$r$ and~$\epsilon>0$ there exist constants $c(r,\epsilon)$ and $\radius(r)$ such that every $G\in \Cc$ admits an $r$-neighborhood cover with spread at most $\radius(r)$ and degree at most $c(r,\epsilon)\cdot|G|^\epsilon$. 
It was proved in~\cite{GroheKS17} that every nowhere dense graph class admits sparse neighborhood covers. By scaling $\epsilon$ appropriately the recursive data structure needed for model checking can be constructed in the desired fpt running time. 

Neighborhood covers are an important tool with applications, for example, in the design of distributed algorithms. We refer to \cite{peleg2000distributed} for extensive background on applications and constructions of sparse neighborhood covers. 
As mentioned above, sparse neighborhood covers exist for nowhere dense graph classes~\cite{GroheKS17},
and in fact for monotone classes their existence characterizes nowhere dense classes~\cite{grohe2018coloring}. 
On the other hand, sparse neighborhood covers are known not to exist for general graph classes: 
for every $r$ and $s\geq 3$ there exist infinitely many graphs $G$ for which every $r$-neighborhood cover of spread at most $s$ has degree $\Omega(|G|^{1/s})$~\cite{thorup2005approximate}. 
For monadically stable classes we neither know whether sparse $r$-neighborhood covers exist, nor, if they should exist, how to efficiently compute them. 

For our model checking algorithm we may relax the assumptions on neighborhood covers and consider \emph{weak} neighborhood covers. A \emph{weak $r$-neighborhood cover} with \emph{degree} $d$ and \emph{spread} $s$ of a graph~$G$ is a family~$\Xx$ of subsets of $V(G)$, again called \emph{clusters}, such that the $r$-neighborhood of every vertex is contained in some cluster, every cluster is a subset of an $s$-neighborhood in $G$ (but does not necessarily have radius at most $s$ itself), and every vertex occurs in at most $d$ clusters. 
We say that a class $\Cc$ admits \emph{sparse weak neighborhood covers} if for every~$r$ and~$\epsilon>0$ there exist constants~$c(r,\epsilon)$ and $\radius(r)$ such that every $G\in \Cc$ admits a weak $r$-neighborhood cover with spread at most $\radius(r)$ and degree at most $c(r,\epsilon)\cdot|G|^\epsilon$. It is easy to modify the proof of~\cite{thorup2005approximate} to show that general graph classes do not admit sparse weak neighborhood covers. However, note that the classes exhibited in~\cite{thorup2005approximate} are not monadically NIP.  
While we are still not able to prove that sparse weak neighborhood covers exist for monadically stable classes, as a main contribution of independent interest we prove that weak $r$-neighborhood covers can be efficiently approximated in general. We prove the following theorem. 

\begin{restatable*}{theorem}{approxCovers}
    \label{thm:approxCovers}
    There is an algorithm that gets as input an $n$-vertex graph $G$ and numbers $r,s \in \N$
    and computes in time $\mathcal{O}(n^{9.8})$ a weak $r$-neighborhood cover with degree $\mathcal{O}(\log(n)^2+1)d^*$ and spread~$s$,
    where $d^*$ is the smallest number such that $G$ admits a weak $r$-neighborhood cover with degree $d^*$ and spread $s$.
\end{restatable*}

Note that the logarithmic factors are negligible when we aim for neighborhood covers of degree~$\Oof(n^\epsilon)$. As a consequence of the theorem, when a class of graphs admits sparse weak neighborhood covers, then they are efficiently computable. 
With \Cref{thm:approxCovers} established, we provide a clear path towards generalizing tractable model checking to all monadically stable graph classes, as made precise in the following theorem. 


\begin{restatable*}{theorem}{mainmcideals}
    \label{thm:mcideals}
    Let $\CC$ be a monadically stable graph class admitting flip-closed sparse weak neighborhood covers.
    There exists a function $f \colon \N \to \N$
    such that, given a graph $G \in \CC$ and sentence $\phi$, one can decide whether $G \models \phi$ in time
    $
    f(|\phi|) \cdot |V(G)|^\finalStupidExponent.
    $
\end{restatable*}


%
Here, flip-closure is a mild closure condition requiring that
each class $\CC_\ell$ obtained by applying~$\ell$ flips to graphs from $\CC$ admits sparse weak neighborhood covers with the same bound on the spread as~$\CC$.
We expect that this closure property will be fulfilled naturally, as we are mainly interested in properties of graph classes that are closed under transductions, and thus also under any constant number of flips.
With \Cref{thm:mcideals}, the question of efficient model checking on monadically stable graph classes reduces
to a purely combinatorial question about the existence of sparse weak neighborhood covers.
We conjecture that the required neighborhood covers exist even for monadically NIP classes.

\begin{conjecture}
    Every monadically NIP class admits flip-closed sparse weak neighborhood covers.
\end{conjecture}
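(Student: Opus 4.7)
The plan is to proceed in two stages, first handling monadically stable classes (which would already extend \Cref{thm:sndmc} substantially) and then lifting to the monadically dependent setting. For the stable case, the natural strategy is to mimic the argument the paper sketches for structurally nowhere dense classes but drive it by the Flipper game of~\cite{flippergame} instead of by the contraction-based sparsification used for structural nowhere denseness. Concretely, for a graph $G$ in a monadically stable class and a desired radius $r$, one would play the Flipper game to a depth depending only on $r$, producing for each vertex $v$ a bounded sequence of flips after which a sufficiently large ball around $v$ belongs to a fixed nowhere dense class. One then applies the known sparse-neighborhood-cover theorem of~\cite{GroheKS17} inside each such flipped ball, and glues the resulting clusters -- each of which is a subset of some $\sigma(r)$-neighborhood in the original graph -- into a weak $r$-neighborhood cover of $G$. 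Flip-closure would come essentially for free: the Flipper game tolerates an arbitrary constant number of initial flips, so the same construction applies verbatim after applying $\ell$ extra flips, yielding the same spread bound.

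To push this from monadically stable to monadically NIP, one would need a substitute for the Flipper game in the dependent setting. The plan is to use two weaker ingredients that are known to hold for monadically dependent classes: bounded VC-dimension of the definable neighborhood set system, and the Baldwin--Shelah / Shelah-style existence of indiscernible sequences that refine local types. The idea is that, inside any $r$-ball $B$, the $\epsilon$-net theorem together with VC-dimension bounds allows one to pick a bounded set $X \subseteq B$ such that the flip pattern of $B$ with respect to $X$ already determines all adjacencies in $B$ up to a coarse equivalence. Flipping according to the classes of this equivalence should then reduce $B$ to a member of some fixed nowhere dense class, after which the stable-case construction applies.

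The main obstacle, by a wide margin, will be the second stage: translating local VC-dimension into an \emph{actual} sparsification by a bounded number of flips. For monadically stable classes, the Flipper game does this in a structural, game-theoretic way, but no analogous mechanism is currently known in the NIP setting, and natural NIP obstructions such as half-graph-based classes resist being sparsified by few flips in any obvious way. Any serious attack on the conjecture therefore hinges on a new structural theorem for monadically dependent classes -- for example, a dependent analogue of the Flipper game, or a local variant of flip-width that is bounded on monadically NIP classes -- which would be a substantial result in its own right and is, I expect, the true content of the conjecture; once such a theorem is available, the algorithmic parts (cover assembly, flip-closure, and computation via \Cref{thm:approxCovers}) should follow by essentially the same recipe as in the stable case.
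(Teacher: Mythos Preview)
The statement is a \emph{conjecture} in the paper, not a theorem; the paper offers no proof and explicitly leaves it open. Your proposal is therefore not competing against any argument in the paper --- it is a research plan for an open problem, and you correctly recognise as much in your final paragraph.

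That said, there is a genuine gap already in your stable-case plan. You write that the Flipper game, played to bounded depth, produces for each vertex $v$ a bounded sequence of flips after which a large ball around $v$ ``belongs to a fixed nowhere dense class,'' and that one can then invoke the cover theorem of~\cite{GroheKS17} inside each such flipped ball. But the Flipper game does not deliver this: it terminates only when the graph is reduced to a \emph{single vertex}, not when it becomes nowhere dense, and nothing in the game guarantees that intermediate positions are sparse in any useful sense. A statement of the form ``after boundedly many flips, local balls in a monadically stable graph are nowhere dense'' would essentially amount to the conjecture (also mentioned in the paper) that monadic stability and structural nowhere denseness coincide --- itself open. The paper's own proof of cover existence for structurally nowhere dense classes does \emph{not} go through the Flipper game at all; it uses quasi-bushes (\Cref{thm:bush-nd}) and a contraction argument (\Cref{thm:snd-wcol}) that genuinely exploits the class being a transduction of a nowhere dense one. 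So even the ``easy'' half of your plan already presupposes a structural result that is not available.

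Your assessment of the NIP case is accurate: no Flipper-game analogue is known, and the VC-dimension / $\epsilon$-net sketch does not obviously yield a sparsification by boundedly many flips. You are right that the true content of the conjecture is a missing structural theorem, but note that this gap is present already for monadically stable classes, not only for monadically NIP ones.
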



We conclude \Cref{thm:sndmc} from \Cref{thm:mcideals} and the following theorem, establishing that structurally nowhere dense classes admit flip-closed sparse weak neighborhood covers. 


\begin{restatable*}{theorem}{sndcovers}\label{thm:sndcovers}
    Let $\Cc$ be a structurally nowhere dense class of graphs. For every $r \in \N$ and $\varepsilon > 0$
    there exists $c(r,\epsilon)$ such that for every $G \in \Cc$ there exists a weak $r$-neighborhood cover
    with degree at most $c(r,\epsilon)\cdot |G|^\varepsilon$ and spread at most $34r$.
    In particular, $\CC$ admits flip-closed sparse weak neighborhood covers.
\end{restatable*}

We believe that the methods to prove \Cref{thm:sndcovers} are again of independent interest. We say that the contraction of an arbitrary (possibly not even connected) subset $A$ of vertices to a single vertex is a \emph{$k$-contraction} if there is a vertex $v$ such that $A$ is contained in the $k$-neighborhood of~$v$. A $k$-contraction of a graph is obtained by simultaneously performing $k$-contractions of pairwise disjoint sets. Note that $k$-contractions can absorb local dense parts of a graph while at the same time approximately preserving distances. 
In particular, if a $k$-contraction of a graph $G$ admits a weak $r$-neighborhood cover, then also $G$ admits a weak $r$-neighborhood cover with the same degree and a slightly larger spread (depending on $k$). 
We prove that for every structurally nowhere dense class~$\Cc$ and every $G\in \Cc$ we can find an $8$-contraction $G'$ of $G$ such that the class of all~$G'$ is almost nowhere dense. We will formalize the notion of almost nowhere denseness via the so-called \emph{generalized coloring numbers}. We defer the formal definition of the \emph{weak $r$-coloring number} of a graph $G$, denoted $\wcol_r(G)$, to \Cref{sec:covers-prelims}. 
As proved in \cite{GroheKS17}, every graph admits an \mbox{$r$-neighborhood} cover with spread $2r$ and degree $\wcol_{2r}(G)$, hence, \Cref{thm:sndcovers} will follow almost immediately from the following theorem. 

\begin{restatable*}{theorem}{sndwcol}\label{thm:snd-wcol}
    Let $\Cc$ be a structurally nowhere dense class of graphs.
    For every $G \in \Cc$ there exists an 8-contraction $\textnormal{contract}(G)$ of $G$, which is sparse in the following sense:
    for every $\eps > 0$ and $r \in \N$ there exists $c(r,\eps)$ such that for every $G \in \Cc$
    \[
        \wcol_r(\textnormal{contract}(G)) \le c(r,\epsilon)\cdot |G|^\varepsilon.
    \]
\end{restatable*}



\section{Technical Overview}
In this section, we give a more detailed technical overview of our proof. For this overview we assume some background from graph theory and logic. We will provide all formal definitions in \Cref{sec:prelims} below.

As outlined above, the approach of Grohe, Kreutzer and Siebertz~\cite{GroheKS17} on nowhere dense classes is based on the locality properties of first-order logic. 
By Gaifman's Locality Theorem~\cite{gaifman1982local}, the problem to decide whether a general ﬁrst-order sentence $\phi$ is true in a graph can be reduced to testing whether other \emph{local} formulas are true in the graph.
To evaluate local formulas, we can restrict to bounded-radius neighborhoods of the graph, where the radius depends only on $\phi$. 

As proved in~\cite{GroheKS17}, the $r$-neighborhoods of vertices from a graph from a nowhere dense class behave well, which is witnessed by a characterization of nowhere dense graph classes in terms of a game, called the \emph{Splitter game}. In the radius-$r$ Splitter game, two players called Connector and Splitter, engage on a graph and thereby recursively decompose local neighborhoods. 
Starting with the input graph $G$, in each of the following rounds, Connector chooses a subgraph of the current game graph of radius at most~$r$ and Splitter deletes a single vertex from this graph. 
The game continues with the resulting graph and terminates when the empty graph is reached. 
A class of graphs is nowhere dense if and only if for every $r$ there exists $\ell$ such that Splitter can win the radius-$r$ Splitter game in $\ell$ rounds. 

While this game characterization shows that graphs from nowhere dense classes have simple neighborhoods, it does not immediately lead to an efficient model checking algorithm. 
There are two central challenges that need to be overcome:

\begin{itemize}
\item 
The first problem is that the algorithm needs to be called recursively on the local neighborhoods, 
where the graphs in the recursion get simpler and simpler by deleting vertices guided by the Splitter game. 
The deletion of a vertex can be encoded by coloring the neighbors of the vertex, 
so that the formulas can be rewritten to equivalent formulas, which, however, have to be localized again in each recursive step. 
By simply applying Gaifman's theorem again, this leads to an increase in the quantifier rank, and hence of the locality radius of the formulas, so that one can no longer play the Splitter game with the original radius. 
It is forbidden to increase the radius of the game during play.
This problem was handled in \cite{GroheKS17} by establishing a rank-preserving local normal form for first-order formulas, where localization is possible without increasing the quantifier rank. 

\item
As mentioned before, a second problem arises as one cannot simply recursively branch into all local neighborhoods, as this would in the worst case lead to a branching degree of $n$, and a running time of $\Omega(n^{\ell})$, where $\ell$ is the depth of the recursion. 
Therefore, nearby neighborhoods were clustered and handled together in one recursive call using sparse $r$-neighborhood covers. 
As proved in~\cite{GroheKS17} for every nowhere dense graph class $\Cc$, every~$r$ and~$\epsilon>0$ there exists a constant~$c(r,\epsilon)$ such that every $G\in \Cc$ admits an $r$-neighborhood cover with spread at most $2r$ and degree at most $c(r,\epsilon)\cdot|G|^\epsilon$. With some technical work, formulas were incorporated into the neighborhood covers. 
Finally, by setting $\rho=\epsilon/\ell$ and starting with an $r$-neighborhood cover of degree $c(r,\rho)\cdot|G|^\rho$, the complete recursive data structure could be constructed in time $c(r,\rho)\cdot |G|^{1 + \epsilon}$. 
\end{itemize}

\medskip
Very recently, based on a notion of \emph{flip wideness}~\cite{dreier2022indiscernibles}, monadically stable graph classes were characterized by a game, called the \emph{Flipper game}~\cite{flippergame}, which is similar in spirit to the Splitter game. In the radius-$r$ Flipper game, two players called Connector and Flipper, engage on a graph and thereby recursively decompose local neighborhoods. Starting with the input graph $G$, in each of the following rounds, Connector chooses a subgraph of the current game graph of radius at most~$r$ and Flipper chooses two sets of vertices $A$ and $B$ and flips the adjacency between the vertices of these two sets. The game continues with the resulting graph. Flipper wins once a graph consisting of a single vertex is reached. A class of graphs is monadically stable if and only if for every $r$ there exists $\ell(r)$ such that Flipper can win the radius-$r$ Flipper game in $\ell(r)$ rounds. This new characterization suggests approaching the model checking problem just as on nowhere dense classes, where we just replace the Splitter game by the Flipper game. 
However, on dense graphs, both of the above challenges reveal additional difficulties:

\begin{itemize}
\item The by far most involved aspect of the algorithm in \cite{GroheKS17} was the introduction of a rank-preserving local normal form to keep the quantifier rank steady during each localization step.
Even more problems arise when adapting the rank-preserving normal form to account for flips instead of vertex deletions.
For example, in the construction of \cite{GroheKS17}, to avoid introducing new quantifiers, some local distances have to be encoded by colors around deleted elements. 
More precisely, when a vertex $v$ is deleted, one colors for all $i\leq k$ the vertices at distance $i$ from~$v$ with a predicate $D_i$.
Then one can query $\dist(x,y)\leq d$ for $d\leq k$ in $G-v$ using the formula
$\dist(x,y)\leq d\hspace{2mm}\vee \bigvee_{i+j\leq d}D_i(x)\wedge D_j(y)$.
Note that this trick is not possible if we deal with a flip between $A$ and $B$, since $A$ and $B$ may be arbitrarily large, and we would have to introduce a distance atom for each element of $A\cup B$. 
We completely sidestep this and other problems arising from the rank-preserving local normal form
by instead building a much simpler and more powerful rank-preservation mechanism using \emph{local types}, which we explain soon.

\item 
For nowhere dense graph classes, the existence and algorithmic construction of sparse neighborhood covers follows elegantly from a bound on the so-called generalized coloring numbers.
These arguments are specific to nowhere dense classes and do not transfer to more general classes.
In the next few pages, we explain how to overcome this problem.
\end{itemize}

\paragraph{Outline of the Algorithm.} 
Locality of first-order logic is a key tool both for the study of the expressive power of FO and for logic-based applications. For a graph $G$, $v\in V(G)$ and $q\in \N$, we call the set of all formulas $\phi(x)$ of quantifier rank at most $q$ such that $G\models \phi(v)$ the \emph{$q$-type} of $v$ in~$G$. We write $N_r[v]$ for the closed $r$-neighborhood of $v$ in~$G$. By Gaifman's Locality Theorem~\cite{gaifman1982local}, there is a function $g\colon \N\rightarrow \N$ such that the following holds. Whenever two vertices $u,v\in V(G)$ have the same $g(q)$-type in $G[N_{7^q}[u]]$ and $G[N_{7^q}[v]]$, respectively, then they have the same $q$-type in $G$. This statement is made explicit, for example, in the proof of Lemma~1.5.2 of~\cite{ebbinghaus1999finite}. 

One obvious reason for the increase of quantifier rank from $q$ to $g(q)$ that happens in the proof of Gaifman's theorem is that new quantifiers are needed to express distances. This can be avoided by introducing distance atoms, which led to the technical rank-preserving normal form for FO in~\cite{GroheKS17}. Another not so obvious reason for the increase of quantifier rank is that nearby elements whose local neighborhoods overlap must be handled in a non-trivial combinatorial way, which in Gaifman's theorem leads to a locality radius of $7^q$. This is not satisfying because it is well known that FO with $q$ quantifiers can express distances only up to $2^q$. 

This observation led to the notion of \emph{local types}, which were introduced in~\cite{gajarsky2022twin} in the context of twin-width.
The \emph{localization} of a formula $\phi$ with free variables is the formula with the same free variables as $\phi$ that replaces every subformula $\exists x~ \psi(x,\bar y)$ with quantifier rank~$k$ with $\exists x \mathop{\in} N_{2^{k-1}}[\bar y]~ \psi(x,\bar y)$. 
Likewise, every subformula $\forall x~ \psi(x,\bar y)$ with quantifier rank~$k$ is replaced with $\forall x \mathop{\in} N_{2^{k-1}}[\bar y]~ \psi(x,\bar y)$. 
We call a formula \emph{local} if it is the localization of some formula. The \emph{local $q$-type} of $v\in V(G)$ is the set of all local formulas $\phi(x)$ such that $G\models\phi(v)$. Observe that syntactic localization does not increase the quantifier rank of a formula, since $\psi$ has quantifier rank~$k-1$ and with this number of quantifiers we can express distances up to $2^{k-1}$.  

We follow the approach of~\cite{gajarsky2022twin} to relate local types and a local variant of Ehrenfeucht-Fra\"iss\'e games and extend the results of~\cite{gajarsky2022twin} to provide a fine-grained analysis of the locality properties of first-order logic.  
We prove (\Cref{thm:rankPreservLocalSet}) that for any two sets $U_1,U_2\subseteq V(G)$ whose local neighborhoods look alike and which are sufficiently far from each other, 
if we find an element $a\in U_1$ satisfying a first-order property $\phi(x)$, then we also find an element $b\in U_2$ satisfying the same property.

We now incorporate ideas from an elegant approach to model checking of~\cite{gajarsky2020differential}, showing that FO model checking can be reduced to deciding whether two vertices have the same $q$-type. 
In this approach, the recursive evaluation trees are reduced by keeping only a bounded number of representative vertices.
Our technical implementation of this idea is independent from~\cite{gajarsky2020differential},
and instead based on \emph{guarded formulas}, which are naturally defined as follows. Given a set of unary predicates $\UU$, we say that a formula is \emph{$\UU$-guarded} if every quantifier is of the form $\exists x \in U$ or $\forall x \in U$ for some $U\in\UU$.
It is an easy observation that when evaluating guarded sentences, we can ignore all vertices outside the guarding sets. We now proceed to compute a constant number~$t$ of representative guarding sets $U_1,\ldots, U_t$, where~$t$ depends only on $\phi$ and the class $\Cc$. While the~$U_i$ can contain arbitrarily many vertices (here we also deviate from the approach of~\cite{gajarsky2020differential}), each $U_i$ is contained in a local neighborhood of the input graph. The candidate sets for $U_i$ are derived from the recursive computation of types of clusters from a sparse weak neighborhood cover, where the recursion is guided by the Flipper game. 
As we proved before, for two sets $U_1,U_2\subseteq V(G)$ whose local neighborhoods look alike and which are far from each other, 
if we find an element $a\in U_2$ satisfying a first-order property $\phi(x)$, then we also find an element $b\in U_2$ satisfying the same property, which allows us to reduce to a bounded number of representative sets. By appropriately grouping the remaining $U_i$, we have reduced the model checking problem to a local problem, where the locality radius remains stable over the recursion. 

Our approach avoids the complicated construction of  the rank-preserving local normal form in the model checking algorithm of~\cite{GroheKS17}. It also greatly simplifies the interplay between local formulas and neighborhood covers. We believe that our results are of interest beyond the scope of this paper.
The idea of using local types in combination with neighborhood covers for model checking in sparse graphs arose in discussions with Szymon Toru\'nczyk.
This approach is also explored in our companion paper with Szymon, where we present a simplified proof of model checking on nowhere dense classes~\cite{ndrevisited}. 
We want to thank Szymon for these many useful discussions.

\paragraph{Contribution: Construction of Neighborhood Covers.} The missing piece for our model checking algorithm is the construction of sparse neighborhood covers. As a second main contribution of our paper, we prove in \Cref{thm:approxCovers} that weak $r$-neighborhood covers on general graphs can be efficiently approximated. 

To prove the theorem, we provide a robust ILP formulation for weak $r$-neighborhood covers. A solution to the LP relaxation can then be turned efficiently into a \emph{fractional weak $r$-neighborhood cover}, which is a set of covering clusters equipped with a real value between $0$ and $1$, which can intuitively be understood as a probability assignment, such that the sum of values for each neighborhood to be covered sum up to at least one. Crucially, the constructed fractional weak $r$-neighborhood covers have at most $\Oof(n^2)$ clusters.
Now, instead of having to consider the exponential number of all subsets of $s$-neighborhoods,
we can search for our $r$-neighborhood cover among the at most $\mathcal{O}(n^2)$ clusters.
Via randomized rounding, we turn the fractional weak $r$-neighborhood cover into a weak $r$-neighborhood cover whose degree is at most a logarithmic factor larger.
This algorithm can be derandomized using standard methods.
The running time of this procedure, and in fact of our whole model checking algorithm, is dominated by the time needed to solve LPs. 

We want to highlight that our algorithm based on ILP formulations and randomized rounding is elementary and easy to understand. This is in sharp contrast to the previous algorithm on nowhere dense graph classes, which relied on the structure theory for these graph classes. 


\paragraph{Contribution: Structurally Nowhere Dense Graph Classes.} Finally, we demonstrate that structurally nowhere dense graph classes admit flip-closed sparse weak neighborhood covers and thereby establish fixed-parameter tractability of FO model checking on these classes, that is, \mbox{\Cref{thm:sndmc}}. 

To prove the existence of sparse weak neighborhood covers on structurally nowhere dense graph classes, we delve into the structure theory for these graph classes and again develop tools that we believe are interesting beyond the scope of the paper. 
To describe our contribution, let us first give an intuitive overview over nowhere dense and structurally nowhere dense classes. Nowhere dense graph classes are very general classes of sparse graphs. 
They were originally defined as classes~$\Cc$ such that for every radius $r\in \N$, some graph $H_r$ is excluded as a \emph{depth-$r$ minor} in graphs from~$\Cc$. Intuitively, a depth-$r$ minor of a graph is obtained by contracting pairwise disjoint connected subgraphs of radius at most $r$ to single vertices. Observe that the class property of nowhere denseness is preserved under taking depth-$r$ minors for any fixed $r$. 
Nowhere dense graph classes are very well understood and admit many combinatorial characterizations. In particular, nowhere dense graph classes admit sparse neighborhood covers. 
As mentioned before, the construction of sparse neighborhood covers for these classes is based on a characterization via generalized coloring numbers $\wcol_r(G)$. As proved in \cite{GroheKS17}, every graph admits an $r$-neighborhood cover with spread~$2r$ and degree $\wcol_{2r}(G)$. 

For our construction of neighborhood covers we work with local contractions instead of bounded depth minors. 
Observe that $k$-contractions do not necessarily preserve the class property of nowhere denseness, however, they preserve distances up to factors $(2k+1)$, and hence the property of admitting sparse weak neighborhood covers. In particular, neighborhood covers can be lifted from a $k$-contraction $H$ of a graph $G$ to the original graph $G$ with the same degree and a spread that is larger by at most a factor of $(2k+1)$. The use of the $k$-contraction is to absorb local dense parts of the graph into single vertices. As a key result, we obtain the following theorem. 


\sndwcol*

Note that the theorem is only existential, and we cannot efficiently construct the $8$-contraction $\textnormal{contract}(G)$ for a given graph $G$. Nevertheless, the theorem gives strong insights into the structure of graphs from structurally nowhere dense classes, and as described above, we can then derive the existence of sparse weak $r$-neighborhood covers for graphs from these classes. Since the property of structural nowhere denseness is preserved under flips, 
and from \Cref{thm:snd-wcol} one can derive the existence of weak neighborhood covers with a spread depending only on $r$ and not on the class, we naturally derive the existence of flip-closed sparse weak neighborhood covers for structurally nowhere dense classes, that is, \mbox{\Cref{thm:sndcovers}}. 


The key to proving \Cref{thm:snd-wcol} is based on a structural characterization of structurally nowhere dense graph classes in terms of \emph{quasi-bushes}~\cite{bushes}. Quasi-bushes have the form of a tree of bounded depth with additional links that define the edges of the decomposed graph. Furthermore, the quasi-bushes of structurally nowhere dense classes have bounded weak $r$-coloring numbers. The sought $k$-contractions will be found by careful contraction of certain subtrees of the quasi-bushes. 

\section{Preliminaries}\label{sec:prelims}

We write $\N$ for the set of natural numbers $\{1,2,\ldots\}$. For $m\in \N$ we let $[m]=\{1,\ldots, m\}$. We write $\bar x, \bar y,\ldots$ for tuples of variables and $\bar a,\bar b, \bar v, \bar w, \ldots$ for tuples of elements and usually leave it to the context to determine the length of a tuple.
We access the elements of a tuple using subscripts, that is, $\bar x = x_1x_2\ldots x_{|\bar x|}$.

\subsection{Graphs}

All graphs in this paper are finite, loopless, and vertex-colored. 
More precisely, a graph $G$ is a relational structure with a finite universe $V(G)$ over a finite signature $\Sigma$ consisting of the binary, irreflexive edge relation $E$ and a finite number of unary color predicates.
Note that we do not require that colors from $\Sigma$ are interpreted by disjoint sets, that is, vertices may carry multiple colors.
Most of the time, the signature $\Sigma$ will be clear from the context and we will not mention it explicitly. 
We will commonly expand graphs with additional colors.
For a graph $G$ over the signature $\Sigma$ and a subset of its vertices $W \subseteq V(G)$, we write $G\recolor{X \mapsto W}$ for the graph $G$ over the signature $\Sigma \cup \{X\}$ where $X$ is interpreted as $W$, that is, in $G$ we color the vertices of $W$ with the new color $X$.
We write $G\recolor{W}$ as a shorthand for $G\recolor{W \mapsto W}$, where by slight abuse of notation we identify a relation symbol with its interpretation.
For a family $\UU = \{U_1,\ldots,U_t\}$ of subsets of~$V(G)$, we write $G\recolor{\UU}$ or $G\recolor{U_1,\ldots,U_t}$ as a shorthand for $G\recolor{U_1}\ldots\recolor{U_t}$.

Unless explicitly stated otherwise, graphs are undirected, that is, we assume that $E$ is interpreted by a symmetric and irreflexive relation. 
We often denote an undirected edge between $u$ and $v$ by $\{u,v\}$ and a directed edge, which we will call an \emph{arc}, pointing from $u$ to $v$ by $(u,v)$. 
We write $E(G)$ for the edge set of a graph $G$ and $|G|$ for the number of its vertices. 
We use the standard notation from graph theory from Diestel's textbook~\cite{diestel}, which we extend to colored graphs in the natural way. 


\paragraph{Induced subgraphs.}
For a set of vertices $X \subseteq V(G)$, we write $G[X]$ for the subgraph of $G$ induced by $X$, and $G-X$ for the subgraph of $G$ induced by $V(G)-X$. 
We say a class of graphs is \emph{hereditary} if it is closed under taking induced subgraphs.

\paragraph{Distances and neighborhoods.}
Let $G$ be a graph. For two vertices $u,v\in V(G)$, we write $\dist(u,v)$ for the distance between $u$ and $v$, which is set to $\infty$ if $u$ and $v$ are not connected in $G$. If~$\bar a$ and $\bar b$ are tuples (or sets) of vertices we write $\dist(\bar a,\bar b)$ for the minimum distance between some $a\in \bar a$ and some $b\in \bar b$. For $r\in \N$ and $v\in V(G)$ we write $N_r[v]\coloneqq \{u \in V(G) \mid \dist(u,v) \leq r\}$ for the (closed) $r$-neighborhood of $v$ in $G$, and more generally, for a tuple (or set) $\bar a$ we let $N_r[\bar a]=\bigcup_{a\in \bar a}N_r[a]$. 




\subsection{Logic}
    We use standard terminology from model theory and refer to \cite{hodges} for extensive background.
    Every formula in this paper will be a first-order formula. However, we will often not explicitly write down the formulas if the properties they express are obviously expressible. 
    For example, $x \in N_{r}[\bar y]$ stands for the first-order formula expressing that $x$ is contained in the $r$-neighborhood of $\bar y$.
    Also for a color predicate $P$, we often write $\exists x \in P~\phi$ as a shorthand for $\exists x~P(x)\land\phi$
    and $\forall x \in P~\phi$ as a shorthand for $\forall x~ P(x)\rightarrow\phi$.
    For a formula $\phi$, we write $\mathrm{free}(\phi)$ for the set of free variables appearing in $\phi$, and we write $\phi(\bar x)$ to indicate that the free variables of $\phi$ are in $\bar x$. 

    Every formula $\eta(x,y)$ on a graph $G$ defines the relation $\eta(G):=\{(u,v)\in V(G)^2\mid G\models\eta(u,v)\}$. Similarly, a formula $\nu(x)$ defines the set $\{v\in V(G)\mid G\models\phi(v)\}$. We call a formula $\eta(x,y)$ symmetric and irreflexive if on all graphs the relation it defines is symmetric and irreflexive. 

    \paragraph{Normalization.}
    For every finite signature $\Sigma$, quantifier rank $q$, and tuple of free variables $\bar x$, up to equivalence there only exist a finite number of distinct formulas $\phi(\bar x)$ over $\Sigma$ with quantifier rank at most~$q$ .
    Testing equivalence of first-order formulas is undecidable.
    However, given a formula we can effectively compute an equivalent \emph{normalized} formula of the same quantifier rank, such that again there only exist a finite number of distinct normalized formulas $\phi(\bar x)$ over $\Sigma$ with quantifier rank at most $q$.
    In particular, the length of a normalized formula $\phi(\bar x)$ with quantifier rank $q$ over $\Sigma$ only depends on $|\bar x|$, $q$, and $\Sigma$.
    The normalization process works by renaming quantified variables, reordering boolean combinations into conjunctive normal form, 
    and deleting duplicates from conjunctions and disjunctions.
    We will assume throughout this paper that all appearing formulas are normalized.
    This also includes formulas which we construct ourselves: normalization is always performed implicitly as the last step of a construction.

    \paragraph{Types.} 
    Let $G$ be a graph and $\bar a \in V(G)^{|\bar a|}$ be a tuple in $G$. 
    We denote by $\tp_q(G,\bar a)$ the finite set of all normalized formulas $\phi(\bar x)$ with $|\bar x|=|\bar a|$ and quantifier rank at most $q$ over the signature of $G$ such that $G \models \phi(\bar a)$.
    We write $\tp_q(G) := \tp_q(G,\varnothing)$ for the set of all normalized sentences of quantifier rank at most $q$ that hold in $G$.

\subsection{Stability}

The bipartite graph $(\{a_1,\ldots,a_\ell\}, \{b_1,\ldots,b_\ell\},E)$, where $a_ib_j \in E(G)$ if and only if $i \leq j$,
is called the half graph of order $\ell$.
The half graph of order $5$ is depicted in \Cref{fig:half-graph}.

\begin{figure}[h!]
    \begin{center}
    \includegraphics{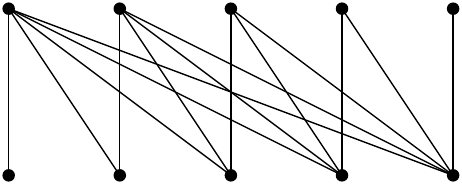}
    \caption{The half graph of order $5$.}\label{fig:half-graph}
    \end{center}
\end{figure}

Half graphs form the graph theoretical equivalent of linear orders.
Intuitively, \emph{monadically stable} classes of graphs are those classes which do not admit the encoding of arbitrary large half graphs.
The precise notion of encoding will be that of a \emph{transduction}.
A transduction is an operation mapping an input graph with signature $\Sigma$ to a set of output graphs with the same signature\footnote{Transductions may also be defined with arbitrary (and differing) input and output signatures and may include copy operations. However, we will not need this flexibility.}. 
Every transduction $\mathsf{T}$ consists of two first-order formulas $\nu(x)$ and $\eta(x,y)$ over the signature $\Sigma \uplus \Gamma$ for an infinite set of color predicates $\Gamma$, where $\eta$ is symmetric and irreflexive.
The set of output graphs $\mathsf{T}(G)$ is generated from the input graph $G$ as follows.
\begin{enumerate}
    \item $G$ is mapped to its (infinite) set of $\Gamma$-colorings.
    \item Every colored graph $G^+$ is mapped to a graph $H$ with signature $\Sigma$,
    \begin{itemize}
        \item whose vertex set $\{v\in V(G^+) \mid G^+\models \nu(v) \}$ is defined by $\nu$,
        \item whose edge set $\{\{u,v\}\mid G^+\models \eta(u,v), u,v \in V(H)\}$ is defined by $\eta$, and
        \item where the color predicates from $\Sigma$ are interpreted as in $G$.
    \end{itemize}
\end{enumerate} 
As $\eta$ and $\nu$ can only reference a finite number of colors, the resulting set of output graphs is finite as well.
As $\eta$ is symmetric and irreflexive, every output graph is undirected and loopless.

The definition of transductions lifts to classes of graphs where we say a class $\Dd$ is a \emph{transduction of} a class $\Cc$ if there exists a fixed transduction $\mathsf T$ that produces from $\Dd$ all graphs from $\CC$, that is, $\Dd \subseteq \bigcup_{G \in  \Cc} \mathsf T(G)$.
A class $\Dd$ is called \emph{structurally nowhere dense} if it is a transduction of a nowhere dense class $\Cc$. The notion of monadic stability is originally defined in terms of more general interpretations. However, by results of \cite{baldwin1985second} we can define them in terms of transductions. 
We call a class $\Cc$ \emph{monadically stable} if the class of all half graphs cannot be transduced in $\Cc$. 
Every structurally nowhere dense class is also monadically stable \cite{podewski1978stable,adler2014interpreting}.
Generalizing this notion, we call a class $\Cc$ \emph{monadically NIP} if the class of all graphs cannot be transduced in $\Cc$. 

Transductions can be composed, hence, for classes $\Cc,\Dd,\Ee$ we have that if $\Ee$ is a transduction of $\Dd$
and $\Dd$ is a transduction of $\Cc$, then also $\Ee$ is a transduction of $\Cc$.
Combining this observation with the fact that there exists a transduction yielding all induced subgraphs of an input graph, 
we will from now on assume without loss of generality that all structurally nowhere dense and monadically stable classes are hereditary.

\subsection{Flipper Game}

A \emph{flip} $\flip F = (A,B)$ is a pair of sets of vertices. We write $G \oplus \flip F$ for the graph on the vertex set of $G$ where the adjacency between vertices of $A$ and $B$ is complemented, that is, we have $\{u,v\}\in E(G \oplus \flip F)$ if and only if $\{u,v\} \in E(G) \text{ xor } (u,v) \in (A \times B) \cup (B \times A)$. 
With this definition, we have $G\oplus (A,B) = G\oplus (A \cap V(G), B\cap V(G))$, so there is no need to require $A,B$ to be subsets of $V(G)$.
This is useful for working with induced subgraphs.
For a set $F =\{\flip F_1,\ldots,\flip F_n\}$ of flips, we write $G\oplus F$ for the graph $G \oplus \flip F_1 \oplus \dots \oplus \flip F_n$. Note that the order in which we carry out the flips does not matter.

\medskip\noindent
We are going to use the \emph{Flipper game}, which was introduced in \cite{flippergame} and is defined as follows.

\begin{definition}[Flipper game]
Fix a radius $r$.
The \emph{radius-$r$ Flipper game} is played by two players, \emph{Flipper} and \emph{Connector}, on a graph $G$ as follows.
At the beginning, set $G_0:=G$. In the $i$th round, for $i > 0$, the game proceeds as follows.
\begin{itemize}
\item If $|G_{i-1}|=1$, then Flipper wins.

\item Connector chooses $G^\mathrm{loc}_{i-1}$ as the subgraph of $G_{i-1}$ induced by a 
subset\footnote{
    In the definition of the Flipper game given in \cite{flippergame}, Connector is required to choose the entire $r$-neighborhood of $v$ as $G_{i-1}^\mathrm{loc}$.
    The version defined in this paper is referred to as 
    \emph{Induced-Subgraph-Flipper game} in \cite{flippergame}.
    However, the difference is insignificant.
    Using the algorithmic strategy provided in \cite{flippergame},
    Flipper wins both versions of the game in monadically stable classes.
    For brevity we will always refer to the Induced-Subgraph-Flipper game as Flipper game.
}
of an $r$-neighborhood in $G_{i-1}$.
\item Flipper chooses a flip $\mathsf F$ and applies it
to produce $G_{i}$, that is, $G_i=G_{i-1}^\mathrm{loc}\oplus\mathsf F$.
\end{itemize}
\end{definition}

We will use the fact that, when playing in monadically stable classes, Flipper has an efficient, algorithmic strategy that wins the Flipper game in a bounded number of rounds.
We will formalize this fact by defining \emph{strategies} and their corresponding \emph{runtime}.

\paragraph{Strategies.}
Fix a graph class $\CC$ and a radius $r\in\N$.
A \emph{radius-$r$ Connector strategy} is a function 
\[ \stratcon:(G_i)\mapsto (G_i^{\text{loc}})\]
mapping the graph $G_i$ of round $i$
to the graph $G_i^{\text{loc}}$ induced by a subset of an $r$-neighborhood in~$G_{i}$.
A \emph{radius-$r$ Flipper strategy} is a function
\[ \stratflip:(G_i^{\text{loc}},\mathcal I_i)\mapsto (\flip F,\mathcal I_{i+1})\]
mapping Connectors move $G^\mathrm{loc}_{i-1}$ of round $i$ to a flip $\flip F$.
Additionally, Flipper receives an internal state $\II_i$ that is updated to a successor state $\II_{i+1}$ 
during the computation. In the algorithmic context, this is a convenient abstraction from the usual definition of strategies based on game histories. 
We will not define the precise shape of an internal state as it is an implementation-detail which may vary between different Flipper strategies.
Flipper can utilize the internal states, for example, as a memory to store past Connector moves or precomputed flips for future turns.
An initial state $\II_0 = \II_0(\stratflip,G)$ will be computed from the initial graph at the beginning of the Flipper game. 

Given radius-$r$ Connector and Flipper strategies {\rm con} and {\rm flip} and a graph $G\in\CC$, the \emph{Flipper run} $\RR(\stratcon,\stratflip,G)$ is the infinite sequence of \emph{positions}
\[\RR(\stratcon,\stratflip,G) = (G_0,\II_0),(G_1,\II_1),(G_2,\II_2),(G_3,\II_3),\ldots\]
such that $G_0 = G$ and $\II_0 = \II_0(\stratflip,G)$, and for all $i \geq 0$ we have
\[
(G_{i+1} = G^\mathrm{loc}_{i} \oplus \flip F,\II_{i+1}) 
    \text{ where }
    G^\mathrm{loc}_{i} = \stratcon(G_{i})
    \text{ and }
    (\II_{i+1},\flip F) = \stratflip(\II_{i},G^\mathrm{loc}_{i}).
\]

A \emph{winning position} is a tuple $(G_{i},\II_i)$ such that $G_i$ contains only a single vertex.
A radius-$r$ Flipper strategy $\stratflip$
is \emph{$\ell$-winning}, if for every $G\in \CC$ and for every radius-$r$ Connector strategy $\stratcon$, the $\ell$th position of $\RR(\stratcon,\stratflip,G)$ is a winning position. 
Note that, while $\RR(\stratcon,\stratflip,G)$ is an infinite sequence, once a winning position is reached, it is only followed by winning positions.

\paragraph{Runtimes.}
Let $\CC$ be a class of graphs, let $r \in \N$, and let $\stratflip$ be a radius-$r$ Flipper strategy.
For a graph $G \in \CC$ and a radius-$r$ Connector strategy $\stratcon$, let $t_0(\stratcon,G)$ be the time needed to compute the initial internal state $\mathcal I_1(\stratflip,G)$.
For $i\geq 0$, let $t_{i+1}(\stratcon,G)$ be the time needed to compute the output of $\stratflip$ given as input $\stratcon(G_i)$ and $\mathcal I_i$, where $(G_i,\II_i)$ is the $(i+1)$th position in $\RR(\stratcon,\stratflip,G)$. 
The \emph{runtime} of $\stratflip$ on $\CC$ is the function $f$ defined by
\[
f(n)
=
\pmax_{G\in\CC, |G|\leq n}
\enspace
\sup_{\substack{\text{radius-$r$ Connector} \\ \text{strategy $\stratcon$}}}
\enspace
\sup_{i\geq 0}
\enspace t_i(\stratcon,G).
\]
In words, $f(n)$ is the maximum time need by Flipper to compute a move on any play of the game on a graph from the class $\mathcal C$.

\medskip
We are now ready to state the following theorem, which is one of the main results of \cite{flippergame}.

\begin{restatable}[{\cite[Theorem 11.2]{flippergame}}]
    {theorem}{afgmain}	\label{thm:afg_main}
	For every monadically stable class $\CC$ there exists a function $f:\N\rightarrow \N$, such that for every radius $r\in\N$ there exist $\ell \in \N$ and 
	an $\ell$-winning radius-$r$ Flipper strategy $\flipstar$ with runtime $f(r) \cdot n^2$.
\end{restatable}

For every monadically stable class $\CC$ and radius $r\in\N$, the algorithmic strategy $\flipstar$ given by \Cref{thm:afg_main} is the one we will be using throughout the paper.
We define $\gameDepth(\CC,r)$ to be the bound on the number of rounds needed for Flipper to win the radius-$r$ Flipper game on any graph from $\CC$ while following $\flipstar$.
We will call a sequence of positions $\HH = (G_0, \mathcal{I}_0),\ldots,(G_\ell,\mathcal{I}_\ell)$ a \emph{$(\CC,r)$-history of length $\ell$}, if it is a prefix of the Flipper run $\RR(\stratcon,\flipstar,G_0)$ for some radius-$r$ Connector strategy $\stratcon$ and some graph $G_0 \in \CC$.
Note that, by definition, the time needed to calculate Flippers next move $\flipstar(G^\mathrm{loc}_\ell = \stratcon(G_\ell),\II_\ell)$ depends on the size of $G_0$ and not on the (possibly much smaller) size of $G_\ell$.
Also note that, in order to apply $\flipstar$, we only require $G_0$ to be contained in $\CC$.
The current input graph $G^\mathrm{loc}_\ell$ might not be contained in $\CC$ and, indeed, this is usually the case since $G^\mathrm{loc}_\ell$ was obtained from $G_0$ by $\ell$ rounds of flipping (and localizing).
However as localizing and flipping for a bounded number of rounds is expressible by a transduction, if $\CC$ is monadically stable (structurally nowhere dense),
then $G^\mathrm{loc}_\ell$ will be from a class that is still monadically stable (structurally nowhere dense).

\pagebreak
\subsection{Weak Neighborhood Covers}
A \emph{weak $r$-neighborhood cover} with \emph{degree} $d$ and \emph{spread} $s$ of a graph $G$ is a family~$\Xx$ of subsets of~$V(G)$, called \emph{clusters}, such that 
\begin{itemize}
    \item every cluster is a subset of an $s$-neighborhood in $G$, that is, for every $X\in \Xx$ there exists a vertex $\ccenter X \in V(G)$ with $X\subseteq N_s[\ccenter X]$, 
    \item the $r$-neighborhood of every vertex is contained in some cluster, that is, for every $w \in V(G)$ there exists $\cluster w \in \Xx$ with $N_r[w] \subseteq \cluster w$, and
    \item every vertex occurs in at most $d$ clusters, that is,
    for all $v \in V(G)$ we have \[|\{ X \in \Xx \mid v \in X\}| \le d.\]
\end{itemize}

We note that the above definition is a relaxation of the more common notion of an (non-weak) \emph{$r$-neighborhood cover}, where for every cluster $X$, the induced subgraph $G[X]$ is required to have radius at most $s$.

\begin{definition}
    A graph class $\CC$ admits \emph{sparse} weak neighborhood covers
    if there exist functions~$g(r,\varepsilon)$ and $\radius(r) \ge r$ such that
    for every $r \in \N$,
    every $\varepsilon > 0$, every $n$-vertex graph $G\in \Cc$ admits a weak $r$-neighborhood cover with degree $g(r,\varepsilon) \cdot n^\varepsilon$ and spread $\radius(r)$.
\end{definition}

When $\Cc$ is a class of graphs and $\ell\in \N$, we write $\Cc_\ell$ for the class containing all graphs that can be obtained by applying at most $\ell$ flips to graphs from $\Cc$.
The following definition will be the key to decompose graphs into their local neighborhoods during a play of the Flipper game.

\begin{definition}\label{def:admitNeighborhoodCover}
    A class $\CC$ of graphs admits \emph{flip-closed} sparse weak neighborhood covers
    if there exist functions $g(r,\varepsilon,\ell)$ and $\radius(r) \ge r$ such that
    for every $r \in \N$,
    every $\varepsilon > 0$, 
    and every $\ell\in \N$, every $n$-vertex graph $G\in \Cc_\ell$ admits a weak $r$-neighborhood cover with degree $g(r,\varepsilon,\ell) \cdot n^\varepsilon$ and spread~$\radius(r)$.
\end{definition}

In the above definition, it is crucial that the spread bound $\sigma$ remains independent of $\ell$.

\section{Guarded Formulas and Local Types}

\subsection{Guarded Formulas}

Given a set of unary predicates $\UU$, we say a formula is \emph{$\UU$-guarded} if every quantifier is of the form $\exists x \in U$ or $\forall x \in U$ for some $U\in\UU$.
Our model checking algorithm crucially builds on the simple observation that when evaluating guarded sentences, we can ignore all vertices outside the guarding sets.

\begin{observation}\label{obs:guarded_universe}
    Given a graph $G$ and a family $\UU = \{U_1,\ldots,U_t\}$ of subsets of $V(G)$.
    Interpreting each set from $U \in \UU$ as a unary predicate, we have for every $\UU$-guarded sentence $\phi$ that
    \[
        G\langle U_1, \ldots, U_t \rangle 
        \models \phi
        \IFF
        G\langle U_1, \ldots, U_t \rangle[U_1\cup \ldots \cup U_t] 
        \models \phi.
    \]
\end{observation}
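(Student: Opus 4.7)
The plan is to prove the observation by straightforward structural induction. Since $\phi$ is a sentence, I will strengthen the statement to a claim about formulas with free variables that can be handled inductively: for every $\UU$-guarded formula $\psi(\bar y)$ and every tuple $\bar a$ of elements of $U_1 \cup \ldots \cup U_t$ of the appropriate arity, one has
\[
    G\recolor{\UU} \models \psi(\bar a)
    \IFF
    G\recolor{\UU}[U_1 \cup \ldots \cup U_t] \models \psi(\bar a).
\]
Setting $\bar y$ to be empty then yields the observation.

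First I would handle the base case of atomic formulas $\psi(\bar y)$ (edge atoms, equality atoms, color predicates, including the added unary predicates interpreting each $U_i$). Since $\bar a$ lies in $U_1 \cup \ldots \cup U_t$, both the edge relation of $G$ restricted to these vertices and each color predicate (in particular each $U_i$) are inherited unchanged by the induced subgraph $G\recolor{\UU}[U_1 \cup \ldots \cup U_t]$. Hence both models assign the same truth value to $\psi(\bar a)$. Boolean combinations ($\neg,\land,\lor$) are immediate from the inductive hypothesis.

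The key step is the quantifier case. Consider a subformula of the form $\exists x \in U_i~\chi(x,\bar y)$ with $U_i \in \UU$ (the universal case is dual). The crucial observation is that the set $U_i$ is interpreted identically in $G\recolor{\UU}$ and in its induced subgraph $G\recolor{\UU}[U_1 \cup \ldots \cup U_t]$, because $U_i \subseteq U_1 \cup \ldots \cup U_t$. Thus the range of quantification is the same in both structures, and for any choice of witness $b \in U_i$ the tuple $(b,\bar a)$ again lies entirely in $U_1 \cup \ldots \cup U_t$, so the inductive hypothesis applies to $\chi(b,\bar a)$. This yields equivalence of the quantified subformula in the two structures.

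There is no substantive obstacle here; the only point one has to be careful about is confirming that each guard predicate $U_i$, viewed as a unary relation, is literally the same set in the ambient structure and in the induced substructure, which follows from $U_i \subseteq U_1 \cup \ldots \cup U_t$. The observation is then immediate from the strengthened claim.
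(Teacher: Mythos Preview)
Your proof is correct and is exactly the standard argument one would give: strengthen to formulas with free variables ranging over $U_1\cup\dots\cup U_t$ and induct on the structure of the formula, using that each guard $U_i$ is contained in the restricted universe so quantifier ranges coincide. The paper itself does not supply a proof for this statement; it is recorded as a ``simple observation'' and left to the reader, so there is nothing further to compare against.
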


Our goal is to compute a representative set of guards $\UU=\{U_1,\ldots, U_t\}$ such that we can translate our input formula $\phi$ into an equivalent $\UU$-guarded formula. Here, crucially, the size $t$ of $\UU$ shall depend only on $\phi$ (and eventually the depth of the recursion on $\Cc$).
Assume for now that we have recursively computed a large set of candidate guards $\{V_1,\ldots, V_m\}$. Then the selection of the set $\UU$ is based on the following key theorem that we prove in the remainder of this section.

\begin{restatable*}{theorem}{rankPreservLocalSet}
\label{thm:rankPreservLocalSet}
    Let $G$ be a graph and let $A,B \subseteq V(G)$ be vertex sets such that $\dist(A,B) > 2^{k}$ and
    $
    \tp_k(G \recolor{ X \mapsto A} \induced{N_{2^{k-1}-1}[A]}) 
    =
    \tp_k(G \recolor{ X \mapsto B} \induced{N_{2^{k-1}-1}[B]}) 
    $.
    Let $\bar w \in V(G)^{|\bar y|}$ be vertices with $\dist(\bar w, A \cup B) \ge 2^{k}$.
    Then for every formula $\phi(\bar y, x)$ of quantifier rank at most $k-1$ in the signature of $G$ we have
    $G\langle A \rangle \models \exists x \in A~ \phi(\bar w,x) \iff G\langle B \rangle \models \exists x \in B~ \phi(\bar w,x)$.
\end{restatable*}

Intuitively, the theorem states the following. 
Given a graph $G$ and two sets $A$ and $B$ whose neighborhoods look alike and which are far from each other.
If we find an element $a\in A$ satisfying a first-order property $\phi(x)$, then we also find an element $b\in B$ satisfying the same property, which will allow us to restrict quantification to appropriately chosen guard sets. 

Note that the locality radius in the theorem naturally corresponds to distances that can be expressed with $k-1$ quantifiers.
The proof of the theorem is based on the notion of \emph{local types}, which were introduced in~\cite{gajarsky2022twin}.
Local types over graphs beautifully capture the locality properties of FO by identifying the semantic restriction to $2^{k-1}$-neighborhoods with the ability of FO to syntactically make these restrictions. We remark that the results of this section do \emph{not} directly translate to structures with relations of arity greater than $2$, since defining distances in (the Gaifman graph of) such structures may require the use of additional quantifiers. 
Some of the results we prove here were proved in a different notation already in~\cite{gajarsky2022twin} and in the lecture notes of Szymon Toru\'nczyk \cite{szymon-lecture}, while some lemmas and in particular the main theorem of this section, \Cref{thm:rankPreservLocalSet}, is new. We provide all proofs for consistency and completeness.

Our proof of \Cref{thm:rankPreservLocalSet} proceeds as follows. In \Cref{sec:games} we recall the notion of Ehrenfeucht-Fra\"iss\'e games (short EF-games), which are a classical tool of (finite) model theory to understand the expressiveness of first-order logic. We introduce a local variant of the games in \Cref{sec:local-games}, where all moves are restricted to the local neighborhoods of elements that were played before. Classically, EF-games can be played on two different structures. In \Cref{sec:local-global} we show that when playing on the same graph local games determine global games. We relate local games with local types in \Cref{sec:games-and-types}. Up to this point, most of the results were provided in a similar form already in~\cite{gajarsky2022twin}. Towards the proof of \Cref{thm:rankPreservLocalSet} we now extend the framework and incorporate guards into local games in \Cref{sec:games-and-guards}.


\subsection{Games}\label{sec:games}

The EF-game is played by two players called \emph{Spoiler} and \emph{Duplicator} on two structures. It is Spoiler's goal to distinguish the two structures, while Duplicator wants to show that the structures cannot be distinguished. The connection with first-order logic is as follows: Duplicator has a winning strategy in the $q$-round EF-game on two structures if and only if the two structures satisfy the same sentences of quantifier rank at most $q$. In this work, we consider only games that are played on a single graph with different distinguished vertices $\bar a$ and $\bar b$. We refer to the literature for extensive background on EF-games, for example, to the textbook~\cite{libkin2004elements}. 

Each position of the game is a tuple $(G,\bar a,\bar b,k)$ consisting of a graph $G$ that is fixed throughout the game, two non-empty tuples of vertices $\bar a, \bar b$ of equal length, and a counter $k \in \N$ that keeps track of the number of rounds that are still to play. The game starts in some position $(G,\bar a_0, \bar b_0, q)$. If we are currently at a position $(G,\bar a,\bar b,k)$, one round of the game proceeds as follows.
\begin{itemize}
    \item Spoiler selects a vertex of $G$ as $a_k$ (he makes an \emph{$a$-move}) or as $b_k$ (he makes a \emph{$b$-move}). 
    \item If Spoiler made an $a$-move, then Duplicator has to reply with a $b$-move, that is, select a vertex of $G$ as $b_k$, or if he made a $b$-move, then  she has to reply with an $a$-move, that is, select a vertex of $G$ as $a_k$. 
    \item The game continues at position $(G, \bar a a_k, \bar b b_k,k-1)$.
\end{itemize}

The game terminates when $k=0$. Assume that a final position $(G,a_\ell,\dots,a_1,b_\ell,\dots,b_1,0)$ is reached ($\ell=k+|\bar a|$). We say this is a \emph{winning position for Duplicator} if $(a_\ell,\dots,a_1,b_\ell,\dots,b_1)$ defines a \emph{partial automorphism} on $G$, that is, for all $1 \le i,j \le \ell$,
\begin{itemize}
    \item $a_i = a_j \iff b_i = b_j$,
    \item $a_i$ and $b_i$ have the same colors in $G$, and
    \item $(a_i,b_j) \in E(G) \iff (a_i,b_j) \in E(G)$.
\end{itemize}

We say that Duplicator has a \emph{winning strategy} from a position if she can play such that she reaches -- no matter how Spoiler plays -- a winning position for Duplicator. Otherwise, we say Spoiler has a \emph{winning strategy}.
We write $(G,\bar a) \cong_k (G,\bar b)$ if Duplicator has a winning strategy from position $(G,\bar a,\bar b,k)$.
The proof of the following classical result can be found, for example, in \cite[Theorem 3.9]{libkin2004elements}.

\begin{lemma}\label{lem:ef}
    $(G,\bar a) \cong_k (G,\bar b)$ if and only if $\tp_k(G,\bar a)=\tp_k(G,\bar b)$. 
\end{lemma}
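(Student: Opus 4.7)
The plan is a proof by induction on $k$, which is the standard Ehrenfeucht–Fraïssé argument; I will outline both directions carefully since the paper's later results depend on extracting precise control over quantifier rank.

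For the base case $k=0$, a position $(G,\bar a,\bar b,0)$ is already a winning position for Duplicator exactly when $(\bar a,\bar b)$ defines a partial automorphism of $G$. Since formulas of quantifier rank $0$ are the boolean combinations of atomic formulas (edge, equality, and color predicates), $\tp_0(G,\bar a)=\tp_0(G,\bar b)$ is equivalent to $(\bar a,\bar b)$ being a partial automorphism, matching the winning condition.

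For the inductive step, assume the lemma holds for $k$ and prove it for $k+1$. For the direction $(G,\bar a)\cong_{k+1}(G,\bar b)\Rightarrow \tp_{k+1}(G,\bar a)=\tp_{k+1}(G,\bar b)$, I proceed by induction on the structure of a formula $\phi(\bar x)$ of quantifier rank at most $k+1$. Boolean combinations are immediate, so the only interesting case is $\phi(\bar x)=\exists y\,\psi(\bar x,y)$ with $\psi$ of quantifier rank at most $k$. If $G\models\phi(\bar a)$, pick a witness $a$ and have Spoiler play the $a$-move $a_{k+1}:=a$. Let $b$ be Duplicator's response following her winning strategy; after this move Duplicator still has a winning strategy from $(G,\bar a a,\bar b b, k)$, so the inductive hypothesis of the lemma yields $\tp_k(G,\bar a a)=\tp_k(G,\bar b b)$, giving $G\models\psi(\bar b,b)$ and hence $G\models\phi(\bar b)$. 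The symmetric argument uses a $b$-move.

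For the converse $\tp_{k+1}(G,\bar a)=\tp_{k+1}(G,\bar b)\Rightarrow (G,\bar a)\cong_{k+1}(G,\bar b)$, I describe Duplicator's strategy. Suppose Spoiler makes an $a$-move, playing $a_{k+1}:=a$. The key observation is that, over the fixed signature of $G$ and with $|\bar x|+1$ free variables, there are only finitely many normalized formulas of quantifier rank at most $k$; hence there is a single formula $\theta(\bar x,y)$ of quantifier rank at most $k$ whose satisfaction in $G$ is equivalent to having $k$-type equal to $\tp_k(G,\bar a a)$. Then $G\models\exists y\,\theta(\bar a,y)$, a sentence over $\bar a$ of quantifier rank at most $k+1$, so by hypothesis $G\models\exists y\,\theta(\bar b,y)$. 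Let Duplicator respond with any witness $b$. Then $\tp_k(G,\bar a a)=\tp_k(G,\bar b b)$, and the inductive hypothesis of the lemma gives Duplicator a winning strategy from $(G,\bar a a,\bar b b,k)$. The symmetric case of a $b$-move is handled identically by swapping the roles of $\bar a$ and $\bar b$.

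The only subtle step is the finiteness argument in the converse direction, i.e., that a single formula of quantifier rank at most $k$ suffices to define a $k$-type. Here the paper's normalization convention is essential: because the normalized formulas of a given quantifier rank and free-variable tuple form a finite set, the conjunction of all normalized formulas in $\tp_k(G,\bar a a)$ is itself a finite (hence legitimate) formula of quantifier rank at most $k$. Modulo this observation, the induction goes through with no further difficulty.
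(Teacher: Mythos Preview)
Your proof is correct and is the standard Ehrenfeucht--Fra\"iss\'e argument. The paper does not supply its own proof of this lemma but simply defers to Theorem~3.9 of Libkin's textbook, whose argument is essentially the one you have written.
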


\subsection{Local Games}\label{sec:local-games}

It is well known that first-order logic can express only local properties of graphs. In particular, for every $k$ and $d\leq 2^k$ there exists a formula of quantifier rank $k$ that can determine if the distance between two elements is exactly $d$, while there is no formula with $k$ quantifiers that can distinguish between distances strictly greater than $2^k$. This fact motivates our next definition of \emph{local games}. The key observation is that in a position $(G,\bar a,\bar b, k)$ when an element $a_k$ at distance at most $2^{k-1}$ from $\bar a$ is chosen by Spoiler, then Duplicator must respond with an element $b_k$ at the exactly same distance to $\bar b$, (and vice versa) as otherwise Spoiler can change his strategy to simply point out the difference in distances. On the other hand, these locality properties imply that Spoiler will never select an element at distance greater than $2^{k-1}$ from both $\bar a$ and $\bar b$, as this element could simply be copied by Duplicator.

At a position $(G,\bar a, \bar b,k)$, we say that a move (by Spoiler or Duplicator) is \emph{local} if it is an $a$-move and contained in $N_{2^{k-1}}[\bar a]$
or if it is a $b$-move and contained in $N_{2^{k-1}}[\bar b]$. 
We define the \emph{local EF-game} as the EF-game where we require that both players are only allowed to play local moves. We call the regular EF-game \emph{global} to distinguish it from the local game. We write $(G,\bar a) \cong_k^\local (G,\bar b)$ if Duplicator has a winning strategy for the
local game from position $(G,\bar a,\bar b,k)$.

\subsection{Local Games Determine Global Games}\label{sec:local-global}

We will argue that the local and global EF-games are equivalent when we start from positions $(G,\bar a,\bar b,k)$ such that $\bar a$ and $\bar b$ are at distance greater than $2^{k+1}$. Towards this goal, we formally prove the above observations. First, we observe that Duplicator has to respond to a local move of Spoiler with her own local move.

\begin{lemma}[see Lemma 9.2 of \cite{szymon-lecture}]\label{lem:EFDuplicatorPlaysLocal}
Consider the global game at a position $(G,\bar a,\bar b, k)$. Assume Spoiler made a local $a$-move $a_k\in N_{2^{k-1}}[\bar a]$, say $a_k\in N_{2^{k-1}}[a_j]$ for $a_j\in \bar a$ and Duplicator answers with a $b$-move $b_k \not\in N_{2^{k-1}}[\bar b_j]$, 
or symmetrically, Spoiler made a local $b$-move $b_k\in N_{2^{k-1}}[\bar b]$, say $b_k\in N_{2^{k-1}}[b_j]$ for some $b_j\in \bar b$ and Duplicator answers with an $a$-move $a_k \not\in N_{2^{k-1}}[\bar a_j]$. 
Then Spoiler has a winning strategy for the remaining global game from position $(G,\bar a a_k,\bar b b_k, k-1)$. In particular, if Duplicator answers with a non-local move to a local move, she loses the game. 
\end{lemma}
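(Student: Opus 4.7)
I would proceed by induction on $k$, exploiting the standard folklore fact that $k$ quantifiers suffice to detect distances up to $2^k$: once Duplicator misrepresents a distance of at most $2^{k-1}$, Spoiler can expose her using a binary-search (bisection) strategy on the $a$-side that halves the relevant distance in each round.

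\textbf{Base case ($k=1$).} After the two moves we are at the terminal position $(G,\bar a a_1,\bar b b_1,0)$. The hypothesis says $a_1 \in N_1[a_j]$, so either $a_1 = a_j$ or $\{a_1,a_j\}\in E(G)$; while $b_1 \notin N_1[b_j]$ means $b_1 \neq b_j$ and $\{b_1,b_j\}\notin E(G)$. In either subcase the map $(\bar a a_1)\mapsto(\bar b b_1)$ fails to be a partial isomorphism, so Duplicator has already lost.

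\textbf{Inductive step ($k\geq 2$).} Since $\dist(a_j,a_k)\leq 2^{k-1}$, there exists a vertex $a'$ on a shortest $a_j$–$a_k$ path with $\dist(a',a_j)\leq 2^{k-2}$ and $\dist(a',a_k)\leq 2^{k-2}$. Spoiler plays $a'$ at position $(G,\bar a a_k,\bar b b_k,k-1)$; this is a local $a$-move, because the locality threshold of that position is $2^{k-2}$ and $a' \in N_{2^{k-2}}[a_j]\subseteq N_{2^{k-2}}[\bar a a_k]$. Let $b'$ be Duplicator's response. The triangle inequality forces at least one of $\dist(b',b_j)>2^{k-2}$ or $\dist(b',b_k)>2^{k-2}$: otherwise we would have $\dist(b_j,b_k)\leq 2\cdot 2^{k-2}=2^{k-1}$, contradicting $b_k\notin N_{2^{k-1}}[b_j]$. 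Thus in at least one of the two cases, with reference pair $(a_j,b_j)$ or $(a_k,b_k)$, Spoiler's move $a'$ is local while Duplicator's response $b'$ is non-local. The induction hypothesis, applied at position $(G,\bar a a_k,\bar b b_k,k-1)$ to the moves $a'$ and $b'$, then supplies a winning strategy for the remaining global game from $(G,\bar a a_k a',\bar b b_k b',k-2)$. The symmetric case in which Spoiler originally made a local $b$-move is handled identically by swapping the two sides.

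\textbf{Anticipated obstacle.} The argument itself is conceptually routine, and the main care required is bookkeeping: one must verify that Spoiler's midpoint $a'$ really is a local move relative to the \emph{updated} threshold $2^{k-2}$, and one must feed the right reference vertex ($a_j$ or $a_k$, depending on which side of the path Duplicator fails) into the induction hypothesis. The only genuine combinatorial ingredient is the existence of the midpoint $a'$, which is immediate from $\dist(a_j,a_k)\leq 2^{k-1}$ together with the elementary fact that a shortest path of length $\leq 2^{k-1}$ contains a vertex within distance $2^{k-2}$ of both endpoints.
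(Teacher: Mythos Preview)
Your proposal is correct and follows essentially the same approach as the paper: induction on $k$, the trivial base case via the partial-automorphism condition, and in the inductive step Spoiler plays a midpoint $a'\in N_{2^{k-2}}[a_j]\cap N_{2^{k-2}}[a_k]$ so that the triangle inequality forces Duplicator to fail on one of the two reference pairs. The only cosmetic difference is that the paper applies the induction hypothesis to the reduced positions $(G,a_j,b_j,k-1)$ or $(G,a_k,b_k,k-1)$ and then invokes the monotonicity observation that extra preselected vertices only help Spoiler, whereas you apply the induction hypothesis directly at the full position $(G,\bar a a_k,\bar b b_k,k-1)$; your variant is slightly cleaner since it avoids that extra remark.
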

\begin{proof}
    We prove the statement by induction on $k$. By symmetry, we may assume that Spoiler makes an $a$-move. 
    For $k=1$ the claim is true, as $a_1 \in N_1[a_j] \Leftrightarrow b_1 \in N_1[b_j]$ is is necessary for $(\bar aa_1,\bar bb_1)$ to be a partial automorphism.

    Now assume $k>1$.
    As $a_k\in N_{2^{k-1}}[a_j]$, Spoiler can play $a_{k-1} \in N_{2^{k-2}}[a_k] \cap N_{2^{k-2}}[a_j]$ as his next move.
    As $b_k \not\in N_{2^{k-1}}[b_j]$, no matter which $b_{k-1}$ Duplicator plays as a response, either $b_{k-1} \not\in N_{2^{k-2}}[b_k]$ or $b_{k-1} \not\in N_{2^{k-2}}[b_j]$.
    If $b_{k-1} \not\in N_{2^{k-2}}[b_k]$, then by induction hypothesis applied to position $(G,a_k,b_k,k-1)$, Spoiler wins from position $(G,a_ka_{k-1},b_kb_{k-1},k-2)$. 
    If $b_{k-1} \not\in N_{2^{k-2}}[b_j]$, then by induction hypothesis applied to position $(G,a_j,b_j,k-1)$, Spoiler wins from position $(G,a_j a_{k-1},b_j b_{k-1},k-2)$. 
    Since adding more preselected vertices only helps Spoiler, he would win in particular the remaining game from the positions $(G,\bar aa_k a_{k-1},\bar bb_k,\bar b_{k-1},k-2)$.
\end{proof}

\begin{lemma}[see Lemma 3.5 of \cite{gajarsky2022twin}]\label{lem:EFindependentGames}
Consider tuples of vertices $\bar a, \bar a'$, $\bar b$, $\bar b'$ in a graph $G$
such that $\dist(\bar a,\bar a') > 2^k$ and $\dist(\bar b,\bar b') > 2^k$.
Then $(G,\bar a\bar a') \cong_k^\local (G,\bar b\bar b')$
if and only if both
$(G,\bar a) \cong_k^\local (G,\bar b)$ and 
$(G,\bar a') \cong_k^\local (G,\bar b')$.
\end{lemma}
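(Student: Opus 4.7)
I will proceed by induction on $k$, centered on a \emph{separation invariant}: in any position reached during the combined local game, the added moves can be consistently absorbed into one of two sides (unprimed or primed) so that the extended unprimed and primed $a$-sides stay at distance strictly larger than $2^{k'}$, where $k'$ is the current counter, and similarly for the $b$-sides. Preservation is a triangle inequality --- a new local move at counter $k'$ lies within $2^{k'-1}$ of its own side, so its distance to the opposite side is $>2^{k'}-2^{k'-1}=2^{k'-1}$, matching the successor counter. The invariant moreover forces $N_{2^{k'-1}}$ of the two sides to be disjoint, so every local move unambiguously belongs to one side. The base case $k=0$ amounts to observing that $\dist>1$ forbids cross-side edges or equalities, so the combined map is a partial automorphism iff its two side-restrictions are.

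For the $(\Leftarrow)$ direction I combine winning strategies $\sigma,\sigma'$ for the two separate games by \emph{parallel simulation}. Each Spoiler local move in the combined game is dispatched to its side (unique by the invariant) and answered via $\sigma$ or $\sigma'$; the answer lands in the matching $b$-side neighborhood and thus is also local in the combined game, while the separation invariant is preserved. Inductively the two sub-plays finish at partial automorphisms, and the separation invariant at counter $0$ rules out cross-side mismatches, so the combined final map is a partial automorphism.

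For the $(\Rightarrow)$ direction, which is the main obstacle, I claim that any Duplicator winning strategy $\tau$ for the combined game must \emph{respect sides}: a Spoiler move lying in $N_{2^{k'-1}}$ of one side is answered by $\tau$ in $N_{2^{k'-1}}$ of the matching $b$-side. Granted this, one extracts a winning strategy for $(G,\bar a,\bar b,k)$ by feeding the separate game's Spoiler moves into $\tau$ (viewed as unprimed-side moves in the combined game) and returning $\tau$'s responses; the separation invariant and the inductive hypothesis applied to the successor combined position at counter $k-1$ then guarantee Duplicator can continue winning in the separate game, and symmetrically for $(G,\bar a',\bar b',k)$. To prove side-respecting, suppose towards contradiction that Spoiler plays $a_{k'}\in N_{2^{k'-1}}[a_j]$ with $a_j$ on the unprimed side but $\tau$ answers some $b_{k'}\in N_{2^{k'-1}}[\bar b']$, so $\dist(b_{k'},\bar b)>2^{k'-1}$. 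Spoiler now walks down a shortest path from $a_{k'}$ to $a_j$ over the next $k'-1$ rounds, each move at counter $k'-t$ covering distance $2^{k'-t-1}$; every such move is local. The tracking argument of \Cref{lem:EFDuplicatorPlaysLocal} applies verbatim in the local game (since Spoiler's tracking moves are themselves local), forcing Duplicator under any winning play to answer within the corresponding shrinking neighborhoods of $b_{k'}$; her final move $b_1$ therefore lies within total distance $2^{k'-1}-1$ of $b_{k'}$, so $\dist(b_1,b_j)>1$, while Spoiler's $a_1$ is adjacent to $a_j$ --- an edge mismatch contradicting that $\tau$ is winning.
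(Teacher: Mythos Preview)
The $(\Rightarrow)$ direction is correct. Your side-respecting claim is exactly the content of \Cref{lem:EFDuplicatorPlaysLocal} transported to the local game, and you are right that the tracking argument works there since Spoiler only makes local moves. The paper takes the contrapositive --- assuming $(G,\bar a) \not\cong_k^{\local} (G,\bar b)$, Spoiler plays his winning $a_k \in N_{2^{k-1}}[\bar a]$ in the combined game, invokes the same tracking fact to force Duplicator into $N_{2^{k-1}}[\bar b]$, and applies the induction hypothesis --- which is the same mechanism with slightly less overhead. (Your explicit remark that the tracking lemma carries over to the local game is in fact more careful than the paper, which cites \Cref{lem:EFDuplicatorPlaysLocal} without comment.)

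The $(\Leftarrow)$ direction has a genuine gap. Your parallel simulation lets the sub-play counters drift out of sync with the combined counter: if Spoiler spends the first few rounds on the primed side and then at combined counter $k' < k$ plays $a_{k'} \in N_{2^{k'-1}}[\bar a]$, the $\sigma$-sub-play is still at counter $k$, so $\sigma$ is only obliged to respond somewhere in $N_{2^{k-1}}[\bar b]$. Such a response need not lie in $N_{2^{k'-1}}$ of the current combined $b$-tuple, hence is not a legal local move in the combined game at counter $k'$. The fix is simply to argue as you already do for $(\Rightarrow)$: handle one round and invoke the induction hypothesis. Spoiler's first move $a_k$ lies on one side, say in $N_{2^{k-1}}[\bar a]$; Duplicator uses her $(G,\bar a,\bar b,k)$-strategy to reply $b_k \in N_{2^{k-1}}[\bar b]$ with $(G,\bar a a_k) \cong_{k-1}^{\local} (G,\bar b b_k)$; the separation invariant drops to $>2^{k-1}$ as you observed; and since also $(G,\bar a') \cong_{k-1}^{\local} (G,\bar b')$, the induction hypothesis gives $(G,\bar a a_k \bar a') \cong_{k-1}^{\local} (G,\bar b b_k \bar b')$. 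This is exactly the paper's argument.
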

\begin{proof}
    We prove the statement by induction on $k$. For $k=0$, observe that $\dist(\bar a,\bar a'), \dist(\bar b,\bar b') > 2^0=1$, and thus there are no edges between $\bar a$ and $\bar a'$ or between $\bar b$ and $\bar b'$ in $G$. This means $(\bar a \bar a',\bar b\bar b')$ is a partial automorphism if and only if both $(\bar a,\bar b)$ and $(\bar a',\bar b')$ are partial automorphisms. This proves the statement for $k=0$.
    Next, assume the statement holds for $k-1$ and we will prove it for~$k$. 

    Assume $(G,\bar a) \cong_k^\local (G,\bar b)$ and $(G,\bar a') \cong_k^\local (G,\bar b')$. Note that in particular $(G,\bar a') \cong_{k-1}^\local (G,\bar b')$. We consider the local game at position $(G,\bar a\bar a', \bar b\bar b',k)$ and show that Duplicator has a winning strategy. 
    By symmetry, without loss of generality, Spoiler starts with an $a$-move $a_k \in N_{2^{k-1}}[\bar a]$.
    Duplicator responds according to the winning strategy for the local game at position $(G,\bar a,\bar b,k)$ yielding $b_k \in N_{2^{k-1}}[\bar b]$ such that $(G,\bar aa_k) \cong_{k-1}^\local (G,\bar bb_k)$. 
    By assumption, $\dist(\bar a,\bar a') > 2^k$ and $\dist(\bar b,\bar b') > 2^k$, and thus $\dist(\bar aa_k,\bar a') > 2^{k-1}$ and $\dist(\bar bb_k,\bar b') > 2^{k-1}$.
    By induction, since $(G,\bar aa_k) \cong_{k-1}^\local (G,\bar bb_k)$ and $(G,\bar a') \cong_{k-1}^\local (G,\bar b')$, we have $(G,\bar a a_k\bar a') \cong_{k-1}^\local (G,\bar b b_k\bar b')$. Since we made no assumptions on Spoiler's local move, this implies $(G,\bar a\bar a') \cong_k^\local (G,\bar b\bar b')$.

    Conversely, assume $(G,\bar a) \not\cong_k^\local (G,\bar b)$ or $(G,\bar a') \not\cong_k^\local (G,\bar b')$. Without loss of generality, $(G,\bar a) \not\cong_k^\local (G,\bar b)$. We consider the local game at position $(G,\bar a\bar a', \bar b\bar b',k)$. Spoiler chooses $a_k \in N_{2^{k-1}}[\bar a]$ according to his winning strategy at position $(G,\bar a,\bar b,k)$. 
    By \Cref{lem:EFDuplicatorPlaysLocal}, Duplicator responds with $b_k \in N_{2^{k-1}}[\bar b]$, which is a valid turn in the local game on position $(G,\bar a,\bar b,k)$.
    As Spoiler played according to his winning strategy on that position, we have $(G,\bar aa_k) \not\cong_{k-1}^\local (G,\bar bb_k)$. 
    Again we have $\dist(\bar aa_k,\bar a') > 2^{k-1}$ and $\dist(\bar bb_k,\bar b') > 2^{k-1}$ and, by induction, we have $(G,\bar a a_k\bar a') \not\cong_{k-1}^\local (G,\bar b b_k\bar b')$. Since we made no assumptions on Duplicator's local move this implies $(G,\bar a\bar a') \not\cong_k^\local (G,\bar b\bar b')$.
\end{proof}

\begin{theorem}[see Lemma 9.4 of \cite{szymon-lecture}]\label{thm:EFmain}
    Consider a graph $G$ with tuples $\bar a$, $\bar b$ such that $\dist(\bar a,\bar b) > 2^{k+1}$.
    Then \[(G,\bar a) \cong_k (G,\bar b) \quad \Longleftrightarrow \quad (G,\bar a) \cong_k^\local (G,\bar b).\]
\end{theorem}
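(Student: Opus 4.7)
The plan is to prove the equivalence by induction on $k$. The base case $k = 0$ is immediate, since neither game involves any moves and both reduce to checking whether $(\bar a, \bar b)$ forms a partial automorphism. For the inductive step, I assume the equivalence at rank $k-1$ with distance threshold $2^{k}$ and derive it at rank $k$ with threshold $2^{k+1}$. Note that under this hypothesis, the neighborhoods $N_{2^{k-1}}[\bar a]$ and $N_{2^{k-1}}[\bar b]$ are disjoint and separated by distance strictly greater than $2^k$, which will be the crucial bookkeeping invariant throughout.

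The forward direction ($\cong_k \Rightarrow \cong_k^\local$) is the easier half. Given Duplicator's global winning strategy $\sigma$, I argue that when the local game restricts Spoiler to local moves, $\sigma$ must automatically produce local responses: if it responded non-locally to a local Spoiler move, then by \Cref{lem:EFDuplicatorPlaysLocal} Spoiler would have a winning continuation in the global game, contradicting that $\sigma$ is winning. After one round, the triangle inequality ensures that the distance between the extended tuples remains $> 2^k$, so the inductive hypothesis converts Duplicator's global winning property at the new position back into a local winning property, and she iterates.

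The reverse direction ($\cong_k^\local \Rightarrow \cong_k$) is more delicate. Given a local winning strategy $\sigma$, I build a global strategy by case analysis on Spoiler's move $v$. If $v$ is local (on the side it is played and within $N_{2^{k-1}}$ of that side's tuple), Duplicator mimics $\sigma$ directly. If $v$ is far from both $\bar a$ and $\bar b$ (distance $> 2^{k-1}$ from each), Duplicator copies by playing $v$ on the opposite side. If $v$ is non-local but lies in the opposite side's neighborhood (e.g., a first-side move in $N_{2^{k-1}}[\bar b] \setminus N_{2^{k-1}}[\bar a]$), Duplicator reinterprets this as a second-side move in the local game, applies $\sigma$ to obtain a first-side response $w \in N_{2^{k-1}}[\bar a]$, and then plays $w$ as her second-side response in the global game. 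To argue that Duplicator continues to win from the resulting position, I invoke \Cref{lem:EFindependentGames} to decompose the position into independent sub-games supported on ``near $\bar a$,'' ``near $\bar b$,'' and ``far'' regions, each of which is locally winning either via $\sigma$, via triviality (for copied elements), or via symmetry of $\sigma$. The inductive hypothesis then lifts local to global winning on each piece, once I verify that the inter-region distances remain $> 2^k$ via triangle inequality.

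The main obstacle is the non-local cross move: the tuples $(\bar a v)$ and $(\bar b w)$ produced by the ``twisted'' application of $\sigma$ are not literally the tuples at which $\sigma$ certifies local equivalence, so I must justify that the partial-isomorphism conditions transfer. The crux is that the cross edges $(\bar a_i, v)$ and $(\bar b_i, w)$ are vacuously absent (by distance), while the ``matched'' local-type conditions inside each region follow from $\sigma$'s certification together with \Cref{lem:EFindependentGames} applied at rank $k-1$, whose distance hypotheses are satisfied precisely because $\dist(\bar a, \bar b) > 2^{k+1}$. Keeping this bookkeeping consistent across rounds is the only nontrivial aspect of the argument.
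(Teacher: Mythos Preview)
Your approach is essentially the paper's: induction on $k$, forward direction via \Cref{lem:EFDuplicatorPlaysLocal}, backward direction via the local/copy/twist case split together with \Cref{lem:EFindependentGames}. The paper packages the backward direction through an intermediate Claim (that a ``first-round-local, later-rounds-global'' winning strategy can be upgraded to an all-global one), but the case analysis inside that Claim is exactly your three cases.

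There is, however, a real gap in your twist case. After the role-swap, $\sigma$ certifies $(G,\bar a w)\cong_{k-1}^{\local}(G,\bar b v)$, whereas the global-game position you must analyze is $(\bar a v,\bar b w)$. To decompose the latter via \Cref{lem:EFindependentGames} you need the singleton piece $(G,v)\cong_{k-1}^{\local}(G,w)$. Your phrase ``via symmetry of $\sigma$'' does not deliver this: local-game projection from $(\bar a w,\bar b v)$ down to $(w,v)$ is not valid, because Duplicator's local responses in the larger game may land near $\bar a$ or $\bar b$ rather than near $w$ or $v$, and you cannot split via \Cref{lem:EFindependentGames} either, since $w\in N_{2^{k-1}}[\bar a]$ violates its distance hypothesis for the cut $\bar a\,|\,w$. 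The paper's Claim supplies the missing step: because the later rounds in the first-local-then-global game are \emph{global}, the role-swap actually yields the global equivalence $(G,\bar a w)\cong_{k-1}(G,\bar b v)$, and projection \emph{is} valid in the global game. In your direct framework the fix is: apply the inductive hypothesis to the role-swapped position $(\bar a w,\bar b v)$ (where $\dist>2^{k}$ does hold), project globally to $(G,w)\cong_{k-1}(G,v)$, then return to local via the forward direction. Without this detour through the global game, the twist case does not close.
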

\begin{proof}
    The forward direction is easy. Duplicator's winning strategy for the global game when Spoiler makes only local moves is also a winning strategy for the local game, since by \Cref{lem:EFDuplicatorPlaysLocal} her winning strategy anyways responds locally to local moves.
    
    We prove the backward direction by induction on $k$. For $k=0$ the global and local game are the same, hence the statement is true. Assume it holds for $k-1$ and we will prove that it also holds for $k$. We first prove the following claim. 

    \begin{claim}
        If Duplicator a winning strategy for the game from position $(G,\bar a,\bar b,k)$ where the first round is local and the remaining rounds are global, then she also has a winning strategy for the global game from position $(G,\bar a,\bar b,k)$.
    \end{claim}

    \begin{proof}
        We will give a winning strategy for Duplicator for the global game at position $(G,\bar a,\bar b,k)$. Without loss of generality, we can assume Spoiler starts the game with an $a$-move. If Spoiler opens with a local move, then Duplicator can respond according to her given first-local-then-global winning strategy for the position and win the game.

        Thus, we may assume that Spoiler opens with a non-local move \mbox{$a_k \not\in N_{2^{k-1}}[\bar a]$}. 
        We start by arguing that there exists an element $b_k \not\in N_{2^{k-1}}[\bar b]$ with $(G,a_k) \cong^\local_{k-1} (G,b_k)$: if $a_k \not\in N_{2^{k-1}}[\bar b]$, then we can choose $b_k=a_k$ and are done. Thus assume $a_k \in N_{2^{k-1}}[\bar b]$. We will use a role-swapping argument: if Spoiler had played the local element $a_k \in N_{2^{k-1}}[\bar b]$ as a $b$-move (under the name~$b_k$), then Duplicator's winning strategy of the first-local-then-global game would have replied by \Cref{lem:EFDuplicatorPlaysLocal} with an element $b_k \in N_{2^{k-1}}[\bar a]$ as an $a$-move (under the name $a_k$). Duplicator would win the remaining $(k-1)$-round global game, or in other words $(G,\bar ba_k) \cong_{k-1} (G,\bar ab_k)$. By the forward direction of the theorem, which was already proved above, also $(G,\bar ba_k) \cong_{k-1}^\local (G,\bar ab_k)$. In particular, we also have $(G,a_k) \cong_{k-1}^\local (G,b_k)$. Since $b_k \in N_{2^{k-1}}[\bar a]$ and $\dist(\bar a,\bar b) > 2^{k+1} > 2^{k}$ we have $b_k \not\in N_{2^{k-1}}[\bar b]$. This yields the desired $b_k \not\in N_{2^{k-1}}[\bar b]$ with $(G,a_k) \cong^\local_{k-1} (G,b_k)$.

        Duplicator now responds to Spoiler's move $a_k$ with $b_k$. Since Duplicator has a $k$-round winning strategy for the game with preselected tuples $\bar a,\bar b$ where the first round is local and the remaining rounds are global, she in particular has a $(k-1)$-round winning strategy for these tuples, that is, $(G,\bar a) \cong_{k-1} (G,\bar b)$, and hence $(G,\bar a) \cong_{k-1}^\local (G,\bar b)$ by the forward direction of the theorem. Since $a_k \not\in N_{2^{k-1}}[\bar a]$, $b_k \not\in N_{2^{k-1}}[\bar b]$, we have $\dist(\bar a,a_k) > 2^{k-1}$ and $\dist(\bar b,b_k) > 2^{k-1}$. We can thus apply \Cref{lem:EFindependentGames} for $k-1$ to the tuples $\bar a, a_k, \bar b, b_k$: Since $(G,a_k) \cong^\local_{k-1} (G,b_k)$ and $(G,\bar a) \cong_{k-1}^\local (G,\bar b)$, it follows that $(G,\bar aa_k) \cong_{k-1}^\local (G,\bar bb_k)$. By induction hypothesis we have $(G,\bar aa_k) \cong_{k-1} (G,\bar bb_k)$. Since we made no assumptions on Spoiler's first move, Duplicator     has a winning strategy in the global game from position $(G,\bar a,\bar b,k)$.
    \end{proof}

    We are ready to prove the backwards direction of the statement. Assume $(G,\bar a) \cong_k^\local (G,\bar b)$. By the previous claim it suffices to show that Duplicator has a winning strategy for the game from position $(G,\bar a,\bar b,k)$ where the first round is local and the remaining rounds are global. Hence, let $a_k \in N_{2^{k-1}}[\bar a]$ be a local $a$-move of Spoiler. We let $b_k \in N_{2^{k-1}}[\bar b]$ be Duplicator's response that she would play as a winning move in the local game, that is, we have $(G,\bar a a_k) \cong_{k-1}^\local (G,\bar b b_k)$. 
    
    Since $\dist(\bar a,\bar b) > 2^{k+1}$, and $a_k \in N_{2^{k-1}}[\bar a]$, $b_k \in N_{2^{k-1}}[\bar b]$, it follows that $\dist(\bar aa_k,\bar bb_k) > 2^{k}$ (see \Cref{fig:localGameDistance}).
    
    \begin{figure}[h!]
        \begin{center}
        \includegraphics{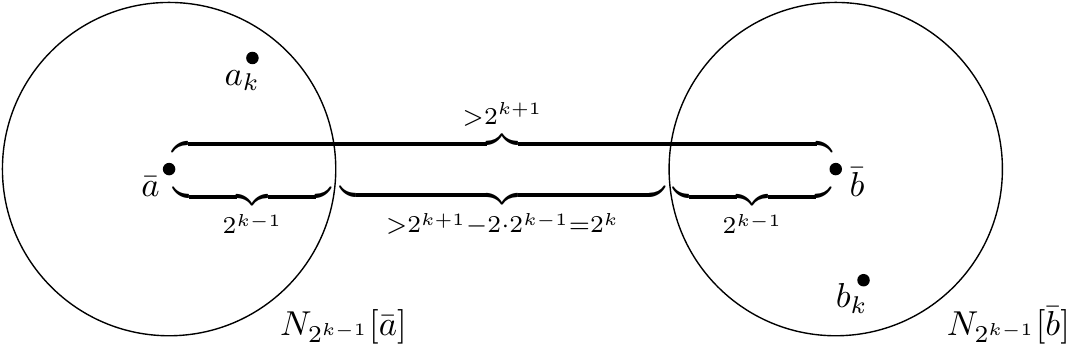}
        \end{center}
        \caption{
            Bounding $\dist(\bar aa_k,\bar bb_k)$ from below.
          }\label{fig:localGameDistance}
    \end{figure}

    By induction hypothesis we have $(G,\bar a a_k) \cong_{k-1} (G,\bar b b_k)$. As the first move was an arbitrary local $a$-move, this yields a winning strategy of Duplicator for the game at position $(G,\bar a,\bar b,k)$ where the first round is local and the remaining rounds are global. By the previous claim, $(G,\bar a) \cong_k (G,\bar b)$.
\end{proof}

While we do not prove, whether or not the distance requirement $\dist(\bar a,\bar b) > 2^{k+1}$ in \Cref{thm:EFmain} is tight, the example in \Cref{fig:counterExample} illustrates that some form of distance requirement is necessary for the theorem to hold.
Let $k=1$. Then $(G, a) \cong_k^\local (G, b)$. However, $(G, a) \not\cong_k (G, b)$ as in the global game, Spoiler can choose the uppermost red element as a global $b$-move. Duplicator cannot reply with a red element that is not adjacent to $a$, and hence loses the game. 


\begin{figure}[h!]
  \begin{center}
  \includegraphics{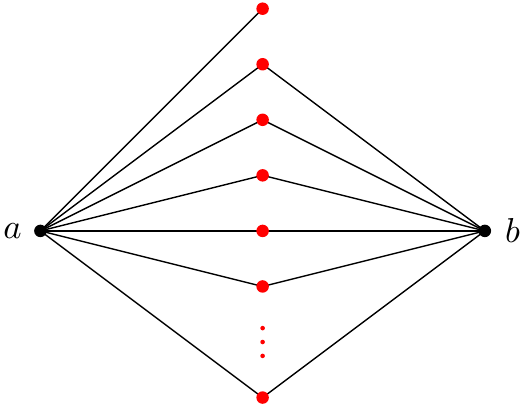}
  \end{center}
  \caption{
    An example illustrating the need for a distance constraint in \Cref{thm:EFmain}.
  }\label{fig:counterExample}
  \end{figure}

\subsection{Local Games and Local Types}\label{sec:games-and-types}

We now establish the connection between local games and local first-order logic. Unlike in Gaifman's Locality Theorem, we do not increase the quantifier rank when localizing formulas. 

Let $G$ be a graph and $\bar a$ be a tuple of vertices of $G$. It is well known that  $\tp_q(G,\bar a) = \tp_q(G,\bar b)$ if and only if $(G,\bar a) \cong_q (G,\bar b)$. 
The \emph{localization} of a formula $\phi$ with free variables is the formula with the same free variables as $\phi$ that replaces every subformula $\exists x~ \psi(x,\bar y)$ with $\exists x \mathop{\in} N_{2^{k-1}}[\bar y]~ \psi(x,\bar y)$ (or more precisely $\exists x~ x \mathop{\in} N_{2^{k-1}}[\bar y] \wedge \psi(x,\bar y)$).
Likewise, every subformula $\forall x~ \psi(x,\bar y)$ is replaced with $\forall x \mathop{\in} N_{2^{k-1}}[\bar y]~ \psi(x,\bar y)$ (or more precisely $\forall x~ x \mathop{\in} N_{2^{k-1}}[\bar y] \rightarrow \psi(x,\bar y)$).

We call a formula \emph{local} if it is the localization of some formula. 
As shown in the following lemma, one can express with $k-1$ quantifiers that $x \in N_{2^{k-1}}[\bar y]$, and thus localizing a formula does not change its quantifier rank.
\begin{lemma}\label{lem:localTypesQuantifierRank}
    There exists a formula with quantifier rank $k$ and free variables $x \bar y$ expressing that $x \in N_{2^{k}}[\bar y]$.
\end{lemma}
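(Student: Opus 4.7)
The plan is a straightforward induction on $k$, exploiting the fact that $N_{2^k}[\bar y]$ can be split in half via a midpoint argument, which takes one additional quantifier.

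For the base case $k=0$, I note that $N_{2^0}[\bar y] = N_1[\bar y] = \bigcup_i N_1[y_i]$, which can be expressed by the quantifier-free formula
\[
\phi_0(x,\bar y) \;:=\; \bigvee_{i=1}^{|\bar y|} \bigl(x = y_i \,\vee\, E(x,y_i)\bigr).
\]
This has quantifier rank $0$, as required.

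For the inductive step, suppose by induction hypothesis that $\phi_{k-1}(x,\bar y)$ is a formula of quantifier rank $k-1$ expressing $x \in N_{2^{k-1}}[\bar y]$. I define
\[
\phi_k(x,\bar y) \;:=\; \exists z\;\bigl(\phi_{k-1}(z,\bar y) \,\wedge\, \phi_{k-1}(x,z)\bigr),
\]
which has quantifier rank $k$. The semantic correctness is a triangle-inequality argument in both directions. If some $z$ witnesses $\phi_k(x,\bar y)$, then $\dist(z,\bar y) \le 2^{k-1}$ and $\dist(x,z) \le 2^{k-1}$, so $\dist(x,\bar y) \le 2^k$. Conversely, if $\dist(x,\bar y) = d \le 2^k$, I fix a shortest path of length $d$ from $x$ to some $y_i \in \bar y$ and let $z$ be its vertex at distance $\lfloor d/2 \rfloor$ from $x$; since $d$ is an integer with $d \le 2^k$, both $\lfloor d/2 \rfloor$ and $\lceil d/2 \rceil$ are at most $2^{k-1}$, so $z$ witnesses $\phi_k(x,\bar y)$.

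There is essentially no obstacle here; the only very minor subtlety is that $\bar y$ is a tuple rather than a single variable, but this is already handled cleanly by taking a disjunction over the components of $\bar y$ in the base case, and the inductive step does not interact with the tuple structure at all.
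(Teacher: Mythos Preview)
Your proof is correct and essentially identical to the paper's: the same quantifier-free disjunction for the base case $k=0$, and the same midpoint construction $\exists z\,(z\in N_{2^{k-1}}[x]\wedge z\in N_{2^{k-1}}[\bar y])$ for the inductive step. You simply spell out the triangle-inequality justification that the paper leaves implicit.
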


\begin{proof}
We can check whether $x\in N_{2^0}[\bar y]$ using the quantifier-free formula $\bigvee_{y \in \bar y}E(x,y) \vee x = y$.
For $k > 0$, we note that $x \in N_{2^{k}}[\bar y]$ if and only if
$\exists z (z \in N_{2^{k-1}}[x] \land z \in N_{2^{k-1}}[\bar y])$.
\end{proof}

Let $G$ be a graph and $\bar a \in V(G)^{|\bar a|}$ be a tuple in $G$.
We partition the finite set of all (normalized) local formulas $\phi(\bar x)$ with $|\bar x| = |\bar a|$ and quantifier rank at most $q$ over the signature of $G$ into the sets $\tp_q^\local(G,\bar a)$ and 
$\ctp_q^\local(G,\bar a)$ such that $\phi(\bar x) \in \tp_q^\local(G,\bar a)$ if and only if $G \models \phi(\bar a)$
and conversely $\phi(\bar x) \in \ctp_q^\local(G,\bar a)$ if and only if $G \not\models \phi(\bar a)$.
We call $\tp_q^\local(G,\bar a)$ the \emph{local $q$-type of $\bar a$ in $G$}.





\smallskip
We next relate local types and local games. 

\begin{lemma}\label{lem:TypesImplyGames}
If $\tp^\local_k(G,\bar a) = \tp_k^\local(G,\bar b)$, then $(G,\bar a) \cong_k^\local (G,\bar b)$.
\end{lemma}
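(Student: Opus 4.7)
The proof is by induction on $k$. The base case $k = 0$ is immediate: a local $0$-type of a tuple is just its quantifier-free atomic type, so equality of local $0$-types means $(\bar a, \bar b)$ is a partial automorphism, which is exactly the winning condition at the end of the $0$-round EF-game.

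For the inductive step, assume the lemma for $k-1$ and suppose $\tp_k^\local(G, \bar a) = \tp_k^\local(G, \bar b)$. I will describe Duplicator's winning strategy for round $k$. By symmetry, it suffices to handle a local $a$-move $a_k \in N_{2^{k-1}}[\bar a]$ by Spoiler; the $b$-move case is identical. Set $\tau := \tp_{k-1}^\local(G, \bar a a_k)$. Because, for fixed signature, quantifier rank, and number of free variables, there are only finitely many normalized local formulas, the type $\tau$ is captured by a single characteristic local formula $\chi_\tau(\bar x, y)$ of quantifier rank at most $k-1$: one can take, for instance, the conjunction of all formulas in $\tau$ together with the negations of the (finitely many) normalized local formulas outside $\tau$, using that local formulas are closed under Boolean combinations at the same rank. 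Consequently, $\tp_{k-1}^\local(G, \bar c c) = \tau$ holds for a tuple $\bar c c$ if and only if $G \models \chi_\tau(\bar c, c)$.

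Duplicator now responds with any $b_k \in N_{2^{k-1}}[\bar b]$ satisfying $\chi_\tau(\bar b, b_k)$. Such a $b_k$ must exist, because $a_k$ witnesses the formula
\[
    \varphi(\bar x) \;:=\; \exists y \in N_{2^{k-1}}[\bar x]\; \chi_\tau(\bar x, y)
\]
at $\bar a$, and $\varphi$ is a local formula of quantifier rank at most $k$: the bounded quantifier $\exists y \in N_{2^{k-1}}[\bar x]$ is exactly the localization of $\exists y$ at depth $k$, so no additional quantifier rank is incurred. Hence $\varphi \in \tp_k^\local(G, \bar a) = \tp_k^\local(G, \bar b)$ yields the desired $b_k$. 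By the defining property of $\chi_\tau$ we obtain $\tp_{k-1}^\local(G, \bar b b_k) = \tau = \tp_{k-1}^\local(G, \bar a a_k)$, and the induction hypothesis supplies Duplicator's winning continuation from position $(G, \bar a a_k, \bar b b_k, k-1)$.

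The main point to watch is the quantifier-rank bookkeeping: one must ensure that restricting an existential quantifier to $N_{2^{k-1}}[\bar x]$ at depth $k$ costs no additional rank. This is precisely the content of \Cref{lem:localTypesQuantifierRank}, which shows that membership in $N_{2^{k-1}}[\bar x]$ is definable using only $k-1$ quantifiers. This rank-preservation property of localization is exactly what makes the induction close; without it, the natural inductive argument would lose one quantifier per round and the lemma would not go through.
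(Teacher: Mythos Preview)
Your proof is correct and follows essentially the same approach as the paper's own argument: induction on $k$, with the inductive step handled by packaging the local $(k-1)$-type of $\bar a a_k$ into a single characteristic local formula and existentially quantifying it with a bounded quantifier $\exists y \in N_{2^{k-1}}[\bar x]$ to obtain a local formula of rank $k$ that transfers via the hypothesis. Your explicit attention to the rank-preservation point via \Cref{lem:localTypesQuantifierRank} is a helpful addition, but otherwise the structure and key idea match the paper's proof exactly.
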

\begin{proof}
    We prove the claim by induction on $k$. For $k=0$, $\tp^\local_k(G,\bar a)$ and $\tp^\local_k(G,\bar b)$ are known as the atomic types of $\bar a$ and $\bar b$. These are equal if and only if the mapping $\bar a\mapsto \bar b$ is a partial isomorphism, which in turn is equivalent to $(G,\bar a) \cong_0^\local (G,\bar b)$. 
    
    Let us assume that the statement holds for $k-1$ and show that it also holds for $k$. Consider the local game at position $(G,\bar a,\bar b,k)$. Without loss of generality, Spoiler starts the game by an $a$-move $a_{k} \in N_{2^{k-1}}(\bar a)$. Let
    \[
    \tau(\bar y,x) 
    = 
    \bigwedge_{\phi(\bar yx) \in \tp_{k-1}^\local(G,\bar aa_{k})} \phi(\bar yx) 
    \land 
    \bigwedge_{\phi(\bar yx) \in \ctp_{k-1}^\local(G,\bar aa_{k})} \neg\phi(\bar yx)
    \] 
    be the local formula that exactly captures $\tp_{k-1}^\local(G,\bar aa_{k})$. More precisely, for every tuple $\bar a'a_{k}' \in V(G)^{|a|+1}$ we have
    $\tp_{k-1}^\local(G,\bar a'a_{k}') = \tp_{k-1}^\local(G,\bar aa_{k})$ if and only if $G\models \tau(\bar a',a_{k}')$.

    The formula $\psi(\bar y) = \exists x \in N_{2^{k-1}}[\bar y]~ \tau(\bar yx)$ is a local formula with quantifier rank~$k$, and is therefore contained in $\tp_{k}^\local(G,\bar a)$ as witnessed by instantiating $x$ with $a_{k}$. By assumption on equality of local $k$-types we then also have that $\psi(\bar y) \in \tp_{k}^\local(G,\bar b)$ and hence there exists an element $b_{k} \in N_{2^{k-1}}(\bar b)$ such that $\tp_{k-1}^\local(G,\bar bb_{k}) = \tp_{k-1}^\local(G,\bar aa_{k})$. Duplicator chooses $b_{k}$ as her response. The remaining game continues from position $(G,\bar a a_{k}, \bar b b_{k},k-1)$. Since $\tp_{k-1}^\local(G,\bar bb_{k}) = \tp_{k-1}^\local(G,\bar aa_{k})$ Duplicator wins by induction hypothesis.
\end{proof}

If is not difficult to prove that also the converse of the lemma is true, however, we refrain from giving the proof as it is not needed for our further argumentation.

\subsection{Games and Types with Guards}\label{sec:games-and-guards}

\Cref{thm:EFmain} and \Cref{lem:TypesImplyGames} together already show that for tuples of sufficiently large distance equality of local types implies equality of global types. We will need a stronger statement for graphs where a specific set of vertices is highlighted. To this end, we introduce special starting positions $(G,A,B,k)$, where $A,B \subseteq V(G)$ are sets of vertices (for the global and local EF-game), which we call \emph{guards}. Spoiler and Duplicator select elements $a_k$ and $b_k$ in the usual way with the constraints $a_k \in A$ and $b_k \in B$, and afterwards the (global or local) game continues at position $(G,a_k,b_k,k-1)$ as usual. Hence, both for the global and the local game, the role of the sets $A$ and $B$ is merely to constrain (guard) the choices for the first round.
We write $(G,A) \cong_k (G,B)$ or $(G,A) \cong_k^\local (G,B)$ if Duplicator has a winning strategy for the global or local game starting from position $(G,A,B,k)$.

First, we extend \Cref{lem:TypesImplyGames} to our new starting positions. Note that the following theorem no longer mentions local types, but global types of neighborhoods.
Recall that $G\recolor{X\mapsto W}$ denotes the graph $G$ with the additional color predicate $X$ interpreted as the vertex set $W\subseteq V(G)$.

\begin{lemma}\label{lem:TypesImplyGames2}
    If 
    $
    \tp_k(G \recolor{X \mapsto A} \induced{N_{2^{k-1}-1}[A]}) 
    = 
    \tp_k(G \recolor{X \mapsto B} \induced{N_{2^{k-1}-1}[B]})
    $, then $(G,A) \cong_k^\local (G,B)$.
\end{lemma}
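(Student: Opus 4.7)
The plan is to proceed by induction on $k$, falling back on \Cref{lem:TypesImplyGames} to reduce each round of the local game to matching local types. The base $k=0$ is trivial, so fix $k \ge 1$. By symmetry, suppose Spoiler opens the local game at position $(G,A,B,k)$ with an $a$-move, choosing some $a_k \in A$. In order to win, Duplicator needs to respond with some $b_k \in B$ from which she wins the remaining local game from $(G, a_k, b_k, k-1)$. By \Cref{lem:TypesImplyGames}, it is enough for Duplicator to locate $b_k \in B$ satisfying $\tp_{k-1}^\local(G,a_k) = \tp_{k-1}^\local(G,b_k)$.

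To find such a $b_k$, I would let $\tau(x)$ be the local formula of quantifier rank $k-1$ that characterizes $\tp_{k-1}^\local(G, a_k)$, built as the usual conjunction over the local type (exactly as in the proof of \Cref{lem:TypesImplyGames}). Then the sentence $\exists x \in X~\tau(x)$ has quantifier rank $k$, and it is satisfied in $G\recolor{X \mapsto A}\induced{N_{2^{k-1}-1}[A]}$ with $a_k$ as witness. By the hypothesis that the two $X$-colored induced subgraphs have the same global $k$-type, the same sentence holds in $G\recolor{X \mapsto B}\induced{N_{2^{k-1}-1}[B]}$, yielding some $b_k \in B$ for which $\tau(b_k)$ holds in the small graph. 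Transferring this back up to $G$ gives $\tp_{k-1}^\local(G,a_k) = \tp_{k-1}^\local(G,b_k)$, so Duplicator's winning response is precisely this $b_k$.

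The main point to check is that this back-and-forth transfer between $G$ and the induced subgraphs on $N_{2^{k-1}-1}[A]$ and $N_{2^{k-1}-1}[B]$ is legitimate. Concretely, I need that for every $a \in A$ and every local formula $\psi(x)$ of quantifier rank at most $k-1$, the truth of $\psi(a)$ is the same in $G$ and in $G\induced{N_{2^{k-1}-1}[A]}$, and analogously for vertices of $B$. This reduces to the observation that the nested localizations in a local formula of quantifier rank $k-1$ evaluated at $x$ reach at most
\[
\sum_{i=0}^{k-2} 2^i \;=\; 2^{k-1}-1
\]
steps from $x$, which is precisely the radius used in the statement. Hence every vertex accessed during the evaluation of $\psi(a)$ lies in $N_{2^{k-1}-1}[A]$, and removing the rest of the graph cannot change truth. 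Once this locality bookkeeping is in place, the inductive step closes cleanly and no further tricks are needed.
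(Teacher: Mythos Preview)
Your proposal is correct and follows essentially the same approach as the paper: characterize $\tp_{k-1}^\local(G,a_k)$ by a single local formula $\tau(x)$, use the hypothesis on $k$-types of the induced neighborhoods to transfer $\exists x\in X~\tau(x)$ from the $A$-side to the $B$-side, and then invoke \Cref{lem:TypesImplyGames} on the resulting $b_k$; the radius computation $\sum_{i=0}^{k-2}2^i=2^{k-1}-1$ is exactly the locality bookkeeping the paper performs (there packaged as a separate claim). One cosmetic point: your proof is not actually inductive---you never use the inductive hypothesis, but rather appeal directly to \Cref{lem:TypesImplyGames}---so the opening ``by induction on $k$'' is superfluous.
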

\begin{proof}
Fix $G$, $A$ and $B$ with 
$
\tp_k(G \recolor{X \mapsto A} \induced{N_{2^{k-1}-1}[A]}) 
= 
\tp_k(G \recolor{X \mapsto B} \induced{N_{2^{k-1}-1}[B]})
$. 
For brevity, let $G_A := G \recolor{X \mapsto A}$ and $G_B := G \recolor{X \mapsto B}$.
To prove the statement, we need the following observation about local formulas.

\begin{claim}\label{claim:guard-first-move}
Let $\phi(x)$ be a local formula with quantifier rank at most $k-1$. Then
$G_A \models \exists x \in X~ \phi(x)$ if and only if
$G_B \models \exists x \in X~ \phi(x)$.
\end{claim}

\begin{proof}
Since 
$G_A \induced{N_{2^{k-1}-1}[A]}$
and
$G_B \induced{N_{2^{k-1}-1}[B]}$ 
have the same $k$-type, they agree in their evaluation of the sentence $\exists x \in X~ \phi(x)$ with quantifier rank at most $k$.

Since $\phi(\bar x)$ is local and has quantifier rank $k-1$, all the quantified variables in $\phi(x)$ can only lie within distance at most
$\sum_{i=1}^{k-1} 2^{i-1} = \sum_{i=0}^{k-2} 2^{i} = 2^{k-1}-1$ from $x$.
Hence, all variables quantified in $\exists x \in X~ \phi(x)$ must lie within distance at most $2^{k-1}-1$ from $X$. Therefore, evaluating it on $G_A \induced{N_{2^{k-1}-1}[A]}$ and $G_A$ yields the same answer. The same holds for $G_B \induced{N_{2^{k-1}-1}[B]}$ and $G_B$.
Hence, also
$G_A$ and 
$G_B$
agree in their evaluation of $\exists x \in X~ \phi(x)$.
\end{proof}

Consider the local game at position $(G,A,B,k)$. Without loss of generality, Spoiler starts the game by an $a$-move $a_{k} \in A$. Let
\[
\tau(x) 
= 
\bigwedge_{\phi(x) \in \tp_{k-1}^\local(G_A,a_{k})} \phi(x) 
\land 
\bigwedge_{\phi(x) \in \ctp_{k-1}^\local(G_A,a_{k})} \neg\phi(x)
\] 
be the local formula that defines $\tp_{k-1}^\local(G_A,a_{k})$.

The sentence $\exists x \in X ~\tau(x)$ holds on $G_A$, as witnessed by instantiating $x$ with $a_{k}$. By \Cref{claim:guard-first-move}, as~$\tau(x)$ has quantifier rank $k-1$, $\exists x \in X ~\tau(x)$ is also true on $G_B$. Hence, there exists an element $b_{k} \in B$ such that $\tp_{k-1}^\local(G_A,a_{k}) = \tp_{k-1}^\local(G_B,b_{k})$. Then in particular $\tp_{k-1}^\local(G,a_{k}) = \tp_{k-1}^\local(G,b_{k})$. Duplicator chooses $b_{k}$ as his next element. Then by \Cref{lem:TypesImplyGames} we have $(G,a_{k}) \cong_{k-1}^\local (G,b_{k})$.
Since we made no assumptions on Spoiler's first move $a_k \in A$, and Duplicator's response always yields a $b_k \in B$, we now have $(G,A) \cong_k^\local (G,B)$ as desired. 
\end{proof}

Since for starting positions $(G,A,B,k)$, the local and global game allow the same first moves, we get the following simple consequence of \Cref{thm:EFmain}.

\begin{lemma}\label{lem:EFmain2}
Consider a graph $G$ with sets $A,B \subseteq V(G)$ such that $\dist(A,B) > 2^{k}$. Then 
$(G,A) \cong_k (G,B)$
if and only if
$(G,A) \cong_k^\local (G,B)$.
\end{lemma}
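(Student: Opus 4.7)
The plan is to observe that the guarded starting position $(G,A,B,k)$ collapses to an ordinary position after exactly one move, and that the first move is regulated by the same rule (membership in the guard) in both the local and the global game. Concretely, in either version Duplicator wins from $(G,A,B,k)$ if and only if for every Spoiler move $a_k\in A$ there exists a response $b_k\in B$ (and symmetrically for $b$-moves) with $(G,a_k)\cong_{k-1}(G,b_k)$, respectively $(G,a_k)\cong_{k-1}^\local(G,b_k)$. Hence it suffices to show that these two last conditions are equivalent for every pair $(a_k,b_k)\in A\times B$.

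For this, I would invoke \Cref{thm:EFmain} with the tuples $\bar a = a_k$ and $\bar b = b_k$ and parameter $k-1$. Its distance hypothesis asks for $\dist(a_k,b_k) > 2^{(k-1)+1} = 2^k$, which is exactly guaranteed by our assumption $\dist(A,B) > 2^k$ together with $a_k\in A$ and $b_k\in B$. Thus $(G,a_k)\cong_{k-1}(G,b_k)$ iff $(G,a_k)\cong_{k-1}^\local(G,b_k)$, as needed.

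Putting the two observations together gives both directions of the equivalence. If Duplicator has a global winning strategy from $(G,A,B,k)$, then for every Spoiler first move she has a reply yielding a global Duplicator win from the resulting ordinary position, which by \Cref{thm:EFmain} is also a local Duplicator win; this defines a local winning strategy. The converse is symmetric. The only thing to verify is that no distance issues arise \emph{within} the first move itself, but since the local and global games impose exactly the same constraint on the first round (namely the guard), this is automatic. No additional case analysis is required, so I expect no real obstacle beyond checking that the shift of index (passing from a $k$-round game with guards to a $(k-1)$-round game on ordinary tuples) matches the distance bound $2^k$ demanded by \Cref{thm:EFmain}.
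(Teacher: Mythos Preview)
Your proposal is correct and follows exactly the paper's approach: the paper simply notes that for starting positions $(G,A,B,k)$ the local and global games allow the same first moves, and then invokes \Cref{thm:EFmain} on the resulting ordinary position, which is precisely what you do. Your explicit check that the index shift matches the distance hypothesis $\dist(a_k,b_k)>2^{(k-1)+1}=2^k$ is spot on.
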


We can use these new starting positions to determine the truth values of formulas in graphs where the starting sets are highlighted.

\begin{lemma}\label{lem:globalSetGames}
    Assume $(G,A) \cong_k (G,B)$. Then for every formula $\phi(x)$ of quantifier rank at most $k-1$ in the signature of $G$ we have
    $G\langle A \rangle \models \exists x \in A~ \phi(x) \iff G\langle B \rangle \models \exists x \in B~ \phi(x)$.
\end{lemma}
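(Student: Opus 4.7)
The plan is to use the assumed winning strategy of Duplicator in the global game starting from $(G,A,B,k)$ to transfer a witness for the existential statement from one side to the other. By symmetry, it suffices to prove the forward direction: assume $G\langle A\rangle \models \exists x \in A~\phi(x)$, and fix a witness $a \in A$ with $G\langle A\rangle \models \phi(a)$. Since $\phi(x)$ lies in the signature of $G$ (it does not mention the fresh predicate used in the recoloring), we have equivalently $G \models \phi(a)$.

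Now consider the global EF-game from position $(G,A,B,k)$, and let Spoiler play $a_k := a \in A$ as his first move. By assumption Duplicator has a winning strategy from $(G,A,B,k)$, so she can respond with some $b_k := b \in B$ such that Duplicator still wins the remaining $(k{-}1)$-round global game from $(G,a,b,k{-}1)$, i.e.\ $(G,a) \cong_{k-1} (G,b)$. Applying the classical Ehrenfeucht--Fra\"iss\'e theorem (\Cref{lem:ef}) yields $\tp_{k-1}(G,a) = \tp_{k-1}(G,b)$.

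Since $\phi(x)$ has quantifier rank at most $k-1$ and is in the signature of $G$, the formula $\phi$ is normalized into the types $\tp_{k-1}(G,a)$ and $\tp_{k-1}(G,b)$ (or its negation is). From $G\models \phi(a)$ we obtain $\phi \in \tp_{k-1}(G,a) = \tp_{k-1}(G,b)$, and hence $G \models \phi(b)$. Because $b \in B$ and $\phi$ does not mention the new predicate, this translates to $G\langle B\rangle \models \exists x \in B~\phi(x)$, as desired. The converse direction is proved by swapping the roles of $A$ and $B$: if $G\langle B\rangle \models \exists x \in B~\phi(x)$ with witness $b \in B$, let Spoiler play $b$ as a $b$-move, and apply Duplicator's winning strategy to obtain an $a \in A$ with $(G,a)\cong_{k-1}(G,b)$, and conclude as before.

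\textbf{Expected obstacle.} There is essentially no obstacle: the only subtlety is that the game $(G,A)\cong_k(G,B)$ is defined on the uncolored graph $G$ (with $A,B$ restricting only the first move), while the existential statements live in the recolored graphs $G\langle A\rangle$ and $G\langle B\rangle$. The bridge is precisely the observation that $\phi$ is in the original signature, so truth of $\phi(a)$ is insensitive to whether we evaluate it in $G$, $G\langle A\rangle$, or $G\langle B\rangle$. Everything else is a direct invocation of the guarded starting-position convention and the standard EF-game characterization of $q$-types.
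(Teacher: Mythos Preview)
Your proof is correct and follows essentially the same approach as the paper's own proof: have Spoiler play the witness $a\in A$, use Duplicator's winning strategy to obtain $b\in B$ with $(G,a)\cong_{k-1}(G,b)$, invoke \Cref{lem:ef} to get equality of $(k{-}1)$-types, and conclude $G\models\phi(b)$. Your added remark that $\phi$ lies in the signature of $G$ (so the recoloring is irrelevant for evaluating $\phi$) just makes explicit a step the paper leaves implicit.
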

\begin{proof}
    Assume $G\langle A \rangle \models \exists x \in A~ \phi(x)$, that is, there exists $a_k \in A$ with $G \models \phi(a_k)$. Spoiler chooses $a_k \in A$ and Duplicator responds with $b_k \in B$ such that $(G,a_k) \cong_{k-1} (G,b_k)$.
    Hence, $\tp_{k-1}(G,a_k) = \tp_{k-1}(G,b_k)$, and in particular $G \models \phi(b_k)$. We have $G\langle B \rangle \models \exists x \in B~ \phi(x)$.
    The converse holds by symmetry.
\end{proof}

We combine \Cref{lem:TypesImplyGames2}, \Cref{lem:EFmain2} and \Cref{lem:globalSetGames} into the following statement.

\begin{lemma}\label{lem:rankPreservLocalSet}
    Assume $\dist(A,B) > 2^{k}$ and
    \[
    \tp_k(G \recolor{X \mapsto A} \induced{N_{2^{k-1}-1}[A]}) 
    = 
    \tp_k(G \recolor{X \mapsto B} \induced{N_{2^{k-1}-1}[B]})
    .\]
    Then for every formula $\phi(x)$ of quantifier rank at most $k-1$ in the signature of $G$ we have
    $G\langle A \rangle \models \exists x \in A~ \phi(x) \iff G\langle B \rangle \models \exists x \in B~ \phi(x)$.
\end{lemma}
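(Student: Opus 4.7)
The plan is to prove this lemma as a direct chaining of the three lemmas established immediately before it, namely \Cref{lem:TypesImplyGames2}, \Cref{lem:EFmain2}, and \Cref{lem:globalSetGames}. Each provides one step of the implication chain
\[
\text{local type equality on neighborhoods} \;\Longrightarrow\; (G,A)\cong_k^\local (G,B) \;\Longrightarrow\; (G,A)\cong_k (G,B) \;\Longrightarrow\; \text{formula equivalence},
\]
and the only job is to verify that the hypotheses needed at each step are supplied by the assumptions of the lemma.

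First, I would invoke \Cref{lem:TypesImplyGames2} on the hypothesis
$\tp_k(G \recolor{X \mapsto A} \induced{N_{2^{k-1}-1}[A]}) = \tp_k(G \recolor{X \mapsto B} \induced{N_{2^{k-1}-1}[B]})$, which directly yields $(G,A) \cong_k^\local (G,B)$. This is the content of \Cref{lem:TypesImplyGames2}: Duplicator wins the $k$-round local EF-game from the guarded starting position in which her first response must lie in $B$ whenever Spoiler's first move lies in $A$ (and symmetrically). Note that no distance assumption is required here; the agreement of the $k$-types on the two neighborhood balls of radius $2^{k-1}-1$ is exactly what is needed to open the first round with a guarded move, and \Cref{lem:TypesImplyGames} then handles the remaining $k-1$ rounds.

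Next, I would use the distance assumption $\dist(A,B) > 2^k$ to upgrade Duplicator's strategy from the local game to the global one by applying \Cref{lem:EFmain2}, obtaining $(G,A) \cong_k (G,B)$. The role of the distance assumption is precisely to ensure that Spoiler cannot gain anything from global (non-local) moves, since any such move lies at distance more than $2^{k-1}$ from one of the two guarded sides and can be mirrored on the opposite side by copying.

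Finally, I would apply \Cref{lem:globalSetGames} to convert $(G,A) \cong_k (G,B)$ into the desired equivalence: for every first-order formula $\phi(x)$ of quantifier rank at most $k-1$, we have $G\langle A\rangle \models \exists x \in A~ \phi(x)$ iff $G\langle B\rangle \models \exists x \in B~ \phi(x)$. Since this lemma is already stated in the exact form required, the proof is complete. I do not foresee a substantive obstacle: the heavy lifting (the inductive handling of locality, the role-swapping trick under the distance assumption, and the translation between local types and local games) is already absorbed into the three prerequisite lemmas, so the remaining work is simply to line up the hypotheses correctly.
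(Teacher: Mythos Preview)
Your proposal is correct and matches the paper's approach exactly: the paper states this lemma as the direct combination of \Cref{lem:TypesImplyGames2}, \Cref{lem:EFmain2}, and \Cref{lem:globalSetGames}, and your chaining of these three lemmas lines up the hypotheses precisely as intended.
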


Again, \Cref{fig:counterExample} illustrates that the distance constraint is necessary. For $k=2$, $A=\{a\}$ and $B=\{b\}$, we have that both $G \recolor{X \mapsto A} \induced{N_{2^{k-1}-1}[A]}$ and $G \recolor{X \mapsto B} \induced{N_{2^{k-1}-1}[B]}$ have the same type: 
both are a star, whose center is marked $X$ and whose leaves are marked red.
However, for $\phi(x) := \forall y~ \mathrm{Red}(y) \rightarrow E(x,y)$ we have 
\[
G\langle A \rangle \models \exists x \in A~ \phi(x) 
\text{ and } 
G\langle B \rangle \not\models \exists x \in B~ \phi(x).\]

Extending the statement to accommodate for further free variables in $\phi$, will yield \Cref{thm:rankPreservLocalSet}, which we restate for convenience.

\rankPreservLocalSet

\begin{proof}

    
    Assume $\bar w=w_1,\ldots, w_\ell$. We define $G'$ to be the graph extended
    with $2\ell$ new color predicates~$W_i$ and $N_i$ for $1\leq i\leq \ell$. 
    We interpret $W_i=\{w_i\}$ and $N_i=N[w_i] \setminus \{w_i\}$ for $1\leq i\leq \ell$. 
    We define $\phi'(x)$ to be the formula obtained from $\phi(\bar y,x)$
    by replacing all atoms $E(w_i,z)$ and $E(z,w_i)$ with $N_i(z)$ 
    and all atoms $(w_i=z)$ and $(z=w_i)$ with $W_i(z)$.
    Then for every $v \in V(G)$ we have $G \models \phi(\bar w,v) \iff G' \models \phi'(v)$ and thus
    it is sufficient to show 
    $
    G'\recolor{A} \models \exists x \in A~ \phi'(x) 
    \iff 
    G'\recolor{B} \models \exists x \in B~ \phi'(x)
    $.
    Since $\dist(\bar w,A) \ge 2^k$, we have 
    $
    G\induced{N_{2^{k-1}-1}[A]} 
    = 
    G'\induced{N_{2^{k-1}-1}[A]}$.
    The same holds for $B$ and thus $\tp_k(G'[N_{2^{k-1}-1}(A)]\langle A \mapsto X \rangle) = \tp_k(G'[N_{2^{k-1}-1}(B)]\langle B \mapsto X \rangle)$.
    The statement then follows from \Cref{lem:rankPreservLocalSet}.

\end{proof}

\section{Model Checking}

In this section, we present our model checking theorem for structurally nowhere dense graph classes.

\sndmc*

As a stepping stone, we prove the following conditional theorem for monadically stable classes of graphs.

\mainmcideals


Note that the condition on $\CC$ is an existential statement: if $\CC$ admits flip-closed sparse weak neighborhood covers, then we can solve the model checking problem on $\Cc$ efficiently.
To actually calculate the required covers during our algorithm, we will make use of the following theorem
whose proof is deferred to \Cref{sec:approxCovers}.

\approxCovers*






For structurally nowhere dense classes, we are able to prove the existence of the desired covers as stated in the following theorem, which we will prove in \Cref{sec:covers_exist}.

\sndcovers*


As every structurally nowhere dense class is monadically stable, combining \Cref{thm:mcideals} and \Cref{thm:sndcovers} now yields \Cref{thm:sndmc}.

\subsection{Setup}

Recall that $\gameDepth(\CC,\rho)$ is the smallest number such that the Flipper strategy $\flipstar$ wins the radius-$\rho$ Flipper game on $\CC$ in $\gameDepth(\CC,\rho)$ rounds. 
This means, graphs resulting from $\gameDepth(\CC,\rho)$ many rounds of play by $\flipstar$ are winning positions for Flipper, that is, single vertices. Here, model checking is trivial.
We assume an algorithm for graphs resulting from $\ell+1$ rounds of play (with the precise definition given by the following \Cref{def:flipClosedSparseWeak}),
and use it to also do model checking on for graphs with only $\ell$ rounds played.
Repeating this procedure gives us an algorithm for graphs on which zero rounds have been played, that is, a model checking algorithm for all graphs from $\CC$.
The choice $\rho := \mainRadius$ for the radius of the game emerges from the details of our proofs.

\begin{definition}\label{def:flipClosedSparseWeak}
    Let $\CC$ be a monadically stable graph class admitting flip-closed sparse weak neighborhood covers with spread $\radius$, and let $q,\ell,c\in\N$.
    We choose a radius $\rho := \mainRadius$ for the Flipper game. Note that $\rho$ depends only on $\CC$ and $q$.
    Consider an algorithm that gets as input
    \begin{itemize}
        \item a $(\CC,\rho)$-history $(G_0, \mathcal{I}_0),\ldots,(G_\ell,\mathcal{I}_\ell)$ of length $\ell$ from the Flipper game,
        \item a coloring $G$ of $G_\ell$ with a signature of at most $c$ colors, and
        \item a sentence $\phi$ with quantifier rank at most $q$
    \end{itemize}
    and decides whether $G\models\phi$.
    We say this is an \emph{efficient $\mathrm{MC}(\CC,q,\ell,c)$-algorithm}, if there exists a function $f_{\mathrm{MC}}$ 
    bounding the runtime for every $\eps > 0$ by
    \[
    f_{\mathrm{MC}}(q,\ell,c,\eps) \cdot |V(G)|^{(({1+\epsilon})^d)}\cdot |V(G_0)|^\stupidExponent,
    \] 
    where $d := \gameDepth(\CC,\rho)-\ell$ bounds the number of rounds needed to win the remaining Flipper game.
\end{definition}

\subsection{Computing Guarded Formulas}

As central building block of our algorithm, the following theorem converts sentences into guarded sentences,
assuming we already have an efficient model checking algorithm for graphs where the game has progressed by one extra round.

\begin{theorem}\label{thm:mainTool1}
    Let $\CC$ be a monadically stable graph class admitting flip-closed sparse weak neighborhood covers with spread $\radius$, and let $\rho = \mainRadius$.
    Given as input 
    \begin{itemize}
        \item a $\big(\CC,\rho)$-history $\HH = (G_0, \mathcal{I}_0),\ldots,(G_\ell,\mathcal{I}_\ell)$ of length $\ell$,
        \item a coloring $G$ of $G_\ell$ with a signature of at most $c$ colors,
        \item a sentence $\phi$ with quantifier rank at most $q$, and
        \item an efficient $\mathrm{MC}(\CC,q,\ell + 1, c+3)$-algorithm,
    \end{itemize}
    one can compute
    sets $U_1,\dots,U_t \subseteq V(G)$, for some constant $t$ depending only on $q$ and $c$,
    as well as a $(U_1,\dots,U_t)$-guarded sentence $\xi$ of quantifier rank $q$.
    Each $U_i$ is contained in an $(q \cdot 3\radius(2^q))$-neighborhood of $G$ and
    \[
    G \models \phi \IFF G\langle U_1,\dots,U_t \rangle \models \xi.
    \]

    There exists a function $f(q,c,\ell,\eps)$ such that for every $\eps > 0$, the running time of this procedure is bounded by
    \[
    f(q,c,\ell,\eps) \cdot |V(G)|^{(({1+\epsilon})^d)}\cdot |V(G_0)|^\stupidExponent,
    \] 
    where $d := \gameDepth(\CC,\rho)-\ell$ bounds the number of rounds needed to win the remaining Flipper game.
\end{theorem}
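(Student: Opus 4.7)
The plan is to combine a weak neighborhood cover of $G$ with the recursive $\mathrm{MC}$-algorithm at game depth $\ell+1$ to compute, for every cluster, a local $q$-type, and then use Theorem~\ref{thm:rankPreservLocalSet} to retain only a bounded number of representative clusters per type. These representatives become the guards $U_1,\dots,U_t$, and the guarded sentence $\xi$ is obtained by rewriting each quantifier of $\phi$ as a disjunction or conjunction over guarded quantifiers ranging over the $U_i$.

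Concretely, I would first apply Theorem~\ref{thm:approxCovers} to $G$ with $r=2^{q-1}$ and $s=\radius(2^q)$ to produce a weak $2^{q-1}$-neighborhood cover $\{X_1,\dots,X_m\}$ of spread $\radius(2^q)$ and degree $\Oof(n^\epsilon)$; flip-closure of covers on $\CC$ guarantees such a sparse cover exists for the possibly-flipped $G$, and each $X_j$ sits inside $N_{\radius(2^q)}[\ccenter{X_j}]$. Next, for each cluster $X_j$ I would compute the local $q$-type
\[
\tp_q(G\recolor{X\mapsto X_j}\induced{N_{2^{q-1}-1}[X_j]})
\]
by querying the assumed $\mathrm{MC}(\CC,q,\ell{+}1,c{+}3)$-algorithm on each normalized sentence of quantifier rank $\le q$ over the enriched signature. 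To invoke it I extend $\HH$ by one round: as Connector I choose the radius-$\rho$ ball around $\ccenter{X_j}$ (with $\rho=\mainRadius$ large enough to contain $N_{2^{q-1}-1}[X_j]$ plus a $2^q$ buffer), apply $\flipstar$ to obtain a flip $\flip F$ and successor $(G_{\ell+1},\mathcal I_{\ell+1})$, and run the MC algorithm on $G_{\ell+1}$ enriched with up to three fresh colors encoding $X_j$, the relevant neighborhood, and the induced-subgraph restriction. I would then group clusters by their computed $q$-type (the number of types is bounded by a function of $q$ and $c$), greedily select for each type a bounded number of representatives that are spread out so that any choice of free variables $\bar y$ bound by outer guards admits a representative of every type at distance $>2^q$ from $\bar y$, and take $U_1,\dots,U_t$ to be these representatives. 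Finally I construct $\xi$ from $\phi$ by bottom-up rewriting each $\exists x~\psi(x,\bar y)$ as $\bigvee_i\exists x\in U_i~\hat\psi(x,\bar y)$ (dually for $\forall$), with equivalence guaranteed by Theorem~\ref{thm:rankPreservLocalSet}: any witness in $G$ lies in some cluster of some $q$-type realized by a distant $U_i$, and the theorem transports the witness into $U_i$. The cumulative radius over $q$ levels yields the claimed bound $q\cdot 3\radius(2^q)$.

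The main obstacle is pinning down $t$ as a constant: one must argue that a bounded number of per-type representatives is enough to serve as guards for all $q$ layers of quantification simultaneously, even as the free variables $\bar y$ inside inner subformulas vary. The key observation is that the $\bar y$ are themselves constrained to lie within previously-chosen local guards, so the ``far from $\bar y$'' hypothesis of Theorem~\ref{thm:rankPreservLocalSet} can be met with boundedly many well-separated representatives; furthermore, the $q$-types computed inside the cover must be refined enough to detect all locally relevant structure of $\phi$, which is the role of including the induced-subgraph color and the restriction to $N_{2^{q-1}-1}[X_j]$. The runtime is a standard inductive estimate: $\Oof(n^{1+\epsilon})$ clusters each trigger $\Oof_{q,c}(1)$ recursive MC calls, each costing $|V(G_{\ell+1})|^{(1+\epsilon)^{d-1}}\cdot|V(G_0)|^\stupidExponent$; rescaling $\epsilon$ so that $(1+\epsilon)(1+\epsilon)^{d-1}\le(1+\epsilon')^d$ absorbs the cover overhead into the target exponent $(1+\epsilon')^d$, delivering the claimed bound.
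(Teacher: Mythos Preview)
Your proposal captures the right high-level ingredients (cover, type computation via the recursive MC algorithm, representatives per type, Theorem~\ref{thm:rankPreservLocalSet}), but there is a genuine gap in the core equivalence argument, and the paper's proof is structured rather differently to close it.

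The gap is in your claim that ``any witness in $G$ lies in some cluster of some $q$-type realized by a distant $U_i$, and the theorem transports the witness into $U_i$.'' Theorem~\ref{thm:rankPreservLocalSet} requires $\dist(\bar w, A\cup B)\ge 2^k$; if a witness $x$ for $\exists x\,\psi(\bar w,x)$ happens to lie close to $\bar w$, the theorem does not apply and you cannot transport the witness anywhere. Choosing ``well-separated'' representatives does not fix this: no matter how you spread out the $U_i$, a witness near $\bar w$ need not land in any of them, and the type-transport argument is silent about it. The paper handles this by an explicit \emph{near/far split}: for each set $W_j$ containing a free variable, it adds $N_{2^q}[W_j]$ itself as a guard (covering all nearby witnesses directly), and only for the far part $V_i' := V_i\setminus N_{2^q}[\bigcup_j W_j]$ does it invoke Theorem~\ref{thm:rankPreservLocalSet}. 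Moreover, the paper does not pick multiple spread-out representatives per type; it picks \emph{one} per type but enlarges the guard to $N_{2^q+2\sigma(2^q)}[V_i']$, so that any same-type cluster that is too close to the representative (where the distance hypothesis would fail) is simply absorbed into the enlarged guard.

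Structurally, the paper does not guard all quantifiers in one shot. It first proves Theorem~\ref{thm:mainTool2}, which guards a single outermost quantifier given that the free variables $\bar y$ already range over explicit local sets $W_1,\dots,W_{|\bar y|}$; this is where the near/far split and the representative selection live. Then Lemma~\ref{lem:mainTool1} performs an induction on the quantifier rank $p$, at each step invoking Theorem~\ref{thm:mainTool2} with the current outer guards as the $W_j$, and recursing on the body with the freshly produced guard $R_i$ appended. The radius grows by $3\sigma(2^q)$ per level, yielding the bound $q\cdot 3\sigma(2^q)$. Your ``bottom-up rewriting'' gestures at this induction but omits the explicit $W_j$-tracking that makes the near/far split (and hence the whole argument) go through.
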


Instead of guarding all quantifiers at once, we start with guarding only one outermost quantifier.
The following theorem will be the central step of our construction.

\begin{theorem}\label{thm:mainTool2}
    Let $\CC$ be a monadically stable graph class admitting flip-closed sparse weak neighborhood covers with spread $\radius$, and let $\rho = \mainRadius$.
    Given as input 
    \begin{itemize}
        \item a $\big(\CC,\rho\big)$-history $\HH = (G_0, \mathcal{I}_0),\ldots,(G_\ell,\mathcal{I}_\ell)$ of length $\ell$,
        \item a coloring $G$ of $G_\ell$ with a signature of at most $c$ colors,
        \item a formula $\exists x~\phi(\bar y,x)$ of quantifier rank at most $q$,
        \item sets $W_1,\dots,W_{|\bar y|}$, each contained in an $r$-neighborhood of $G$, and
        \item an efficient $\mathrm{MC}(\CC,q,\ell + 1, c+3)$-algorithm,
    \end{itemize}
    one can compute sets $U_1,\dots,U_t \subseteq V(G)$, for some constant $t$ depending only on $q$, $c$, and $|\bar y|$.
    Each~$U_i$ is contained in an $(r + 3\radius(2^q))$-neighborhood of $G$
    and for all tuples $\bar w \in W_1 \times \ldots \times W_{|y|}$, we have
    \[
    G \models \exists x~\phi(\bar w,x)
    \IFF
    \bigvee_{i=1}^t G\langle U_i \rangle \models \exists x \in U_i~ \phi(\bar w,x).
    \]

    There exists a function $f(q,c,\ell,\eps,|\bar y|)$ such that for every $\eps > 0$, the running time of this procedure is bounded by
    \[
    f(q,c,\ell,\eps,|\bar y|) \cdot |V(G)|^{(({1+\epsilon})^d)}\cdot |V(G_0)|^{\stupidExponent},
    \] 
    where $d := \gameDepth(\CC,\rho)-\ell$ bounds the number of rounds needed to win the remaining Flipper game.
\end{theorem}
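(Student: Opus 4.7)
The plan is to compute a weak $2^q$-neighborhood cover of $G$, then partition clusters into a bounded number of local-type equivalence classes using the recursive MC algorithm, and finally output a constant number of representative sets $U_1,\ldots,U_t$ combining per-type representatives with location-based unions around each $W_j$. Correctness will rest on \Cref{thm:rankPreservLocalSet}, and the recursion into round $\ell+1$ is driven by having Connector play the localization neighborhood while $\flipstar$ supplies Flipper's response.

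Using the flip-closure assumption together with \Cref{thm:approxCovers}, I will first compute in time $\Oh(n^{9.8})$ a weak $2^q$-neighborhood cover $\XX$ of $G$ with spread $\sigma(2^q)$ and degree $\Oh(n^{\eps'})$ for a small $\eps'$ depending on $\eps$ and the remaining game depth. For each cluster $X \in \XX$, I compute its local type
\[
\tau(X) := \tp_q\bigl(G\langle X \mapsto P\rangle\bigl[N_{2^{q-1}-1}[X]\bigr]\bigr)
\]
by checking each normalized rank-$q$ formula in turn. Each such query reduces to a call to the $\mathrm{MC}(\CC,q,\ell+1,c+3)$ algorithm on the Flipper-history extension obtained by letting Connector pick $N_{2^{q-1}-1}[X]$ as a subset of a $\rho$-neighborhood in $G_\ell$ (valid since $\rho = \mainRadius$ dominates $2^{q-1}-1+\sigma(2^q)$) and letting $\flipstar$ produce Flipper's flip $\mathsf F = (A,B)$, giving $G_{\ell+1}$. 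The three extra colors encode $P$ together with the pair $(A,B)$ of $\mathsf F$, which lets us rewrite any rank-$q$ formula over the unflipped induced subgraph as a rank-$q$ formula over $G_{\ell+1}$ by replacing edge atoms between $A$ and $B$ by their complements. Since the number of normalized rank-$q$ types over a $(c+1)$-color signature is a constant $T = T(q,c)$, the clusters fall into at most $T$ type classes.

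Next, for each $j\in[|\bar y|]$, I find by BFS a center $w_j^*$ with $W_j\subseteq N_r[w_j^*]$ and set the \emph{danger zone} $D_j := N_{r+2^q}[w_j^*]$. Define
\[
V_j := \bigcup\bigl\{X\in\XX : X\cap D_j\neq\emptyset\bigr\},
\]
which by the spread bound is confined to $N_{r+3\sigma(2^q)}[w_j^*]$, since any $X$ meeting $D_j$ has a center $v_X$ within $\sigma(2^q)+r+2^q$ of $w_j^*$. For each type $\tau$ appearing in $\XX$, I pick one representative cluster $X_\tau$ with $X_\tau\cap D_j=\emptyset$ for all $j$, if such exists. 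The output is $\{V_1,\ldots,V_{|\bar y|}\}\cup\{X_\tau : \tau\}$ of size at most $|\bar y|+T$. For correctness, let $x$ be a witness of $\exists y\,\phi(\bar w,y)$ lying in a cluster $X$: if $X\cap D_j\neq\emptyset$ for some $j$ then $x\in V_j$; otherwise $\dist(X,\bar w) > 2^q$, the representative $X_\tau$ for $\tau(X)$ exists with $\dist(X_\tau,\bar w)>2^q$, and \Cref{thm:rankPreservLocalSet} transfers the witness from $X$ to $X_\tau$.

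The main obstacle is the distance requirement $\dist(X,X_\tau)>2^q$ in \Cref{thm:rankPreservLocalSet}, which type-equivalence alone does not enforce when $X$ and $X_\tau$ happen to lie close together outside the danger zones. I plan to handle this via a greedy packing refinement: for each type $\tau$, iteratively pick same-type clusters as long as they lie more than $2^q$ apart, so that every other type-$\tau$ cluster is within $2^q$ of some picked representative; the remaining nearby clusters are then absorbed either by slightly enlarging the $D_j$'s (pushing the spread bound closer to the $r+3\sigma(2^q)$ budget) or by enriching the local type to record the presence and type of nearby same-type clusters, at the cost of a constant blow-up in $T$. The runtime is dominated by $\Oh(n^{1+\eps'})$ calls to the $(\ell+1)$-round MC algorithm, giving $f_{\mathrm{MC}}(q,c,\ell+1,\eps')\cdot n^{(1+\eps)^d}\cdot |V(G_0)|^{\stupidExponent}$ after choosing $\eps'$ so that $1+\eps'+(1+\eps')^{d-1}\le(1+\eps)^d$.
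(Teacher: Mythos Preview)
Your overall strategy matches the paper's: compute a weak $2^q$-cover, compute rank-$q$ types of the local neighborhoods via the $(\ell{+}1)$-round MC algorithm (after one Flipper move), keep one representative per type, and add ``near'' sets around each $W_j$. The difference is in how the distance hypothesis of \Cref{thm:rankPreservLocalSet} is discharged, and here your proposal has a genuine gap.

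You correctly isolate the obstacle: if the witness cluster $X$ and the chosen representative $X_\tau$ of the same type happen to lie within distance $2^q$, \Cref{thm:rankPreservLocalSet} does not apply. But none of your three proposed fixes works. A greedy packing of pairwise $2^q$-distant representatives per type yields an \emph{unbounded} number of representatives (there is no reason a single type class cannot contain $\Omega(n)$ pairwise far clusters), so $t$ would depend on $n$. Enlarging the danger zones $D_j$ is irrelevant: the problematic pair $X,X_\tau$ lies entirely outside all $D_j$, so shifting those boundaries changes nothing. And ``enriching the local type to record nearby same-type clusters'' is circular: the type is computed on $G[N_{2^{q-1}-1}[X]]$, which cannot see anything at distance exactly $2^q$, and any attempt to encode this externally would require extra colors that themselves depend on the cover, not just on $q$ and $c$.

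The paper's resolution is a one-line trick you are missing: do not output the representative $V_j'$ itself, but its \emph{inflation} $N_{2^q+2\sigma(2^q)}[V_j']$. Then for any same-type $V_i'$, either $V_i' \subseteq N_{2^q+2\sigma(2^q)}[V_j']$ (so the witness is already in the output set and no transfer is needed), or some point of $V_i'$ lies at distance $>2^q+2\sigma(2^q)$ from $V_j'$, and since $V_i'$ has diameter at most $2\sigma(2^q)$ this forces $\dist(V_i',V_j')>2^q$, enabling \Cref{thm:rankPreservLocalSet}. This is why the radius in the statement is $r+3\sigma(2^q)$ rather than $r+\sigma(2^q)$.

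There is a second, smaller gap in your runtime argument. You compute types on $G[N_{2^{q-1}-1}[X]]$ for each cluster $X$, but $\sum_{X\in\XX}|N_{2^{q-1}-1}[X]|$ is not controlled by the cover degree (a single vertex may lie in the $2^{q-1}$-neighborhood of many clusters it does not belong to). The paper avoids this by first \emph{partitioning} $V(G)$ into cells $V_1,\ldots,V_m$ with $N_{2^q}[V_i]\subseteq C_i$, setting $V_i' := V_i \setminus N_{2^q}[\bigcup_k W_k]$, and computing types of $G[N_{2^{q-1}-1}[V_i']]$; then $N_{2^{q-1}-1}[V_i']\subseteq C_i$ and $\sum_i |C_i|\le d\cdot n \le n^{1+\eps}$, which is what makes the recursive cost telescope to $n^{(1+\eps)^d}$.
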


\begin{proof}
    Let $n$ be the number of vertices of $G$.
    Our goal is to compute the set of guards $\UU = \{U_1,\dots,U_t\}$.

    \paragraph{Neighborhood Cover Computation.}
        We use \Cref{thm:approxCovers} to compute in time $\Oof(n^\stupidExponent)$ a weak $2^q$-neighborhood cover of $G$ with degree $\Oof(\log(n)^2+1) \cdot d^*$ and spread $\radius(2^q)$,
        where $d^*$ is the smallest number such that $G$ admits a weak $2^q$-neighborhood cover with degree $d^*$ and spread $\radius(2^q)$.
        Let us argue the existence of a function $g_1(q,\varepsilon,\ell)$ such that 
        for every $\varepsilon > 0$, this computes a cover of degree
        $g_1(q,\varepsilon,\ell) \cdot n^{\varepsilon}$.

        Let $\varepsilon > 0$.
        The graph $G$ is obtained from $G_0 \in \CC$ by performing at most $\ell$ flips and removing vertices.
        Hence, by \Cref{def:admitNeighborhoodCover}, there exists 
        a function $g(q,\varepsilon,\ell)$ such that $G$ has 
        a weak $2^q$-neighborhood cover 
        with degree $g(q,\varepsilon,\ell) \cdot n^{\varepsilon/2}$ and spread $\radius(2^q)$.
        Hence, $d^* \le g(q,\varepsilon,\ell)\cdot n^{\varepsilon/2}$ and
        the computed neighborhood cover has degree at most $\Oof(\log(n)^2+1) \cdot 
        g(q,\varepsilon,\ell)\cdot n^{\varepsilon/2}$.
        Since logarithmic factors are dominated by any polynomial factor, this can be bounded by 
        $g_1(q,\varepsilon,\ell)\cdot n^{\eps}$ for some appropriately chosen function $g_1(q,\eps,\ell)$.

        Let $\{C_1,\dots,C_m\}$ be the computed weak $2^q$-neighborhood cover.
        Without loss of generality, we can assume $m \le n$, since otherwise redundant sets can be removed.
        We partition the vertices of $G$ into sets $V_1,\dots,V_m$ 
        such that for all $v \in V_i$, $N_{2^q}[v] \subseteq C_i$.
        Ties are broken arbitrarily.

        \paragraph{Splitting the Existential Quantifier.}
        It will be useful to partition the existential quantification of $x$ in our input formula $\exists x~\phi(\bar y,x)$ into a quantification over sets that are near and that are far from $W_1,\dots,W_{|\bar y|}$.
        To this end, let $V_i' := V_i \setminus N_{2^q}\left[\bigcup_{k=1}^{|\bar y|} W_k\right]$.
        Since every vertex of $G$ is in some~$V_i$,
        for all tuples $\bar w \in W_1 \times \ldots \times W_{|y|}$
        \begin{linenomath*}
        \begin{multline}\label{eq:splitNearFar}
        G \models \exists x~\phi(\bar w,x)
        \IFF \\
        \bigvee_{i=1}^{|\bar y|}
        G\langle N_{2^q}[W_i] \rangle \models \exists x \in N_{2^q}[W_i] ~ \phi(\bar w,x)
        ~\vee~
        \bigvee_{i=1}^m ~ 
        G\langle V'_i\rangle \models
        \exists x \in V_i' ~ \phi(\bar w,x).
        \end{multline}
        \end{linenomath*}

        Remember that the size of our solution $\UU$ may depend only on $q$, $c$, and $|\bar y|$.
        Adding the sets $N_{2^q}[W_1],\dots,N_{2^q}[W_{|\bar y|}]$ to $\UU$ would respect this size constraint.
        However, since $m$ may depend on~$n$,
        we are not allowed to add all sets $V_1',\dots,V_m'$ to $\UU$.
        In the remainder of this proof, we will
        use \Cref{thm:rankPreservLocalSet} and the fact that each $V_i'$ is sufficiently far away from $W_1,\ldots, W_{|y|}$
        to construct a set $S \subseteq [m]$ with the following property.
        \begin{property}\label{prop:cluster_imply_reps}
        The size of $S \subseteq [m]$ depends only on $q$ and $c$ and
        for all tuples $\bar w \in W_1 \times \ldots \times W_{|y|}$
        \begin{linenomath*}
        \begin{equation*}
            \bigvee_{i=1}^m ~ 
            G\langle V_i'\rangle
            \models 
            \exists x \in V_i'~ \phi(\bar w,x)
        \quad\Longrightarrow\quad
            \bigvee_{i \in S} ~ 
            G\langle X \mapsto N_{2^{q} + 2\radius(2^q)}[V_i'] \rangle
            \models
            \exists x \in X~ \phi(\bar w,x).
        \end{equation*}
        \end{linenomath*}
        \end{property}
        \noindent After we found such a set $S$, we set
        \[
        \UU = \{N_{2^q}[W_1],\dots,N_{2^q}[W_{|\bar y|}] \} \cup \{ N_{2^q+2\radius(2^q)}[V_i'] \mid i \in S \}.
        \]
        Note that $|\UU|$ depends only on $q$, $c$, and $|\bar y|$. 
        Combining (\ref{eq:splitNearFar}) and \Cref{prop:cluster_imply_reps}, it holds for all tuples $\bar w \in W_1 \times \ldots \times W_{|y|}$ that
        \[
        G \models \exists x~\phi(\bar w,x)
        \quad\Longrightarrow\quad
        \bigvee_{U \in \UU} G\langle U \rangle \models \exists x \in U~ \phi(\bar w,x).
        \]

        The backwards implication of this statement holds obviously,
        since the right-hand side merely restricts the quantification of $x$.
        This yields the central statement
        \[
        G \models \exists x~\phi(\bar w,x)
        \IFF
        \bigvee_{U \in \UU} G\langle U \rangle \models \exists x \in U~ \phi(\bar w,x).
        \]

        Since each $W_i$ is contained in an $r$-neighborhood of $G$,
        each $N_{2^q}[W_i]$ is contained in an $(r+2^q)$-neighborhood and
        (with $\sigma(2^q) \ge 2^q$) also
        in an $(r + \radius(2^q))$-neighborhood.
        Each set $N_{2^q}[V_i']$ is contained in $C_i$, which by construction is contained in an $\radius(2^q)$-neighborhood of $G$.
        It follows that $N_{2^q+2\radius(2^q)}[V_i'] = N_{2\radius(2^q)}[N_{2^q}[V_i']]$ is contained in a $3\radius(2^q)$-neighborhood in $G$.
        Hence, each $U \in \UU$ is contained in an $(r + 3\radius(2^q))$-neighborhood of $G$.
        To finish the proof, we have to compute a small representative set $S$ with \Cref{prop:cluster_imply_reps}.

        \newcommand{\Ci}{N_{2^{q-1}-1}[V_i']}
        \newcommand{\Cj}{N_{2^{q-1}-1}[V_j']}

    \paragraph{Flip and Type Computation.}
        As a first step towards computing $S$, we show how to use
        our given efficient $\mathrm{MC}(\CC,q,\ell + 1, c+3)$-algorithm
        to compute $\tp_q(G[\Ci]\langle X \mapsto V'_i\rangle)$ for all \mbox{$i \in [m]$}.
        To this end, we do for every $i\in[m]$ the following computations.
        Let $H_{\ell + 1} := G[\Ci]$.
        Note that this corresponds to a Connector move in the radius-$\radius(2^q) \leq \rho$ Flipper game.  
        We apply the radius-$\rho$ Flipper strategy 
        $\flipstar$ (for the class $\CC \ni G_0$) to the graph $H_{\ell + 1}$ and internal state $\II_\ell$, yielding a flip $\flip F$ and a new internal state $\II_{\ell + 1}$.
        By \Cref{thm:afg_main}, this takes time
        \[
        g_2(q) \cdot |V(G_0)|^2,
        \]
        for some function $g_2(q)$.
        Let $G_{\ell + 1} := H_{\ell + 1} \oplus \flip F$.
        We can now extend $\HH$ to a $(\CC,\rho)$-history of length $\ell + 1$ by appending the new pair $(G_{\ell + 1},\II_{\ell + 1})$.
        We spend $3$ additional colors to construct~$G^+_{\ell + 1}$ by marking in $G_{\ell + 1}$ with unary predicates the two flip sets from $\flip F$, as well as the vertices from $V_i'$.
        Next, we enumerate the set $\Phi$ of normalized first-order sentences with quantifier rank at most~$q$ over the signature of $G^+_{\ell + 1}$.
        Recall that $|\Phi|$ is bounded by a function of~$q$ and $c$.
        We use the given efficient $\mathrm{MC}(\CC,q,\ell + 1, c+3)$-algorithm to evaluate every formula from $\Phi$ on $G^+_{\ell + 1}$ and therefore compute $\tp_q(G^+_{\ell + 1})$ in time
        \[
        g_3(q,c) \cdot f_{\mathrm{MC}}(q,\ell + 1,c + 3,\eps) \cdot |V(G^+_{\ell + 1})|^{(({1+\epsilon})^{d-1})}\cdot |V(G_0)|^\stupidExponent,
        \]
        for some function $g_3(q,c)$.
        Let us now argue how to derive $\tp_q(G[\Ci]\langle V'_i\rangle)$ from $\tp_q(G^+_{\ell + 1})$.
        This is easy to do by observing that for every sentence $\psi$, we have $\psi \in \tp_q(G[\Ci]\langle V'_i\rangle)$ 
        if and only if $\psi' \in \tp_q(G^+_{\ell + 1})$ where $\psi'$ is obtained from $\psi$ by substituting every occurrence of the edge relation $E(x,y)$ 
        with $E(x,y) \oplus \big((x \in A \wedge y\in B) \vee (x \in B \wedge y\in A) \big)$ where $A$ and $B$ are the color predicates marking the flip sets of $\flip F$.
        Similarly, we can derive $\tp_q(G[\Ci]\langle X \mapsto V'_i\rangle)$.

    \paragraph{Computing a Representative Set.}
        Now we use the previously computed $q$-types to pick $S$ as a minimal subset of $[m]$ such that
        \[
        \{ \tp_q(G[\Ci]\langle X \mapsto V_i' \rangle) \mid i \in [m]\} = 
        \{ \tp_q(G[\Ci]\langle X \mapsto V_i' \rangle) \mid i \in S\}.
        \]

        The size of $S$ is at most the number of possible $q$-types on graphs with $c+3$ colors, and thus can be bounded as a function of $q$ and $c$.
        In order to show that $S$ satisfies \Cref{prop:cluster_imply_reps},
        let us fix $\bar w \in W_1 \times \dots \times W_{|\bar y|}$
        and argue that 
        \begin{linenomath*}
        \begin{equation*}
            \bigvee_{i=1}^m ~ 
            G\langle V_i'\rangle
            \models 
            \exists x \in V_i'~ \phi(\bar w,x)
        \quad \Longrightarrow \quad
            \bigvee_{i \in S} ~ 
            G\langle X \mapsto N_{2^{q} + 2\radius(2^q)}(V_i') \rangle
            \models
            \exists x \in X~ \phi(\bar w,x).
        \end{equation*}
        \end{linenomath*}

        Assume
        $G\langle V_i'\rangle
        \models 
        \exists x \in V_i'~ \phi(\bar w,x)$ for some $i$.
        If $V_i' \subseteq \bigcup_{j \in S} N_{2^{q}+2\radius(2^q)}[V_j']$ for some $j\in S$, then
        the right-hand side follows immediately, so we can assume 
        $V_i' \not\subseteq \bigcup_{j \in S} N_{2^{q}+2\radius(2^q)}[V_j']$ for all $j\in S$.
        Fix some $j \in S$ and let us show that $\dist(V_i',V_j') > 2^{q}$ and $\dist(\bar w, V_i' \cup V_j') > 2^q$.
        Since we have $V_i' \not\subseteq \bigcup_{j \in S} N_{2^q+2\radius(2^q)}[V_j']$, there exists a vertex in $V_i'$ that has distance greater than $2^q + 2\radius(2^q)$ from every vertex in $V_j'$.
        Since $V_i'$ embeds in a subgraph of $G$ with diameter at most $2\radius(2^q)$, every vertex in $V_i'$ has distance greater than $2^q$ from every vertex in $V_j'$.
        This means $\dist(V_i',V_j') > 2^{q}$.
        We finally establish $\dist(\bar w, V_i' \cup V_j') > 2^q$ by combining
        \[
        V_i' := V_i \setminus N_{2^q}\biggl[\bigcup_{k=1}^{|\bar y|} W_k\biggr],
        \quad
        V_j' := V_j \setminus N_{2^q}\biggl[\bigcup_{k=1}^{|\bar y|} W_k\biggr],
        \quad
        \bar w \in W_1 \times \dots \times W_{|\bar y|}.
        \]
        
        The set $S$ was chosen representative in the sense that there is some $j \in S$
        with 
        \[
        \tp_q(G[\Ci]\langle V_i' \rightarrow X \rangle) 
        = 
        \tp_q(G[\Cj]\langle V_j' \rightarrow X \rangle)
        .\]

        Since 
        $G\recolor{V_i'} \models \exists x \in V_i'~ \phi(\bar w,x)$, by \Cref{thm:rankPreservLocalSet}, also
        $G\recolor{V_j'} \models \exists x \in V_j'~ \phi(\bar w,x)$ and the right-hand side holds.
        Hence, $S$ satisfies \Cref{prop:cluster_imply_reps}.

        \paragraph{Running Time Analysis.}

        At first, we analyze the running time spent for the computations in the paragraph \emph{Flip and Type Computation}.
        As stated there, the run time is (using $m \le n$) bounded by
        \begin{equation}\label{eq:runTimeFlips}
        \sum_{i \in [m]} g_2(q) \cdot |V(G_0)|^2 \le n \cdot g_2(q) \cdot |V(G_0)|^2
        \end{equation}
        for computing the flips,
        plus
        \[
        \sum_{i \in [m]}
        g_3(q,c) \cdot f_{\mathrm{MC}}(q,\ell + 1,c + 3,\eps) \cdot |N_{2^q}[V_i']|^{(({1+\epsilon})^{d-1})}\cdot |V(G_0)|^\stupidExponent
        \]
        for computing the $q$-types.
        Note that for all $\alpha \ge 1$ and non-negative numbers $n_1,\dots,n_m$ we have
        $\sum_{i\in [m]} n_i^\alpha \le (\sum_{i \in [m]} n_i)^\alpha$,
        bounding the running time for the $q$-type computation by
        \[
        g_3(q,c) \cdot f_{\mathrm{MC}}(q,\ell + 1,c + 3,\eps) \cdot \Big(\sum_{i \in [m]} |N_{2^q}[V_i']|\Big)^{(({1+\epsilon})^{d-1})}\cdot |V(G_0)|^\stupidExponent.
        \]

        For every $i\in m$, we have $N_{2^q}[V_i'] \subseteq N_{2^q}[V_i] \subseteq C_i$, yielding
        \[
        \sum_{i \in [m]}
        |N_{2^q}[V_i']|
        \leq 
        \sum_{i \in [m]}
        |C_i|
        \leq
        g_1(q,\varepsilon,\ell)\cdot n^{1+\eps},
        \]
        where the last bound follows from the fact that we have $n$ vertices, each occurring in at most 
        $g_1(q,\varepsilon,\ell)\cdot n^{\eps}$ clusters of the cover $\{C_1,\dots,C_m\}$.
        Combining the previous two inequalities bounds the running time of the type computation by
        \[
        g_3(q,c) \cdot f_{\mathrm{MC}}(q,\ell + 1,c + 3,\eps) \cdot \Big(g_1(q,\varepsilon,\ell) \cdot n^{1+\eps}\Big)^{(({1+\epsilon})^{d-1})}\cdot |V(G_0)|^\stupidExponent,
        \]
        which is equal to
        \begin{equation}\label{eq:runTimeTypes}
        g_3(q,c) \cdot f_{\mathrm{MC}}(q,\ell + 1,c + 3,\eps) \cdot
        g_1(q,\varepsilon,\ell)^{(({1+\epsilon})^{d-1})} \cdot
        n^{(({1+\epsilon})^{d})}\cdot |V(G_0)|^\stupidExponent.
        \end{equation}

        The total running time spent in this paragraph, as given by the sum of (\ref{eq:runTimeFlips}) and (\ref{eq:runTimeTypes}) can by bounded by
        $
        g_4(q,\varepsilon,\ell) \cdot
        n^{(({1+\epsilon})^{d})}\cdot |V(G_0)|^\stupidExponent,
        $
        for some function $g_4(q,\varepsilon,\ell)$.

        The computation in the paragraph \emph{Neighborhood Cover Computation} takes time $\Oof(n^\stupidExponent)$.
        Since the size of the representative set is bounded by a function of $q$ and $c$, 
        we can bound the computation time for the paragraphs \emph{Splitting the Existential Quantifier} and \emph{Computing a Representative Set}
        by $g_5(q,c,|\bar y|)\cdot n^2$,
        for some function $g_5(q,c,|\bar y|)$.
        Since $n \le |V(G_0)|$,
        we can choose a function $f(q,c,\ell,\varepsilon,|\bar y|)$
        such that the total running time is bounded by
        \[
        f(q,c,\ell,\varepsilon,|\bar y|) \cdot n^{(({1+\epsilon})^d)}\cdot |V(G_0)|^\stupidExponent.\qedhere
        \] 
\end{proof}

Now we obtain our main result \Cref{thm:mainTool1}
by simply applying \Cref{thm:mainTool2} repeatedly, once for each quantifier.
This will require no new insights, but will be a bit tedious to analyze.
To help our inductive proof, we prove the following stronger statement.
Then \Cref{thm:mainTool1} follows as a special case when $\phi$ has no free variables, $p=q$ and $r = 0$.

\pagebreak
\begin{lemma}\label{lem:mainTool1}
    Let $\CC$ be a monadically stable graph class admitting flip-closed sparse weak neighborhood covers with spread $\radius$, and let $\rho = \mainRadius$.
    Given as input 
    \begin{itemize}
        \item a $\big(\CC,\rho\big)$-history $\HH = (G_0, \mathcal{I}_0),\ldots,(G_\ell,\mathcal{I}_\ell)$ of length $\ell$,
        \item a coloring $G$ of $G_\ell$ with a signature of at most $c$ colors,
        \item a formula $\phi(\bar y)$ with quantifier rank at most $p \le q$,
        \item sets $W_1,\dots,W_{|\bar y|}$, each contained in an $r$-neighborhood of $G$, and
        \item an efficient $\mathrm{MC}(\CC,q,\ell + 1, c+3)$-algorithm,
    \end{itemize}
    one can compute
    sets $U_1,\dots,U_t \subseteq V(G)$, for some constant $t$ depending only on $p$, $q$, $c$, and $|\bar y|$, 
    as well as a $(U_1,\dots,U_t)$-guarded formula $\xi(\bar y)$ of quantifier rank $p$.
    Each $U_i$ is contained in an $(r + q \cdot 3\radius(2^q))$-neighborhood of $G$ and
    for all tuples $\bar w \in W_1 \times \ldots \times W_{|y|}$ we have
    \[G \models \phi(\bar w) \IFF G\langle U_1,\dots,U_t \rangle \models \xi(\bar w).\]

    There exists a function $f(p,q,c,\ell,\eps)$ such that for every $\eps > 0$, the run time of this procedure is bounded by
    \[
    f(p,q,c,\ell,\eps) \cdot |V(G)|^{(({1+\epsilon})^d)}\cdot |V(G_0)|^\stupidExponent,
    \] 
    where $d := \gameDepth(\CC,\rho)-\ell$ bounds the number of rounds needed to win the remaining Flipper game.
\end{lemma}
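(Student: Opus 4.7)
We proceed by induction on $p$, proving the slightly stronger statement that each guard $U_i$ lies in an $(r + p \cdot 3\radius(2^q))$-neighborhood of $G$; since $p \le q$, this implies the stated bound. Inside a fixed $p$ we further recurse on the syntactic structure of $\phi$. For the base case $p = 0$, the formula $\phi(\bar y)$ is quantifier-free, so we return $\xi := \phi$ together with the empty family of guards; the claimed equivalence is immediate, and there is nothing to do.

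For $p \ge 1$, put $\phi$ in negation normal form. If $\phi$ is a Boolean combination (conjunction, disjunction, or negation) of formulas of rank at most $p$, apply the lemma recursively to each component with the same $W_1, \ldots, W_{|\bar y|}$ and assemble the resulting guarded formulas under the same Boolean connective while unioning the guard families. The essential case is $\phi = \exists x~\psi(\bar y, x)$ with $\psi$ of rank at most $p - 1$. Invoke \Cref{thm:mainTool2} on this formula with the given $W_1, \ldots, W_{|\bar y|}$, producing sets $U_1, \ldots, U_s$, each contained in an $(r + 3\radius(2^q))$-neighborhood of $G$, together with the equivalence
\[
G \models \exists x~\psi(\bar w, x) \IFF \bigvee_{i=1}^{s} G\recolor{U_i} \models \exists x \in U_i~\psi(\bar w, x)
\]
for all $\bar w \in W_1 \times \cdots \times W_{|\bar y|}$. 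For each $i$, apply the inductive hypothesis to $\psi(\bar y, x)$, with rank bound $p - 1$, the same $q$, and the extended family $W_1, \ldots, W_{|\bar y|}, U_i$, all of whose members lie in an $(r + 3\radius(2^q))$-neighborhood. This yields a $\UU_i$-guarded formula $\xi_i(\bar y, x)$ of quantifier rank at most $p-1$ whose guards sit inside $((r + 3\radius(2^q)) + (p-1) \cdot 3\radius(2^q))$-neighborhoods, i.e.\ inside $(r + p \cdot 3\radius(2^q))$-neighborhoods. Substitute $\bigvee_i \exists x \in U_i~\xi_i(\bar y, x)$ for $\exists x~\psi(\bar y, x)$ in the enclosing guarded formula and collect $\UU := \bigcup_i (\UU_i \cup \{U_i\})$. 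The resulting formula has quantifier rank $p$ and is $\UU$-guarded by construction; the universal case $\forall x~\psi$ is dual (or rewrite as $\neg \exists x~\neg \psi$).

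The radius bound propagates exactly because each existential peel contributes one $3\radius(2^q)$ and there are at most $p$ such peels. The guard count $|\UU|$ is bounded by the product of the number of syntactic subformulas of $\phi$ (a function of $p$ and the signature, hence of $p$ and $c$) with the guard count $t_2(q, c, |\bar y|')$ returned by \Cref{thm:mainTool2}, iterated over at most $p$ nested layers in which $|\bar y|$ increases by one each time; this yields a bound depending only on $p, q, c, |\bar y|$. For the runtime, the overall number of calls to \Cref{thm:mainTool2} is again a function of $p, q, c, |\bar y|$, and each individual call fits within $f_2(q, c, \ell, \eps, |\bar y|') \cdot |V(G)|^{(1+\eps)^d}\cdot |V(G_0)|^\stupidExponent$ by \Cref{thm:mainTool2}. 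Summing and absorbing polynomial bookkeeping into an enlarged prefactor $f(p,q,c,\ell,\eps)$ gives the target time bound.

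The main conceptual hurdle is really book-keeping. One must track that at every recursive invocation the family $W_1, \ldots, W_{|\bar y|}, U_i$ consists of sets sitting inside a common $(r + k\cdot 3\radius(2^q))$-neighborhood after $k$ existential peels, that $|\bar y|$ grows by exactly one per peel so that the $|\bar y|$-dependent constants from \Cref{thm:mainTool2} remain controlled by $p$, and that the recursive guarded subformulas still have rank at most $p-1$ so the quantifier rank of the final $\xi$ is $p$. Once these invariants are in place, the proof is just a careful application of \Cref{thm:mainTool2} once per existential quantifier, together with trivial Boolean assembly.
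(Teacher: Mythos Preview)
Your proposal is correct and follows essentially the same approach as the paper: induction on $p$, with the existential case handled by one call to \Cref{thm:mainTool2} followed by a recursive call on the body with the freshly produced guard appended to the list $W_1,\dots,W_{|\bar y|}$. Your explicit strengthening of the radius invariant to $(r + p\cdot 3\radius(2^q))$ is exactly what the paper needs (and what it implicitly uses), and your bookkeeping of $|\bar y|$, guard counts, and runtime matches the paper's.
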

\begin{proof}
    Let $\eps > 0$.
    We prove the lemma by induction on $p$.
    For $p=0$, note that every quantifier-free formula is $\emptyset$-guarded, and thus we can set $\xi(\bar y) := \phi(\bar y)$ and there is nothing more to show.
    Thus assume $p > 0$ and that the statement holds for $p-1$.
    We will construct an algorithm for $p$ using the assumed algorithm for $p-1$ as a subroutine.

    By normalization,
    $|\phi|$ depends only on $p$, $c$ and $|\bar y|$.
    Furthermore
    $\phi(\bar y)$ is a boolean combination of formulas of the form $\exists x~\psi(\bar y,x)$ of quantifier rank at most $p$.
    Thus, it is sufficient to prove the theorem for a single such formula $\exists x~\psi(\bar y,x)$.
    We apply \Cref{thm:mainTool2} giving it as input
    \begin{itemize}
        \item the history $\HH = (G_0, \mathcal{I}_0),\ldots,(G_\ell,\mathcal{I}_\ell)$,
        \item the coloring $G$ of $G_\ell$ with a signature of at most $c$ colors,
        \item the formula $\exists x~\psi(\bar y,x)$ of quantifier rank at most $p\leq q$,
        \item the sets $W_1,\dots,W_{|\bar y|}$, each contained in an $r$-neighborhood of $G$, and
        \item the given $\mathrm{MC}(\CC,q,\ell + 1, c+3)$-algorithm.
    \end{itemize}
    
    In time
    \begin{equation}\label{eq:asdf1}
    f'(q,c,\ell,\varepsilon,|\bar y|) \cdot |V(G)|^{(({1+\epsilon})^d)}\cdot |V(G_0)|^\stupidExponent
    \end{equation}
    this yields sets $R_1,\dots,R_{t'} \subseteq V(G)$ for some constant $t'$ depending only on $q$, $c$, and $|\bar y|$. 
    Each $R_i$ is contained in an $(r + 3\radius(2^q))$-neighborhood of $G$,
    such that for all tuples $\bar w \in W_1 \times \ldots \times W_{|y|}$, we have
    \begin{equation}\label{eq:asdf3}
    G \models \exists x~\psi(\bar w,x)
    \IFF
    \bigvee_{i=1}^{t'}
    G\langle R_i \rangle \models \exists x \in R_i~ \psi(\bar w,x).
    \end{equation}

    \pagebreak
    For each $i \in [t']$ we apply the algorithm for $p-1$ given by the induction hypothesis on
    \begin{itemize}
        \item the history $\HH$, graph $G$, and $\mathrm{MC}(\CC,q,\ell + 1, c+3)$-algorithm,
        \item the formula $\psi(\bar y,x)$ of quantifier rank at most $p-1 \le q$,
        \item the sets $W_1,\dots,W_{|y|},R_i \subseteq V(G)$, each contained in an $(r + 3\radius(2^q))$-neighborhood of $G$.
    \end{itemize}

    In time
    \begin{equation}\label{eq:asdf2}
    f(p-1,q,c,\ell,\varepsilon,|\bar y| + 1) \cdot |V(G)|^{(({1+\epsilon})^d)}\cdot |V(G_0)|^\stupidExponent
    \end{equation}
    this yields a family of guarding sets $\UU_i$
    with $|\UU_i|$
    depending on $p-1$, $q$, $c$, and $|\bar y|$, as well as 
    a $\UU_i$-guarded formula $\xi_i(\bar y)$ of quantifier rank $q-1$.
    Each $U\in \UU_i$ is contained in an $\big(r + 3\radius(2^q) + (q-1)\cdot 3\radius(2^{q-1})\big)$-neighborhood 
    of $G$
    (and thus in an $(r + q \cdot 3\radius(2^q))$-neighborhood of $G$).
    For all tuples $\bar w v \in W_1 \times \ldots \times W_{|\bar y|} \times R_i$ we have
    \[
    G \models \psi(\bar w,v) \IFF G\langle \UU_i \rangle \models \xi_i(\bar w,v).
    \]

    Since the above statement holds no matter how $v \in R_i$ is chosen, existentially quantifying $v \in R_i$ preserves the equivalence.
    Hence, for all tuples $\bar w \in W_1 \times \ldots \times W_{|y|}$
    \begin{equation}\label{eq:asdf4}
    G\langle R_i \rangle \models \exists x \in R_i~ \psi(\bar w,x) \IFF
    G\langle R_i \rangle\langle \UU_i \rangle \models \exists x\in R_i~ \xi_i(\bar w,x).
    \end{equation}

    Combining (\ref{eq:asdf3}) and (\ref{eq:asdf4}) yields for every $\bar w \in W_1 \times \ldots \times W_{|y|}$,
    \[ 
    G \models \exists x~\psi(\bar w,x)
    \IFF
    \bigvee_{i=1}^{t'}
    G\langle R_i \rangle\langle \UU_i \rangle \models \exists x\in R_i~ \xi_i(\bar w,x),
    \]
    which is equivalent to 
    \[
    G\langle R_1 \rangle\langle \UU_1 \rangle \dots \langle R_{t'} \rangle\langle \UU_{t'}\rangle \models
    \bigvee_{i=1}^{t'}
    \exists x\in R_i~ \xi_i(\bar w,x).
    \]

    Thus, we can define our guarding sets $\UU = \{U_1,\dots,U_t\}$
    as $\UU := \{R_1,\dots,R_{t'} \} \cup \bigcup_{i=1}^{t'} \UU_i$.

    The running time is bounded by the bound (\ref{eq:asdf1}) for the invocation of \Cref{thm:mainTool2},
    plus $t'$ times the bound (\ref{eq:asdf2}) for the recursive calls with $p-1$, plus some minor bookkeeping overhead.
    We can choose $f(p,q,c,\ell,\eps)$ such that this is at most
    \[
    f(p,q,c,\ell,\eps) \cdot |V(G)|^{(({1+\epsilon})^d)}\cdot |V(G_0)|^\stupidExponent.\qedhere
    \] 
\end{proof}

\subsection{Reducing the Evaluation Radius}

Our overall goal is to evaluate a sentence with quantifier rank $q$ on a graph $G$.
In the previous section, we have rewritten the sentence into an equivalent $\UU$-guarded sentence of the same quantifier rank
using guards $\UU = \{U_1,\dots,U_t\}$.
Each of the sets $U_i \subseteq V(G)$ is contained in an $r := q\cdot 3\sigma(2^q)$-neighborhood of $G$
and thus the induced graph $G[U_1 \cup \dots \cup U_t]$ consists of components, which are contained in neighborhoods with radius at most $(2r+1)t$ in $G$.

One could imagine evaluating the $\UU$-guarded sentence on $G$ 
by recursing into each of these components and to compute flips using the strategy for the radius-$(2r+1)t$ Flipper game.
Let us argue that this cannot work.
The radius of the Flipper game is not allowed to grow over time, since otherwise the game is not guaranteed to terminate in a fixed number of rounds.
In our construction, however, the number $t$ of guards depends on the number of colors $c$ added over time and thus grows with the number of rounds of the Flipper game played so far.
Thus, we are not allowed to recurse into components with radius $(2r+1)t$.
We have to choose a fixed radius $\rho$ for the Flipper game, depending only on $q$ and $\CC$.

In this section, we show that we can evaluate the $\UU$-guarded sentence by only looking at subgraphs of $G$ that are contained in neighborhoods of radius $\rho := (2r+1)(2^q+1)$,
a quantity that depends only on $q$ and $\CC$ and does not grow over time.
This is a consequence of the following \Cref{thm:booleanCombRadius}.
Remember that, to avoid lengthy additional notation, $\UU$ will refer both to unary predicates guarding a formula,
as well as the corresponding vertex sets in a graph $G$ that interpret these predicates.
It will be clear from the context which one is meant.

\begin{theorem}\label{thm:booleanCombRadius}
    For a given $\UU$-guarded sentence $\phi$ with quantifier rank at most $q$ and symmetric relation $\RR \subseteq \UU \times \UU$,
    one can compute a sentence $\phi_\RR$ such that
    for every graph $G$ and set $\UU \subseteq \mathcal{P}(V(G))$
    satisfying $\RR = \{ (U,W) \in \UU \times \UU \mid U$ and $W$ share a vertex or a connecting edge in $G \}$,
    we have
    \[
    G\langle \UU\rangle \models \phi \IFF G\langle \UU \rangle \models \phi_\RR.
    \]

    Moreover, $\phi_\RR$ is a boolean combination of sentences with quantifier rank at most $q$
    and each sentence mentioned in $\phi_\RR$ is $\UU'$-guarded for some $\UU' \subseteq \UU$ such that the graph $(\UU,\RR)[\UU']$ has diameter at most $2^q$.

    In particular, if each set of $\UU \subseteq \mathcal{P}(V(G))$ is contained in an $r$-neighborhood of $G$,
    then $\bigcup \UU'$ is contained in a subgraph of $G$ with diameter at most $(2r+1)(2^q+1)$.
\end{theorem}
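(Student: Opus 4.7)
The plan is to prove the theorem by induction on the quantifier rank $q$, using a strengthened hypothesis that also accommodates formulas with free variables. For the base case $q=0$, a $\UU$-guarded sentence over our graph signature (no constants) collapses to a boolean combination of $\top$ and $\bot$, so setting $\phi_\RR := \phi$ makes it $\emptyset$-guarded, with the empty induced subgraph of diameter $0 \le 2^0$ and union of guards empty (hence contained in a degenerate subgraph of diameter $0 \le (2r+1)(2^q+1)$).

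For the inductive step, first normalize $\phi$ to be a boolean combination of formulas $\exists x \in U\ \psi(x)$ and $\forall x \in U\ \psi(x)$ with $\psi$ of rank $q-1$. By duality and by propagating the decomposition through the outer boolean connectives, it suffices to handle a single piece $\exists x \in U\ \psi(x)$. The key technical step is the following strengthened claim, proved in parallel by induction on $p$: for every $\UU$-guarded formula $\psi(\bar y)$ of rank $p$ together with a designation of guards $V_1,\dots,V_{|\bar y|}$ for the free variables, $\psi$ is equivalent to a boolean combination of conjunctions $\alpha(\bar y) \wedge \beta$, where $\alpha$ is $\UU_\alpha$-guarded with $\{V_1,\dots,V_{|\bar y|}\} \subseteq \UU_\alpha$ and $(\UU,\RR)[\UU_\alpha]$ of diameter at most $2^p$, and $\beta$ is a boolean combination of sentences each $\UU_\beta$-guarded with $(\UU,\RR)[\UU_\beta]$ of diameter at most $2^p$. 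Given this, $\exists x \in U\ \psi(x) \equiv \bigvee_i (\beta_i \wedge \exists x \in U\ \alpha_i(x))$ is a boolean combination of sentences guarded by diameter-$2^{q-1} \le 2^q$ subsets (after adding $U$ to each $\UU_\alpha$), delivering $\phi_\RR$.

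The crucial observation for the strengthened claim is that $\RR$ exactly captures the shared-vertex-or-edge relation: if $(V,W) \notin \RR$, then no vertex of $V$ is equal or adjacent to any vertex of $W$, so every atom $E(y,z)$ or $y = z$ with $y$ guarded by $V$ and $z$ guarded by $W$ evaluates to $\bot$. After pruning such atoms inside $\psi$, the \emph{syntactic Gaifman graph} of the formula (with quantified and free variables as vertices, and edges between variables co-occurring in an atom) has the property that every edge lifts to an $\RR$-edge between the corresponding guards. Hence the vertex sets of its connected components yield guards that induce connected subgraphs of $(\UU,\RR)$. A Feferman--Vaught-style rewriting then separates $\psi$ into a boolean combination whose pieces each depend on a single component: components touching a free variable form the $\alpha$-conjuncts, and the remaining sentence-components contribute to $\beta$.

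The main obstacle is bookkeeping the diameter bound through nested quantifiers and boolean combinations, while at the same time achieving the sharp bound $2^p$ rather than the $7^p$ that naive Gaifman locality would give. The improvement to $2^p$ should be extracted from the local-game machinery of Section~\ref{sec:local-games}: a rank-$p$ first-order formula can only syntactically reach within distance $2^{p-1}$ of previously quantified variables, and iterating this bound across the quantifier prefix accumulates to a reach of $2^p$ in the meta-graph $(\UU,\RR)$. A secondary obstacle is performing the Feferman--Vaught separation \emph{uniformly and effectively} — quantifiers must be commuted past boolean connectives without blowing up the diameter of the involved guards, which requires choosing the decomposition so that each quantifier binds within a single syntactic component. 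Once these are in place, the geometric ``in particular'' clause is immediate: if each $U \in \UU'$ lies in an $r$-neighborhood of $G$ and $(\UU,\RR)[\UU']$ has diameter $d \le 2^q$, then any two vertices of $\bigcup \UU'$ can be connected in $G$ by a path of length at most $r + d(2r+1) + r \le (2r+1)(2^q+1)$.
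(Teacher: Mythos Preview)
Your overall strategy and the ``in particular'' calculation are correct and match the paper's proof, which proceeds via an auxiliary lemma for formulas with free variables proved by induction on the structure of $\phi$. However, your strengthened inductive hypothesis is mis-stated in a way that makes it fail already in the base case. You require each $\alpha$-conjunct to satisfy $\{V_1,\dots,V_{|\bar y|}\} \subseteq \UU_\alpha$ with $(\UU,\RR)[\UU_\alpha]$ of diameter at most $2^p$, i.e., a \emph{single} small-diameter guard set containing the guards of \emph{all} free variables. For the quantifier-free formula $P(y_1) \wedge Q(y_2)$ with $(V_1,V_2) \notin \RR$, no guard set of diameter at most $1$ contains both $V_1$ and $V_2$, so the claim is false. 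The correct hypothesis (the one the paper proves) weakens this: $\psi_\RR(\bar y)$ is a boolean combination of formulas $\xi$, each $\UU_\xi$-guarded with $(\UU,\RR)[\UU_\xi]$ of diameter at most $2^p$ and $\{V_i : y_i \in \free(\xi)\} \subseteq \UU_\xi$ --- only the guards of variables \emph{actually free in~$\xi$} must lie in $\UU_\xi$.

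With this fix, the quantifier step for $\exists x \in U~\psi(\bar y, x)$ goes as follows: apply the hypothesis to $\psi$, partition the resulting pieces into $\Psi_x$ (those with $x$ free) and the rest, expand over all truth assignments $t$ to the rest, and for each $t$ substitute those truth values to obtain $\psi^t_\RR$. The new piece $\omega := \exists x \in U~\psi^t_\RR$ is then $\UU_\omega$-guarded where $\UU_\omega := \bigcup_{\xi \in \Psi_x} \UU_\xi$; since each such $\UU_\xi$ contains $U$ (by the corrected hypothesis applied to the free variable $x$) and has diameter at most $2^{q-1}$, the union has diameter at most $2 \cdot 2^{q-1} = 2^q$. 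This doubling is purely syntactic and is the entire content of the $2^q$ bound; the local-game machinery concerns distances in $G$, not in the meta-graph $(\UU,\RR)$, and plays no role here. Your syntactic-Gaifman-graph picture is good intuition but does not by itself yield the $2^q$ bound without precisely this inductive argument.
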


To see that the final ``In particular, \dots'' part follows from the central part of the statement, assume each set of $\UU \subseteq \mathcal{P}(V(G))$ is contained in an $r$-neighborhood of $G$.
For a $\UU'$-guarded sentence~$\xi$ in $\psi_\RR$,
the graph $(\UU,\RR)[\UU']$ has diameter at most $2^q$
and all $(U_1,U_2) \in \RR$ share a vertex or a connecting edge in $G$.
As can be seen in the figure below, $\bigcup \UU'$ is contained in a subgraph of~$G$ with diameter at most $(2r+1)(2^q+1)$.

\begin{figure}[h!]
\begin{center}
\includegraphics{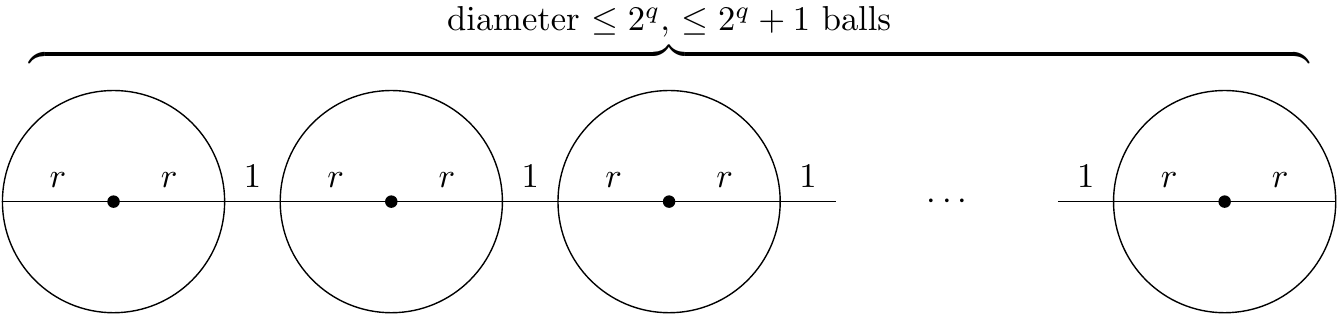}
\end{center}
\end{figure}

We prove the central part of \Cref{thm:booleanCombRadius} inductively using a stronger statement involving formulas with free variables.

\begin{lemma}\label{lem:booleanCombRadiusFree}
    For a given $\UU$-guarded formula $\phi(\bar y)$ with quantifier rank at most $q$, symmetric relation $\RR \subseteq \UU \times \UU$
    and sequence $U_1,\dots,U_{|\bar y|} \in \UU$
    once can compute a formula $\phi_\RR(\bar y)$ such that
    for every graph $G$, set $\UU \subseteq \mathcal{P}(V(G))$ associated with the predicates $\UU$
    satisfying $\RR = \{ (U,W) \in \UU \times \UU \mid U$ and $W$ share a vertex or a connecting edge in $G \}$,
    and every $\bar w \in U_1 \times \dots \times U_{|\bar y|}$ 
    we have
    \[
    G\langle \UU\rangle \models \phi(\bar w) \IFF G\langle \UU \rangle \models \phi_\RR(\bar w).
    \]

    Moreover, $\phi_\RR(\bar y)$ is a boolean combination of formulas with quantifier rank at most $q$
    and for each formula $\xi$ mentioned in $\phi_\RR$ there exists $\UU' \subseteq \UU$ such that
    $\xi$ is $\UU'$-guarded, $(\UU,\RR)[\UU']$ has diameter at most $2^q$, and $\{U_i \mid y_i \in \free(\xi) \} \subseteq \UU'$.
\end{lemma}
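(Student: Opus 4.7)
The plan is to proceed by induction on the quantifier rank $q$ of $\phi$, building $\phi_\RR$ by structural recursion on the formula.

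For the base case $q=0$, the formula $\phi(\bar y)$ is quantifier-free and hence a boolean combination of atoms over $\bar y$. For each atom $\alpha$, note that whenever $\bar w \in U_1 \times \dots \times U_{|\bar y|}$ and $y_i, y_j \in \free(\alpha)$ satisfy $(U_i, U_j) \notin \RR$, the characterization of $\RR$ guarantees $U_i \cap U_j = \emptyset$ and no edge between $U_i$ and $U_j$; hence any atom of the form $E(y_i, y_j)$ or $y_i = y_j$ evaluates to false on $\bar w$. We rewrite such atoms to $\bot$. Each remaining atom $\alpha$ then has its set of guards $\UU' := \{U_i : y_i \in \free(\alpha)\}$ forming at most an edge in $(\UU,\RR)$, whose induced subgraph has diameter at most $1 = 2^0$.

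For the inductive step, by distributing over boolean combinations and rewriting universals as negated existentials, it suffices to handle $\phi(\bar y) = \exists x \in U~\psi(\bar y, x)$ with $U \in \UU$ and $\psi$ of quantifier rank $q-1$. I apply the induction hypothesis to $\psi$ with guard sequence $(U_1, \dots, U_{|\bar y|}, U)$, obtaining $\psi_\RR(\bar y, x)$ as a boolean combination of $\UU'_k$-guarded subformulas $\xi_k$, each satisfying that $(\UU,\RR)[\UU'_k]$ has diameter at most $2^{q-1}$ and crucially that $U \in \UU'_k$ whenever $x \in \free(\xi_k)$. Convert $\psi_\RR$ to disjunctive normal form $\bigvee_j \bigwedge_k \alpha_{j,k}$ with each $\alpha_{j,k}$ a literal $\pm \xi_k$, and for each $j$ use the standard identity
\[
\exists x \in U \bigwedge_k \alpha_{j,k} \;\equiv\; \bigwedge_{k\colon x \notin \free(\alpha_{j,k})} \alpha_{j,k}(\bar y) \;\wedge\; \exists x \in U \bigwedge_{k\colon x \in \free(\alpha_{j,k})} \alpha_{j,k}(\bar y, x).
\]
The left conjuncts carry over from the inductive step with their existing guard sets of diameter at most $2^{q-1} \le 2^q$. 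For the right compound subformula I set $\UU' := \{U\} \cup \bigcup_{k\colon x \in \free(\alpha_{j,k})} \UU'_k$: since each $\UU'_k$ in the union already contains $U$ and has diameter at most $2^{q-1}$, the triangle inequality in $(\UU,\RR)$ gives diameter at most $2 \cdot 2^{q-1} = 2^q$. All inner quantifiers of this compound subformula are guarded by some $\UU'_k \subseteq \UU'$, and the outer $\exists x \in U$ uses $U \in \UU'$, so it is $\UU'$-guarded. Inclusion of $\{U_i : y_i \in \free\}$ in $\UU'$ follows from the analogous property of each $\UU'_k$.

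The hard part is the combinatorial bookkeeping: verifying that the newly introduced compound subformula $\exists x \in U~\bigwedge \alpha_{j,k}$ simultaneously satisfies all three required properties (guardedness, the diameter bound $2^q$, and inclusion of the free-variable guards). The DNF blow-up is irrelevant to correctness since the lemma asks only for computability, not efficient size. A small edge case arises when no $\alpha_{j,k}$ mentions $x$: then the inner conjunction is empty and $\exists x \in U~\top$ is retained as its own $\{U\}$-guarded subformula of diameter $0$ in $(\UU,\RR)$, which trivially fits the required form. Composing all of these pieces through the outer boolean connectives of $\phi$ yields the desired boolean combination $\phi_\RR(\bar y)$.
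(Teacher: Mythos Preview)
Your proposal is correct and follows essentially the same approach as the paper. Both proofs proceed by structural induction, handle atoms by collapsing those whose guards are not $\RR$-related to $\bot$, and in the existential case separate the subformulas of $\psi_\RR$ according to whether they contain $x$ free, then push the quantifier only over the $x$-containing part; the diameter bound $2^q$ is obtained in both via the common anchor $U$ in each $\UU'_k$ with $x\in\free(\xi_k)$. The only cosmetic difference is that you use DNF conversion to isolate the $x$-free literals, whereas the paper enumerates truth assignments $t:\Psi_{\text{\st{$x$}}}\to\{\bot,\top\}$ and substitutes; these are interchangeable standard devices for the same Feferman--Vaught-style decomposition.
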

\begin{proof}
    We consider an arbitrary graph $G$ and $\UU \subseteq \mathcal{P}(V(G))$ associated with the predicates $\UU$
    such that for all $U,W \in \UU$ we have $(U,W) \in \RR$ if and only if $U,W$ share a vertex or a connecting edge in~$G$.
    Let us also fix a sequence $U_1,\dots,U_{|\bar y|} \in \UU$.
    We prove the claim by induction over the structure of~$\phi$.

    \paragraph{Atoms.}
    Since $\phi$ is an atom, it is $\emptyset$-guarded and has either one or two free variables.
    Assume~$\phi(y_1)$ has a single free variable. We set $\phi_\RR := \phi$ and $\UU' := \{U_1\}$. 
    Then $\phi_\RR$ itself is $\UU'$-guarded and $(\UU,\RR)[\UU']$ trivially has diameter $0 \le 2^0$. 
    Assume now $\phi(y_1,y_2)$ is a binary atom, that is, without loss of generality either $E(y_1,y_2)$ or $(y_1 = y_2)$.
    If $(U_1,U_2) \in \RR$ we set $\phi_\RR := \phi$ and $\UU' := \{U_1,U_2\}$. 
    Again, $\phi_R$ is $\UU'$-guarded and $(\UU,\RR)[\UU']$ has diameter $1 = 2^0$. 
    Otherwise, $(U_1,U_2) \not\in \RR$ and $U_1, U_2$ neither share a vertex nor a connecting edge in $G$.
    This implies $G \not \models E(w_1,w_2)$ and $G \not\models (w_1 = w_2)$ for all $w_1 \in U_x, w_2 \in U_y$.
    We set $\phi_\RR$ to be the false atom $\bot$ and $\UU'= \emptyset$.

    \paragraph{Boolean Combinations.}
    If $\phi$ is of the form $\psi^1 \land \psi^2$ or $\neg\psi^1$ the construction is obvious:
    We obtain $\phi^1_\RR$ and $\phi^2_\RR$ via induction and set either 
    $\phi_\RR := \psi^1_\RR \land \psi^2_\RR$ or $\phi_\RR := \neg\psi^1_\RR$.

    \paragraph{Existential Quantifiers.}
    Assume $\phi(\bar y) = \exists x \in U ~ \psi(\bar yx)$.
    We apply the statement inductively on $\psi(\bar yx)$ (extending the sequence $U_1,\dots,U_{|\bar y|}$ with $U$) and obtain a boolean combination $\psi_\RR(\bar yx)$ of formulas with quantifier rank at most $q-1$
    such that for every
    $\bar wv \in U_{y_1} \times \dots \times U_{y_{|\bar y|}} \times U$
    \[
    G\langle \UU\rangle \models \psi(\bar wv) \IFF G\langle \UU \rangle \models \psi_\RR(\bar wv).
    \]

    For each formula $\xi$ mentioned in $\psi_\RR(\bar y x)$ we have, by induction, a set $\UU_\xi \subseteq \UU$
    such that $\xi$ is $\UU_\xi$-guarded and $(\UU,\RR)[\UU_\xi]$ has diameter at most $2^{q-1}$.
    The additional crucial property we obtain by induction is that $x \in \free(\xi)$ implies $U \in \UU_\xi$.
    We partition the formulas mentioned in $\psi_\RR(\bar y x)$ into sets $\Psi_{x}$ and $\Psi_{\text{\st{$x$}}}$, 
    where $\Psi_{x}$ contains all formulas $\xi$ with $x \in \free(\xi)$, and $\Psi_{\text{\st{$x$}}}$ contains all $\xi$ with $x \not\in \free(\xi)$.
    The formulas in $\Psi_{\text{\st{$x$}}}$ are independent of $x$ and we can thus write
    \[
    \exists x \in U~\psi_\RR(\bar y x) 
    \quad\equiv\quad
    \bigvee_{t:\Psi_{\text{\st{$x$}}} \rightarrow \{\bot,\top\}}
    \biggl(
        \Bigl(
        \bigwedge_{\xi \in \Psi_{\text{\st{$x$}}}} \bigl(\xi(\bar y)\leftrightarrow t(\xi)\bigr)\Bigr)
        \wedge \exists x \in U ~ \psi_\RR(\bar y x)
    \biggr).
    \]

    Now on the right-hand side, every occurrence of $\psi_\RR$ is in a scope where the truth value of 
    every $\xi \in \Psi_{\text{\st{$x$}}}$ is determined.
    Thus, we can replace every occurrence of $\xi$ in $\psi_\RR$ with said truth value $t(\xi) \in \{\bot,\top\}$.
    Let $\psi^t_\RR$ be the formula obtained from $\psi_\RR$ by replacing each occurrence of 
    $\xi \in \Psi_{\text{\st{$x$}}}$ with $t(\xi)$.
    We obtain the equivalence
    \[
    \exists x \in U~\psi_\RR(\bar y x) 
    \quad\equiv\quad
    \phi_\RR(\bar y) :=
    \bigvee_{t:\Psi_{\text{\st{$x$}}} \rightarrow \{\bot,\top\}}
    \biggl(
        \Bigl(
        \bigwedge_{\xi \in \Psi_{\text{\st{$x$}}}} \bigl(\xi(\bar y)\leftrightarrow t(\xi)\bigr)\Bigr)
        \wedge \exists x \in U ~ \psi^t_\RR(\bar y x)
    \biggr).
    \]

    Hence, for every $\bar w \in U_{y_1} \times \dots \times U_{y_{|\bar y|}}$
    \[
    G\langle \UU \rangle \models \phi(\bar w)
    \IFF
    G\langle \UU \rangle \models \phi_\RR(\bar w).
    \]

    We observe that $\phi_\RR(\bar y)$ is a boolean combination of old formulas from $\Psi_{\text{\st{$x$}}}$
    and new formulas of the form $\exists x \in U~ \psi_\RR^t$.
    All these formulas have quantifier rank at most $q$.

    Consider now a new formula $\omega := \exists x \in U~ \psi_\RR^t$ and let $\UU_\omega := \bigcup_{\xi \in \Psi_x} \UU_\xi$.
    Since $\psi^t_\RR$ eliminated all formulas from $\Psi_{\text{\st{$x$}}}$,
    we know that $\omega$ is $\UU_\omega$-guarded.
    For all $\xi \in \Psi_x$ we have $x \in \free(\xi)$ and thus, as noted previously, $U \in \UU_\xi$.
    By induction, each graph $(\UU,\RR)[\UU_\xi]$ has diameter at most $2^{q-1}$.
    This means $(\UU,\RR)[\UU_\omega]$ is covered by graphs that all overlap in $U$ and have diameter at most $2^{q-1}$,
    implying that $(\UU,\RR)[\UU_\omega]$ has diameter at most $2^q$.
    Since $\free(\omega) \subseteq \bigcup_{\xi \in \Psi_x} \free(\xi)$, we also have $\{U_i \mid y_i \in \free(\omega) \} \subseteq \UU_\omega$.
\end{proof}

We remark that one obtains the classical Feferman--Vaught theorem \cite{feferman1957some, makowsky2004algorithmic}
for disjoint unions as a corollary of \Cref{thm:booleanCombRadius}:
Assume one wants to evaluate a sentence on the disjoint union of graphs $G_1$ and $G_2$.
Replace every quantifier $\exists x~\psi$ with $\exists x \in V(G_1)~\psi \lor \exists x \in V(G_2)~\psi$,
and proceed similarly for all universal quantifiers.
This gives us an equivalent $\UU$-guarded sentence with $\UU = \{V(G_1),V(G_2)\}$.
For the relation $\RR \subseteq \UU \times \UU$ corresponding to the disjoint union of $G_1$ and~$G_2$,
the graph $(\UU,\RR)$ consists of two vertices $V(G_1)$ and $V(G_2)$ that are not connected by an edge.
We apply \Cref{thm:booleanCombRadius} with relation $\RR$.
Each sentence in the boolean combination we obtain is $\UU'$-guarded for some $\UU' \subseteq \UU$ such that 
$(\UU,\RR)[\UU']$ is connected. This leaves only $\UU' \subseteq \{V(G_1)\}$ and $\UU' \subseteq \{V(G_2)\}$.
Hence, we have a boolean combination of sentences that are evaluated in either $G_1$ or $G_2$.

\subsection{Main Result}

We are ready to prove the main result.
We start by combining the observations from the previous two subsections into an inductive step on the depth of the Flipper game.

\begin{theorem}\label{thm:mainInductiveStep}
    Let $\CC$ be a monadically stable graph class admitting flip-closed sparse weak neighborhood covers with spread $\radius$, and let $q,\ell\in\N$. 
    If $\CC$ for every $c \in \N$ has an efficient $\mathrm{MC}(\CC,q,\ell+1,c)$-algorithm,
    then $\CC$ for every $c \in \N$ also has an efficient $\mathrm{MC}(\CC,q,\ell,c)$-algorithm.
\end{theorem}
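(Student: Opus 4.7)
The plan is to combine the three tools from this section so that one pass of the algorithm consumes exactly one round of the Flipper game: Theorem~\ref{thm:mainTool1} rewrites $\phi$ as a guarded sentence, Theorem~\ref{thm:booleanCombRadius} localizes that guarded sentence into a boolean combination of pieces whose guards together span only $\rho$-diameter subgraphs, and finally $\flipstar$ together with the assumed lower-depth MC algorithm evaluates each piece.

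Concretely, given input $(\HH,G,\phi)$ with $G$ carrying at most $c$ colors, first invoke Theorem~\ref{thm:mainTool1} using the hypothesized $\mathrm{MC}(\CC,q,\ell+1,c+3)$-algorithm. This yields, in time already matching the bound required for an $\mathrm{MC}(\CC,q,\ell,c)$-algorithm, a constant number $t$ of guards $\UU = \{U_1,\ldots,U_t\}$ (with $t$ depending only on $q$ and $c$), each contained in an $r$-neighborhood of $G$ for $r := q\cdot 3\sigma(2^q)$, together with a $\UU$-guarded sentence $\xi$ of quantifier rank $q$ with $G\models\phi \iff G\langle\UU\rangle\models\xi$. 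Next, compute in polynomial time the symmetric relation $\RR\subseteq\UU\times\UU$ where $(U,W)\in\RR$ iff $U,W$ share a vertex or a connecting edge in $G$, and apply Theorem~\ref{thm:booleanCombRadius} to obtain an equivalent boolean combination $\xi_\RR$ of $\UU'$-guarded sentences $\omega$, each with $\bigcup\UU'$ contained in a subgraph of $G$ of diameter at most $\rho := (2r+1)(2^q+1)$. By normalization only a constant number of distinct $\omega$ appear.

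For each such $\omega$, set $H_\omega := G[\bigcup\UU']$, so that $G\langle\UU\rangle\models\omega \iff H_\omega\langle\UU'\rangle\models\omega$ by Observation~\ref{obs:guarded_universe}. Because $H_\omega$ is an induced subgraph of $G_\ell$ of diameter at most $\rho$, it is a legal Connector move in the radius-$\rho$ Flipper game; feeding it together with $\II_\ell$ into $\flipstar$ yields a flip $\flip F = (A,B)$ and a new state $\II_{\ell+1}$, and setting $G_{\ell+1} := H_\omega\oplus\flip F$ extends $\HH$ to a $(\CC,\rho)$-history of length $\ell+1$. Interpreting the sets of $\UU'$ as well as the two flip sides $A,B$ as at most $t+2$ additional unary predicates produces a coloring $G^+_{\ell+1}$ of $G_{\ell+1}$ with at most $c+t+2$ colors. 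Rewriting $\omega$ into $\omega'$ of the same quantifier rank by substituting every atom $E(x,y)$ with $E(x,y)\oplus\bigl((A(x)\wedge B(y))\vee(A(y)\wedge B(x))\bigr)$, exactly as in the proof of Theorem~\ref{thm:mainTool2}, gives $H_\omega\langle\UU'\rangle\models\omega \iff G^+_{\ell+1}\models\omega'$, which is decided by a single call to the assumed $\mathrm{MC}(\CC,q,\ell+1,c+t+2)$-algorithm. Combining the resulting truth values according to the boolean structure of $\xi_\RR$ yields the verdict on $G\models\phi$.

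The running time is dominated by the single initial application of Theorem~\ref{thm:mainTool1}, contributing the target $f(q,c,\ell,\eps)\cdot|V(G)|^{(1+\eps)^d}\cdot|V(G_0)|^\stupidExponent$; each of the constantly many recursive calls costs at most $f_{\mathrm{MC}}(q,\ell+1,c+t+2,\eps)\cdot|V(G)|^{(1+\eps)^{d-1}}\cdot|V(G_0)|^\stupidExponent$, which is subsumed since $(1+\eps)^{d-1}\le(1+\eps)^d$. The main obstacle is essentially bookkeeping: verifying that $t+2$ depends only on $q$ and $c$ (so that the hypothesis ``for every $c$'' actually covers the recursion), that the Connector move $H_\omega$ respects the radius $\rho = \mainRadius$ fixed at the very beginning of the game, and that the syntactic edge-rewrite keeps the quantifier rank of $\omega'$ capped at~$q$.
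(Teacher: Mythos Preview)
Your proposal is correct and follows essentially the same approach as the paper: invoke Theorem~\ref{thm:mainTool1}, then Theorem~\ref{thm:booleanCombRadius}, then for each resulting local sentence restrict to its guard union, play one round of $\flipstar$, undo the flip syntactically, and call the level-$(\ell+1)$ algorithm. The only cosmetic difference is that the paper keeps all of $\UU$ as colors in the recursive call (using $c+2+|\UU|$ colors) whereas you keep only the relevant $\UU'$, but since $|\UU'|\le t$ this yields the same bound.
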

\begin{proof}
    Let $\rho := \mainRadius$. The $\mathrm{MC}(\CC,q,\ell,c)$-algorithm we will construct gets as input
    a $(\CC,\rho)$-history $(G_0, \mathcal{I}_0),\ldots,(G_\ell,\mathcal{I}_\ell)$,
    a coloring $G$ of $G_\ell$ with at most $c$ colors, and
    a sentence $\phi$ with quantifier rank $q$.
    At first, we call \Cref{thm:mainTool1}. 
    This gives us
    a $\UU$-guarded sentence $\xi$ of quantifier rank $q$ such that
    \begin{equation}\label{eq:main1}
        G \models \phi \IFF G\langle \UU \rangle \models \xi.
    \end{equation}

    Here, $\UU \subseteq \mathcal{P}(V(G))$ is a set with $|\UU|$ depending only on $q$, $c$,
    such that each $U \in \UU$ is contained in an $(q \cdot 3\radius(2^q))$-neighborhood of $G$.

    In time $\Oof(|\UU|^2\cdot |V(G)|^2)$, compute the relation $\RR := \{ (U,W) \in \UU \times \UU \mid U$ and $W$ share a vertex or a connecting edge in $G \}$.
    Next, we invoke \Cref{thm:booleanCombRadius}. This gives us
    a boolean combination $\xi^*$ of sentences $\xi_1,\dots,\xi_k$ with quantifier rank at most $q$ such that
    \begin{equation}\label{eq:main2}
        G\langle \UU \rangle \models \xi \IFF G\langle \UU \rangle \models \xi^*.
    \end{equation}

    Each sentence $\xi_i$ is $\UU_i$-guarded for some $\UU_i \subseteq \UU$ such that $\bigcup \UU_i$ is contained in a subgraph of~$G$ with diameter at most $\rho = \mainRadius$,
    and thus also in a $\rho$-neighborhood of $G$.
    The running time of \Cref{thm:booleanCombRadius} is insignificant compared to the running time of \Cref{thm:mainTool1}.
    The time spent so far is bounded by
    \begin{equation}\label{eq:mainRun1}
    f(q,c,\ell,\eps) \cdot |V(G)|^{(({1+\epsilon})^d)}\cdot |V(G_0)|^\stupidExponent
    \end{equation}
    for some function $f(q,c,\ell,\eps)$ and every $\eps > 0$, where $d := \gameDepth(\CC,\rho)-\ell$ bounds the number of rounds needed to win the remaining Flipper game.

    Now for every $i\in[k]$, we proceed similarly as in the paragraph \emph{Flip and Type Computation} of \Cref{thm:mainTool2} to decide whether
    $G\langle \UU \rangle \models \xi_i$.
    Let $H_{\ell + 1} := G[\bigcup \UU_i]$
    and as $\xi_i$ is $\UU_i$-guarded, we have 
    \begin{equation}\label{eq:main3}
    {G\langle \UU \rangle \models \xi_i} \IFF H_{\ell+1}\langle \UU \rangle \models \xi_i.
    \end{equation}

    Note that since $\bigcup \UU_i$ is contained in a $\rho$-neighborhood of $G$, the restriction to $H_{\ell + 1}$ corresponds to a Connector move in the radius-$\rho$ Flipper game.  
    We apply the radius-$\rho$ Flipper strategy $\flipstar$ (for the class $\CC \ni G_0$) to the graph $H_{\ell + 1}$ and internal state $\II_\ell$, yielding a flip $\flip F$ specified by flip sets $A,B$ and a new internal state $\II_{\ell + 1}$.
    By \Cref{thm:afg_main}, this takes time
    \begin{equation}\label{eq:mainRun2}
    g_2(q) \cdot |V(G_0)|^2,
    \end{equation}
    for some function $g_2(q)$.

    Let $G_{\ell + 1} := H_{\ell + 1} \oplus \flip F$ and $G^+_{\ell + 1} = G_{\ell+1}\langle \UU\rangle \recolor{A,B}$.
    We construct $\xi'_i$ from $\xi_i$ by substituting every occurrence of the edge relation $E(x,y)$ 
    with $E(x,y) \oplus \big((x \in A \wedge y\in B) \vee (x \in B \wedge y\in A) \big)$.
    Then
    \begin{equation}\label{eq:main4}
    H_{\ell+1}\langle \UU \rangle \models \xi_i \IFF G^+_{\ell+1} \models \xi_i'.
    \end{equation}

    We can now extend $\HH$ to a $(\CC,\rho)$-history of length $\ell + 1$ by appending the new pair $(G_{\ell + 1},\II_{\ell + 1})$.
    We use the given efficient $\mathrm{MC}(\CC,q,\ell + 1, c+2+|\UU|)$-algorithm to decide in time
    \begin{equation}\label{eq:mainRun3}
    f_{\mathrm{MC}}(q,\ell + 1,c + 2 + |\UU|,\eps) \cdot |V(G^+_{\ell + 1})|^{(({1+\epsilon})^{d-1})}\cdot |V(G_0)|^\stupidExponent
    \end{equation}
    whether $G^+_{\ell+1} \models \xi_i'$.
    By (\ref{eq:main3}) and (\ref{eq:main4}), this decides whether
    $G\langle \UU \rangle \models \xi_i$.

    Since we decided 
    $G\langle \UU \rangle \models \xi_i$ for all $i$, we 
    can plug the truth values into the boolean combination~$\xi^*$, telling us the answer to whether
    $G\langle \UU \rangle \models \xi^*$.
    By (\ref{eq:main1}) and (\ref{eq:main2}), this finally gives us the answer
    whether $G \models \phi$.

    The total running time is bounded by (\ref{eq:mainRun1}) plus $k$ times (\ref{eq:mainRun2}) and (\ref{eq:mainRun3}).
    Since both $k$ and $|\UU|$ are bounded by a function of $q$ and $c$, we can choose
    $f_{\mathrm{MC}}(q,\ell,c,\eps)$ such that for every $\eps > 0$, the total running time is bounded by
    \[
    f_{\mathrm{MC}}(q,\ell,c,\eps) \cdot |V(G)|^{(({1+\epsilon})^{d})}\cdot |V(G_0)|^\stupidExponent.\qedhere
    \]
\end{proof}

We are ready to prove the stepping stone theorem. 
We control the run time by choosing $\epsilon > 0$ as a function of the game depth.

\mainmcideals*

\begin{proof}
    Let $q$ be the quantifier rank of $\phi$ and $\rho := \mainRadius$.
    Recall that $\gameDepth(\CC,\rho)$ bounds the number of rounds the Flipper strategy $\flipstar$ needs to win the Flipper game,
    that is, until a graph is reached that consists only of a single vertex.
    On such graphs, model checking is trivial.
    Hence, $\CC$ for every $c \in \N$ has an 
    efficient $\mathrm{MC}(\CC,q,\gameDepth(\CC,\rho),c)$-algorithm.
    By repeated application of \Cref{thm:mainInductiveStep},
    $\CC$ also has an efficient $\mathrm{MC}(\CC,q,0,|\Sigma|)$-algorithm,
    where $\Sigma$ is the signature of $\CC$.
    Such an $\mathrm{MC}(\CC,q,0,|\Sigma|)$-algorithm can decide for every graph $G \in \CC$ whether $\phi$ holds, and thus solves our problem.
    The running time is bounded by
    \[
    f_{\mathrm{MC}}(q,0,|\Sigma|,\eps) \cdot |V(G)|^{(({1+\epsilon})^{\gameDepth(\CC,\rho)})}\cdot |V(G)|^\stupidExponent.
    \] 

    By choosing $\eps := 1.2^{1/\gameDepth(\CC,\rho)} - 1 > 0$, we get
    a running time of 
    \[
    f_{\mathrm{MC}}(q,0,|\Sigma|,\eps) \cdot |V(G)|^{1.2}\cdot |V(G)|^\stupidExponent \le f_{\mathrm{MC}}(q,0,|\Sigma|,\eps) \cdot |V(G)|^\finalStupidExponent.
    \]
    As our choice of $\eps$ and all other parameters of $f_{\mathrm{MC}}$ depends only on $|\phi|$ and $\CC$, we can choose for every class $\CC$ a function $f(|\phi|)$ such that the runtime is bounded by
    $f(|\phi|) \cdot |V(G)|^\finalStupidExponent$. 
\end{proof}

\subsection{A Note on the Computability of $f(|\phi|)$}\label{sec:computability-f}
In the previous theorems, we have proven a run time bound depending on a function $f(|\phi|)$.
It should be clear that $f(|\phi|)$ is an incredibly fast-growing function, and it has to be, as shown in~\cite{frick2004complexity}. 
But let us stress that $f(|\phi|)$ may in fact not even be computable.
This is because in monadically stable or structurally nowhere dense classes
the functions bounding the order of transducible half graphs or the size of shallow clique minors may not be computable.
This implies that the depth of the Flipper game $\gameDepth(\CC,\rho)$ may not be computable,
which in turn is a lower bound for $f(|\phi|)$.
Following \cite{GroheKS17}, one may define \emph{effectively nowhere dense} classes
where one requires the size of $r$-shallow clique minors to be bounded by a \emph{computable} function of~$r$.
Similarly, in \emph{effectively monadically stable classes}, the order of largest transducible half graph is bounded by a computable function of the transduction.
Under the promise that the graph class of interest is \emph{effective},
we believe that revisiting all underlying proofs from \cite{bushes,dreier2022indiscernibles, flippergame} in fact gives us a computable function $f(|\phi|)$.

\section{Approximating Weak Neighborhood Covers}
\label{sec:approxCovers}

Sparse neighborhood covers have been a central tool in the design of model checking algorithms on nowhere dense graphs,
where they have been computed using mechanisms that are deeply tied to the sparse structure of the input graph.
In this section, we observe that these ties can be cut, and in fact 
sparse weak neighborhood covers can be computed in \emph{any} graph where they exist.
While finding for a fixed spread $s$ an $r$-neighborhood cover whose degree is minimal is NP-complete (with $s=1$, $r=0$ via a simple reduction from minimum membership set cover~\cite{kuhn2005}),
we nevertheless find a $\mathcal{O}(\log(n)^2+1)$-approximation in polynomial time.
In our context, this is good enough, as logarithmic factors are insignificant (see \Cref{def:flipClosedSparseWeak}).
Note that the running time of \Cref{thm:approxCovers} is independent of $r$ and~$s$.

\approxCovers

\subsection{Linear Programming}\label{sec:LPprelims}
In this section, we use randomized rounding to approximate neighborhood covers.
For a given matrix $A \in \R^{M\times N}$ and vectors $b \in \R^M$, $c \in \R^N$,
the corresponding \emph{integer linear program (ILP)} asks for a vector $x \in \R^N$ maximizing or minimizing $c^T x$ under the constraint that $Ax \ge b$ and $x \in \Z^N$.
While solving ILPs is NP-complete, 
one can solve the corresponding \emph{linear programming (LP) relaxation},
obtained by removing the integrality constraint $x \in \Z^N$ in polynomial time.
The idea behind randomized rounding~\cite{raghavan1987randomized}
is to express the problem of choice as an ILP, solve the corresponding LP relaxation
and to then round the fractional solution to get an approximate solution to the original problem.
This method has recently been used by Dvo\v{r}\'{a}k to compactly represent short distances~\cite{dvorak2022representation}.
For more background to this technique we refer, for example, to Chapter 14 \emph{Rounding Applied to Set Cover} of Vazirani's book on approximation algorithms~\cite{ApproximationAlgorithmsBook}.

The running time of this approach is primarily dominated by the time it takes to solve an LP.
While the best running time bounds approach matrix multiplication time~\cite{cohen2021solving},
for simplicity, we use the classical algorithm by Vaidya \cite{vaidya1989speeding} with a running time of $\mathcal{O}((N+M)^{1.5}NL)$, 
where~$N$ is the number of variables, $M$ is the number of constraints, and $L$ roughly equals the number of bits in the input.
The precise definition of $L$ is a bit cumbersome:
For a given LP with matrix $A \in \Z^{M\times N}$ and vectors $b \in \Z^M$, $c \in \Z^N$,
the factor $L$ is defined as
\[
L := \log(1+\mathrm{det}_\mathrm{max}) + \log(N+M),
\]
where $\mathrm{det}_\mathrm{max}$ denotes the largest absolute value of the determinant of any square submatrix of
\[
\begin{pmatrix}
c^T & 0 \\
A & b 
\end{pmatrix}.
\]

If all entries in $A,b,c$ are integers from $\{-1,0,1\}$, the running time bound can be simplified.
With $S_n$ being the set of all permutations on $\{1,\dots,n\}$,
we can bound for every $n \times n$ submatrix~$A'$ of the above matrix
\[
\mathrm{det}(A') := \sum_{\sigma \in S_n} \mathrm{sgn}(\sigma) \prod_{i=1}^n A'_{i,\sigma(i)} 
\le |S_n| \le n! \le n^n.
\]
This gives us with $n \le N+M$,
\[
L 
\le \log\bigl(1 + (N+M)^{N+M}\bigr) + \log(N+M) \\
\le \Oof\bigl((N+M) \cdot \log(N+M)\bigr)
\le \Oof\bigl((N+M)^{1.1}\bigr).
\]

Thus, Vaidya's algorithm has a running time of $\mathcal{O}\bigl((N+M)^{2.6} \cdot N\bigr)$.

\subsection{ILP Formulation}

We believe the central contribution of this section is the following robust ILP formulation for weak $r$-neighborhood covers. Let $G$ be a graph and let $\Xx$ be a weak $r$-neighborhood cover of~$G$ with spread~$s$. Recall that for every $X\in \Xx$ there exists a vertex $\ccenter X \in V(G)$ with $X\subseteq N_s[\ccenter X]$. By merging all clusters with the same center, we can assume that there exists a set $S\subseteq V(G)$, such that $\XX = \{ X_w \mid w \in S\}$. We call $X_w$ the \emph{cluster around $w$}. 

In order to compute a weak $r$-neighborhood cover with minimal degree and spread $s$ in a graph~$G$,
we denote by $\textnormal{cover-ILP}(G,r,s)$ the following ILP with variable $d \in \Z$ and binary variables $p_{vw}$, $q_{uw} \in \{0,1\}$ for $u,v,w \in V(G)$ with the following intuitive meaning. We want $p_{vw} \geq 1$ whenever $N_r[v]$ is contained in the cluster around $w$, that is, $N_r[v]\subseteq X_w$, and $q_{uw} \geq 1$ whenever~$u$ is contained in the cluster around $w$, that is, $u\in X_w$. This intuition leads to the following ILP. 

\begin{linenomath*}
\begin{align*}
    \text{minimize $d$ such that} & \\[1.4em]
    \forall v \in V(G)                          & \quad\quad\quad~ \smash{\sum_{\mathclap{w : N_{r}[v]\subseteq N_s[w]}} \quad\quad p_{vw} \ge 1} \label{eq:1}\tag{1}\\[1.4em]
    \forall u \in V(G)                          & \quad\quad\quad~ \smash{\sum_{w} \quad\quad q_{uw} \le d}                                       \label{eq:2}\tag{2}\\[1.4em]
    \forall v,w \in V(G) ~ \forall u \in N_r[v] & \quad\quad\quad~ \smash{q_{uw} \ge p_{vw}}                                                      \label{eq:3}\tag{3}\\[1.4em]
    \forall w \in V(G)~ \forall u \notin N_s[w] & \quad\quad\quad~ \smash{q_{uw} = 0}                                                             \label{eq:bounded_spread}\tag{4}
\end{align*}
\end{linenomath*}


Equations (\ref{eq:1}) enforce that for all $v$, the neighborhood $N_r[v]$ is contained in a cluster centered around a nearby vertex $w$.
Similarly, (\ref{eq:2}) enforces that each $v$ is contained in at most $d$ clusters centered around some vertex $w$.
The crucial equation (\ref{eq:3}) relates the variables $q_{uw}$ and $p_{vw}$ by stating that if a cluster around $w$ contains $N_r[v]$, then it also contains~$u$ for all $u \in N_r[v]$.

Equations (\ref{eq:bounded_spread}) ensure that the spread of the resulting cover is bounded by $s$. 
They are however not needed and only listed for the sake of presentation:
no $p_{vw}$
with $u\in N_r[v]$ and $u\notin N_s[w]$ is subject to an equation from (\ref{eq:1}), as we have $N_r[v]\not \subseteq N_s[w]$. 
We can therefore set $p_{vw} = 0$, and hence also $q_{uw} = 0$, without violating a constraint or increasing the cost.

\begin{lemma}\label{lem:ILPhasSolution}
If a graph $G$ has a weak $r$-neighborhood cover with degree $d$ and spread $s$,
then $\textnormal{cover-ILP}(G,r,s)$ has a solution of value at most $d$.
\end{lemma}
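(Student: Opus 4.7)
The plan is to construct an explicit integer feasible solution to $\textnormal{cover-ILP}(G,r,s)$ from the hypothesized weak $r$-neighborhood cover $\mathcal{X}$, and then check that all four families of constraints are satisfied.

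First I would normalize the cover. By assumption each cluster $X \in \mathcal{X}$ comes with a center $\ccenter X \in V(G)$ such that $X \subseteq N_s[\ccenter X]$. As noted in the paragraph preceding the ILP definition, by taking the union of all clusters sharing the same center (which can only decrease the degree, never increase it, and preserves the spread bound and the covering property), we may assume the clusters are indexed by a subset $S \subseteq V(G)$ of centers, writing $\mathcal{X} = \{X_w : w \in S\}$ with $X_w \subseteq N_s[w]$. For convenience, extend this to all $w \in V(G)$ by setting $X_w := \emptyset$ when $w \notin S$.

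Next I would define the integer assignment: set
\[
q_{uw} := \mathbf{1}[u \in X_w], \qquad p_{vw} := \mathbf{1}[N_r[v] \subseteq X_w],
\]
and set the objective value to be the original degree $d$. These are all in $\{0,1\}$, hence integer.

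Then I would verify the four constraints. For (\ref{eq:1}), the cover property guarantees that for every $v \in V(G)$ there exists some $w \in S$ with $N_r[v] \subseteq X_w$; since $X_w \subseteq N_s[w]$ this $w$ satisfies $N_r[v] \subseteq N_s[w]$, so the corresponding $p_{vw}$ equals $1$ and contributes to the constrained sum. For (\ref{eq:2}), the sum $\sum_w q_{uw}$ counts exactly the number of clusters of $\mathcal{X}$ containing $u$, which by the degree bound is at most $d$. For (\ref{eq:3}), if $p_{vw} = 1$ then $N_r[v] \subseteq X_w$, so for every $u \in N_r[v]$ we get $u \in X_w$ and thus $q_{uw} = 1 \ge p_{vw}$; if instead $p_{vw} = 0$ the inequality is trivial. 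Finally, for (\ref{eq:bounded_spread}), if $u \notin N_s[w]$ then $u \notin X_w$ because $X_w \subseteq N_s[w]$, so $q_{uw} = 0$.

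This is essentially a direct translation between combinatorial and algebraic descriptions, so I do not expect a genuine obstacle; the only detail worth double-checking is that the merging step in the normalization preserves all required properties, which it does because unioning clusters sharing a center keeps every original cluster contained in the merged one (so the covering property is preserved), keeps the merged cluster inside $N_s[w]$ (so the spread bound is preserved), and does not increase the number of clusters any vertex belongs to (so the degree bound is preserved).
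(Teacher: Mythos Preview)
Your proposal is correct and follows essentially the same approach as the paper: normalize the cover so that clusters are indexed by centers, set $p_{vw}$ and $q_{uw}$ as the obvious indicator variables, and verify the four constraints directly. The only difference is cosmetic—you explicitly extend $X_w := \emptyset$ for $w \notin S$ and spell out in more detail why the merging step preserves degree, spread, and coverage.
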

\begin{proof}
Assume $G$ has a weak $r$-neighborhood cover $\XX$ with degree $d$ and spread $s$.
As discussed above, we may assume 
that there exists a set $S\subseteq V(G)$ such that $\XX = \{ X_w \mid w \in S\}$ with $X_w \subseteq N_s[w]$.
Set $p_{vw} = 1 \iff N_r[v] \subseteq X_w$ and $q_{uw} = 1 \iff u \in X_w$.
If $N_r[v] \subseteq X_w$ then $N_r[v] \subseteq N_{s}[w]$ and thus (\ref{eq:1}) is satisfied.
The degree bound $d$ of $\Xx$ implies that (\ref{eq:2}) is satisfied.
Since~$\XX$ has spread $s$, we have $u \notin X_w$ if $u \notin N_s[w]$ and consequently $q_{uw} = 0$, satisfying (\ref{eq:bounded_spread}).
At last, (\ref{eq:3}) simply states that for all $v,w \in V(G)$, and $u \in N_r[v]$ with $N_r[v] \subseteq X_w$ we have $u \in X_w$.
Thus, (\ref{eq:3}) is also true.
We conclude that the ILP has a solution with value at most $d$.
\end{proof}

The \emph{LP relaxation} of an ILP is obtained by removing all integrality constraints.
For the ILP above this means one allows $d \in \R$, and $0 \le p_{vw},q_{uw} \le 1$ instead of just $p_{vw}$, $q_{uw} \in \{0,1\}$ for all $u,v,w \in V(G)$.
The matrix $A$ and vectors $b,c$ representing this LP only contain entries from $\{-1,0,1\}$.
As discussed above, such an LP can be solved using Vaidya's algorithm~\cite{vaidya1989speeding} in time $\mathcal{O}((N+M)^{2.6}N)$.
Here, $N\in \Oof(|V(G)|^2)$ and $M \in \mathcal{O}(|V(G)|^3)$, proving the following lemma.

\begin{lemma}\label{lem:FindLPSolution}
    One can solve the LP relaxation of $\textnormal{cover-ILP}(G,r,s)$ in time $\mathcal{O}(|V(G)|^{9.8})$.
\end{lemma}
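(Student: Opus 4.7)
The plan is essentially bookkeeping: cast $\textnormal{cover-ILP}(G,r,s)$ in the standard matrix form, verify that all entries of $A$, $b$, $c$ lie in $\{-1,0,1\}$, count variables and constraints, and apply Vaidya's algorithm in the simplified form $\Oof((N+M)^{2.6}\cdot N)$ derived in \Cref{sec:LPprelims}.

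First I would inspect the coefficients. Constraints (\ref{eq:1}) are sums of $p_{vw}$ with coefficient $+1$ and right-hand side $1$. Constraints (\ref{eq:2}), after moving $d$ to the left, are sums of $q_{uw}$ with coefficient $+1$ together with $d$ at coefficient $-1$, and right-hand side $0$. Constraints (\ref{eq:3}) become $q_{uw}-p_{vw}\ge 0$, and the equality constraints (\ref{eq:bounded_spread}) either split into two $\pm 1$ inequalities or, as already observed just after the ILP, can be eliminated outright by fixing the corresponding $p_{vw}$ and $q_{uw}$ to $0$. After the relaxation, the added box constraints $0\le p_{vw},q_{uw}\le 1$ also only contribute $\pm 1$ coefficients, and the objective $\min d$ has cost vector $e_d$. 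Hence every entry in $A$, $b$, $c$ lies in $\{-1,0,1\}$, so the simplified bound from \Cref{sec:LPprelims} applies.

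Next I would count. Writing $n = |V(G)|$, the variables are $d$, the $p_{vw}$ for $v,w \in V(G)$, and the $q_{uw}$ for $u,w \in V(G)$, giving $N = \Oof(n^2)$. For constraints: (\ref{eq:1}) contributes $n$ inequalities, (\ref{eq:2}) contributes $n$, (\ref{eq:3}) contributes at most $n^3$ (one per triple $(v,w,u)$ with $u \in N_r[v]$), and the box constraints contribute $\Oof(n^2)$ more, so $M = \Oof(n^3)$. Plugging in, Vaidya's algorithm runs in time
\[
\Oof\bigl((N+M)^{2.6}\cdot N\bigr) \;=\; \Oof\bigl((n^3)^{2.6}\cdot n^2\bigr) \;=\; \Oof(n^{9.8}),
\]
exactly the bound claimed.

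There is no genuine obstacle here. The only point that requires care is making sure the relaxation really is presented with coefficients in $\{-1,0,1\}$ so that the simplified estimate of $L$ from \Cref{sec:LPprelims} is valid; once that is confirmed, the running time follows mechanically from the variable and constraint count and from Vaidya's theorem.
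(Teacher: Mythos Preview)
Your proposal is correct and follows exactly the paper's approach: verify that the LP has coefficients in $\{-1,0,1\}$, count $N=\Oof(n^2)$ variables and $M=\Oof(n^3)$ constraints, and apply the simplified $\Oof((N+M)^{2.6}N)$ bound for Vaidya's algorithm from \Cref{sec:LPprelims}. You have simply spelled out the bookkeeping in more detail than the paper does.
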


\subsection{Fractional Weak Neighborhood Covers}

Since the LP relaxation is less restrictive than the original ILP and by \Cref{lem:ILPhasSolution},
the solution given by \Cref{lem:FindLPSolution} is at least as good as the minimum $d$ such that $G$ has a weak $r$-neighborhood cover with degree $d$ and spread $s$.
Next, we convert the solution to the LP relaxation into a so-called \emph{fractional weak $r$-neighborhood cover} defined as follows.

\begin{definition}\label{def:fractionalCover}
    A \emph{fractional weak $r$-neighborhood cover} with degree $d$ and spread $s$ of a graph $G$ is a family
    $\XX$ of subsets of $V(G)$, called \emph{clusters}, together with a function $f:\XX \rightarrow [0,1]$ (which can be thought of as probability assignments)
    such that 
    \begin{itemize}
        \item every cluster is a subset of an $s$-neighborhood in $G$,
        \item for every vertex $v$, the probabilities of the clusters in which the $r$-neighborhood around $v$ is fully contained sum up to at least $1$, that is,
        \[
        \sum_{X \in \XX : N_r[v] \subseteq X} f(X) \ge 1,
        \]
        \item for every vertex, the probabilities of the clusters containing it sum up to at most $d$, that is,
        \[
        \sum_{X \in \XX : v \in X} f(X) \le d.
        \]
    \end{itemize}
    \end{definition}

Note that every fractional weak $r$-neighborhood cover with integer probabilities $f : \XX \rightarrow \{0,1\}$ is also a weak $r$-neighborhood cover.

\begin{lemma}\label{lem:LPtoFracCover}
Every solution to the LP relaxation of $\textnormal{cover-ILP}(G,r,s)$ with degree $d$
can be converted in time $\mathcal{O}(|V(G)|^2)$ into a fractional weak $r$-neighborhood cover with degree $d+1$ and spread $s$.
\end{lemma}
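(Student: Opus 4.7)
The plan is to convert the LP solution $(d, \{p_{vw}\}, \{q_{uw}\})$ into a fractional cover by taking, for each potential center $w \in V(G)$, a chain of nested clusters arising as the level sets of $u \mapsto q_{uw}$. For each $w$, fix an ordering $u_1,\dots,u_n$ of $V(G)$ with $q_{u_1 w}\ge q_{u_2 w}\ge\dots\ge q_{u_n w}$ (ties broken arbitrarily), and for each index $i$ with $q_{u_i w}>0$ introduce the cluster $X_{w,i}:=\{u_1,\dots,u_i\}$ with weight $f(X_{w,i}):=q_{u_i w}-q_{u_{i+1}w}$, using the convention $q_{u_{n+1}w}:=0$. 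Clusters of weight $0$ are discarded, so at most $n$ clusters are produced per center and $\Oof(n^2)$ in total. Constraint~(\ref{eq:bounded_spread}) forces $q_{uw}=0$ whenever $u\notin N_s[w]$, and hence every retained cluster satisfies $X_{w,i}\subseteq N_s[w]$, giving the required spread $s$.

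To verify coverage, fix $v$ and $w$ and let $j^*:=\max\{j : u_j\in N_r[v]\}$; then $N_r[v]\subseteq X_{w,i}$ exactly when $i\ge j^*$, and the telescoping sum $\sum_{i\ge j^*} f(X_{w,i})$ collapses to $q_{u_{j^*}w}=\min_{u\in N_r[v]} q_{uw}$, which by constraint~(\ref{eq:3}) is at least $p_{vw}$. Summing over $w$ and applying~(\ref{eq:1}) yields $\sum_{X:\,N_r[v]\subseteq X} f(X)\ge \sum_w p_{vw}\ge 1$. The degree argument is dual: the clusters around $w$ containing a fixed vertex $u$ are precisely those $X_{w,i}$ with $i\ge\mathrm{rank}_w(u)$, whose weights telescope to $q_{uw}$; summing over $w$ and invoking~(\ref{eq:2}) gives $\sum_{X\ni u} f(X)\le d$. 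This already meets the claimed bound $d+1$; the extra unit of slack is convenient to absorb any minor rounding, or to optionally attach a default singleton cluster $\{w\}$ of weight $1$ to each center to make the construction robust to degenerate LP outputs.

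There is no real conceptual obstacle in the proof: the whole argument hinges on constraint~(\ref{eq:3}), which forces the membership variables $q_{uw}$ to dominate $p_{vw}$ uniformly over $u\in N_r[v]$, which is precisely the strength needed for the minimum to materialize through the telescoping step. The only delicate point is the running time: sorting the values $q_{\cdot,w}$ separately for each $w$ would cost $\Oof(n^2\log n)$, which can be reduced to $\Oof(n^2)$ by performing a single global sort on the $n^2$ values of $q_{\cdot,\cdot}$ and then bucketing by center; in any event this is dominated by the $\Oof(n^{9.8})$ cost of solving the LP via \Cref{lem:FindLPSolution}, so the polylog factor is immaterial for \Cref{thm:approxCovers}.
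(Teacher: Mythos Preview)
Your argument is correct and rests on the same level-set idea as the paper, but realized slightly differently. The paper first scales by $n$ and rounds up to integers $Q_{uw}:=\lceil n\cdot q_{uw}\rceil$, then for each center $w$ and each threshold $i\in[n]$ takes the cluster $X_w^i:=\{u:Q_{uw}\ge i\}$ with uniform weight $1/n$; the rounding is what produces the extra $+1$ in the degree. You instead use the distinct real values of $q_{\cdot,w}$ themselves as thresholds, with weights given by consecutive differences, which avoids rounding entirely and in fact yields degree $\le d$ rather than $d+1$. Both telescoping calculations are the same in spirit.

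The one point where your version falls short of the stated lemma is the running time. Your sorting-based construction costs $\Oof(n^2\log n)$, and the ``global sort then bucket'' trick does not bring this down to $\Oof(n^2)$: sorting $n^2$ real numbers is still $\Theta(n^2\log n)$. The paper's rounding to integers in $[n]$ is precisely what makes an $\Oof(n^2)$ bound achievable without comparison sorting (the clusters are determined directly by the integers $Q_{uw}$). As you correctly observe, the polylog factor is swallowed by the $\Oof(n^{9.8})$ LP step and is irrelevant for \Cref{thm:approxCovers}, but strictly speaking your construction proves the lemma only with the weaker time bound.
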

\begin{proof}
    Let the solution to the LP relaxation be $0 \le p_{vw},q_{uw} \le 1$ for $u,v,w \in V(G)$ and let $|V(G)|=n$.
We define $P_{vw},Q_{uw} \in [n]$ as $P_{vw} := \lceil n \cdot p_{vw} \rceil$
and $Q_{uw} := \lceil n \cdot q_{uw} \rceil$.
Then
\begin{linenomath*}
\begin{align*}
    \forall v \in V(G)                          & \quad\quad\quad~ \smash{\sum_{\mathclap{w : N_{r}[v]\subseteq N_s[w]}} \quad\quad P_{vw} \ge n}         \label{eq:11}\tag{1} \\[1.4em]
    \forall u \in V(G)                          & \quad\quad\quad~ \smash{\sum_{w} Q_{uw} \le \sum_w (1 + n \cdot q_{vw}) \leq (d+1) \cdot n }              \label{eq:22}\tag{2}\\[1.4em]
    \forall v,w \in V(G) ~ \forall u \in N_r[v] & \quad\quad\quad~ \smash{Q_{uw} \ge P_{vw}}                                                              \label{eq:33}\tag{3}\\[1.4em]
    \forall w \in V(G)~ \forall u \notin N_s[w] & \quad\quad\quad~ \smash{Q_{uw} = 0}                                                                     \label{eq:frac_bd_spread}\tag{4}
\end{align*}
\end{linenomath*}


We build the desired fractional cover $\XX$ by creating for every $w \in V(G)$ and $i \in [n]$ a cluster $X_w^i := \{ u \in V(G) \mid Q_{uw}\geq i \}$
with probability $f(X_w^i) = 1/n$.
Using (\ref{eq:frac_bd_spread}), the spread of $\XX$ is clearly bounded by $s$.
For every vertex $u$ observe that $\{X \in \XX : u \in X\} = \{X^i_w : w \in V(G) \wedge Q_{uw}\geq i\}$ and therefore
\begin{equation}\label{eq:clusters_containing_u}\tag{5}
    |\{X \in \XX : u \in X\}| = \sum_w Q_{uw}
\end{equation}
Thus, the degree of $u$ in $\Xx$ is bounded by
\[
\sum_{X \in \XX : u \in X} f(X) 
=
\sum_{X \in \XX : u \in X} 1/n 
\overset{\text{by (\ref{eq:clusters_containing_u})}}{=}
\sum_{w} Q_{uw}/n 
\overset{\text{by (\ref{eq:22})}}{\leq}
(d+1).
\]

Let $i \le P_{vw}$. Then by (\ref{eq:33}) for all $u \in N_r[v]$ we have $i \le Q_{uw}$, and thus $u \in X_w^i$.
Thus, for all $1 \le i \le P_{vw}$ we have $N_r[v] \subseteq X_w^i$.
Therefore
\[
\sum_{X \in \XX : N_r[v] \subseteq X} f(X) 
=
\sum_{X \in \XX : N_r[v] \subseteq X} 1/n 
\ge
\sum_{w \in N_s[v]} P_{vw}/n
\overset{\text{by (\ref{eq:11})}}{\ge}
1.\qedhere
\]
\end{proof}

\subsection{Randomized Sampling}

The previous lemmas together let us compute a good fractional weak $r$-neighborhood cover $(\XX, f)$ of~$G$.
By simply sampling clusters $X$ from $\XX$ with probability $24\ln(|V(G)|)f(X)$,
the following \Cref{lem:fractionalCoverChernoff} proves the existence of a (non-fractional) weak $r$-neighborhood cover whose degree is at most a factor $36\ln(|V(G)|)$ larger than the optimum.
The argument can be easily turned into a randomized algorithm with high success probability. However, we ultimately want a fully deterministic algorithm.
One may derandomize the algorithm using the method of conditional probabilities and an adequate pessimistic estimator,
but the details of this process are a bit tedious.
Instead, we use the purely existential \Cref{lem:fractionalCoverChernoff}
to reduce the search space for our weak $r$-neighborhood cover from the exponential number of all subsets of $s$-neighborhoods in $G$
to the at most $\mathcal{O}(|V(G)|^2)$ clusters from $\Xx$.
After the reduction of the search space, we will compute the final weak $r$-neighborhood cover
by a simple reduction to a set cover variant called \emph{Minimum Membership Set Cover}, which can be efficiently approximated.

\begin{lemma}\label{lem:fractionalCoverChernoff}
    Assume a graph $G$ with $n \ge 5$ vertices has a fractional weak $r$-neighborhood cover $(\XX,f)$ with degree $d$ and spread $s$.
    Then $G$ has a weak $r$-neighborhood cover $\mathcal Y \subseteq \XX$ with degree $36\ln(n)d$ and spread $s$.
\end{lemma}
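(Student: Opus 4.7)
The plan is straightforward randomized rounding. I would form $\mathcal Y$ by including each cluster $X \in \XX$ independently with probability $p_X := \min\{1,\, 24\ln(n)\cdot f(X)\}$. Since $\mathcal Y \subseteq \XX$, every chosen cluster is automatically contained in an $s$-neighborhood of $G$, so the spread condition is free; only the degree bound and the covering condition need to hold with positive probability.

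First I would record the auxiliary observation that any fractional weak $r$-neighborhood cover satisfies $d \ge 1$: for every vertex $v$, each cluster $X$ contributing to the coverage sum at $v$ contains $v$ (since $v \in N_r[v] \subseteq X$), so the degree sum at $v$ is at least the coverage sum at $v$, which is at least $1$. This is what lets Chernoff exponents in $\mathrm{poly}(\ln n)\cdot d$ become polynomial in $n^{-1}$.

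Next, for the degree condition, I would fix a vertex $u$ and let $D_u$ count the clusters of $\mathcal Y$ containing $u$. Setting $p := 24 \ln n$, the fractional-cover bound gives $\mathbb{E}[D_u] \le pd =: \mu^*$. After inflating $D_u$ with independent dummy Bernoullis so that its mean equals $\mu^*$ exactly (an operation which only makes the upper tail larger), a standard Chernoff bound with deviation $\delta = 1/2$ yields
\[
\Pr[D_u \ge 36\ln(n)\,d] \;\le\; e^{-\mu^*/12} \;=\; n^{-2d} \;\le\; n^{-2}.
\]
A union bound over the $n$ vertices then bounds the total degree-failure probability by $1/n$. For the coverage condition, I would fix a vertex $v$, set $S_v := \{X \in \XX : N_r[v] \subseteq X\}$, and note that $\sum_{X \in S_v} f(X) \ge 1$. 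If some $X \in S_v$ already has $p f(X) \ge 1$, then $X$ lies in $\mathcal Y$ deterministically and $v$ is covered. Otherwise, by independence and $1 - x \le e^{-x}$,
\[
\Pr[S_v \cap \mathcal Y = \varnothing] \;\le\; \prod_{X \in S_v}\bigl(1 - p f(X)\bigr) \;\le\; \exp\Bigl(-p\!\sum_{X \in S_v}\! f(X)\Bigr) \;\le\; e^{-p} \;=\; n^{-24}.
\]
A union bound over the $n$ vertices bounds coverage failure by $n^{-23}$.

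Combining the two estimates, the probability that $\mathcal Y$ fails either property is at most $1/n + n^{-23}$, which is strictly less than $1$ for every $n \ge 5$. Hence some realization of $\mathcal Y$ is a valid weak $r$-neighborhood cover of $G$ with degree at most $36\ln(n)\,d$ and spread $s$, as required. There is no serious obstacle in this argument; the only points to be careful about are the cap at $1$ in the sampling probability (absorbed by the mean-inflation trick for Chernoff and trivial for the product bound) and calibrating the constants $24$ and $36$ so that the multiplicative deviation $\delta = 1/2$ and the coverage union bound simultaneously come out in our favour.
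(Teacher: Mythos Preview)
Your proof is correct and follows essentially the same randomized-rounding strategy as the paper. The only cosmetic differences are that the paper samples each cluster $24\ln(n)$ independent times with probability $f(X)$ (yielding a multiset, implicitly de-duplicated at the end) and applies a two-sided Chernoff bound to both the coverage count $A_v$ and the degree count $B_v$, whereas you sample each cluster once with probability $\min\{1,24\ln(n)f(X)\}$ and replace the lower-tail Chernoff for coverage by the simpler product bound $\prod(1-p_X)\le e^{-p}$; your version directly produces a genuine subset $\mathcal Y\subseteq\XX$ and is marginally cleaner, but the two arguments are interchangeable.
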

\begin{proof}
    We use the probabilistic method.
    Let us sample a multiset $\mathcal Y$ of clusters from $\XX$ by sampling each $X \in \XX$ independently $24\ln(n)$ many times with probability $f(X)$.
    Note that $\mathcal Y$ is a \emph{multiset} in the sense that it can contain multiple copies (up to $24\ln(n)$ many) of each cover from $\XX$.
    For every vertex $v$, let $A_v$ be the number of clusters in $\mathcal Y$ fully containing the $r$-neighborhood around~$v$.
    By linearity of expectation,
    \[
    E[A_v] = 24\ln(n) \cdot
    \sum_{X \in \XX : N_r[v] \subseteq X} f(X)
    \ge 24\ln(n).
    \]

    Similarly, let $B_v$ be the number of clusters in $\mathcal Y$ that contain $v$.
    We argue as above that
    \[
    E[A_v]
    \leq
    E[B_v] = 
    24\ln(n) \cdot \sum_{X \in \XX : v \in X} f(X) \le 24\ln(n) d.
    \]

    Since $A_v$ is a sum of independent binary variables,
    the Chernoff bound\footnote{For $\mu=E(X)$ the Chernoff bound states $P(|X-\mu|\geq \delta \mu)\leq 2e^{-\delta^2\mu/3}$.} yields
    \[
    P(|A_v - E[A_v]| \ge E[A_v]/2) \le 2 e^{- E[A_v]/12} \le 2e^{-24\ln(n)/12} 
    =
    2(e^{\ln(n)})^{-2}
    = 
    2/n^2.
    \]

    Using $E[B_v] \ge E[A_v]$ we have
    \[
    P(|B_v - E[B_v]| \ge E[B_v]/2) \le 2 e^{- E[B_v]/12} \le 2 e^{- E[A_v]/12} \le 2/n^2.
    \]

    By the union bound, the probability that for some $v$
    either $|A_v - E[A_v]| \ge E[A_v]/2$ or \mbox{$|B_v - E[B_v]| \ge E[B_v]/2$}
    is (with $n \ge 5$) at most
    $\sum_{v \in V(G)} (2/n^2 + 2/n^2) = 4/n < 1$.
    Hence, there exists $\mathcal Y$ such that 
    $|A_v - E[A_v]| \le E[A_v]/2$ and $|B_v - E[B_v]| \le E[B_v]/2$ for all $v\in V(G)$.
    This implies 
    $A_v \ge \frac{1}{2}E[A_v] \ge 12\ln(n)\geq 1$
    and $B_v \le \frac{3}{2}E[B_v] \le 36 \ln(n) d$ for all $v\in V(G)$, proving the statement.
\end{proof}

\subsection{Minimum Membership Set Cover Reduction}

We have computed a good fractional weak $r$-neighborhood cover $(\XX,f)$
and know that a sufficiently good (non-fractional) weak $r$-neighborhood cover can be chosen as a subset of $\XX$.
Hence, to find our solution
it is sufficient to choose a subset of $\YY \subseteq \XX$ covering all $r$-neighborhoods of $G$
such that the maximal number of sets a vertex is contained in is as small as possible.

As we will see soon, the remaining difficulty is captured by the following \emph{Minimum Membership Set Cover} (MMSC) problem.
Let $V$ be our universe and $\mathcal X$ be a collection of subsets of $V$ such that \mbox{$\bigcup_{X \in \mathcal X} X = V$}.
The MMSC problem asks for a subset $\mathcal Y \subseteq \mathcal X$ 
covering all elements in $V$ such that 
the maximal number of times a vertex is covered $M(V,\mathcal Y) := \max_{v \in V} |\{ X \in \mathcal Y \mid v \in X \}|$ is minimal.
While the problem is NP-hard, it can be approximated within a logarithmic factor.
\begin{lemma}[Theorem 3 of \cite{kuhn2005}]\label{lem:MMSC}
    For any MMSC instance $(V, \mathcal X)$,
    one can compute in time $\mathcal{O}((|V|+|\XX|)^{2.6}|\XX|)$ a set $\mathcal Y \subseteq \mathcal X$ with 
    \[
    M(V,\mathcal Y) \le (1 + \mathcal{O}(1/\sqrt{z}))(\log(|V|) + 1) z,
    \]
    where 
    \[
    z = \min \{ M(V,\mathcal Y^*) \mid \mathcal Y^* \subseteq \mathcal X \textnormal{ covers } V \}. 
    \]
\end{lemma}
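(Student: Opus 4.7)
The plan is to approximate MMSC by LP-relaxation and randomized rounding, in close analogy with the method already developed in this section for weak neighborhood covers. First I would formulate MMSC as the integer linear program that minimizes $d$ subject to $y_X \in \{0,1\}$ for every $X \in \XX$, the covering constraints $\sum_{X \ni v} y_X \ge 1$ for every $v \in V$, and the membership bound $\sum_{X \ni v} y_X \le d$ for every $v \in V$. Dropping integrality yields an LP whose optimum $d^*$ is at most the MMSC optimum $z$ and whose coefficient matrix is $\{-1,0,1\}$-valued, so Vaidya's algorithm from \Cref{sec:LPprelims} solves it in time $\Oof((|V|+|\XX|)^{2.6}|\XX|)$. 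The resulting fractional solution $(y_X)$ satisfies $\sum_{X \ni v} y_X \ge 1$ and $\sum_{X \ni v} y_X \le z$ for every $v$.

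For the rounding, I would include each $X$ independently into $\YY$ with probability $p_X = \min(1,\alpha y_X)$, where $\alpha = (\ln|V|+1)(1 + c/\sqrt{z})$ for a sufficiently large absolute constant $c$. Then for every $v$ the probability that $v$ is uncovered is at most $\prod_{X \ni v}(1-p_X) \le e^{-\alpha} \le 1/(2|V|)$. The multiplicity $B_v$ of $v$ in $\YY$ is a sum of independent Bernoullis with expectation at most $\alpha z$, so a standard Chernoff bound gives $\Pr[B_v \ge (1+\delta)\alpha z] \le \exp(-\delta^2 \alpha z/3)$. Choosing $\delta = c'/\sqrt{z}$ for a large enough constant $c'$ makes $\delta^2\alpha z$ a large multiple of $\log|V|$, so this tail is at most $1/(2|V|)$ as well. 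A union bound over the $2|V|$ bad events shows that with positive probability every vertex is covered and no vertex has multiplicity exceeding $(1+\delta)\alpha z = (1+\Oof(1/\sqrt{z}))(\log|V|+1)z$, which is exactly the claimed guarantee.

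Finally I would derandomize this selection by the method of conditional expectations using the standard pessimistic estimator formed by summing, over all $v$, the analytic upper bounds on the two failure probabilities (the product bound for coverage and the Chernoff moment-generating function for membership); processing the variables $y_X$ one at a time and always fixing each to the value that does not increase the estimator yields a deterministic $\YY$ of the required quality in time polynomial in $|V|+|\XX|$, dominated by the LP solve. The main obstacle is obtaining the sharper $(1+\Oof(1/\sqrt{z}))$ factor instead of the loose $\Theta(\log|V|)$ slack that naive rounding would produce: this requires carefully balancing the coverage parameter $\alpha$ against the Chernoff slack $\delta$ so that both union bounds close simultaneously, and it is precisely this balance that forces the approximation ratio to improve as the optimum $z$ grows.
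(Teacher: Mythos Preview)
Your proposal is correct and follows essentially the same route as the paper (and as \cite{kuhn2005} itself): formulate MMSC as an ILP with $\{-1,0,1\}$ coefficients, solve the LP relaxation via Vaidya in time $\Oof((|V|+|\XX|)^{2.6}|\XX|)$, round each set in with probability $\min(1,\alpha y_X)$, and derandomize via a pessimistic estimator combining the product bound for coverage with the Chernoff moment-generating function for membership. The only difference in emphasis is that the paper does \emph{not} re-derive the approximation guarantee $(1+\Oof(1/\sqrt{z}))(\log|V|+1)z$ at all --- it takes that as a black box from \cite{kuhn2005} and its sole contribution here is to pin down the running time, which \cite{kuhn2005} only states as ``deterministic polynomial time''. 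You go further and sketch the guarantee as well, which is fine. One small point: you assert that the derandomization is ``dominated by the LP solve'' without justification; the paper makes this explicit by noting that the pessimistic estimator is evaluated $|\XX|$ times at cost $\Oof(|V|\cdot|\XX|)$ each, giving $\Oof(|V|\cdot|\XX|^2) \le \Oof((|V|+|\XX|)^{2.6}|\XX|)$.
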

\begin{proof}
In \cite{kuhn2005}, the authors merely claim ``a deterministic polynomial-time'' algorithm without bounding the degree of the polynomial.
We complete the proof by going through the individual steps of the algorithm and bounding the running time.
Assume the input consists of $V = \{u_1,\dots,u_n\}$ and $\mathcal X = \{S_1,\dots,S_m\}$.
In \cite[Section 5.1]{kuhn2005}, the authors construct an equivalent ILP with $m+1$ variables and $2n$ constraints whose matrix representation contains only entries from $\{-1,0,1\}$.
As discussed in \Cref{sec:LPprelims}, the corresponding LP relaxation can be solved with in time $\mathcal{O}((n+m)^{2.6}m)$~\cite{vaidya1989speeding}.
This gives for each set $S_i$ a fractional value $0 \le x'_i \le 1$.
Then \cite[Section 5.2]{kuhn2005} defines numbers $\alpha$ and $\beta$ and assigns each set $S_i$
a probability $p_i := \min(1, \alpha x'_i)$.
The paper then defines a pessimistic estimator
\[
P(p_1,\dots,p_m) := 2 - \prod_{i=1}^n (1-A_i) - \prod_{i=1}^n (1-B_i)
\]
where
\[
A_i := \prod_{S_j \ni u_i} (1-p_j) \quad \textnormal{ and } \quad 
B_i := \prod_{S_j \ni u_i} (1+(\beta - 1)p_j).
\]

The final derandomization procedure in \cite[Section 5.3]{kuhn2005}
loops over $1 \le i \le m$ and rounds~$p_i$ to either $0$ or $1$ such that
$P(p_1,\dots,p_m)$ is maximal.
The evaluation of $A_i$ and $B_i$ takes time $\mathcal{O}(m)$, and thus each evaluation of 
$P(p_1,\dots,p_m)$ takes time $\mathcal{O}(nm)$.
Since we loop over $1 \le i \le m$, the total running time of this step is $\mathcal{O}(nm^2)$.
All in all, we arrive at running time of $\mathcal{O}((n+m)^{2.6}m)$.
\end{proof}

Note that one may also directly apply the above pessimistic estimator of \cite{kuhn2005} to derandomize \Cref{lem:fractionalCoverChernoff}.
Instead, we prove \Cref{thm:approxCovers} by a simple reduction to MMSC.



\begin{proof}[Proof of \Cref{thm:approxCovers}]
Without loss of generality, we assume that $n \ge 5$.
By \Cref{lem:ILPhasSolution} and \Cref{lem:FindLPSolution},
we can compute in time $\mathcal{O}(n^{9.8})$ a solution to the LP relaxation of $\textnormal{cover-ILP}(G,r,s)$ with value at most $d^*$.
Next, \Cref{lem:LPtoFracCover} converts the solution in time $\mathcal{O}(n^2)$ into a fractional $r$-neighborhood cover $(\mathcal X,f)$ with degree at most $d^*+1$ and spread $s$.
By \Cref{lem:fractionalCoverChernoff}, $G$ has a weak $r$-neighborhood cover $\mathcal Y \subseteq \XX$ with degree at most $36\ln(n)(d^*+1)$ and spread $s$.
We will reduce the computation of a good approximation of this cover to the approximation of an MMSC solution.

We construct an MMSC instance $(V',\mathcal X')$ where 
the universe $V' = V(G) \times \{0,1\}$ contains for every vertex $v\in V(G)$ two copies $(v,1)$ and $(v,2)$, and
the set system $\mathcal X'$ contains for every cluster $X \in \XX$ a set 
    \[
    X' = 
    \{ (v,1) \mid v \in V(G), N_r[v] \subseteq X\} \cup \{ (v,2) \mid v \in X\}.
    \]
\begin{claim}
    For every $z\in\N$,
    $\YY'=\{X'_1,\ldots,X'_\ell\} \subseteq \mathcal X'$ is a solution for the MMSC instance $(V',\XX')$ with $M(V',\mathcal Y') \le z$ if and only if 
    $\YY=\{X_1,\ldots,X_\ell\} \subseteq \XX$ is a weak $r$-neighborhood cover with degree~$z$ and spread $s$.
\end{claim}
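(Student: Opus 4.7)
The plan is to prove both directions of the equivalence by directly translating the MMSC-covering condition into ``every $r$-neighborhood is in some cluster'' and the membership bound $M(V',\YY')\le z$ into ``every vertex lies in at most $z$ clusters''. The spread-$s$ requirement is automatic since $\YY\subseteq\XX$ and $\XX$ already has spread $s$ by construction, so only the covering and degree properties need to be verified.

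For the forward direction, I would assume $\YY'=\{X_1',\ldots,X_\ell'\}$ covers $V'$ with $M(V',\YY')\le z$ and consider the corresponding $\YY=\{X_1,\ldots,X_\ell\}\subseteq\XX$. For every $v\in V(G)$, the element $(v,1)\in V'$ must be covered, meaning there is some $X_i\in\YY$ with $N_r[v]\subseteq X_i$; this is precisely the covering condition for a weak $r$-neighborhood cover. For the degree, I would compute for each vertex $v$ that $|\{X\in\YY:v\in X\}|=|\{X'\in\YY':(v,2)\in X'\}|\le M(V',\YY')\le z$, using that by definition $(v,2)\in X'$ iff $v\in X$.

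For the backward direction, assume $\YY$ is a weak $r$-neighborhood cover with degree $z$. For every $v\in V(G)$, there exists $X_i\in\YY$ with $N_r[v]\subseteq X_i$, so $(v,1)\in X_i'$, and moreover $v\in N_r[v]\subseteq X_i$ gives $(v,2)\in X_i'$; hence $\YY'$ covers $V'$. To bound $M(V',\YY')$, I would treat the two types of elements separately: $|\{X'\in\YY':(v,2)\in X'\}|=|\{X\in\YY:v\in X\}|\le z$ directly by the degree bound, while $|\{X'\in\YY':(v,1)\in X'\}|=|\{X\in\YY:N_r[v]\subseteq X\}|\le|\{X\in\YY:v\in X\}|\le z$ since $v\in N_r[v]$.

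I do not anticipate a real obstacle here: the construction of $V'=V(G)\times\{1,2\}$ with two copies per vertex was designed precisely so that the $(v,1)$-copies encode the covering requirement (each $r$-neighborhood must be contained in some chosen cluster) while the $(v,2)$-copies encode the membership requirement (each vertex is in few chosen clusters). The only subtlety worth flagging explicitly is the inclusion $\{X\in\YY:N_r[v]\subseteq X\}\subseteq\{X\in\YY:v\in X\}$, which relies on $v\in N_r[v]$ and ensures that the $(v,1)$-copies never dominate the maximum in $M(V',\YY')$.
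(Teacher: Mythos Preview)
Your proposal is correct and follows essentially the same approach as the paper's proof. The only cosmetic difference is in the backward direction's membership bound for $(v,1)$: the paper phrases it as the implication ``$(v,1)\in X'$ implies $(v,2)\in X'$'' while you compute the set inclusion $\{X\in\YY:N_r[v]\subseteq X\}\subseteq\{X\in\YY:v\in X\}$ directly---these are the same observation, and your version is arguably more explicit.
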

\begin{claimproof}
    Assume $\YY'$ is an MMSC solution.
    For every vertex $v\in V(G)$, we have $(v,1) \in X' \in \YY'$. By construction, we also have $N_r[v] \subseteq X \in \XX$.
    Furthermore, for every $v\in V(G)$, we have that $(v,2)$ is only contained in at most $z$ sets of $\YY'$.
    Hence, $v$ is contained in at most $z$ clusters of $\YY$.
    As~$\XX$ has spread $s$, the same holds for $\YY$.
    We conclude that $\XX$ is a weak $r$-neighborhood cover with degree~$z$ and spread $s$.

    Assume $\YY$ is a weak $r$-neighborhood cover with the degree $z$ and spread $s$.
    Every $v\in V(G)$ appears in at most $z$ clusters from $\YY$.
    Thus, $(v,2)$ appears in at most $z$ sets from $\YY'$.
    As $(v,1) \in X'$ implies $(v,2) \in X'$ for every $X'\in\XX'$, the same holds for $(v,1)$.
    For every $v\in V(G)$, its $r$-neighborhood is covered in some cluster $X\in\YY$.
    We therefore have $(v,1)\in X' \in \YY'$ and also $(v,2) \in X'$.
    We conclude that $\YY'$ is an MMSC solution.
\end{claimproof}

Having established the above reduction, we run \Cref{lem:MMSC} on the instance $(V',\XX')$ in time $\mathcal{O}((|V'|+|\XX'|)^{2.6}|\XX'|)$ and convert the output into 
a weak $r$-neighborhood cover $\mathcal Y \subseteq \XX$ with spread $s$ and degree at most $(1 + \mathcal{O}(1/\sqrt{z}))(\log(|V'|) + 1) z$,
where $z = 36\ln(n)(d^*+1)$. The degree is bounded by $\mathcal{O}(\log(n)^2+1)d^*$.
Since $|V'| = n$ and $|\XX'| = \mathcal{O}(n^2)$, \Cref{lem:MMSC} runs in time $\mathcal{O}(n^{7.2})$, which dominates the time required to convert instances and solutions between the two problems.
\end{proof}

\section{Weak Neighborhood Covers in Structurally Sparse Classes}\label{sec:covers_exist}

In this section, we prove that structurally nowhere dense graph classes admit flip-closed sparse weak neighborhood covers. 

\sndcovers

The key to establishing our main theorem is a sparsification of the input graph by local contractions. To measure sparsity, we are going to use the weak coloring numbers $\wcol_r(G)$, which were introduced by Kierstead and Yang~\cite{kierstead2003orderings}. We will formally define these numbers below.
The weak coloring numbers are very useful for our purposes,
since graphs with bounded weak coloring numbers admit neighborhood covers with good properties.
\begin{lemma}[{\cite[Lemma 6.10]{GroheKS17}}]\label{lem:wcolCover}
    Let $r \in \N$.
    Every graph $G$ has an $r$-neighborhood cover with spread at most $2r$ and degree at most $\wcol_{2r}(G)$.
\end{lemma}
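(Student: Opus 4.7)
The plan is to prove the lemma via the standard construction of a neighborhood cover from a weak reachability order. First, I would fix a linear order $<$ on $V(G)$ witnessing $\wcol_{2r}(G)$, i.e., such that $|\wreach_{2r}[G,<,v]| \le \wcol_{2r}(G)$ for every $v \in V(G)$, where $\wreach_{2r}[G,<,v]$ denotes the set of vertices that are weakly $2r$-reachable from $v$ with respect to $<$. For each vertex $v$, define $\ell(v)$ to be the $<$-minimum vertex of $N_r[v]$. The cover is then $\XX := \{X_u : u \in V(G),\ X_u \neq \emptyset\}$, where
\[
X_u := \bigcup_{v : \ell(v) = u} N_r[v].
\]

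Next, I would verify the two cover properties. Covering is immediate: $N_r[v] \subseteq X_{\ell(v)}$ by construction. For the spread bound, fix $u$ and any $w \in X_u$; by definition there exists $v$ with $\ell(v) = u$ and $w \in N_r[v]$. Since $u \in N_r[v]$, a shortest $u$–$v$ path followed by a shortest $v$–$w$ path gives a $u$–$w$ walk in $G$ of length at most $2r$, and every intermediate vertex lies in $N_r[v] \subseteq X_u$. Hence $G[X_u]$ has radius at most $2r$ with center $u$, giving spread $\le 2r$.

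The main point is bounding the degree. For a vertex $w$, the clusters containing $w$ are exactly $\{X_u : u = \ell(v)\text{ for some }v \in N_r[w]\}$, so the degree at $w$ is at most $|\{\ell(v) : v \in N_r[w]\}|$. I would then show
\[
\{\ell(v) : v \in N_r[w]\} \;\subseteq\; \wreach_{2r}[G,<,w],
\]
which combined with the choice of $<$ yields degree at most $\wcol_{2r}(G)$. To establish this inclusion, fix $v \in N_r[w]$ and set $u = \ell(v)$. Choose a shortest $w$–$v$ path $P_1$ (length $\le r$) and a shortest $v$–$u$ path $P_2$ (length $\le r$); their concatenation has length at most $2r$. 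Crucially, every vertex on $P_1$ lies within distance $r$ of $v$, hence in $N_r[v]$, and so is $\ge u$ in $<$ by minimality of $u = \ell(v)$ on $N_r[v]$; likewise every vertex on $P_2$ lies in $N_r[v]$ and is $\ge u$. Thus $u$ is the $<$-minimum on the concatenated walk, so $u \in \wreach_{2r}[G,<,w]$, as required.

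The key observation making the argument work — and what I expect to be the only subtle point — is the second use of ``$N_r[v]$-minimality'' to control vertices on the $w$-to-$v$ portion of the path, which is what allows a single ordering witnessing $\wcol_{2r}$ (rather than $\wcol_r$ together with some auxiliary notion) to simultaneously bound the degree. The remaining verifications are routine.
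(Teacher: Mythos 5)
Your proof is correct, and it is the standard construction (and, to the best of my knowledge, exactly the approach of Lemma 6.10 in the cited reference): take an ordering witnessing $\wcol_{2r}$, group the $r$-balls by the $<$-minimum vertex $\ell(v)$ of $N_r[v]$, and bound the degree at $w$ by showing that $\ell(v)$ for $v \in N_r[w]$ is weakly $2r$-reachable from $w$ via the concatenated shortest $w$–$v$ and $v$–$\ell(v)$ paths, all of whose vertices lie in $N_r[v]$ and are therefore $\geq \ell(v)$. The only unstated step — replacing the concatenated walk by a simple subpath before invoking the definition of weak reachability, which preserves length and the property that all vertices are $\geq \ell(v)$ — is routine, and your observation that the $N_r[v]$-minimality of $\ell(v)$ controls both halves of the walk is precisely the crux.
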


The key to proving \Cref{thm:sndcovers} is the following. 

\sndwcol

We will first define local contractions and prove that we can lift sparse neighborhood covers from a contracted graph to the original graph. Then, by combining \Cref{lem:wcolCover} and \Cref{thm:snd-wcol} we conclude \Cref{thm:sndcovers}.

\subsection{Local Contractions and Weak Neighborhood Covers}

The following definition of local contractions is the key to our approach. 

\begin{definition}[Contractions]\label{def:contraction}
    Let $G$ be a graph and let $A \subseteq V(G)$ and $\bar A = V(G) \setminus A$. The \emph{contraction} $G\contract{A}$
    is defined as the graph obtained by contracting $A$ into a new vertex $v_A$ that is connected to every $w\in \bar A$ with $N[w]\cap A\neq \emptyset$. 
    For disjoint subsets $A_1,\ldots,A_l \subseteq V(G)$ such that each $A_i$ is contained in a $k$-neighborhood of $G$,
    we call $G\contract{A_1}\ldots\contract{A_l}$ a $k$-\emph{contraction of} $G$.
\end{definition}

Note that $k$-contractions are not necessarily depth-$k$ minors, as the sets $A_i$ may have large radius or may not even be connected (while they embed into low radius subgraphs). In particular, $k$-contractions in graphs from a nowhere dense class may not preserve nowhere denseness. However, we observe that a cover for a $k$-contraction of $G$ can be lifted to a cover for $G$.

\begin{lemma}\label{lem:contractionsHelpWithCovers}
    Let $G'$ be a $k$-contraction of a graph $G$. If there exists a weak $r$-neighborhood cover with spread $s$ and degree $d$ for $G'$, then there exists a weak $r$-neighborhood cover with spread at most $(2k+1)\cdot s$ and degree $d$ for $G$.
\end{lemma}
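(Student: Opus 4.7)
The plan is to lift each cluster of the given cover of $G'$ to a cluster in $G$ via the natural projection. Let $\pi \colon V(G) \to V(G')$ map every uncontracted vertex to itself and every vertex of a contracted set $A_i$ to the corresponding contraction vertex $v_{A_i}$. Given a weak $r$-neighborhood cover $\XX'$ of $G'$ with spread $s$ and degree $d$, I will define the lifted family $\XX := \{\pi^{-1}(X') : X' \in \XX'\}$ and show that it is the desired cover.

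The degree and coverage properties transfer for free. A vertex $v \in V(G)$ lies in $\pi^{-1}(X')$ if and only if $\pi(v) \in X'$, and $\pi(v)$ belongs to at most $d$ clusters of $\XX'$, so the degree of $\XX$ is at most $d$. For coverage, the key observation is that $\pi$ is distance-non-increasing: every edge of $G$ either descends to an edge of $G'$ or collapses inside a single contracted set, so any $G$-path of length $t$ projects to a $G'$-walk of length at most $t$. Consequently $\pi(N_r[u]) \subseteq N_r[\pi(u)]$, and choosing the $X' \in \XX'$ that contains $N_r[\pi(u)]$ gives $N_r[u] \subseteq \pi^{-1}(X')$.

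The main technical step is the spread bound. For each cluster $X'$ with center $c = \ccenter{X'}$, I will pick a $G$-center $c_G$ as follows: if $c$ is an uncontracted vertex, set $c_G := c$; if $c = v_{A_{j_0}}$ for some $j_0$, let $c_G$ be an arbitrary vertex of $A_{j_0}$. For each $u \in \pi^{-1}(X')$ I will bound $\mathrm{dist}_G(c_G, u) \leq (2k+1)s$ by lifting a shortest $G'$-path $c = v_0', v_1', \ldots, v_\ell' = \pi(u)$ of length $\ell \leq s$ into an actual walk in $G$. Each edge $\{v_i', v_{i+1}'\}$ lifts to a concrete $G$-edge between representatives in the fibres $\pi^{-1}(\{v_i'\})$ and $\pi^{-1}(\{v_{i+1}'\})$, and whenever consecutive representatives of the same contraction vertex $v_{A_j}$ differ, they can be bridged by a $G$-walk of length at most $2k$, using that $A_j \subseteq N_k[w_j]$ for some $w_j$.

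The hard part will be the careful bookkeeping of this lifted walk. It consists of $\ell$ lifted edges, plus at most $2k$ bridging steps at each contracted intermediate vertex $v_i'$, plus possibly additional bridging at the two endpoints to connect $c_G$ to the first representative and the last representative to $u$. The choice of $c_G$ inside $A_{j_0}$ (when $c$ is contracted) allows the initial bridging to be absorbed into the first lifted edge's representative, and symmetrically at the $u$-end; a short case analysis over which of $c$, the $v_i'$, and $\pi(u)$ are contracted then confirms the total walk length is bounded by $(2k+1)\,\ell \leq (2k+1)\,s$, establishing the spread bound and completing the proof.
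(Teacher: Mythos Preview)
Your approach is identical to the paper's: lift each cluster via $\pi^{-1}$ and verify degree, coverage, and spread; the first two go through exactly as you say and match the paper's argument.

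One small issue in the spread argument: the claim that the endpoint bridging ``can be absorbed'' is not right. The center $c_G$ is fixed once for the entire cluster, but the first lifted-edge representative in $\pi^{-1}(c)$ depends on the particular target $u$, so a bridge of length up to $2k$ from $c_G$ to that representative is still needed; at the $u$-end there is likewise no freedom to absorb anything, since $u$ is given. An honest count yields walk length at most $(2k+1)\ell + 2k$, not $(2k+1)\ell$. Concretely, take $G$ a long path and contract consecutive triples ($k=1$): the lift of $N_s^{G'}[c]$ consists of $6s+3$ consecutive vertices, which requires radius $3s+1$, not $3s=(2k+1)s$. The paper's own proof carries exactly the same additive-$O(k)$ slack (the phrase ``a path $(A_1,\ldots,A_i)$ of length $i\le s$'' conflates vertex and edge counts), so you are on par with the paper here, and the discrepancy is irrelevant for the application since only a spread that is linear in $s$ with a constant depending on $k$ is needed.
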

\begin{proof}
    Let $\Aa$ be the partition of the vertex set of $G$, such that each set $A\in \Aa$ is fully contained in a $k$-neighborhood in $G$ and $G'$ is obtained from $G$ by contracting the parts of $\Aa$, that is, we have $V(G') = \Aa$ and $\{A_1,A_2\} \in E(G')$ if and only if there exist $v_1 \in A_1$ and $v_2 \in A_2$ such that $\{v_1,v_2\} \in E(G)$.

    Let $\Chi'$ be a weak neighborhood cover of spread $s$ for $G'$.
    For a vertex $v\in V(G)$, denote by $\Aa(v)$ the set $A\in\Aa$ containing $v$. We construct the neighborhood cover $\Chi$ for $G$ as follows. For each cluster $X' \in \Chi'$, we define the cluster $X = \bigcup_{A \in X'}\{v : \Aa(v) = A\}$ and let $\Chi = \{X : X' \in \Chi'\}$. 

    We prove that $\Chi$ is a weak $r$-neighborhood cover of $G$ with spread at most $(2k+1)\cdot s$ and the same degree as $\Chi'$. 

    \medskip
    \noindent\textit{Covering neighborhoods.} We first show that for every vertex $w \in V(G)$ there exists a cluster \mbox{$X\in \Chi$} containing $N^G_r[w]$. Let $X'=\mathrm{cluster}'(\Aa(w))\in \Chi'$ be the cluster containing $N^{G'}_r[\Aa(w)]$ and let $X$ be the cluster from which $X'$ was created. Let $z$ be a vertex at distance at most~$r$ from~$w$, which is witnessed by a path $P$ in $G$ between $w$ and $z$ of length at most $r$. Let \mbox{$P' = \{\Aa(x) : x \in V(P)\}$}. Since $P$ is a path of length at most $r$, the subgraph induced by $P'$ in~$G'$ is connected and has diameter at most $r$ in $G'$. It follows that $\Aa(z)$ is at distance at most~$r$ from $\Aa(w)$ in $G'$ and therefore contained in $X'$. By construction, $z$ is contained in $X$.
    Thus, $X$ covers $N^G_r[w]$.

    \medskip
    \noindent\textit{Bounding spread.} Let us now show that every cluster $X \in \Chi$ is a subset of a $((2k+1)\cdot s)$-neighborhood in $G$. Let $A_1=\mathrm{center'}(X')$, where $X'$ is the cluster from which $X$ was created.   Choose an arbitrary vertex $v \in A_1$ and let $w\in X$. We show that $w$ is at distance at most $(2k+1)\cdot s$ from $v$ in $G$. By construction, we have $\Aa(w) \in X'$. Since $\Chi'$ has spread $s$, we have that $A_1$ and~$\Aa(w)$ are at distance at most $s$ in $G'$, witnessed by a path $P = (A_1 = \Aa(v),A_2,A_3,\ldots,A_i = \Aa(w))$ of length $i\leq s$. Since any two vertices of $A_j$, $1\leq j\leq i$, have distance at most $2k$ in $G$ and there exist $v_j\in A_j$ and $v_{j+1}\in A_{j+1}$ with $\{v_j,v_{j+1}\}\in E(G)$, $1\leq j<i$, we conclude that $v$ and $w$ have distance at most $(2k+1)\cdot i$ in $G$. 


    \medskip
    \noindent\textit{Bounding degree.}
    A vertex $v\in V(G)$ is contained in exactly those clusters $X$ of $\Chi$ such that~$\Aa(v)$ is contained in $X'$ of $\Chi'$. Since $\Aa(v)$ appears in at most $d$ clusters of $\Chi'$, $v$ appears in at most $d$ clusters of $\Chi$.
\end{proof}



\subsection{Background on Structurally Sparse Graphs}\label{sec:covers-prelims}

The key to proving \Cref{thm:snd-wcol} is based on a structural characterization of structurally nowhere dense graph classes in terms of \emph{quasi-bushes}~\cite{bushes}. 
Let us recall the necessary background about nowhere dense and structurally nowhere dense graph classes. 

\subsubsection{Generalized Coloring Numbers and Sparsity Measures}

A graph $H$ is an \emph{$r$-shallow minor} of a graph $G$ if there is a set of pairwise disjoint vertex subsets $\{V_u\subseteq V(G)\}_{u\in V(H)}$ each of radius at most $r$ such that if $\{u,v\}\in V(H)$, then there exists a vertex of $V_u$ connected to a vertex of $V_v$. A graph $H$ is an \emph{$r$-shallow topological minor} if there is a set of vertices $\{p(u)\}_{u\in V(H)}$ and a set $\{P(u,v)\}_{\{u,v\}\in E(H)}$ of internally vertex disjoint paths of length at most $2r+1$ such that $P(u,v)$ has endpoints $p(u)$ and $p(v)$. 

In the following, $\nabla_r(G)$ and $\tilde\nabla_r(G)$ denote the maximal ratio of edges divided by vertices among all $r$-shallow minors and $r$-shallow topological minors in $G$, respectively. 

We are going to use the weak coloring numbers, which were introduced by Kierstead and Yang~\cite{kierstead2003orderings}. Fix a graph $G$ and an order $\prec$ on the vertices of $G$. We say that a vertex $u$ is \emph{weakly $r$-reachable} from a vertex $v$ if there is a path of length at most $r$ between $v$ and $u$ such that $u$ is the smallest vertex on the path (with respect to $\prec$). The set of all vertices weakly $r$-reachable from~$v$ is denoted by $\wreach_r[G,\prec,v]$. We let
\[
\wcol_r(G,\prec) = \max_{v \in V(G)} |\wreach_r[G,\prec,v]|,
\]
and define the \emph{weak $r$-coloring number} of $G$ as
\[
\wcol_r(G) = \min_\textnormal{order $\prec$ on $V(G)$} \wcol_r(G,\prec).
\]

By \Cref{lem:wcolCover} every graph $G$ has an $r$-neighborhood cover with spread at most~$2r$ and degree at most $\wcol_{2r}(G)$. 


Additionally, $\adm_r(G)$ stands for the \emph{$r$-admissibilty} of $G$, a sparsity measure we only use as an auxiliary concept to bound the other sparsity measures and whose exact definition does not concern us.
The relations between these measures are nicely collected in chapter one and two of the sparsity lecture notes of Marcin and Michał Pilipczuk~\cite{sparsityNotes}.
\begin{proposition}\label{prop:sparsity}
For every $r \in \N$ and graph $G$ we have
\begin{linenomath*}
\begin{align}
   & \wcol_r(G) \le 1 + r(\adm_r(G) - 1)^{r^2}                                        & \textnormal{\cite[Corollary 2.7]{sparsityNotes}}   \label{ineq:wcoladm}    \\
   & \smash{\adm_r(G) \le 1 + 6r \bigl( \lceil \tilde\nabla_{r-1}(G) \rceil \bigr)^3} & \textnormal{\cite[Lemma 3.2]{sparsityNotes}}       \label{ineq:admnabla}   \\
   & \tilde \nabla_r(G) \le \nabla_r(G)                                               & \textnormal{by definition}                         \label{ineq:nablanabla} \\
   & \nabla_r(G) \le \wcol_{4r+1}(G)                                                  & \textnormal{\cite[Lemma 3.1]{sparsityNotes}}       \label{ineq:nablawcol}
\end{align}
\end{linenomath*}
\end{proposition}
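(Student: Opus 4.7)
The plan is to address each of the four inequalities in turn; all are standard results in the theory of sparse graphs, so the strategy in each case is to sketch the classical argument from the sparsity literature.

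Inequality (\ref{ineq:nablanabla}) is essentially definitional: an $r$-shallow topological minor is witnessed by branch vertices $p(u)$ together with internally disjoint paths of length at most $2r+1$. Declaring the branch set $V_u$ to be $\{p(u)\}$ together with roughly the first half of each incident subdivision path yields an $r$-shallow minor with the same vertex and edge sets. Thus any $r$-shallow topological minor is also realized as an $r$-shallow minor, and the edge-to-vertex ratio inequality follows directly.

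For (\ref{ineq:nablawcol}), the plan is to fix an ordering $\prec$ witnessing $\wcol_{4r+1}(G)$ and an $r$-shallow minor $H$ achieving $\nabla_r(G) = |E(H)|/|V(H)|$, with branch sets $\{V_u\}_{u\in V(H)}$ of radius at most $r$. For each edge $\{u,v\}\in E(H)$, the union $V_u\cup V_v$ together with the connecting edge contains a connected subgraph of radius at most $2r+1$; its $\prec$-minimum vertex $m_{uv}$ is therefore weakly $(4r+1)$-reachable from some fixed reference vertex in $V_u\cup V_v$. A charging argument — assigning each edge $\{u,v\}$ to, say, the branch set $V_u$ whose reference vertex witnesses $m_{uv}\in\wreach_{4r+1}[G,\prec,\cdot]$ — bounds the average edge count per branch set by $\wcol_{4r+1}(G)$, yielding $\nabla_r(G)\le\wcol_{4r+1}(G)$.

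Inequality (\ref{ineq:admnabla}) is the most substantial of the four. The strategy is the standard argument bounding admissibility in terms of topological minor density: if $\adm_r(G)$ is large, then for every ordering $\prec$ there is a vertex $v$ and many internally vertex-disjoint paths of length at most $r$ from $v$ to vertices earlier than $v$ in $\prec$. By iterated pigeonhole and a careful selection procedure — looping over such a vertex $v$, then picking endpoints that can themselves be connected by short paths using the admissibility property — one builds a subdivided complete graph of order roughly $\adm_r(G)^{1/3}/(6r)^{1/3}$ with subdivision paths of length at most $r-1$. This yields a $(r-1)$-shallow topological minor whose edge-to-vertex ratio is cubic in $\adm_r(G)^{1/3}$, and rearranging gives the claimed bound. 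The combinatorics of the selection procedure will be the main obstacle, but it is worked out in detail in \cite[Lemma 3.2]{sparsityNotes}.

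Finally, (\ref{ineq:wcoladm}) is proved by a branching argument on a fixed $r$-admissibility ordering $\prec$. Starting at any vertex $v$, a weakly $r$-reachable vertex is the minimum of a path of length at most $r$; decomposing such paths by repeatedly going to the next internal vertex, one obtains a tree of depth at most $r$ and branching at most $\adm_r(G)-1$ (the $-1$ because one step of the path is used to leave the current vertex). The leaves of this tree cover $\wreach_r[G,\prec,v]$, giving the bound $1 + r(\adm_r(G)-1)^{r^2}$ after accounting for the polynomial overhead of tracking paths of varying lengths, as detailed in \cite[Corollary 2.7]{sparsityNotes}. Since all four facts are cited results, the proof may reasonably simply point to these references for the full details.
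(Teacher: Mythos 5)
Your proposal takes essentially the same approach as the paper: the paper provides no proof of \Cref{prop:sparsity} at all, but simply states the four inequalities with the citations to the sparsity lecture notes appearing in the margin, exactly as you ultimately defer to those references. Your accompanying sketches capture the right flavor (definitional inclusion for \eqref{ineq:nablanabla}, a charging argument for \eqref{ineq:nablawcol}, a shallow-topological-minor construction for \eqref{ineq:admnabla}, and a branching argument for \eqref{ineq:wcoladm}), though some details are imprecise as stated — in particular, a tree of depth $r$ with branching $\adm_r(G)-1$ would only give an exponent of $r$, not $r^2$, so the last bound requires a genuinely different (e.g., iterated or layered) argument than the one you describe — but since the paper treats all four as cited facts this is immaterial.
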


Note that by \Cref{prop:sparsity} the values in $\wcol_r(G)$, $\tilde\nabla_r(G)$ and $\nabla_r(G)$ are polynomially related (where the degree of the polynomial depends on $r$). This allows us to routinely use one notion to bound the other.
The following link between average and minimum degree will also be useful.
\begin{lemma}[Folklore, see for example (3.4) of \cite{nevsetvril2012sparsity}]\label{lem:mindeg}
  Every graph with average degree at least~$d$ contains a subgraph with minimum degree at least $\frac{d}{2}$.
\end{lemma}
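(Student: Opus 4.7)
The plan is to prove this classical statement by a simple greedy vertex-deletion argument, maintaining the invariant that the average degree stays at least $d$ throughout. Set $G_0 := G$ and, as long as the current graph $G_i$ contains a vertex $v$ with $\deg_{G_i}(v) < d/2$, let $G_{i+1} := G_i - v$. The process terminates because each step strictly decreases the number of vertices; at termination, every remaining vertex has degree at least $d/2$, which is the desired induced subgraph.

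The key step is to verify the invariant that the average degree of $G_i$ remains at least $d$. Initially this holds by hypothesis, so $\sum_{u \in V(G_0)} \deg_{G_0}(u) \ge d \cdot |V(G_0)|$. For the inductive step, removing a vertex $v$ with $\deg_{G_i}(v) < d/2$ decreases the sum of degrees by exactly $2\deg_{G_i}(v) < d$ and the vertex count by exactly $1$. Hence
\[
\sum_{u \in V(G_{i+1})} \deg_{G_{i+1}}(u) > d \cdot |V(G_i)| - d = d \cdot |V(G_{i+1})|,
\]
so the average degree of $G_{i+1}$ is still at least $d$ (in fact strictly greater).

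The only subtlety, which plays the role of the ``main obstacle'', is to ensure that the procedure does not delete every vertex, i.e.\ that it actually returns a nonempty subgraph in which the minimum-degree conclusion is meaningful. This follows from the invariant: any graph with average degree at least $d > 0$ must contain at least one edge and therefore at least two vertices, so the process cannot reach the empty graph. The case $d = 0$ is vacuous, since minimum degree at least $0$ holds in any graph. This completes the argument.
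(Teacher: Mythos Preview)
Your proof is correct and is precisely the standard greedy-deletion argument for this folklore result. The paper does not actually prove this lemma; it simply states it with a citation to \cite{nevsetvril2012sparsity}, so there is nothing to compare against beyond noting that your argument is the expected one.
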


\subsubsection{Structurally Nowhere Dense Graph Classes}

Nowhere dense graph classes can be defined using the weak $r$-coloring numbers.
\begin{definition}[Nowhere Dense]\label{def:ND}
    A graph class $\Cc$ is \emph{nowhere dense} if
    for every $r \in \N$ and $\varepsilon > 0$ there exists $c(r,\epsilon)$
    such that for every subgraph $G$ of a graph from $\Cc$ satisfies \mbox{$\wcol_r(G) \le c(r,\epsilon)\cdot |G|^\varepsilon$}.
\end{definition}
Structurally nowhere dense graph classes were defined
in~\cite{gajarsky2020first} as first-order transductions of nowhere dense
classes. We will not rely on this definition,
but instead use a decompositional result involving so-called almost nowhere dense quasi-bushes~\cite{bushes} that we present below.


\subsubsection{Almost Nowhere Dense Graph Classes}

We can use the weak $r$-coloring numbers not only to define nowhere dense graph classes, 
but also to define the slightly more general notion of almost nowhere dense graph classes~\cite{bushes}.

\begin{definition}[Almost Nowhere Dense]\label{def:almostND}
    A graph class $\Cc$ is \emph{almost nowhere dense} if
    for every $r \in \N$ and $\varepsilon > 0$ there exists $c(r,\epsilon)$
    such that every $G \in \Cc$ satisfies \mbox{$\wcol_r(G) \le c(r,\epsilon)\cdot |G|^\varepsilon$}.
\end{definition}

Observe that a hereditary graph class is almost nowhere dense if and only if it is nowhere dense. However, graphs from an almost nowhere dense class can contain dense subgraphs, 
for example, cliques of size $\log(n)$, if these appear as subgraphs of sufficiently large sparse graphs. By the previous \Cref{lem:wcolCover}, almost nowhere dense classes admit sparse neighborhood covers.


\subsubsection{Quasi-Bushes}

Following the local separator-based approach of~\cite{lacon},
quasi-bushes were introduced in~\cite{bushes} to derive structural properties such as low shrubdepth covers for structurally nowhere dense graph classes. Let $T$ be a rooted tree. 
For nodes $v,w\in V(T)$, we say that $v$ is above $w$ if it lies on the unique path from $w$ to the root. Node $v$ is below $w$ if~$w$ is above $v$. Note that each node is above and below itself. 
For a rooted tree $T$ and node $w$ in the tree, the \emph{ancestors} and \emph{descendants} of $w$ are all nodes above and below $w$ in $T$ (including $w$), respectively. Let us stress that each node is both an ancestor and descendant of itself. We write $v \leq_T w$ if $v$ is an ancestor of $w$ and $v <_T w$ if additionally $v\neq w$, in which case we call $v$ a \emph{strict ancestor} of $w$. Hence, the root is the smallest node of $T$ and the leaves are the maximal nodes of~$T$ with respect to the tree order $\leq_T$. We define $\geq_T$ and $>_T$ analogously as expected. For a vertex $w \in V(T)$, let $T(w)$ be the subtree of $T$ rooted at $w$ containing all descendants of $w$, and~$L(w)$ be the set of leaves of~$T$ that are descendants of~$w$.

\begin{definition}
A quasi-bush consists of
\begin{itemize}
\item a rooted tree $T$, represented by a directed graph in which all edges are directed away from the root,
\item a set $D$ of directed arcs (called {\em{pointers}}) from the leaves of $T$ to the nodes of $T$ (we require that every leaf points to the root of $T$)\footnote{In~\cite{bushes} pointers can only point to internal nodes of $T$, for us it will be convenient to deal with the slightly more general definition and allow pointers to all vertices of $T$.},
\item a labeling function $\lambda \from D \to \{0,1\}$.
\end{itemize}

A quasi-bush $B$ defines a directed graph $G(B)$ whose vertices are the leaves of $T$ and where the arc set is defined as follows: let $u,v$ be two distinct
leaves, and let $w$ be the lowest (largest with respect to $\leq_T$) ancestor of~$v$ such that $(u,w) \in D$. Then
$(u,v)$ is an arc in $G(B)$ if and only if $\lambda((u,w)) = 1$. In this case $w$ is called the \emph{connection point} of $(u,v)$.
Note that the connection point of an arc $(u,v)$ is uniquely determined. 
Equivalently, $(u,v)$ is an arc in $G(B)$ if on the shortest directed path from $u$ to $v$ in $B$ that uses as its first arc a pointer $(u,w)\in D$ and all other arcs from~$T$, the pointer $(u,w)$ is labeled~$1$. We call this shortest path the \emph{tunnel} of $(u,v)$ in $B$. 
The \emph{depth} of a quasi-bush is the depth of the tree $T$. 
\end{definition}

\begin{figure}[h]
    \centering
    \includegraphics{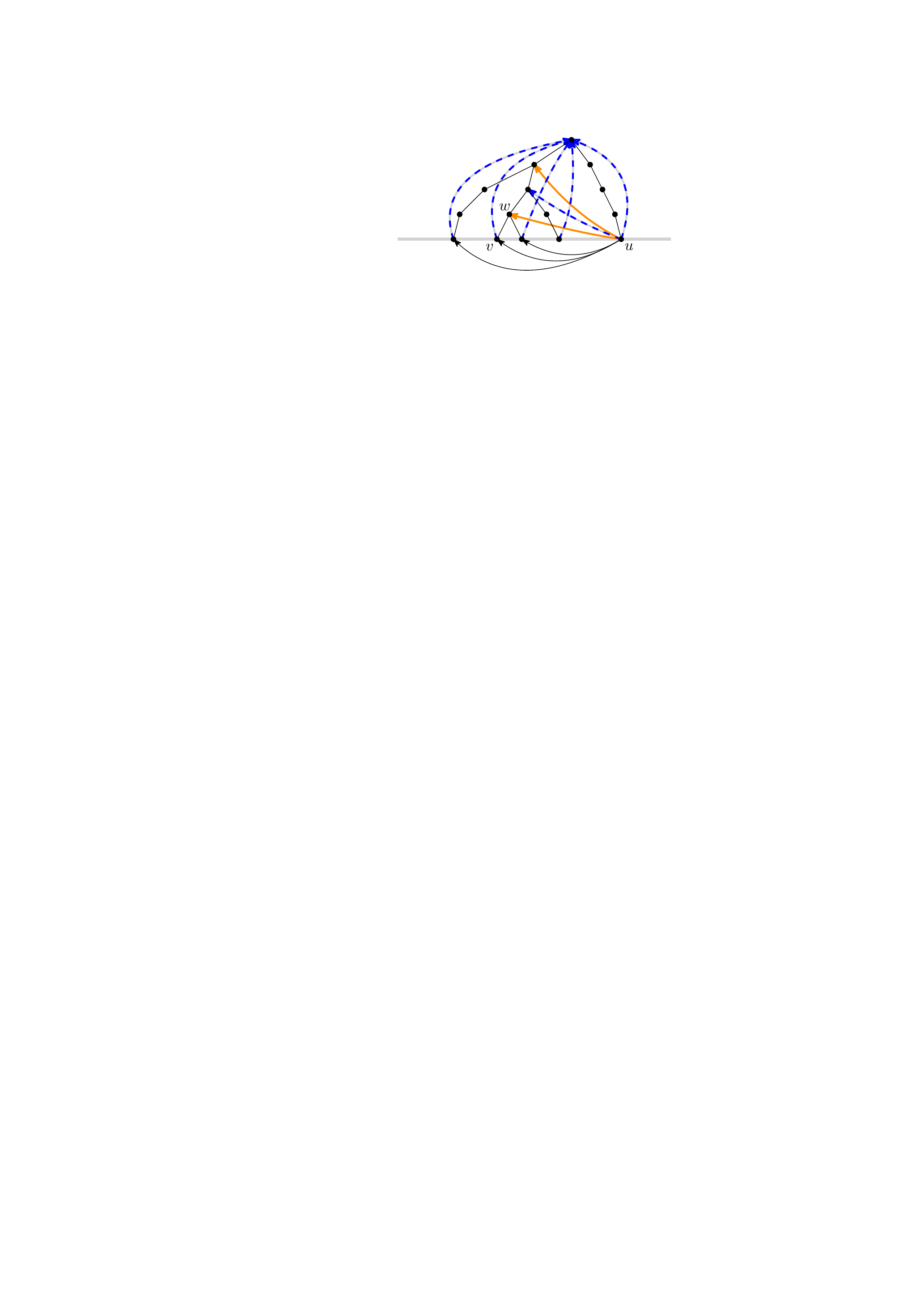}
    \caption{
        A quasi-bush (above the gray line) and the graph it describes (below the gray line).
        Dotted blue arcs represent pointers labeled $0$ and orange arcs represent pointers labeled $1$.
        The node $w$ is the connection point of $(u,v)$.
}    
\end{figure}

We say \emph{$u$ has a pointer to $v$} if $(u,v) \in D$. If $\lambda((u,v)) = 1$, we call the pointer \emph{positive}.
In this work $G(B)$ will always be treated as a directed graph. We will encode only undirected graphs, hence, $G(B)$ will be a symmetric, directed graph, that is, it will contain an arc $(u,v)$ if and only if it contains the arc $(v,u)$. We will then make use of the fact that each arc $(u,v)$ is encoded redundantly: via an ancestor of $u$ and via an ancestor of~$v$. 

The \emph{Gaifman graph} of a quasi-bush is the (undirected) graph whose edge set consists of the tree-edges and the pointers. We extend all graph theoretic concepts to quasi-bushes via their Gaifman graphs. For example, the generalized coloring number $\wcol_r(B)$ of a quasi-bush $B$ refers to $\wcol_r(H)$, where $H$ is the Gaifman graph of $G$. We can thus speak, for example, about almost nowhere dense classes of quasi-bushes.

\subsection{Quasi-Bush Decompositions of Structurally Nowhere Dense Classes}

The central observation of \cite{bushes} shows that structurally nowhere dense classes can be described by almost nowhere dense quasi-bushes of bounded depth.


\begin{theorem}[{\cite[Theorem 3]{bushes}}]\label{thm:bush-nd}
 Let $\Dd$ be a structurally nowhere dense class of graphs. Then for  every $G\in \Dd$, there is a quasi-bush $B_G$ representing $G$ such that the class of quasi-bushes $\{B_G\colon G\in \Dd\}$ has bounded depth and is almost nowhere dense.
\end{theorem}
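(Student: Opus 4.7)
The plan is to construct the quasi-bush $B_G$ by combining a structural decomposition of a nowhere dense preimage of $G$ (under the defining transduction of $\Dd$) with an encoding of the transduction's edge formula via pointers. Since $\Dd$ is structurally nowhere dense, there exist a nowhere dense class $\CC$ and a fixed transduction $\mathsf{T} = (\nu, \eta)$ with $\Dd \subseteq \bigcup_{H \in \CC} \mathsf{T}(H)$. For each $G \in \Dd$, fix a preimage $H_G \in \CC$ together with a $\Gamma$-coloring $H_G^+$ such that $G$ is the graph produced from $H_G^+$ by $\mathsf{T}$.

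The tree $T$ underlying $B_G$ will come from a hierarchical decomposition of $H_G$, for which a natural choice is the decomposition induced by the Splitter game on nowhere dense classes \cite{GroheKS17}. For every fixed radius $r$, Splitter wins the radius-$r$ Splitter game on $\CC$ in a bounded number $\ell = \ell(r, \CC)$ of rounds. Choosing $r$ large enough so that the Gaifman radius associated with $\eta$ (roughly $2^q$ where $q$ is the quantifier rank of $\eta$) is captured, we obtain a tree of depth $\ell$ whose nodes represent nested neighborhoods of $H_G$; the leaves correspond to individual vertices, and restriction to $\nu$-satisfying leaves gives the vertex set of $G$. The pointers then encode the edge relation $\eta$: by Gaifman locality, whether $\{u, v\} \in E(G)$ depends only on the local types of $u$ and $v$ in $H_G^+$. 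For each vertex $u$ and each ancestor $w$ at which $u$ ``separates'' from a subtree of $T$, we add a pointer $(u, w)$ whose label indicates whether $u$ is adjacent in $G$ to the vertices represented by that subtree; the uniformity of this label across descendants of $w$ follows from the decomposition isolating $u$'s local type from the subtree.

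To show that $\{B_G\}$ is almost nowhere dense, one bounds the weak coloring numbers of the Gaifman graph of $B_G$ in terms of those of $H_G$ and the constant depth $\ell$. The tree edges contribute only a bounded factor, and the pointer edges inherit their sparsity from the decomposition of $H_G$: each ancestor $w$ receives pointers only from leaves whose local neighborhoods are separated by $w$, and the number of such leaves is controlled by the bounded density of $H_G$. Using the polynomial equivalences collected in \Cref{prop:sparsity}, one derives a bound of the form $\wcol_r(B_G) \le c(r, \epsilon) \cdot |G|^\epsilon$. The main obstacle is specifying pointer placement and labels so that the quasi-bush faithfully represents $G$ — capturing all adjacencies with well-defined connection points — while simultaneously keeping the Gaifman graph sparse enough to transfer the nowhere denseness of $\CC$ over to $\{B_G\}$. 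This requires a delicate interplay between Gaifman's locality theorem (which controls pointer labels), the hierarchical decomposition of $H_G$ (which controls pointer placement and the constant depth), and the sparsity properties of $\CC$ (which bound the density of the Gaifman graph).
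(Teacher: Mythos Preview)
This theorem is not proved in the present paper; it is cited from \cite{bushes}. The paper does, however, summarize the construction of \cite{bushes} in the discussion leading up to \Cref{thm:bush-nd-order}: the tree underlying $B_G$ comes from an \emph{$r$-separator quasi-bush} $B_r^{\textnormal{sep}}(H,\prec)$ of the nowhere dense preimage $H$, whose internal nodes are nested subsets of $V(H)$ determined by a fixed weak-reachability order $\prec$, and the almost-nowhere-denseness of the bush is inherited from that order.

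Your plan has the right overall shape --- pass to the sparse preimage $H$, build a bounded-depth tree over it, and use locality of $\eta$ to assign pointer labels --- but the choice of the Splitter game as the decomposition leaves a genuine gap. The defining requirement of a quasi-bush is that the label $\lambda((u,w))$ must determine the adjacency of $u$ to \emph{every} leaf below $w$ (above any lower pointer target of $u$). This forces the connection of $u$ to the entire subtree below $w$ to be uniform once the root-to-$w$ path is fixed. The Splitter game gives no such guarantee: Connector localizes and Splitter deletes a single vertex, but nothing ensures that the remaining subgraph is homogeneously connected to any outside leaf $u$. What is needed is a \emph{separator} property: if the vertices accumulated along the root-to-$w$ path separate $u$ from the subtree below $w$ in $H$, then the truth of $\eta(u,v)$ for every $v$ in that subtree is determined by $u$'s type over the separator. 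This is exactly what the $r$-separator quasi-bush of \cite{bushes} provides, with weak-reachability sets playing the role of separators. Your sentence ``the uniformity of this label across descendants of $w$ follows from the decomposition isolating $u$'s local type from the subtree'' presupposes such a separation, but the Splitter game does not supply it; you would have to replace the Splitter decomposition by the separator construction (or extract the separator structure implicit in Splitter's strategy, which amounts to the same thing).
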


Since $\{B_G\colon G\in \Dd\}$ is almost nowhere dense, we know by definition
that for every \mbox{$r \in \N$} and $\varepsilon > 0$ there exists $c(r,\epsilon)$ such that for all $G \in \Dd$ there exists an ordering $\prec$ such that ${\wcol_r(G,\prec) \le c(r,\epsilon) \cdot |G|^\varepsilon}$.
For our purposes, we additionally need that $\prec$ can be chosen independently of $r$ and $\epsilon$ and respecting the tree order, meaning that for all $u,v \in V(T)$ with $u\leq_T v$ we have $u \preceq v$. 

\begin{definition}
    Let $B=(T,D,\lambda)$ be a quasi-bush.
    We call an order $\prec$ on $V(T)$ \emph{ancestor respecting} if 
    for all nodes $u,v \in V(T)$ with $u\leq_T v$, we have $u \preceq v$.
\end{definition}

To see that we may make this assumption, we analyze the proofs of the full version \cite{bushes-arxiv} of the paper that introduced quasi-bushes. The paper defines the notion of \emph{$r$-separator quasi-bushes} in Definition~29. 
Given a graph $G \in \Dd$ derived from a graph $H$ from a nowhere dense class $\Cc$ via a transduction with Gaifman radius $r$,
the paper then shows in the \textsc{Proof of Theorem 28 using Theorem 30},
how to derive a quasi-bush of $G$ (as we defined it above) via an $r$-separator quasi-bush of $H$ with the same underlying tree and pointer set.
Given an order $\prec$, an explicit construction of an $r$-separator quasi-bush of $H$, called $B_r^\textnormal{sep}(H,\prec)$, is given in Definition 37.
Lemma 39 then chooses the order $\prec$ such that $\{B_r^\textnormal{sep}(H,\prec) \mid H \in \CC \}$ is almost nowhere dense.
The (internal) nodes of $B_r^\textnormal{sep}(H,\prec)$ are subsets of $V(H)$ with the property that for all nodes $X,Y$ where $X$ is an ancestor of $Y$ we have $X \subseteq Y$.
Lemma 39 first chooses a good weak-reachability order $\prec$ on $H$ and extends it to the internal nodes of $B_r^\textnormal{sep}(H,\prec)$ so that for any two internal nodes $X, Y \subseteq V(H)$, $X \prec Y$ if and only if the maximum of $X$ is smaller than the maximum of $Y$ (with respect to~$\prec$).
This means, the order $\prec$ places every ancestor before its descendant in $B_r^\textnormal{sep}(H,\prec)$ and thus also in the quasi-bush of $G$. Furthermore, the construction is independent of $r$ and $\epsilon$, as desired. 
This yields the following result. 

\begin{theorem}\label{thm:bush-nd-order}
Let $\Dd$ be a structurally nowhere dense class of graphs.
For every $G\in \Dd$, there exists a quasi-bush $B_G$ representing $G$ and an ancestor respecting order $\prec_G$
such that $\{B_G\colon G\in \Dd\}$ has bounded depth.
Additionally,
for every $r \in \N$ and $\eps > 0$, there exists $c(r,\epsilon) \in \N$ such that for all $G \in \Dd$ we have 
\[\wcol_r(B_G,\prec_G) \le c(r,\epsilon) \cdot |G|^\varepsilon.\]
\end{theorem}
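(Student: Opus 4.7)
My plan is to sharpen Theorem~\ref{thm:bush-nd} by unpacking the construction of $B_G$ from \cite{bushes} so that a natural ancestor-respecting order on $V(B_G)$ emerges, together with the uniform weak coloring bound. Theorem~\ref{thm:bush-nd} already provides, for every $G\in\Dd$, a quasi-bush $B_G$ with $\{B_G:G\in\Dd\}$ of bounded depth and almost nowhere dense; Definition~\ref{def:almostND} then gives, for each $r$ and $\epsilon$, \emph{some} order on $V(B_G)$ witnessing the required bound, but it is not a priori ancestor respecting nor uniform in $r$ and $\epsilon$. These two additional properties are exactly what the theorem adds.

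I would proceed as follows. First, inspect the construction in \cite{bushes}: the quasi-bush $B_G$ inherits its tree and pointer set from an $r$-separator quasi-bush $B_r^{\textnormal{sep}}(H,\prec_H)$ of a graph $H$ in the witnessing nowhere dense class $\Cc$. The internal nodes of $B_r^{\textnormal{sep}}(H,\prec_H)$ are subsets of $V(H)$ with a nesting property relating ancestors and descendants by inclusion. Nowhere denseness of $\Cc$ supplies, for every $H\in\Cc$, a single order $\prec_H$ on $V(H)$ satisfying $\wcol_{r'}(H,\prec_H)\le c_H(r',\epsilon)\cdot|H|^\varepsilon$ uniformly in $r'$ and $\epsilon$. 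I would then extend $\prec_H$ to all nodes of $B_G$ by placing each internal node $X\subseteq V(H)$ at the $\prec_H$-position of its maximum element, breaking ties consistently with the tree order. The nesting property ensures that the $\prec_H$-maximum of an ancestor precedes that of any descendant, making the resulting order $\prec_G$ ancestor respecting by construction and crucially independent of $r$ and $\epsilon$.

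The weak coloring bound would then follow by converting walks in the Gaifman graph of $B_G$ into walks in $H$. Each tree edge and each pointer of $B_G$ corresponds to a bounded-length witness in $H$ dictated by the $r$-separator construction, so a weakly $r$-reachable node in $B_G$ under $\prec_G$ can be lifted to a weakly $r'$-reachable node in $H$ under $\prec_H$, with $r'$ depending only on $r$ and the (bounded) depth of the bush. Combined with the uniform bound on $H$, this transfers to $\wcol_r(B_G,\prec_G)\le c(r,\epsilon)\cdot|G|^\varepsilon$. The main obstacle will be the precise bookkeeping for this lifting: walks in $B_G$ freely interleave tree arcs and pointers, and one must argue that the $\prec_G$-minimum attained along such a walk corresponds, under the max-based extension rule, to a $\prec_H$-minimum attained by the lifted walk in $H$. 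Fortunately, this correspondence is essentially already carried out in \cite{bushes} when establishing almost nowhere denseness of $\{B_G\}$, so the novel content of my proof is the complementary observation that the construction's natural order on internal nodes is automatically ancestor respecting and can be fixed uniformly across all $r$ and $\epsilon$.
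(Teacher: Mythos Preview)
Your proposal is correct and follows essentially the same approach as the paper: both arguments proceed by inspecting the construction in \cite{bushes}, observing that the internal nodes of the separator quasi-bush are nested subsets of $V(H)$, and that the order used there (placing $X$ before $Y$ when the $\prec_H$-maximum of $X$ precedes that of $Y$) is therefore automatically ancestor respecting and independent of $r$ and $\epsilon$. The paper streamlines the final step slightly by noting that the weak coloring bound for this specific order is already established in Lemma~39 of \cite{bushes-arxiv}, so no new lifting argument is needed; your proposed re-derivation would work but is redundant.
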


\medskip

Over the course of the following construction, we will modify quasi-bushes by contracting sets of nodes and adding additional pointers.
To control the density of the resulting quasi-bushes, we need the following closure operation on Gaifman graphs of quasi-bushes.

\begin{definition}\label{def:starClosure}
    Let $B = (T,D,\lambda)$ be a quasi bush rooted at vertex $w_0$. 
    We denote by $B^*$ the undirected graph obtained by taking the Gaifman graph of $B$ and adding to it for every pointer $(v,w)\in D$ from a leaf $v$
    to a vertex $w\in V(T)$, the edge set $\{\{v',w'\} \mid v'\leq_T v, w'\leq_T w\}$. 
\end{definition}

We also need a second kind of closure property that does not act on the Gaifman graph, but on the quasi-bush itself.

\begin{definition}\label{def:transitive}
    A quasi-bush $(T,D,\lambda)$ is \emph{upwards closed} if for all pointers $(u,w) \in D$ and ancestors $w' \le_T w$ in $T$
    there also exists a pointer $(u,w') \in D$.
\end{definition}

Note that $\lambda((u,w'))$ may differ from $\lambda((u,w))$.

\begin{lemma}\label{lem:transitivity}
    For every quasi-bush $B$ of depth $d$ with order $\prec$, there exists an upwards closed quasi-bush $B'$ such that 
    $G(B) = G(B')$, 
    $B$ and $B'$ have the same underlying tree,
    and for every $r\in \N$, we have 
    \[
        \wcol_r(B'^*, \prec) \leq \wcol_{d\cdot r}(B, \prec).
    \]
    In particular, since $B$ and $B'$ have the same underlying tree, if $\prec$ is ancestor respecting on $B$, then the same is true on $B'$.
\end{lemma}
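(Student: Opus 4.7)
The plan is to obtain $B'$ by an explicit upward closure of the pointer set, carefully choosing labels for the new pointers, and then verify the three required properties in order: equivalence of encoded graphs, equality of the $*$-closures, and the weak coloring bound.

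First I would set
$$
D' \coloneqq \{(u,w') : u \text{ a leaf of } T,\ w' \leq_T w \text{ for some } (u,w) \in D\},
$$
which is upward closed by construction and contains $D$. To define $\lambda'$, I would rely on the following observation: for each $(u, w') \in D'$, among the original targets $\{w : (u,w) \in D\}$ that are ancestors of $w'$, there is a unique $\leq_T$-maximal one $w^*(u,w')$, since these ancestors form a chain in $T$ and the root is always a target of $u$. Setting $\lambda'((u,w')) \coloneqq \lambda((u, w^*(u,w')))$ and unwinding the definition of the encoded graph, the quantity that determines whether $(u,v) \in G(B)$ -- the lowest original target of $u$ that is an ancestor of $v$ -- turns out to equal $w^*(u, w'(u,v))$, where $w'(u,v)$ is the lowest $D'$-target of $u$ that is an ancestor of $v$. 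This matches the $B'$ side by design, yielding $G(B') = G(B)$.

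Second, I would verify $B'^* = B^*$. Every new pointer $(v, w') \in D' \setminus D$ satisfies $w' \leq_T w$ for some $(v,w) \in D$, so the edge $\{v, w'\}$, and more generally every closure edge it induces in the $*$-closure of $B'$, is already a closure edge of $B$; the converse containment is immediate from $D \subseteq D'$. Hence the $*$-closure absorbs the upward closure, and we may work interchangeably with $B^*$.

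The final and most delicate step is the weak coloring bound. My approach is to replace each closure edge $\{v',w'\}$ in $B^*$ (arising from some $(v,w) \in D$ with $v' \leq_T v$ and $w' \leq_T w$) by its canonical realization $v' \to \cdots \to v \to w \to \cdots \to w'$ inside $B$, consisting of two tree-paths in subtrees of depth at most $d$ plus the single pointer $(v,w)$. A weak-reachability witness of length $r$ in $B^*$ then dilates to a path in $B$ of length $\Oof(d)\cdot r$. Here lies the main obstacle: the replacement paths introduce new interior vertices that could a priori be $\prec$-smaller than the intended endpoint $u$, breaking the weak-reachability condition in $B$. This is precisely where the assumption that $\prec$ is ancestor-respecting becomes essential: every interior vertex lies strictly between $v'$ and $v$ or strictly between $w'$ and $w$ in the tree order, so it is a descendant of $v'$ or of $w'$ and therefore $\succeq v'$ or $\succeq w'$ in $\prec$; in either case it is $\succeq$ the $\prec$-minimum of $\{v', w'\}$, which in turn is $\succeq u$ because $u$ was the $\prec$-smallest vertex on the original witness. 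So $u$ remains $\prec$-smallest on the dilated path, placing $u$ in $\wreach_{dr}[B,\prec,\cdot]$ and completing the bound.
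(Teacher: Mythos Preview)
Your plan is correct and matches the paper's proof essentially step for step: the same upward closure of $D$ with $\lambda'$ inherited from the lowest original pointer above, the observation $B'^{*}=B^{*}$, and the same replacement of each $B^{*}$-edge by a down--pointer--up path in $B$, using that $\prec$ is ancestor-respecting to keep the weak-reachability endpoint minimal. Two minor remarks: the paper's own argument, like yours, actually yields a bound of order $2d\cdot r$ rather than the $d\cdot r$ appearing in the statement, and both arguments tacitly rely on $\prec$ being ancestor-respecting, which is not listed as a hypothesis of the lemma but is satisfied in its only use (via \Cref{thm:bush-nd-order}).
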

    
    \begin{proof}
        We construct $B'$ from $B$ adding for all $(u,w) \in D$ and ancestors $w'$ of $w$ in $T$,
        also the pointer $(u,w')$ to $D'$.
        Remember that $w$ is an ancestor of itself.
        We will specify the value of $\lambda'(u,w')$ soon.
        However, note that we already have -- independent of $\lambda'$ -- that $B'$ is upwards closed, and $B$ and $B'$ have the same underlying tree. 
        
        We define $\lambda'$ such that $G(B) = G(B')$: 
        We set $\lambda'(u,w') = \lambda(u,w'')$,
        where $w''$ is the lowest ancestor of $w'$ with $(u,w'') \in D$.
        Since every node has a pointer to the root, such a node $w''$ always exists.
        Let $u,v$ be two nodes.
        Let~$w'$ be the lowest ancestor of $v$ with $(u,w') \in D'$.
        Let $w''$ be the lowest ancestor of $w'$ with $(u,w'') \in D$ (note that $w'=w''$ is possible).
        Since $D \subseteq D'$, we observe that $w''$ is the lowest common ancestor of $v$ with $(u,w'') \in D$.
        Thus, $u$ and $v$ are connected in $G(B)$ if and only if $\lambda((u,w'')) = 1$.
        By construction and choice of $w'$ and $w''$, $\lambda'((u,w')) = \lambda((u,w''))$,
        and thus $u$ and $v$ are connected in $G(B')$ if and only if $\lambda'((u,w')) = \lambda((u,w'')) = 1$.
        We conclude that $u$ and $v$ have the same connection in $G(B)$ as in $G(B')$.    

        Note that $B'^* = B^*$.
        It hence remains to show that for every $r\in \N$ we have $\wcol_r(B^*, \prec) \leq \wcol_{2d\cdot r}(B,\prec)$. 
        Let $u,v\in V(T)$ be vertices such that $v$ weakly $r$-reaches $u$ in $B^*$
        with respect to~$\prec$.  This is witnessed by a path $P^* = (v,\ldots,u)$ of
        length at most $r$ in $B^*$ where $u$ is the smallest element in the path.  
        We prove the claim by constructing a path $P$ of length at most $2d\cdot r$ in~$B$
        from~$v$ to $u$ where $u$ is the smallest element on the path according to
        $\prec$, witnessing that $v$ weakly $(2d \cdot r)$-reaches $u$ in $B$.

        We build $P$ from $P^*$ by replacing each edge of $P^*$ with a path in $B$ of
        length at most $2d$ whose vertices are all greater or equal to $u$ with respect
        to $\prec$.  Edges in $P^*$ that are edges of $T$ are also edges in $B$ and do
        not need to be replaced.  For every edge $(a,b)$ in $P^*$ that is not an edge
        in $T$, by the definition of $B^*$, there exist descendants $a'$ of $a$ and
        $b'$ of $b$ in $T$ that are connected by a pointer in $B$.  Therefore there
        exists a path $P_{ab} = (a,\ldots,a',b',\ldots,b)$ in~$B$ that traverses $T$
        downwards from $a$ to $a'$, uses the pointer to get to $b'$ and traverses $T$
        upwards to get to $b$.  Since $P_{ab}$ only uses descendants of $a$ and $b$ and
        $\prec$ is ancestor respecting, we know that all its vertices are greater or
        equal to $u$.  Since $T$ has depth at most $d$, $P_{ab}$ has length at most
        $2d+1$.  We notice that the root vertex of $T$ can be reached from every other
        vertex of $T$ in at most $d$ steps by traversing down $T$ to a leaf and using a
        pointer to the root.  We can therefore assume that neither $a$ nor $b$ is the
        root vertex, which improves our bound on the length of $P_{ab}$ to $2(d-1) + 1
        \leq 2d$.
    \end{proof}

    As a direct consequence of combining \Cref{thm:bush-nd-order} and \Cref{lem:transitivity}, we finally transform the decompositional result of \cite{bushes} into the shape we need.

\begin{theorem}\label{thm:bush-nd-order-upwards-closure}
        Let $\Cc$ be a structurally nowhere dense class of graphs.
        For every $G\in \Cc$ there exists an upwards closed quasi-bush $B_G$ representing $G$ and an ancestor respecting order $\prec_G$
        such that $\{B_G\colon G\in \Cc\}$ has bounded depth.
        Additionally,
        for every $r \in \N$ and $\eps > 0$ there exists $c(r,\epsilon) \in \N$ such that for all $G \in \Cc$ we have 
        \[\wcol_r(B_G^*,\prec_G) \le c(r,\epsilon) \cdot |G|^\varepsilon.\]
\end{theorem}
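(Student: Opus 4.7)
The plan is to combine the two preceding results essentially verbatim, since the theorem is explicitly labeled as a direct consequence of \Cref{thm:bush-nd-order} and \Cref{lem:transitivity}. First, given $G\in\Cc$, I would invoke \Cref{thm:bush-nd-order} to obtain a (not necessarily upwards closed) quasi-bush $B'_G$ representing $G$ together with an ancestor respecting order $\prec_G$ on $V(T')$, where $T'$ is the underlying tree of $B'_G$. Let $d$ be the uniform bound on the depth of the family $\{B'_G : G\in\Cc\}$ promised by that theorem. For every $r\in\N$ and $\eps>0$ the same theorem supplies a constant $c'(r,\eps)$ with $\wcol_r(B'_G,\prec_G)\le c'(r,\eps)\cdot |G|^\eps$.

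Next, I would apply \Cref{lem:transitivity} to $B'_G$ in order to produce an upwards closed quasi-bush $B_G$ having the same underlying tree $T'$ and satisfying $G(B_G)=G(B'_G)=G$. Because $B_G$ and $B'_G$ share their underlying tree, the order $\prec_G$ remains ancestor respecting on $B_G$, and the depth of $B_G$ is still at most~$d$; so the family $\{B_G : G\in\Cc\}$ inherits bounded depth from $\{B'_G : G\in\Cc\}$. Furthermore, \Cref{lem:transitivity} guarantees the quantitative bound
\[
\wcol_r(B_G^*,\prec_G) \;\le\; \wcol_{d\cdot r}(B'_G,\prec_G).
\]

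Finally, combining this inequality with the bound from \Cref{thm:bush-nd-order} applied at radius $d\cdot r$ yields
\[
\wcol_r(B_G^*,\prec_G) \;\le\; c'(d\cdot r,\eps)\cdot |G|^\eps,
\]
so setting $c(r,\eps) := c'(d\cdot r,\eps)$ (which depends only on $r$, $\eps$, and the class~$\Cc$ through the constant~$d$) gives the claimed bound. No step here is technically difficult; the mild point to check is only that the ancestor respecting property of $\prec_G$ and the depth bound are genuinely preserved under the transformation of \Cref{lem:transitivity}, which is immediate because that lemma leaves the underlying tree untouched.
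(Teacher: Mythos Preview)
Your proposal is correct and follows exactly the approach the paper intends: the theorem is stated there as a direct consequence of combining \Cref{thm:bush-nd-order} with \Cref{lem:transitivity}, and you have spelled out precisely this combination, including the correct bookkeeping of the depth bound, the preservation of the ancestor respecting order, and the choice $c(r,\eps)=c'(d\cdot r,\eps)$.
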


\subsection{Sibling Contractions in Quasi-Bushes}

Our goal, as stated in \Cref{thm:snd-wcol}, is to find for a graph $G$ from a structurally nowhere dense graph class a sparse 8-contraction.
We will construct this 8-contraction as a sequence of so-called \emph{sibling contractions} on the quasi-bush of $G$, defined as follows. In the following, when we want to stress that we refer to the leaves $L(w)$ of a node $w$ in the tree $T$ of a specific bush $B=(T,D,\lambda)$ we write $L^B(w)$ for $L(w)$. 

Let $B = (T,D,\lambda)$ be a quasi-bush.
We call a set of vertices $S = \{a_1,\ldots,a_k\} \subseteq V(T)$ with the same parent $p$ in $T$ a \emph{sibling set}.
Let $T(S)$ denote the forest consisting of the disjoint union of $T(a_1),\ldots,T(a_k)$
and let $\leaves{B}(S) := \leaves{B}(a_1) \cup \ldots \cup \leaves{B}(a_k)$.
Let $a$ be a vertex not in $V(T) \setminus V(T(S))$.
We define the \emph{sibling contraction} $B\contract{a \leftarrow S} = (T',D',\lambda')$ as follows.

\begin{enumerate}
    \item $T'$ is built by removing the vertices $V(T(S))$ from $T$ and inserting $a$ below $p$ as a leaf.
    \item
        For the unmodified nodes, the pointers are inherited from $B$.
        To be more precise, for all nodes $v,w \in V(T) \cap V(T')$ we set $(v,w) \in D'$ if and only if $(v,w) \in D$
        and $\lambda'((v,w)) = \lambda((v,w))$.
    \item
        It remains to define the pointers from and to the new leaf $a$.
        For every $w\in V(T')\setminus\{a\}$ we define the following rules, that we will later refer to as (\ref{itm:a_w_pointer}), (\ref{itm:v_a_pointer}), (\ref{itm:a_w_label}), (\ref{itm:v_a_label}).
        \begin{enumerate}
            \item \label{itm:a_w_pointer} We set $(a,w) \in D'$ if and only if $(u,w) \in D$ for some $u \in L^B(S)$.
            \item \label{itm:v_a_pointer} We set $(v,a)\in D'$ if and only if $(v,w)\in D$ for some $w\in V(T(S))$.
            \item \label{itm:a_w_label} We set $\lambda'((a,w)) =1$ if and only if there exists $u \in L^B(S)$ and $v\in V(T)\setminus V(T(S))$ with $(u,v)\in E(G(B))$ such that $w$ lies on the tunnel of $(u,v)$. 
            \item \label{itm:v_a_label} We set $\lambda'((v,a))=1$ if and only if there is $u\in L^B(S)$ with $(v,u)\in E(G(B))$. 
        \end{enumerate}
\end{enumerate}

    \begin{figure}[h!]
        \begin{center}
        \includegraphics{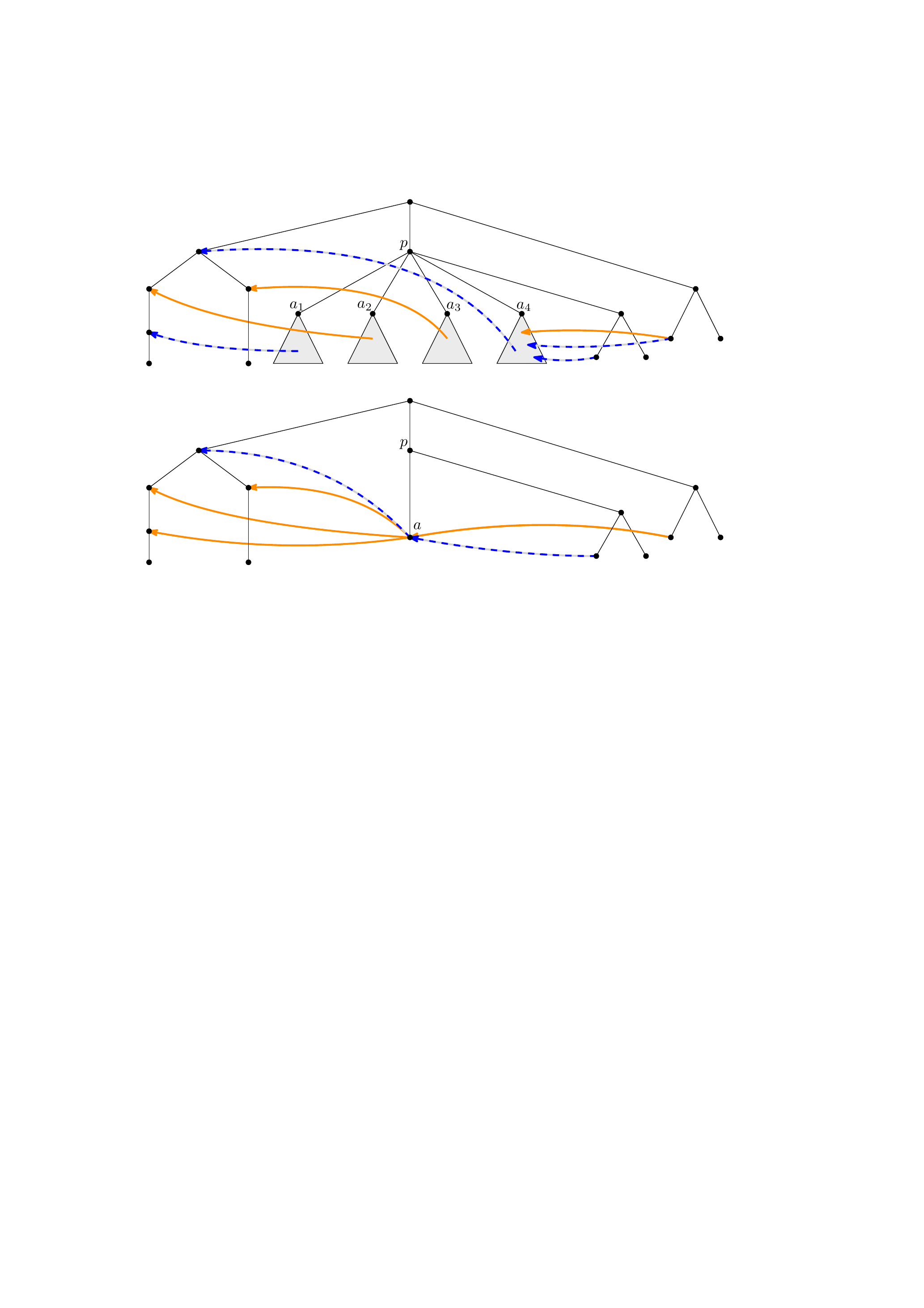}
        \end{center}
        \caption{
            Quasi-bush $B$ at the top and sibling contraction $B\contract{a \leftarrow \{a_1,a_2,a_3,a_4\}}$ at the bottom.
          }\label{fig:siblingContraction}
    \end{figure}

In case the name of the new vertex $a$ is not important, we also write $B\contract{S}$ instead of $B\contract{a \leftarrow S}$.
An example of a sibling contraction is given in \Cref{fig:siblingContraction}.
Let us first argue that the result of applying a sibling contraction to a quasi-bush is still a quasi-bush representing an undirected graph, namely the undirected graph obtained from $G(B)$ by contracting the vertices of $L^B(S)$ into a single vertex. 
\begin{lemma}\label{lem:bush_contractions}
    Given a quasi-bush $B$ representing an undirected graph and a sibling set $S$ in $B$,
    then also $B\contract{S}$ is a quasi-bush representing an undirected graph and we have 
    \[G(B\contract{S}) = G(B)\contract{L^B(S)}.\]
\end{lemma}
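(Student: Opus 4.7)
My plan is to verify that $B' := B\contract{S}$ satisfies the quasi-bush axioms and then establish $G(B') = G(B)\contract{\leaves{B}(S)}$ by case analysis on arcs. First I would confirm that $T'$ is a rooted tree whose leaves are $(\leaves{B} \setminus \leaves{B}(S)) \cup \{a\}$, all pointers in $D'$ emanate from these leaves, and every leaf of $T'$ carries a pointer to the root: for unchanged leaves this is inherited from $B$, while for $a$ it follows from rule (a) together with the fact that each $u \in \leaves{B}(S)$ has $(u, \mathrm{root}) \in D$ in $B$. Symmetry of $G(B')$ will then follow automatically from the edge characterization together with the symmetry of $G(B)$, since graph contraction preserves symmetry.

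For the edge characterization I would split into three cases: (I) both endpoints in $\leaves{B} \setminus \leaves{B}(S)$, (II) an arc $(a, y)$ with $y \in \leaves{B} \setminus \leaves{B}(S)$, and (III) an arc $(v, a)$ with $v \in \leaves{B} \setminus \leaves{B}(S)$. Case (I) reduces immediately: since $y \notin V(T(S))$ and $V(T(S))$ is downward-closed, every ancestor of $y$ lies outside $V(T(S))$, so the relevant pointers and their labels agree between $B$ and $B'$ by item 2 of the construction, making the connection points and labels match. For case (III), I would use rule (b) to split by whether some $w \in V(T(S))$ satisfies $(v, w) \in D$: if so, then $(v, a) \in D'$ and its connection point in $B'$ is $a$ itself, so rule (d) directly gives the required equivalence; otherwise, no pointer from $v$ enters $V(T(S))$, and then for every $u \in \leaves{B}(S)$ the connection point of $(v, u)$ in $B$ is the same strict ancestor $w^0$ of $a$ (namely the $\leq_T$-maximum ancestor of $p$ with an incoming pointer from $v$), which also serves as the connection point of $(v, a)$ in $B'$, with an unchanged label by item 2.

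The main obstacle I anticipate is the forward direction of case (II), where the seemingly weak rule (c) must be sharpened. Let $w^0$ be the connection point of $(a, y)$ in $B'$. The backward direction is direct: given some $u \in \leaves{B}(S)$ with $(u, y) \in E(G(B))$ and connection point $w^*$, rule (a) yields $(a, w^*) \in D'$, so $w^* \leq_T w^0$; hence $w^0$ lies on the tunnel of $(u, y)$, and rule (c) gives $\lambda'((a, w^0)) = 1$. For the forward direction, suppose $\lambda'((a, w^0)) = 1$; rule (c) produces $u_0 \in \leaves{B}(S)$ and $v_0 \in \leaves{B} \setminus \leaves{B}(S)$ with $(u_0, v_0) \in E(G(B))$ whose connection point $w^*_0$ satisfies $w^*_0 \leq_T w^0 \leq_T v_0$. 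I then plan to show that the connection point $w^{**}$ of $(u_0, y)$ in $B$ equals $w^*_0$: since $w^*_0$ is an ancestor of $y$ carrying $(u_0, w^*_0) \in D$, we have $w^{**} \geq_T w^*_0$; applying rule (a) to $(u_0, w^{**}) \in D$ yields $(a, w^{**}) \in D'$, so maximality of $w^0$ forces $w^{**} \leq_T w^0 \leq_T v_0$, making $w^{**}$ another ancestor of $v_0$ with a pointer from $u_0$. The maximality of $w^*_0$ as the connection point of $(u_0, v_0)$ then forces $w^{**} \leq_T w^*_0$, hence $w^{**} = w^*_0$. Since $\lambda((u_0, w^*_0)) = 1$, we conclude $(u_0, y) \in E(G(B))$, closing the loop and completing the proof that $G(B') = G(B)\contract{\leaves{B}(S)}$.
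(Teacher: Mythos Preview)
Your proposal is correct and follows essentially the same approach as the paper. Both proofs perform the same case analysis; you organize it by arc direction ($(a,y)$ versus $(v,a)$) while the paper organizes it by adjacency (adjacent versus non-adjacent in the contracted graph), but the underlying computations coincide. In particular, your careful argument in the forward direction of Case~(II)---showing that the connection point $w^{**}$ of $(u_0,y)$ in $B$ must equal $w^*_0$ by sandwiching it between $w^*_0$ and $w^0$---is exactly the content of the paper's ``uniform connection'' step in the $(a,v)$ part of its second claim, and your version is arguably spelled out more explicitly.
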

\begin{proof}
    Let $B = (T,D,\lambda)$, $G = G(B)$, $B' = (T',D',\lambda') = B\contract{a \leftarrow S}$, $G' = G(B')$, $T(S)$ and $L^B(S)$ be as above. Denote by $p$ the parent of the vertices in $S$. 
    By definition $V(G')= V(G\contract{L^B(S)})=(V(G)\setminus L^B(S))\cup \{a\}$. 
    Since $a$ is a leaf, for any two vertices $u \neq a, v \neq a$, the bush induced by~$u$ and $v$ and their ancestors is the same in $B$ and $B'$ and we have that $u$ and $v$ are connected (in both directions) in $G'$ if and only if they are connected in~$G$ if and only if they are connected in $G\contract{L^B(S)}$.
    It remains to verify the connections from and to $a$.

    \begin{claim}
        Let $v \in V(G\contract{L^B(S)}) = V(G')$ be a vertex adjacent to $a$ in $G\contract{L^B(S)}$.
        Then $v$ is adjacent to $a$ in $G'$, that is, $(a,v)\in E(G')$ and $(v,a)\in E(G')$.
    \end{claim} 

    \begin{claimproof}
        
     We first show $(a,v)\in E(G')$. 
        By definition of $G\contract{L^B(S)}$ there exists $u\in L^B(S)$ adjacent to $v$ in $G$. 
        Let $w'$ be the connection point in~$B$ of $(u,v)$ and let $w$ be the lowest ancestor of $v$ in~$T$ such that $(a,w)\in D'$.
        By (\ref{itm:a_w_pointer}) such $w$ exists and satisfies $w'\leq_T w\leq_T v$, with possibly $w = w'$.
        Then $w$ lies on the tunnel of $(u,v)$ and $\lambda'((a,w))$ is set to $1$ by (\ref{itm:a_w_label}). 
        Since $w$ is the lowest ancestor of $v$ in $T$ with $(a,w)\in D'$ we have $(a,v)\in E(G')$. 

        \medskip
        Now we show $(v,a)\in E(G')$. 
        First assume $(v,a)\not\in D'$. Then, by (\ref{itm:v_a_pointer}), there is no $w\in V(T(S))$ such that $(v,w)\in D$.
        Since $(v,u)\in E(G)$, the connection point $w$ of $(v,u)$ satisfies $w\leq_T p$. 
        This means $w$ is the lowest ancestor of $p$ in $T$ with $(v,w)\in D$ and $\lambda((v,w))=1$.
        As $B$ and $B'$ agree on $V(T) \cap V(T')$, $w$ also is the lowest ancestor of $p$ in $T'$ with $(v,w)\in D'$ and $\lambda'((v,w))=1$.
        Since $p$ is the parent of $a$ in $T'$ and $(v,a)\not\in D'$, $w$ is the connection point of $(v,a)$ and we have $(v,a)\in E(G')$. 

        Now assume $(v,a)\in D'$. We have that $\lambda'((v,a))$ is set to $1$ as $(v,u)\in E(G)$
        by (\ref{itm:v_a_label}).
        It follows immediately that $(v,a)\in E(G')$.
    \end{claimproof}

    \begin{claim}
    Let $v \in V(G\contract{L^B(S)}) = V(G')$ be a vertex non-adjacent to $a$ in $G\contract{L^B(S)}$. 
    Then $v$ is non-adjacent to $a$ in $G'$, that is, $(a,v)\not\in E(G')$ and $(v,a)\not\in E(G')$. 
    \end{claim}
    \begin{claimproof}
    By definition of $G\contract{L^B(S)}$, $v$ is non-adjacent to all of $L^B(S)$ in $G$.
        
    \medskip
    We first show $(a,v)\not\in E(G')$. 
    Let $w$ be the lowest ancestor of $v$ in $T'$ such that $(a,w)\in D'$.
    Then by (\ref{itm:a_w_pointer}), $w$ is the lowest ancestor of $v$ in $T$ such that $(u,w)\in D$ for some $u \in L^B(S)$.
    For the graph $G$, this means every $u\in L^B(S)$ is connected uniformly to all descendants of $w$: for all $u \in L^B(S)$ and descendants $v_1,v_2$ of $w$, $(u,v_1) \in E(G)$ if and only if $(u,v_2) \in E(G)$.
    Assume for contradiction that $(a,v)\in E(G')$ and therefore $\lambda'((a,w)) = 1$.
    By (\ref{itm:a_w_label}) there exists leaves $u \in L^B(S)$ and $v'\in V(T)\setminus V(T(S))$ with $(u,v')\in E(G)$ such that $w$ lies on the tunnel of $(u,v')$.
    Since $v'$ and $v$ are both descendants of $w$, also $(u,v)\in E(G)$. This contradicts $u \in L^B(S)$ and we conclude that $(a,v)\not\in E(G')$.
    
    \medskip
    We now show that $(v,a)\not\in E(G')$. If $(v,a)\in D'$, then by (\ref{itm:v_a_label}) we have $\lambda'((v,a))=0$, as there is no $u\in L^B(S)$ with $(v,u)\in E(G)$. \
    Thus assume $(v,a)\not\in D'$. Let $w$ be the lowest ancestor of $a$ such that $(v,w)\in D'$. 
    By assumption $w\leq_T p$, hence also $(v,w)\in D$. Since $v$ is non-adjacent to all of $L^B(S)$  in $G$ we have $\lambda'((v,w))=\lambda((v,w))=0$. As $w$ is the lowest ancestor of $a$ with $(v,w)\in D'$, we conclude that $(v,a)\not\in E(G')$.
    \end{claimproof} 
    Now $G(B\contract{S}) = G(B)\contract{L^B(S)}$ follows.
\end{proof}

Note that sibling contractions preserve upwards closure, as shown in the next lemma.  

\begin{lemma}\label{lem:preserving_transitivity}
    Let $B$ be an upwards closed quasi-bush and $S$ be a sibling set. Then $B \contract{a\leftarrow S}$ is upwards closed.
\end{lemma}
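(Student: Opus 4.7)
The plan is to verify upwards closure of $B' = B\contract{a \leftarrow S} = (T',D',\lambda')$ directly from the definition, by taking an arbitrary pointer $(u,w) \in D'$ together with an arbitrary ancestor $w' \le_{T'} w$ and showing $(u,w') \in D'$. The natural way to organize the argument is a case distinction on whether $u$ and $w$ equal the new leaf $a$ or not, since the four combinations correspond exactly to the four rules \eqref{itm:a_w_pointer}--\eqref{itm:v_a_label} used to construct $D'$.

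Before the case analysis, I would record one structural observation about $T'$. Since $T'$ is obtained from $T$ by deleting the subtrees rooted at the elements of $S$ and inserting $a$ as a new leaf below $p$, for every node $v \in V(T) \cap V(T')$ the ancestor set of $v$ in $T'$ is exactly the ancestor set of $v$ in $T$; in particular, the only descendant of $a$ in $T'$ is $a$ itself. This observation controls what the possible ancestors $w'$ look like in each case.

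The four cases then unfold as follows. When $u \ne a$ and $w \ne a$, we have $(u,w) \in D$ and any ancestor $w'$ of $w$ in $T'$ lies in $V(T) \cap V(T')$ and satisfies $w' \le_T w$; upwards closure of $B$ yields $(u,w') \in D$, hence $(u,w') \in D'$. When $u = a$ and $w \ne a$, rule \eqref{itm:a_w_pointer} gives some $u_0 \in L^B(S)$ with $(u_0,w) \in D$; for every ancestor $w' \le_{T'} w$, upwards closure of $B$ yields $(u_0,w') \in D$, and a second application of \eqref{itm:a_w_pointer} provides $(a,w') \in D'$. When $u \ne a$ and $w = a$, rule \eqref{itm:v_a_pointer} supplies some $w_0 \in V(T(S))$ with $(u,w_0) \in D$; ancestors of $a$ in $T'$ are either $a$ itself (handled by hypothesis) or strict ancestors of $p$, which are also ancestors of $w_0$ in $T$, so upwards closure of $B$ gives $(u,w') \in D$ and hence $(u,w') \in D'$. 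The case $u = w = a$ does not arise, since rule \eqref{itm:a_w_pointer} is only stated for $w \in V(T')\setminus\{a\}$ and $a \notin V(T)$, so no pointer $(a,a)$ is created.

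I do not expect any real obstacle here: the argument is a routine case-check that the three rules introducing new pointers to and from $a$ cooperate correctly with the upwards closure of $B$. The only subtle step is the observation that every ancestor of $w$ in $T'$, when $w \ne a$, is already an ancestor of $w$ in $T$, because this is what allows invoking upwards closure of $B$ inside $T$ rather than having to reason about the modified tree directly.
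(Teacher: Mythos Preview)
Your proposal is correct and follows essentially the same approach as the paper: a case analysis on whether the pointer involves the new leaf $a$, invoking rules \eqref{itm:a_w_pointer} and \eqref{itm:v_a_pointer} together with upwards closure of $B$. One small slip: in the case $u \ne a$, $w = a$, the proper ancestors of $a$ in $T'$ are the ancestors of $p$ (including $p$ itself), not the \emph{strict} ancestors of $p$; since $p \le_T w_0$ as well, the argument goes through unchanged.
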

\begin{proof}
    It suffices to analyze the new pointers in $B \contract{S}$ which point from and to the contracted vertex $a$.
    If $a$ points to a vertex $w$ in $B\contract{S}$, then by (\ref{itm:a_w_pointer}) some $u \in L^B(S)$ must also point to $w$ in~$B$. As $B$ is upwards closed, $u$ also points to all ancestors of $w$. It follows that also $a$ points to all ancestors of~$w$.
    
    Conversely, if a vertex $v$ points to $a$ in $B\contract{S}$, then by (\ref{itm:v_a_pointer}) it must have pointed to a vertex $u$ in a subtree beneath one of the siblings from $S$ in $B$.
    As $B$ is upwards closed it points to all ancestors of $u$ in $B$, which is a superset of the ancestors of $a$ in $B\contract{S}$. Those pointers are still present in~$B\contract{S}$.
\end{proof}

\subsection{Outline of the Contraction Procedure}

Given a structurally nowhere dense graph class $\Cc$, we use \Cref{thm:bush-nd-order-upwards-closure} to associate with every $G\in \Cc$ an upwards closed bounded-depth quasi-bush $B_G$ representing $G$. Furthermore, by that theorem, each $B_G$ is equipped with an ancestor respecting order $\prec_G$ witnessing at the same time that $B_G^*$ is uniformly sparse, that is, for every $r \in \N$ and $\eps > 0$ there exists $c(r,\epsilon) \in \N$ such that for all $G \in \Cc$ we have 
\[\wcol_r(B_G^*,\prec_G) \le c(r,\epsilon) \cdot |G|^\varepsilon.\]

We will now perform a sequence
of sibling contractions to arrive at a sparse 8-contraction of~$G(B)$ whose existence we claim in \Cref{thm:snd-wcol}.
The sibling contractions will be performed in two phases,
followed by an analysis.
\begin{itemize}
\item 
In phase one, we will simplify the quasi-bush by eliminating certain patters called \emph{coat hangers}. This phase only uses sibling contractions
with sibling sets of size one. The goal of this phase is to gain ultimate control over tunnels. 

Recall that for every arc $(u,v)$ of $G(B)$ the shortest directed path from $u$ to $v$ in $B$ that uses as its first arc a pointer $(u,w)\in D$ and all other arcs from~$T$ is the tunnel of $(u,v)$ in $B$, where $w$ is called the connection point of $(u,v)$. 
After eliminating coat hangers, we will have the property that every tunnel for $(u,v)$ will either be of the form $u,v$, that is, the endpoint $v$ of $(u,v)$ is the connection point of the arc itself, of the form $u,w,v$, that is, the connection point $w$ of $(u,v)$ is the parent of $v$. 

\item
Hence, after the first phase, for all arcs $(u,v)$ of $G(B)$ with a connection point $w$ (possibly equal to $v$) we have $u$ as an in-neighbor (in $D$) of $w$ and $v$ as an out-neighbor (in $T$) of $w$ (or~$v=w$). This naturally motivates the definition of sets $\IN(w)$ and $\OUT(w)$. 

In phase two, we contract sibling sets of unbounded size to bound the cardinality of the sets $\IN(w)$ and $\OUT(w)$. 

\item 
Afterwards, we argue that a sparse quasi-bush with small sets $\IN(w)$ and $\OUT(w)$ for all 
necessarily represents a sparse graph, thereby proving \Cref{thm:snd-wcol}.
\end{itemize}

The sibling contractions insert new nodes and edges into the bush.
For our final analysis, we require that this does not make the resulting quasi-bush too dense.
In phase one, we start with a quasi-bush $B = B_0$ 
and repeatedly contract sibling sets of size one.
This means we contract quasi-bushes $B_i = (T_i,D_i\lambda_i)$ into quasi-bushes
$B_{i+1} = (T_{i+1},D_{i+1}\lambda_{i+1}) = B_{i}\contract{a \leftarrow \{a\}}$ for some node $a$.
To bound the density of these operations,
we note that all newly introduced edges in $B_{i+1}$
originate from edges in $B_i$ with endpoints below $a$,
and thus the closure $B_{i+1}^*$ is a subgraph of the closure $B_i^*$.
Since by \Cref{thm:bush-nd-order-upwards-closure} $\wcol_r(B_0^*)$ is small,
$\wcol_r(B_i^*), \wcol_r(B_{i+1}^*), \dots$ are small, too.

However, this is not enough.
To start phase two, we not only require that the weak coloring numbers are small,
but even that they are small with respect to an \emph{ancestor respecting} order.
For directed graphs $H'$ and $H$,
let us write $H' \sqsubset H$ if $V(H') \subseteq V(H)$ and $E(H') \subseteq E(H)$.
Note that this means nothing else than that $H'$ is a subgraph of $H$, but we want to stress that $V(H')\subseteq V(H)$ and we want to distinguish between isomorphic subgraphs. 

We will show that $B_{i+1}^* \sqsubset B_i^*$ and $T_{i+1} \sqsubset T_i$.
Note that the latter means $T_{i+1}$ is a subtree of~$T_{i}$.
\Cref{thm:bush-nd-order-upwards-closure} gives us an ancestor respecting order $\prec$ such that $\wcol_r(B_0^*,\prec)$ is small.
Since $B_{i+1}^* \sqsubset B_{i}^*$,
the coloring numbers with respect to $\prec$ stay small,
and since $T_{i+1} \sqsubset T_i$, the order $\prec$ stays ancestor respecting.
Thus, the following lemma is crucial to prepare the quasi-bush for the second phase.

\begin{lemma}\label{lem:homo_single_merge}
    For every quasi-bush $B$, $a\in V(T)$ and $B' = (T',D',\lambda') := B\contract{a \leftarrow \{a\}}$ we have $B'^* \sqsubset B^*$ and $T' \sqsubset T$.
\end{lemma}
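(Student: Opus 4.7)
The plan is to verify the two containments $T' \sqsubset T$ and $B'^* \sqsubset B^*$ by chasing definitions, and to observe that the star-closure in $B^*$ is exactly strong enough to absorb every new edge that the sibling contraction creates incident to $a$.

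First I would check $T' \sqsubset T$. By construction $V(T') = (V(T)\setminus V(T(a)))\cup\{a\}\subseteq V(T)$, since $a\in V(T(a))$. Every tree edge of $T'$ not incident to $a$ is inherited verbatim from $T$ (these concern nodes in $V(T)\cap V(T')$). The unique edge of $T'$ incident to $a$ is the edge between $a$ and the parent $p$ of the sibling set $\{a\}$ in $T$, which is already an edge of $T$. In particular $\le_{T'}$ is contained in $\le_T$: if $v'\le_{T'} v$ then the ancestor path from $v$ to the root of $T'$ lies entirely inside $T$, so $v'\le_T v$.

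Next I would show $B'^* \sqsubset B^*$. Vertices are clear from the first step, so only the edges require checking, and it suffices to handle each type of edge separately. (a) Tree edges of $T'$ are handled above. (b) For a pointer $(v,w)\in D'$ with $v,w\neq a$, we have $(v,w)\in D$ by definition, so it is a pointer edge of $B^*$. (c) For a pointer $(a,w)\in D'$, rule (\ref{itm:a_w_pointer}) gives $u\in L^B(a)$ with $(u,w)\in D$; this pointer contributes to $B^*$ every closure edge $\{v',w'\}$ with $v'\le_T u$, $w'\le_T w$. Since $a\le_T u$, in particular $\{a,w\}\in E(B^*)$. Symmetrically, for $(v,a)\in D'$ rule (\ref{itm:v_a_pointer}) yields $w\in V(T(a))$ with $(v,w)\in D$, and because $a\le_T w$ we get $\{v,a\}$ as a closure edge of $B^*$.

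The remaining task is (d), the closure edges of $B'^*$. Fix a pointer $(v,w)\in D'$ with $v$ a leaf of $T'$ and fix $v'\le_{T'}v$, $w'\le_{T'}w$; I need $\{v',w'\}\in E(B^*)$. By the first step $v'\le_T v$ and $w'\le_T w$. If $v,w\neq a$, then $v$ is a leaf of $T$ and $(v,w)\in D$, so the closure of $B^*$ already contains $\{v',w'\}$. If $v=a$, pick the witness $u\in L^B(a)$ with $(u,w)\in D$ used in rule (\ref{itm:a_w_pointer}); then $v'\le_T a\le_T u$ and $w'\le_T w$, so the closure induced by $(u,w)\in D$ in $B^*$ contains $\{v',w'\}$. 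If $w=a$, pick the witness $\tilde w\in V(T(a))$ with $(v,\tilde w)\in D$ from rule (\ref{itm:v_a_pointer}); then $w'\le_T a\le_T \tilde w$, and the closure induced by $(v,\tilde w)\in D$ in $B^*$ contains $\{v',w'\}$ (note $v$ is a leaf of $T$ in this case, since $v\neq a$ is a leaf of $T'$).

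The argument is essentially bookkeeping: the only nontrivial step is matching each pointer incident to $a$ in $B'$ with a witnessing pointer in $B$ and observing that $\le_{T'}\subseteq \le_T$, together with $a\le_T u$ for the chosen leaf descendant $u\in L^B(a)$ (or $a\le_T\tilde w$ for a descendant $\tilde w\in V(T(a))$), lets the star-closure of the witness in $B^*$ cover every edge generated by the new pointer in $B'^*$. The potential pitfall is ensuring exhaustiveness of the case split around $a$, which the rules \textup{(\ref{itm:a_w_pointer})}–\textup{(\ref{itm:v_a_pointer})} handle cleanly.
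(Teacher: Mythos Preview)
Your proposal is correct and follows essentially the same approach as the paper. You are a bit more explicit: you separate pointer edges from closure edges and run through the three cases $v,w\neq a$, $v=a$, $w=a$ individually, whereas the paper unifies these by observing that every non-tree edge of $B'^*$ comes from some pointer between $u_0$ and $v_0$ with $u\le_{T'}u_0$, $v\le_{T'}v_0$, and then handles the case where one endpoint is $a$ by symmetry; but the core idea---replace a pointer incident to $a$ in $B'$ by its witnessing pointer in $B$ via rules (\ref{itm:a_w_pointer})/(\ref{itm:v_a_pointer}) and use $a\le_T a_0$ together with $\le_{T'}\subseteq\le_T$ to see that the star-closure of the witness already covers the edge---is identical.
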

\begin{proof}
    It is clear from the construction of $B'$ that $T' \sqsubset T$
    and that $V(B'^*) = V(B') \subseteq V(B) = V(B^*)$.
    Recall that $B^*$ is a supergraph of the Gaifman graph of $B$ and therefore contains undirected edges.
    To also show that $E(B'^*) \subseteq E(B^*)$,
    consider an edge $\{u,v\} \in E(B'^*)$.
    If $\{u,v\}$ originates from an arc of $T'$ then, with $T' \sqsubset T$, the edge $\{u,v\}$ is also in in $E(B^*)$.
    Therefore, assume $\{u,v\} \in E(B'^*)$ to not originate from a tree-arc.
    
    Otherwise, the edge $\{u,v\}$ originates from a pointer (in any direction) between $u_0$ and $v_0$ in $B'$ with $u \leq_{T'} u_0$ and $v \leq_{T'} v_0$.
    If neither $u_0$ nor $v_0$ is equal to the leaf $a$, then the pointer between $u_0$ and $v_0$ is also present in $B$. 
    Since $u \leq_{T} u_0$ and $v \leq_{T} v_0$, $B^*$ also contains the edge $\{u,v\}$.

    Otherwise, assume by symmetry that $u_0 = a$. 
    During the construction of $B'$,
    by (\ref{itm:a_w_pointer}) and (\ref{itm:v_a_pointer}),
    there must have a pointer between some $a_0 \in T(a)$ and $v_0$.
    Since $u \leq_{T} u_0=a \leq_{T} a_0$ and $v \leq_{T} v_0$, $B^*$ also contains the edge $\{u,v\}$.
\end{proof}

Unfortunately, $B'^* \sqsubset B^*$ only holds for sibling contractions 
$B' = B\contract{a \leftarrow S}$
with $|S|=1$ in general.
In phase two, we will contract multiple sibling sets of larger cardinality.
We will still ensure that our operations preserves the sparsity of our quasi-bushes, by
\begin{enumerate}
    \item only contracting non-overlapping sibling sets, which we will call \emph{independent}, and
    \item embedding the contracted quasi-bush into a slightly larger but still sparse graph, which we will call \emph{copy product}.
\end{enumerate}

\begin{definition}[Copy product]\label{def:copyproduct}
    Given a graph $G$, we define its \emph{copy product} $\cp(G)$ as the graph obtained by taking the disjoint union of $G$ with its copy $G'$ and connecting every node $v$ from~$G$ with its copy $\cp(v)$ in $G'$ and making $v$ and $\cp(v)$ true twins. 
    If $\prec$ is an order on $G$, then $\cp(\prec)$ is the order on $\cp(G)$ obtained from $\prec$ by inserting each element $\cp(v)$ right before $v$.
\end{definition}

Observe that the copy product is equal to the lexicographic product of $G$ with $K_2$.
Taking the lexicographic product with a clique of bounded size is known to preserve sparsity (see for example \cite[Proposition 4.6.]{nevsetvril2012sparsity}).

\begin{lemma}\label{lem:cp_sparse}
    For every graph $G$, every order $\prec$ and every $r\in\N$ we have
    \[\wcol_r(\cp(G), \cp(\prec)) \leq 2 \cdot \wcol_r(G,\prec).\]
\end{lemma}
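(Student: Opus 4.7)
Let $\pi \colon V(\cp(G)) \to V(G)$ be the natural projection, defined by $\pi(v) = v$ and $\pi(\cp(v)) = v$ for all $v \in V(G)$. The plan is to show that every weakly $r$-reachable set in $(\cp(G), \cp(\prec))$ projects into a weakly $r$-reachable set in $(G, \prec)$; since $|\pi^{-1}(v)| = 2$, this immediately yields the factor $2$.

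First I would verify two easy ``projection'' properties. For adjacency: every edge $\{x,y\}$ of $\cp(G)$ either connects twins ($\{v,\cp(v)\}$) and so projects to a single vertex, or connects vertices with distinct projections, in which case $\{\pi(x),\pi(y)\} \in E(G)$ by the construction of $\cp(G)$ as the lexicographic product with $K_2$. Thus any walk of length $\ell$ in $\cp(G)$ projects, after collapsing consecutive duplicates, to a walk in $G$ of length at most $\ell$ with the same endpoints. For order: a short case distinction on the types of $x$ and $y$ (each being either an original vertex or a copy) shows that $x \preceq_{\cp(\prec)} y$ implies $\pi(x) \preceq \pi(y)$ in $\prec$; the only subtle case is when $x$ and $y$ are twins with $\cp(v) \prec_{\cp(\prec)} v$, which projects to $v \preceq v$, still a valid weak inequality.

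With these two facts in hand, fix any $x \in V(\cp(G))$ and $u' \in \wreach_r[\cp(G), \cp(\prec), x]$, witnessed by a path $x = x_0, x_1, \ldots, x_\ell = u'$ in $\cp(G)$ with $\ell \le r$ along which $u'$ is $\cp(\prec)$-minimum. Projecting and collapsing gives a walk from $\pi(x)$ to $\pi(u')$ in $G$ of length at most $r$, which by the order-preservation property has $\pi(u')$ as its $\prec$-minimum vertex. Extracting a path from this walk (one may shortcut, only dropping vertices, so $\pi(u')$ remains minimum) shows $\pi(u') \in \wreach_r[G, \prec, \pi(x)]$.

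Therefore $\pi(\wreach_r[\cp(G), \cp(\prec), x]) \subseteq \wreach_r[G, \prec, \pi(x)]$, and since the fibres of $\pi$ have size $2$ we obtain
\[
|\wreach_r[\cp(G), \cp(\prec), x]| \le 2\,|\wreach_r[G, \prec, \pi(x)]| \le 2 \cdot \wcol_r(G, \prec).
\]
Taking the maximum over $x \in V(\cp(G))$ yields the claim. The only mildly delicate step is the case analysis for the order-preservation property; everything else is routine. I expect no genuine obstacle.
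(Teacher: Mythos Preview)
Your proposal is correct and takes essentially the same approach as the paper: both argue that the weakly $r$-reachable set from any vertex of $\cp(G)$ is contained in $\pi^{-1}(\wreach_r[G,\prec,\pi(x)])$, and then use that fibres have size~$2$. The paper's proof is a single sentence asserting the containment ``by definition,'' whereas you spell out the projection of paths and the order-preservation case analysis explicitly; no substantive difference.
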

\begin{proof}
    By definition, if $\wreach_r[G,\prec,v] = \{v_1,\dots,v_l\}$
    then both $\wreach_r[\cp(G),\cp(\prec),v]$ and $\wreach_r[\cp(G),\cp(\prec),\cp(v)]$ are subsets of 
    $\{v_1,\dots,v_l,\cp(v_1),\dots,\cp(v_l)\}$.
\end{proof}

\begin{definition}\label{def:independent}
    Two sibling sets $S_1$ and $S_2$ are \emph{independent} in a quasi-bush $B$ if they have different parent vertices and $\leaves{B}(a)$ and $\leaves{B}(b)$ are disjoint for all $a \in S_1$ and $b \in S_2$ (see \Cref{fig:independent-siblings}).
Let $S_1,\ldots,S_l$ be pairwise independent sibling sets in $B$ and let $\{a_1,\ldots,a_l\}$ be distinct nodes not in $B$.
It follows that the quasi-bushes $B_0 = B$ and $B_{i} = B_{i-1} \contract{a_i \leftarrow S_i} = B\contract{a_1 \leftarrow S_1}\ldots \contract{a_i \leftarrow S_i}$ for $i\in[l]$ are valid sibling contractions: $a_i$ is not contained in $B_{i-1}$ and $S_i$ is a sibling set in $B_{i-1}$ for every $i\in[l]$.
We write $B\scontract{a_1 \leftarrow S_1, \ldots, a_l \leftarrow S_l}$ to denote iterated contraction of pairwise independent sibling sets.

\end{definition}

\begin{figure}[h]%
\begin{center}%
\includegraphics{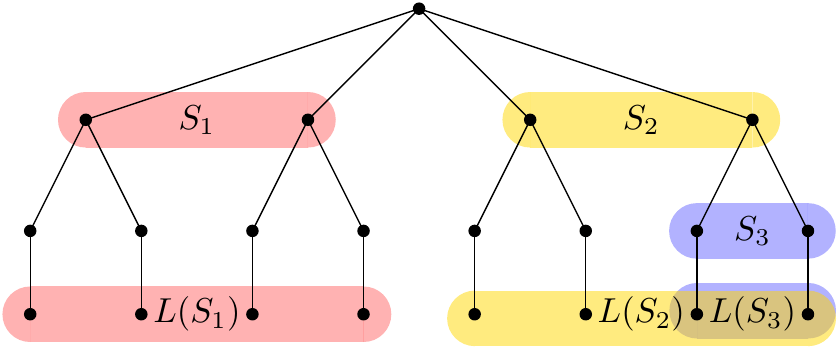}%
\end{center}%
\vspace{-0.7cm}
\caption{Three sibling sets $S_1,S_2,S_3$ in a quasi-bush (pointers omitted).
    $S_1$ and $S_2$ are not independent because they share the same parent.
    $S_2$ and $S_3$ are not independent because their leaf sets $L(S_2)$ and $L(S_3)$ overlap.
    $S_1$ and $S_3$ are independent.}
    \label{fig:independent-siblings}%
\end{figure}


In phase two, we will only contract pairwise independent sibling sets
and use the following observation to bound the sparsity of the contraction.

\begin{lemma}\label{lem:sc}
    For every quasi-bush $B_0$, pairwise independent sibling sets $S_1,\ldots,S_l$ with parents $p_1,\ldots,p_l$,
    and sibling contraction $B_l := B\scontract{\cp(p_1) \leftarrow S_1,\ldots,\cp(p_l) \leftarrow S_l}$ we have
    $B_l^* \sqsubset \cp(B_0^*)$.
\end{lemma}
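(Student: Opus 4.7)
The plan is to verify both $V(B_l) \subseteq V(\cp(B_0^*))$ and that every edge of $B_l^*$ lies in $\cp(B_0^*)$. Vertex inclusion is immediate, since each surviving vertex lies in $V(B_0)$ and each new leaf is by construction named $\cp(p_i)$, belonging to $\cp(V(B_0)) \subseteq V(\cp(B_0^*))$. The main obstacle is the cross-contraction case, where both endpoints of an edge of $B_l^*$ are new leaves $\cp(p_i)$ and $\cp(p_j)$ coming from distinct contraction steps; it is precisely this case that forces us to work in the copy product rather than in $B_0^*$ itself.

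The core technical step is a tracing claim proved by induction on the contraction step $i$: every pointer $(x,y) \in D_i$ arises from some pointer $(u,w) \in D_0$ such that $x = u \in V(T_0)$ or $x = \cp(p_j)$ with $u \in \leaves{B_0}(S_j)$ for some $j \le i$, and symmetrically $y = w \in V(T_0)$ or $y = \cp(p_k)$ with $w \in V(T_0(S_k))$ for some $k \le i$. The base case is trivial, and the inductive step combines rules (\ref{itm:a_w_pointer}) and (\ref{itm:v_a_pointer}) of the sibling-contraction definition with the induction hypothesis. Crucially, the independence assumption (leaf-disjointness of the sibling sets together with distinct parents) implies $\leaves{B_{i-1}}(S_i) = \leaves{B_0}(S_i)$ and $V(T_{i-1}(S_i)) = V(T_0(S_i))$, which is what lets the rules compose cleanly across the induction and in particular what produces a single witnessing pointer $(u,w) \in D_0$ for cross-contraction pointers $(\cp(p_j), \cp(p_k))$ obtained by chaining rules (\ref{itm:a_w_pointer}) and (\ref{itm:v_a_pointer}).

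Given the tracing claim, I finish with a case analysis on the edges of $B_l^*$. Tree edges of $T_l$ are either inherited from $T_0$ (and hence lie in $B_0^* \subseteq \cp(B_0^*)$) or have the form $\{p_i, \cp(p_i)\}$, which lies in $\cp(B_0^*)$ because $\cp(p_i)$ is a true twin of $p_i$ in the copy product. For a closure edge $\{x',y'\}$ added to $B_l^*$ by a pointer $(x,y) \in D_l$ with $x' \le_{T_l} x$ and $y' \le_{T_l} y$, I use that every $T_l$-ancestor of $\cp(p_j)$ is either $\cp(p_j)$ itself or a $T_0$-ancestor of $p_j$, and that $p_j$ is in turn a $T_0$-ancestor of every vertex of $\leaves{B_0}(S_j)$ and of every vertex of $V(T_0(S_j))$. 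Substituting each occurrence of $\cp(p_j)$ in $\{x',y'\}$ by $p_j$ therefore produces a closure edge of $B_0^*$ witnessed by the traced pointer $(u,w) \in D_0$; the true-twin property in the copy product then lifts this edge back to $\cp(B_0^*)$ with the $\cp(\cdot)$ substitutions reinstated, covering all four subcases at once.
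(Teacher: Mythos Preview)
Your proof is correct and follows essentially the same underlying idea as the paper's, but with a cleaner organization. The paper argues by induction on $i$ that $E(B_i^*) \subseteq E(\cp(B_0^*))$ directly, handling each edge of $B_{i+1}^* \setminus B_i^*$ by an ad-hoc case analysis (distinguishing whether zero, one, or both of the pointer endpoints $u_0,v_0$ equal the newly created leaf $\cp(p_{i+1})$, and then for the cross-contraction case unfolding once more back to step $j$). You instead isolate the key invariant as a standalone tracing claim about pointers --- every $(x,y)\in D_i$ lifts to some $(u,w)\in D_0$ with $x\in\{u,\cp(p_j)\}$ and $y\in\{w,\cp(p_k)\}$ in the stated sense --- and then derive the edge inclusion for $B_l^*$ in a single uniform pass at the end using the true-twin property of $\cp(\cdot)$. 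Both arguments use the same two ingredients (independence to conclude $\leaves{B_{i-1}}(S_i)=\leaves{B_0}(S_i)$ and $V(T_{i-1}(S_i))=V(T_0(S_i))$, and the fact that $T_l$-ancestors of a surviving node coincide with its $T_0$-ancestors); your packaging just separates the pointer chasing from the closure-edge bookkeeping, which makes the cross-contraction case fall out without a nested case split.
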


\begin{proof}
    Let $B_0 = (T_0,D_0,\lambda_0) = B$ and $B_i = (T_i,D_i,\lambda_i) = B_{i-1}\contract{\cp(p_i)\leftarrow S_i}$ for all $i\in[l]$.
    
    \begin{observation}\label{obs:t0}
        For all $i\in[l]$ we have 
        $V(T_i) \setminus \{\cp(p_1),\ldots,\cp(p_i)\} \subseteq V(T_0)$.    
    \end{observation}
    Then it is clear that $V(B^*_l) = V(T_l) \subseteq V(T_0) \cup \{\cp(p_1),\ldots,\cp(p_l)\} \subseteq
    V(\cp(B_0^*))$.
    We now prove by induction for all $i\in[l]$ that every edge in
    $B_i^*$ is also an edge of $\cp(B^*)$.
    For $i = 0$ we have $B_0 = B$ and $B_0^* = B^* \sqsubset \cp(B^*)$ holds by definition.

    Assume the statement holds for $B_i^*$. Let $\{u,v\}$ be an edge in $B_{i+1}^*$ but not in $B_i^*$, as otherwise the statement follows by induction.
    If $\{u,v\}$ originates from a tree arc in $B_{i+1}^*$, then we have that the edge $\{u,v\} = \{p_i,\cp(p_i)\}$ is contained in $\cp(B^*)$ by definition of the copy product.
    Otherwise, $\{u,v\}$ originates from a pointer (in any direction) between nodes $u_0 \geq_{T_{i+1}} u$ and $v_0 \geq_{T_{i+1}} v$ in~$B_{i+1}$.

    \begin{claim}
        Either $u_0 = \cp(p_{i+1})$ or $v_0 = \cp(p_{i+1})$.
    \end{claim}
    \begin{claimproof}
        Assume towards a contradiction that neither $u_0$ nor $v_0$ is $\cp(p_{i+1})$.
        Then also neither $u$ nor~$v$ is $\cp(p_{i+1})$ and both are contained in $B_i$.  
        As the sibling contraction only modified pointers from and to $\cp(p_{i+1})$, the pointer between $u_0$ and $v_0$ is present in $B_i$ already.
        Therefore, $\{u,v\}$ is an edge in $B_i^*$, contradicting our previous assumption.    
    \end{claimproof}

    By symmetry, we can assume that $u_0 = \cp(p_{i+1})$.
    As there is a pointer between $u_0 = \cp(p_{i+1})$ and $v_0$ and by (\ref{itm:a_w_pointer}) and (\ref{itm:v_a_pointer}), there must exist a vertex $u_0' \geq_{T_i} p_{i+1}$ and a pointer between $u_0'$ and~$v_0$ in $B_i$.
    By $p_{i+1} \leq_{T_i} u_0'$ and $v \leq_{T_i} v_0$, the edges $\{p_{i+1}, v_0\}$ and $\{p_{i+1}, v\}$ exist in $B_i^*$.

        \begin{claim}
            $u = u_0 = \cp(p_{i+1})$.
        \end{claim}
        \begin{claimproof}
        Assume towards a contradiction that $u \neq \cp(p_{i+1}) = u_0$. 
        Then $u \lneq_{T_{i+1}} \cp(p_{i+1})$.
        As $p_i$ is the parent of $\cp(p_{i+1})$ in $T_{i+1}$ we conclude that $u \leq_{T_{i+1}} p_{i+1}$. 
        Since $T_{i+1}$ differs from $T_i$ only in the subtree below $p_i$, we also have $u \leq_{T_i} p_{i+1}$. 
        Together with the edge $\{p_{i+1}, v\}$ in $B_i^*$, this implies the edge $\{u,v\}$ in $B_i^*$, contradicting our previous assumption.
    \end{claimproof}

    Recall that $u_0' \geq_{T_i} p_{i+1}$.
    As the sibling sets $S_1,\ldots,S_{i+1}$ are independent, we must have that $u_0' \notin \{\cp(p_1),\ldots, \cp(p_i)\}$ and
    $u_0'\in V(T_0)$, by \Cref{obs:t0}.

    Assume $v_0 \notin \{\cp(p_1),\ldots,\cp(p_i)\}$.
    Again, we know that $v_0 \in V(T_0)$, by \Cref{obs:t0}.
    The pointer between $u_0'$ and $v_0$ is therefore already present in $B$.
    Then $B_0^*$ contains the edge $\{p_{i+1},v\}$.
    By definition of the copy product, $\cp(B_0^*)$ contains the edge $\{u = \cp(p_{i+1}),v\}$ as desired.
    
    Finally, assume $v_0 = \cp(p_j)$ for some $j\in[i]$.
    Then the pointer between $v_0$ and $u_0'$ in $B_i$ must have been introduced in $B_j$.
    By (\ref{itm:a_w_pointer}) and (\ref{itm:v_a_pointer}), there must be a pointer between some $v_0' \geq_{T_{j-1}} p_j$ and $u_0'$ in $B_{j-1}$. 
    As the sibling sets $S_1,\ldots,S_j$ are independent, we have that $v_0' \notin \{\cp(p_1),\ldots,\cp(p_{j-1})\}$ and $v_0'\in V(T_0)$ by \Cref{obs:t0}.
    Then the pointer between $u_0'$ and $v_0'$ is also present in $B_0$. 
    It follows that the edge $\{p_{i+1},p_j\}$ is in $B^*_0$. 
    Recall that $v \leq_{T_{i + 1}} v_0$.
    If $v = v_0 = \cp(p_j)$, then the edge $\{\cp(p_{i+1}),\cp(p_j)\} = \{u,v\}$ in $\cp(B_0^*)$ results from the edge $\{p_{i+1},p_j\}$ in $B_0^*$ and the definition of the copy product.
    Otherwise, $v \lneq_{T_{i+1}} \cp(p_j)$ and since $\cp(p_j)$ is inserted below $p_j$, also $v \leq_{T_{i+1}} p_j$. As the nodes on the path from $p_j$ to the root of $T_0$ remain unchanged during the sibling contraction, we also have that $v\in T_0$ and $v \leq_{T_0} p_j$.
    From the edge $\{p_{i+1},p_j\}$ in $B_0^*$ we conclude that the edge $\{p_{i+1},v\}$ exists in $B_0^*$, and by the definition of the copy product,     $\cp(B_0^*)$ contains the edge $\{u = \cp(p_{i+1}),v\}$ as desired.    

\end{proof}

\subsection{Phase One: Eliminating Coat Hangers}

\begin{definition}\label{def:coathanger}
Given a quasi-bush $B = (T,D,\lambda)$, a \emph{coat hanger} is a subtree $T(a)$ rooted at a node~$a$, such that
there exists a leaf $v$ (not necessarily contained in $T(a)$) with the following properties

\bigskip
\noindent\begin{minipage}{0.7\textwidth}
\begin{itemize}
    \item $a$ is not the root and $v$ positively points to the parent of $a$,
    \item $v$ does not point to any node from $T(a)$,
    \item $a$ is a non-leaf node.
\end{itemize}
\end{minipage}%
\begin{minipage}{0.29\textwidth}
    \begin{center}
    \includegraphics{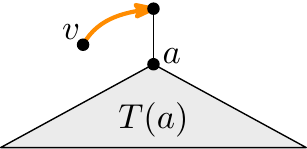}
    \end{center}
\end{minipage}
\bigskip

We call a coat hanger $T(a)$ \emph{maximal}, if there exists no coat hanger $T(a')$ such that $a'$ is an ancestor of $a$ in $T$.
\end{definition}


\begin{observation}\label{obs:max_coathangers_no_overlap}
    For every quasi-bush $B = (T,D,\lambda)$ and maximal coat hangers $T(a_1)$ and $T(a_2)$ with $a_1 \neq a_2$, we have that $V(T(a_1)) \cap V(T(a_1)) = \varnothing$. 
\end{observation}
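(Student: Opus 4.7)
The plan is to exploit the standard fact about rooted trees that any two subtrees $T(a_1)$ and $T(a_2)$ are either disjoint or comparable under containment, where comparability is governed by the ancestor relation: $T(a_1)\subseteq T(a_2)$ if and only if $a_2\leq_T a_1$. So assuming for contradiction that $V(T(a_1))\cap V(T(a_2))\neq\varnothing$, one of $a_1,a_2$ must be a strict ancestor of the other.

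Without loss of generality, assume $a_1<_T a_2$. Then $T(a_1)$ is a coat hanger (by hypothesis) whose root $a_1$ is a strict ancestor of $a_2$. But this directly contradicts the assumed maximality of the coat hanger $T(a_2)$, which by \Cref{def:coathanger} asserts that no coat hanger is rooted at a strict ancestor of $a_2$. Hence the two vertex sets must be disjoint.

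The argument is essentially a one-liner: the only step that might warrant a sentence of justification is the ``subtrees are disjoint or nested'' fact, but this is immediate from the definition of $T(a)$ as the set of descendants of $a$ (including $a$) in the rooted tree $T$. No obstacles are anticipated; the observation is a clean structural consequence of the definition of maximality.
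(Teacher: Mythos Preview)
Your argument is correct and is exactly the intended one: two subtrees of a rooted tree are either nested or disjoint, and nesting would violate maximality of the inner one. The paper states this as an Observation without proof, so there is nothing further to compare.
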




We will show below that quasi-bushes without coat hangers have a nice structure.
In order to eliminate coat hangers, we start by showing that sibling contractions do not introduce new coat hangers.

\begin{lemma}\label{lem:preserving_coathanger_freeness}
    Let $B = (T,D,\lambda)$ be an upwards closed bush, $S$ be a sibling set, and $B' = (T',D',\lambda') := B\contract{S}$.
    If $T'(a')$ is a coat hanger in $B'$ for some $a' \in V(T')$, then also $a' \in V(T)$ and there exists $a \leq_T a'$ such that $T(a)$ is a coat hanger in $B$.
\end{lemma}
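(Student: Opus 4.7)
The plan is to identify a witness leaf for the coat hanger $T'(a')$ in $B'$ and to reverse-engineer from it a coat hanger in $B$ anchored at $a'$ or an ancestor of $a'$. First I would observe that $a' \in V(T)$ essentially for free: the coat-hanger definition forces $a'$ to be a non-leaf of $T'$, but the newly inserted vertex $c$ (the contraction of $S$) is a leaf of $T'$, so $a' \neq c$ and hence $a' \in V(T)$. The parent $p'$ of $a'$ in $T'$ agrees with the parent of $a'$ in $T$ (it cannot be $c$ either), and the non-leaf and non-root properties of $a'$ transfer from $T'$ to $T$, since $a'$ has at least as many children in $T$ as in $T'$. Let $p$ denote the parent in $T$ of the siblings in $S$; the only geometrically non-trivial case will be $a' \leq_T p$, where $V(T(a'))$ strictly contains $V(T'(a'))$, the difference being exactly $V(T(S)) \setminus \{c\}$.

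Now let $v$ be the witness leaf for $T'(a')$. If $v \neq c$, I would take $a = a'$ with the same witness $v$: the pointer $(v,p')$ is unchanged between $B$ and $B'$ so its positive label is inherited, and the "does-not-point-into" condition transfers as follows. For $w \in V(T(a')) \cap V(T'(a'))$ the condition is the one we already have; for $w \in V(T(a')) \setminus V(T'(a')) \subseteq V(T(S))$ I would argue by contradiction, using that a pointer $(v,w)$ with $w \in V(T(S))$ would, by upward closure of $B$, force a pointer from $v$ to the ancestor $p$ of $w$, and $p$ lies in $V(T'(a'))$ when $a' \leq_T p$, contradicting the property of $v$ in $B'$.

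The real work is the case $v = c$. Unpacking the assumption that $(c,p')$ is a positive pointer in $D'$, rule (\ref{itm:a_w_pointer}) supplies some $u_1 \in L^B(S)$ pointing to $p'$ in $B$, and rule (\ref{itm:a_w_label}) supplies $u_2 \in L^B(S)$ together with $v_0 \in V(T) \setminus V(T(S))$ such that $(u_2,v_0) \in E(G(B))$ and $p'$ lies on the tunnel of $(u_2,v_0)$ in $B$. Let $w_0 \leq_T p'$ be the connection point of this tunnel; I would then split on whether $w_0 = p'$ or $w_0 <_T p'$. In the first subcase $u_2$ positively points to $p'$ itself, so I would take $a = a'$ with witness $u_2$, checking the "does-not-point-into" condition by exactly the upward-closure-via-$p$ argument used in the easy case above.

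The subcase $w_0 <_T p'$ is where I would actually move up and change $a$: the plan is to set $a$ to be the child of $w_0$ in $T$ lying on the unique path from $w_0$ down to $p'$, so that parent$(a) = w_0$ and $a \leq_T p' <_T a'$. Then $a$ is not the root since it has parent $w_0$, and it is a non-leaf since it has $a'$ as a strict descendant. The witness is again $u_2$, which positively points to parent$(a) = w_0$ by the choice of $w_0$. The main obstacle is the final requirement: showing that $u_2$ has no pointer into $V(T(a))$. I expect to handle this by combining upward closure with the minimality of $w_0$: $a$ is itself an ancestor of $v_0$ lying strictly below $w_0$, so any pointer $(u_2,w)$ with $w \in V(T(a))$ would, via upward closure, produce a pointer $(u_2,a) \in D$, contradicting the fact that $w_0$ is the lowest ancestor of $v_0$ receiving a pointer from $u_2$. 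This interplay between the tunnel-based witness supplied by rule (\ref{itm:a_w_label}) and the upward closure of $B$ is precisely where the proof uses the hypothesis that $B$ is upwards closed.
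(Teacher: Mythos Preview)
The proposal is correct and follows essentially the same approach as the paper's proof: case on whether the witness leaf is the contracted vertex, and if so, unpack rule~(\ref{itm:a_w_label}) to extract a tunnel, take $a$ as the child of the connection point $w_0$ on the path toward $a'$, and use upward closure of $B$ to verify the non-pointing condition. The only minor differences are that in the $v \neq c$ case the paper invokes rule~(\ref{itm:v_a_pointer}) directly (rather than upward closure to $p$) to rule out pointers into $V(T(S))$, and in the $w_0 = p'$ subcase the paper uses upward closure straight to $a'$ in one stroke rather than re-running the two-part split; your handling of that subcase is slightly under-specified (the ``condition we already have'' piece needs rule~(\ref{itm:a_w_pointer}) rather than preserved pointers once the witness is $u_2$), but the intended argument is clearly correct.
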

\begin{proof}
    As $T'(a')$ is a coat hanger in $B'$, $a'$ is a non-leaf in $B'$ and there exists a leaf $v'$ with a positive pointer 
    to the ancestor $w'$ of $a'$, and no pointer to any node in $T'(a')$.
    Since both $a'$ and~$w'$ are non-leaves in $T'$, they were not contracted in the sibling contraction and were therefore present in $B$ already.
    The node $v'$ was either already present in~$B$, or was introduced by the contraction. 

    First, assume that $v'$ was already present in $B$. Then the pointer $\lambda'((v',w'))=1$ was inherited from $B$. 
    Since $v'$ has no $D'$-pointer to any node in $T'(a')$, we conclude by (\ref{itm:v_a_pointer}) that $v'$ also has no $D$-pointer to any node in $T(a')$. Hence, $T(a')$ is a coat hanger in $B$. 

    Now assume that $v'$ was introduced by the contraction and the $D'$-pointer $(v',w')$ was newly introduced. Since $\lambda'((v',w'))=1$, we conclude
    by~(\ref{itm:a_w_label}) that $w'$ lies on a tunnel,
    witnessed by a leaf $v \in T(S)$ and a node $w\leq_T w'$ in $B$ such that
    \begin{enumerate}[(1)]
        \item $v$ has a positive $D$-pointer to $w$, and
        \item there is no $D$-pointer from $v$ to any node $a$ with $w <_T a \le_T w'$.
    \end{enumerate}

    Let $a \leq_T a'$ be the non-leaf child of $w$ on the path from $w$ to $a'$.
    Let us show that $v$ has no pointer to any node from $T(a)$ in $B$.

    If $w \neq w'$, then $w <_T w'$. 
    As $a$ is a child of $w$, it satisfies $w <_T a \le_T w'$ and then
    $v$ has no pointer to $a$ in $B$ by (2).
    As $B$ is upwards closed, $v$ then also cannot have a pointer to a node from~$T(a)$ in $B$.

    On the other hand, if $w = w'$, then $a = a'$ and a $D$-pointer from $v$ to $T(a) = T(a')$ would, as $B$ is upwards closed,
    imply a $D$-pointer from $v$ to $a'$ in $B$ and by (\ref{itm:a_w_pointer}) also a $D'$-pointer from $v'$ to $a'$ in $B'$. 
    A contradiction to $v'$ not pointing to $T'(a')$.

    We conclude that in $B$, $v$ has a positive $D$-pointer to $w$, which has a non-leaf child $a$ such that~$v$ points to no element from $T(a)$.
    By definition, $T(a)$ is a coat hanger in $B$.
\end{proof}

We derive the following simple corollary of \Cref{lem:preserving_coathanger_freeness}.

\begin{corollary}\label{cor:no-coathangers}
    If an upwards closed quasi-bush contains no coat hangers, then neither does any of its sibling contractions.
\end{corollary}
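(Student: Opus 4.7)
The plan is a direct contrapositive application of \Cref{lem:preserving_coathanger_freeness}, combined with \Cref{lem:preserving_transitivity} to enable iteration. Since a sibling contraction $B\contract{S}$ only modifies the quasi-bush in the vicinity of the sibling set $S$, and \Cref{lem:preserving_coathanger_freeness} already guarantees that every coat hanger appearing in $B\contract{S}$ has an ancestor that was a coat hanger in $B$ itself, nothing further is needed for a single contraction step.

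First I would argue the base case (one contraction). Let $B$ be an upwards closed quasi-bush containing no coat hangers, and let $S$ be any sibling set in $B$. Suppose for contradiction that $B\contract{S}$ contains a coat hanger $T'(a')$. Then \Cref{lem:preserving_coathanger_freeness} provides an ancestor $a \le_T a'$ in the original bush $B$ such that $T(a)$ is a coat hanger in $B$, contradicting the assumption. Hence $B\contract{S}$ is coat-hanger-free.

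To extend this to an arbitrary sibling contraction of $B$, I would proceed by induction on the number of sibling contractions performed. The induction step relies on \Cref{lem:preserving_transitivity}, which tells us that $B\contract{S}$ remains upwards closed whenever $B$ is. Therefore, after each contraction the hypotheses of \Cref{lem:preserving_coathanger_freeness} are still satisfied, and the base-case argument can be reapplied. Iterating this reasoning along any finite sequence of sibling contractions gives the statement of the corollary. No step presents a real obstacle; the work has essentially been done in \Cref{lem:preserving_coathanger_freeness} and \Cref{lem:preserving_transitivity}, and the corollary is just their combination.
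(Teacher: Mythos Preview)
Your proposal is correct and matches the paper's approach: the paper presents the corollary as an immediate consequence of \Cref{lem:preserving_coathanger_freeness} without a separate proof, and your contrapositive argument is exactly the intended reasoning. Your inductive extension to sequences of contractions (via \Cref{lem:preserving_transitivity}) is sound but slightly more than required, since the paper interprets ``any of its sibling contractions'' as a single step and handles iteration externally where needed (see \Cref{claim:transitive_no_coat_hangers}).
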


We now prove the central observation of the first contraction phase,
which constructs a $1$-contraction that removes all coat hangers without making the quasi-bush too dense.

\begin{lemma}\label{lem:remove_coat_hangers}
    For every upwards closed quasi-bush $B = (T,D,\lambda)$, there exists an upwards closed coat hanger free quasi-bush $B' = (T',D',\lambda')$ such that 
    $G(B')$ is a $1$-contraction of $G(B)$, $B'^* \sqsubset B^*$ and $T' \sqsubset T$.
\end{lemma}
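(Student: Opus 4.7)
The plan is to construct $B'$ by iteratively applying single-element sibling contractions that eliminate coat hangers one at a time. Starting from $B_0 := B$, at each step $i \geq 1$, pick an arbitrary node $a_i$ such that $T_{i-1}(a_i)$ is a coat hanger in $B_{i-1}$ and set $B_i := B_{i-1}\contract{a_i \leftarrow \{a_i\}}$. Since $a_i$ is a non-leaf, the subtree $T_{i-1}(a_i)$ has at least two nodes, but is replaced by a single leaf, so the size of the underlying tree strictly decreases. The process therefore terminates at some step $N$ with $B' := B_N$ containing no coat hangers. Chaining \Cref{lem:homo_single_merge} across the sequence yields $B'^* \sqsubset B^*$ and $T' \sqsubset T$, and chaining \Cref{lem:preserving_transitivity} yields that $B'$ is upwards closed. (Corollary~\ref{cor:no-coathangers} is not strictly needed here, but reassures us that once coat hangers are gone they stay gone.)

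The main work is to show that $G(B')$ is a $1$-contraction of $G(B)$. By repeated application of \Cref{lem:bush_contractions}, each step replaces a set of current leaves $L^{B_{i-1}}(a_i)$ by a single vertex, so $G(B')$ is obtained from $G(B)$ by a succession of vertex merges. I will track which original leaves of $B$ are identified in $G(B')$ via the following invariant: every new leaf $a_i$ appearing in $B_j$ (for $j \ge i$) represents exactly the original leaf set $L^B(a_i)$. This holds because $a_i \in V(T)$ is a non-leaf of the original tree, and because all single-element sibling contractions only affect nodes below $a_i$ without changing ancestor-descendant relations among nodes of $V(T)$; any previously created leaf $a_j$ with $j<i$ that sits in $L^{B_{i-1}}(a_i)$ is itself a descendant of $a_i$ in $T$, so by induction its original content $L^B(a_j)$ is a subset of $L^B(a_i)$. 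Hence the cumulative effect on $V(G(B))$ is to contract exactly the collection of sets $L^B(a_i)$ for those indices $i$ whose leaf $a_i$ is never absorbed by a later contraction; these surviving sets are pairwise disjoint, since an absorbed leaf is precisely a leaf whose $L^B$-image is swallowed by a later, strictly larger one.

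The remaining and most delicate point is to verify that each surviving $L^B(a_i)$ is contained in a $1$-neighborhood of $G(B)$. The difficulty is that $T_{i-1}(a_i)$ is only known to be a coat hanger in $B_{i-1}$, not in $B_0 = B$. This is exactly what \Cref{lem:preserving_coathanger_freeness} (applied inductively along the contraction sequence) handles: it provides an ancestor $a \leq_T a_i$ in the original tree $T$ such that $T(a)$ is a coat hanger in $B$. Let $v$ be its witness. Then $v$ has a positive pointer in $D$ to the parent $p$ of $a$, and no pointer into $T(a)$; for any leaf $u \in L^B(a)$ the lowest ancestor of $u$ to which $v$ points is therefore $p$ (or higher, in which case $v$ also points to every ancestor of $p$ it reaches, but $p$ is still the lowest), so the connection point of $(v,u)$ is $p$ and $\lambda((v,p))=1$ gives $(v,u) \in E(G(B))$. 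Thus $L^B(a) \subseteq N_1^{G(B)}[v]$, and in particular the subset $L^B(a_i) \subseteq L^B(a)$ lies in a closed $1$-neighborhood. Combining this for all surviving $i$ exhibits $G(B')$ as a $1$-contraction of $G(B)$, completing the proof.

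The expected obstacle is precisely the bookkeeping in the last paragraph: keeping track of how original leaves get merged across the sequence of contractions and ensuring that the appeal to \Cref{lem:preserving_coathanger_freeness} lifts each ``in-progress'' coat hanger witness back to a witness in $B$ that dominates the entire contracted set. Everything else is either a direct application of the single-step lemmas or a straightforward termination argument.
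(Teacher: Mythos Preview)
Your approach is correct and reaches the same conclusion, but it takes a somewhat different route from the paper's proof. The paper fixes in advance the list of all \emph{maximal} coat hangers $T(a_1),\dots,T(a_l)$ of $B$ (which are pairwise disjoint by \Cref{obs:max_coathangers_no_overlap}) and contracts exactly those. This choice pays off twice: first, because the maximal coat hangers are disjoint, one gets $L^{B_i}(a_i)=L^B(a_i)$ for free, so no bookkeeping about ``what original leaves does this contracted leaf represent'' is needed; second, each $T(a_i)$ is \emph{already} a coat hanger in the original $B$, so the witness $v$ that places $L^B(a_i)$ into a $1$-neighborhood is available without any backward chaining through \Cref{lem:preserving_coathanger_freeness}. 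The paper then proves coat hanger-freeness by a short induction showing that every coat hanger in $B_i$ has some yet-to-be-contracted $a_j$ above it.

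Your greedy variant (pick any coat hanger, contract, repeat) is also valid: termination via tree size is fine, the chains through \Cref{lem:homo_single_merge} and \Cref{lem:preserving_transitivity} work identically, and your iterated use of \Cref{lem:preserving_coathanger_freeness} correctly produces an ancestor $a\le_T a_i$ with $T(a)$ a coat hanger in $B$, which gives $L^B(a_i)\subseteq L^B(a)\subseteq N_1^{G(B)}[v]$. One small imprecision: you claim the new leaf $a_i$ represents \emph{exactly} $L^B(a_i)$, but your justification only establishes the inclusion $\text{rep}(a_i)\subseteq L^B(a_i)$ (you argue that absorbed predecessors $a_j$ are descendants of $a_i$, hence $L^B(a_j)\subseteq L^B(a_i)$, but never the reverse containment). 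Fortunately this suffices, since the $1$-contraction property only needs each contracted block to sit inside a $1$-neighborhood, and the disjointness of the blocks is automatic from the fact that they partition $V(G(B))$. The equality does hold (it follows from the observation that each $T_{j}$ is an upward-closed subtree of $T$, so every original leaf below $a_i$ has a unique $\le_T$-maximal surviving ancestor, which is a leaf of $T_{i-1}$), but you do not need it.
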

\begin{proof}
    Let $T(a_1),\ldots,T(a_l)$ be the maximal coat hangers of $B$.
    We set $B_0 = (T,D,\lambda) := B$ and for every $i \in [l]$ we define $B_i = (T_i,D_i,\lambda_i) := B_{i-1}\contract{a_i \leftarrow \{a_i\}}$. We set $B' := B_l$.
    \Cref{lem:homo_single_merge} gives us $B'^* = B_l^* \sqsubset B_{l-1}^* \sqsubset \dots \sqsubset B_0^* = B^*$.
    Then, by transitivity of the subgraph relation $B'^*\sqsubset B^*$,
    it similarly follows by \Cref{lem:homo_single_merge} that $T'\sqsubset T$.
    By \Cref{lem:preserving_transitivity}, $B'$ remains upwards closed.
    It remains to argue that $B'$ is coat hanger-free and that $G(B')$ is a $1$-contraction of $G(B)$.

    To prove that $B'$ is coat hanger-free, we show by induction on $i \in [l]$ that
    \[
    \text{for every coat hanger $T(b)$ in $B_i$ we have $a_j \le_{T_i} b$ for some $i < j \le l$.}
    \]

    The statement trivially holds for $B_0 = B$.
    Assume it holds for $B_i$ and let us prove it for $B_{i+1} = B_i \contract{a_{i+1} \leftarrow \{a_{i+1}\}}$.
    To this end, let $T(b')$ be a coat hanger in $B_{i+1}$.
    By \Cref{lem:preserving_coathanger_freeness}, $b' \in V(T_i)$ and there exists $b \leq_{T_i} b'$ such that $T(b)$ is a coat hanger in $B$.
    By induction, $a_j \le_{T_i} b$ for some $i < j \le l$ and thus also $a_j \le_{T_i} b'$.
    Since $a_j,b' \in V(T_{i+1}) \cap V(T_{i})$ and sibling contractions preserve the tree of preserved nodes, it follows that $a_j \le_{T_{i+1}} b'$ for some $i < j \le l$.
    Notice that $a_{i+1}$ is a leaf in $B_{i+1}$, while, since $T(b')$ is a coat hanger, $b$ is not a leaf in $B_{i+1}$.
    Hence, we know that $a_{i+1} \not\le_{T_{i+1}} b'$.
    Therefore, $a_j \le_{T_{i+1}} b'$ for some $i+1 < j \le l$, which proves the statement.

    It remains to show that $G(B')$ is a $1$-contraction of $G(B)$.
    Let $\leaves{B_i}(a)$ be the leaves below $a$ in~$B_i$.
    By repeated application of \Cref{lem:bush_contractions}, we have
    \[
    G(B') = G(B)\contract{\leaves{B_1}(a_1)}\ldots\contract{\leaves{B_l}(a_l)}.
    \]

    It is easy to see that $\leaves{B_i}(a_{i}) = \leaves{B}(a_{i})$.
    The sets $\leaves{B}(a_1),\dots,\leaves{B}(a_l)$ are by \Cref{obs:max_coathangers_no_overlap} pairwise disjoint.
    Since $T(a_{i})$ is a coat hanger in $B$, each set $\leaves{B}(a_{i})$ furthermore induces a subgraph of radius one in $G(B)$.
    According to \Cref{def:contraction},
    $G(B') = G(B)\contract{\leaves{B}(a_1)}\ldots\contract{\leaves{B}(a_l)}$ is a $1$-contraction of $G(B)$.
\end{proof}

It is now time to take a step back and to combine
\Cref{thm:bush-nd-order-upwards-closure},
and \Cref{lem:remove_coat_hangers}
into the following statement that finalizes the coat hanger elimination of phase one.

\begin{lemma}\label{lem:summary}
    Let $\Cc$ be a structurally nowhere dense class of graphs.
    For every $G\in \Cc$, there exists a coat hanger free upwards closed quasi-bush $B_G$ representing a 1-contraction of $G$ and an ancestor respecting order $\prec_G$
    such that $\{B_G\colon G\in \Cc\}$ has bounded depth.
    Additionally,
    for every $r \in \N$ and $\eps > 0$ there exists $c(r,\epsilon) \in \N$ such that for all $G \in \Cc$ we have 
    \[\wcol_r(B_G^*,\prec_G) \le c(r,\epsilon) \cdot |G|^\varepsilon.\]
\end{lemma}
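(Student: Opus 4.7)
The plan is to directly combine Theorem~\ref{thm:bush-nd-order-upwards-closure} with Lemma~\ref{lem:remove_coat_hangers}; all the heavy lifting is already done in these two statements, and the lemma just packages them together. For each $G \in \Cc$, I would first invoke Theorem~\ref{thm:bush-nd-order-upwards-closure} to obtain an upwards closed quasi-bush $B_G^0$ representing $G$, with underlying tree $T_G^0$, together with an ancestor respecting order $\prec_G^0$ on $V(T_G^0)$, such that $\{B_G^0 : G \in \Cc\}$ has bounded depth and for every $r \in \N$ and $\eps > 0$ there is a constant $c(r,\eps)$ satisfying $\wcol_r((B_G^0)^*, \prec_G^0) \le c(r,\eps) \cdot |G|^\eps$ uniformly in $G$.

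Next, I would feed each $B_G^0$ into Lemma~\ref{lem:remove_coat_hangers} to obtain a coat hanger free upwards closed quasi-bush $B_G$ on a tree $T_G$ such that $G(B_G)$ is a $1$-contraction of $G(B_G^0) = G$, $B_G^* \sqsubset (B_G^0)^*$ and $T_G \sqsubset T_G^0$. The last containment gives that the depth of $T_G$ is at most that of $T_G^0$, so the family $\{B_G : G \in \Cc\}$ inherits bounded depth. I would then take $\prec_G$ to be the restriction of $\prec_G^0$ to $V(T_G) \subseteq V(T_G^0)$. Since $T_G \sqsubset T_G^0$ and $\prec_G^0$ places ancestors before descendants in $T_G^0$, the restricted order $\prec_G$ does the same in $T_G$ and is thus ancestor respecting.

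It remains to transfer the weak coloring bound from $(B_G^0)^*$ to $B_G^*$. Because $B_G^* \sqsubset (B_G^0)^*$ and the orders agree on $V(T_G)$, every path in $B_G^*$ witnessing weak $r$-reachability of some node $u$ from $v$ (where $u$ is the $\prec_G$-smallest node on the path) is also a path in $(B_G^0)^*$ witnessing the same with respect to $\prec_G^0$. Hence $\wreach_r[B_G^*, \prec_G, v] \subseteq \wreach_r[(B_G^0)^*, \prec_G^0, v]$ for every $v \in V(T_G)$. Taking the maximum over $v$ gives
\[
\wcol_r(B_G^*, \prec_G) \le \wcol_r((B_G^0)^*, \prec_G^0) \le c(r,\eps) \cdot |G|^\eps,
\]
which is exactly the desired bound.

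I do not anticipate any real obstacle: the two ingredients have been designed to slot together, and the only mildly delicate point is verifying that ancestor respecting orders and weak coloring bounds survive the restriction to the contracted tree, both of which follow immediately from the fact that $T_G$ is a subtree of $T_G^0$ and $B_G^*$ is a subgraph of $(B_G^0)^*$.
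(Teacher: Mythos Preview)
Your proposal is correct and follows exactly the same route as the paper: apply Theorem~\ref{thm:bush-nd-order-upwards-closure} to get the initial upwards closed quasi-bush with its ancestor respecting order and weak coloring bound, then apply Lemma~\ref{lem:remove_coat_hangers} and use $B_G^* \sqsubset (B_G^0)^*$ and $T_G \sqsubset T_G^0$ to transfer both the weak coloring bound and the ancestor respecting property. Your write-up is in fact slightly more explicit than the paper's in spelling out the weak-reachability containment, but there is no substantive difference.
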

\begin{proof}
\Cref{thm:bush-nd-order-upwards-closure} gives for every $G \in \Cc$
an upwards closed quasi-bush $B_G$ representing $G$ and an ancestor respecting order~$\prec_G$.
For every $r \in \N$ and $\eps > 0$ there exists $c(r,\epsilon) \in \N$ such that for all $G \in \Cc$ we have 
$\wcol_r(B_G^*,\prec_G) \le c(r,\epsilon) \cdot |G|^\varepsilon$.

By \Cref{lem:remove_coat_hangers}, we can obtain from $B_G$ an upwards closed coat hanger free quasi-bush $B'_G$ representing a $1$-contraction of $G$.
Since $B'^*_G \sqsubset B_G^*$, we still have for every $r \in \N$ and $\eps > 0$ that
$\wcol_r(B_G'^*,\prec_G) \le c(r,\epsilon) \cdot |G|^\varepsilon$.
Since furthermore $T' \sqsubset T$, the order~$\prec_G$ remains ancestor respecting in $B'_G$.
\end{proof}

\subsection{Phase Two: Bounding In and Out Sets}

One notices that representing a dense graph using a sparse quasi-bush requires ``central hubs'' that act as a connection point of many tunnels (see \Cref{fig:hubs}).
The following definition of In- and Out-sets measure how central a node $w$ is.

\begin{figure}[h]%
\begin{center}%
\includegraphics{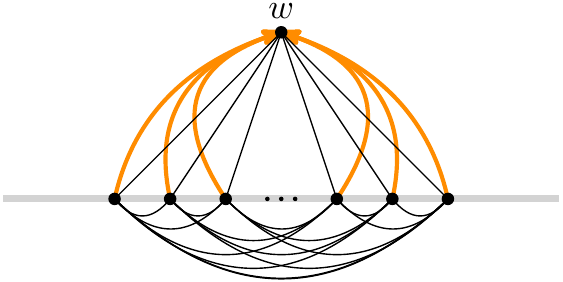}%
\end{center}%
\vspace{-0.7cm}
\caption{A sparse quasi-bush representing a clique. Here the node $w$ acts as central hub through which many tunnels pass.}\label{fig:hubs}%
\end{figure}

\begin{definition}\label{def:in-out}
    For every quasi-bush $B=(T,D,\lambda)$ and node $w \in V(T)$, we define
    \begin{linenomath*}
    \begin{align*}
        \IN(w,B)  &:= \{u \mid \text{$w$ is connection point of some arc $(u,v) \in E(G(B))$} \}, \\
        \OUT(w,B) &:= \{v \mid \text{$w$ is connection point of some arc $(u,v) \in E(G(B))$} \}.
    \end{align*}
    \end{linenomath*}
\end{definition}

Note that in general, the nodes in $\IN(w,B)$ all have a positive pointer to $w$, while the nodes in $\OUT(w,B)$ are descendants of $w$.
The main reason why we eliminated coat hangers in the previous phase is the following observation
guaranteeing that also the nodes in $\OUT(w,B)$ have distance at most one from $w$ in the bush.

\begin{lemma}\label{lem:coat_hanger_free}
    Let $B = (T,D,\lambda)$ be an upwards closed quasi-bush containing no coat hangers and let $w \in V(T)$.
    Every node $v \in \OUT(w,B)$, $v \neq w$ is a child of $w$.
\end{lemma}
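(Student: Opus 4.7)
The plan is to proceed by contradiction: assume $v \in \OUT(w,B)$ with $v \neq w$ but $v$ is not a child of $w$, and from this exhibit a coat hanger in $B$, contradicting the hypothesis that $B$ is coat hanger free. The main conceptual ingredient is that the definition of connection point combined with upwards closure gives a very rigid local picture around $w$.

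First, I would unpack what it means for $v$ to lie in $\OUT(w,B)$. By Definition~\ref{def:in-out}, there is some leaf $u$ such that $(u,v) \in E(G(B))$ and $w$ is the connection point of this arc. By the definition of the graph $G(B)$, this means $(u,w) \in D$, $\lambda((u,w)) = 1$, $w \leq_T v$, and, crucially, $w$ is the \emph{lowest} ancestor of $v$ with $(u,w) \in D$. Since $v \neq w$, $w$ is a strict ancestor of $v$ in $T$.

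Now suppose for contradiction that $w$ is not the parent of $v$, and let $a$ be the child of $w$ on the unique $T$-path from $w$ to $v$. Then $w <_T a <_T v$, so $a$ has a strict descendant and is therefore a non-leaf. I will argue that $T(a)$ is a coat hanger, witnessed by the leaf $u$. The four conditions of Definition~\ref{def:coathanger} are then checked as follows: $a$ is not the root (it has parent $w$); $u$ has a positive pointer to the parent $w$ of $a$ by our setup; $a$ is a non-leaf; and the remaining point, that $u$ does not point to any node of $T(a)$, is exactly where upwards closure enters. Indeed, if $u$ had a pointer to some $a' \in V(T(a))$, then $a \leq_T a'$, and by upwards closure of $B$ (Definition~\ref{def:transitive}) we would also have $(u,a) \in D$. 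But $w <_T a \leq_T v$, contradicting the minimality of $w$ as the lowest ancestor of $v$ with $(u,\cdot) \in D$.

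Thus $T(a)$ is a coat hanger in $B$, contradicting the assumption that $B$ contains no coat hangers. Hence $v$ must be a child of $w$, as claimed. The only subtlety is remembering to use upwards closure to move a forbidden pointer inside $T(a)$ up to $a$ itself; apart from that, the argument is essentially a direct unwinding of the definitions.
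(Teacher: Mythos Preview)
Your proof is correct and follows essentially the same approach as the paper's: assume $v$ is neither $w$ nor a child of $w$, let $a$ (the paper calls it $w'$) be the child of $w$ on the $T$-path to $v$, and use upwards closure together with the minimality of the connection point to verify that $T(a)$ is a coat hanger witnessed by $u$, yielding the contradiction.
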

\begin{proof}
    Let $v \in \OUT(w,B)$.
    Then $w$ is the connection point of some edge $(u,v)\in E(G(B))$.
    Assume towards a contradiction that $w$ is not equal to $v$ and not the parent of $v$. 
    By definition of a connection point, $u$ does not point to another node on the path from~$w$ to $v$ in $T$.
    Let $w'$ be the child of $w$ on this path.
    As $v$ is neither equal to nor a child of $v$, we have $w<_T w' <_T v$. 
    The node~$u$ does not point to $w'$ and as $B$ is upward closed, $u$ does not point to any node from $T(w')$, as otherwise $w$ would not be a connection point of $(u,v)$. 
    Since $w' <_T v$, we conclude that $w'$ is a non-leaf and hence $T(w')$ is a coat hanger. A contradiction.
\end{proof}

Intuitively speaking, we will prove in the next subsection that a sparse quasi-bush $B$ can only represent a dense
graph if there are ``central hubs'' $w$ for which both the set $\IN(w,B)$ and the set $\OUT(w,B)$ are of polynomial size.
The following lemma eliminates these hubs, guaranteeing that $B$ represents a sparse graph.

\newcommand{\STAR}{'}
\begin{restatable}{lemma}{lemskeleton}\label{lem:skeleton}
    Let $\Cc$ be a structurally nowhere dense class of graphs.
    For every $G\in \Cc$ there exists a coat hanger free upwards closed quasi-bush $B'$ representing an 8-contraction of $G$ with the following properties. 
    For every $r\in \N$ and every $\eps > 0$ there exist $c(r,\epsilon)$ and $t(\epsilon)$ such that for every $G\in\Cc$ and 
    every $r\in\N$ we have 
    \[
    \wcol_r(B') \le c(r,\epsilon) \cdot |G|^\varepsilon.
    \]
    Additionally, for every node $w$ in $B'$ we have
    \[
    \IN(w,B') \leq t(\epsilon) \cdot |G|^\eps 
    \quad\text{ or }\quad
    \OUT(w,B') \leq t(\epsilon) \cdot |G|^\eps.
    \]
\end{restatable}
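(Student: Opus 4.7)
Starting from the coat hanger free upwards closed quasi-bush $B$ guaranteed by Lemma~\ref{lem:summary} (which represents a $1$-contraction $G_1$ of $G$), the plan is to perform one additional round of sibling contractions to obtain $B'$. By Lemma~\ref{lem:coat_hanger_free}, for every node $w$ the set $\OUT(w,B)\setminus\{w\}$ consists entirely of leaves that are children of $w$, and thus forms a single sibling set $S_w$. The aim is, for each node $w$ with $|\IN(w,B)|$ large, to shrink $|\OUT(w,B)|$ below the threshold by contracting subsets of $S_w$ grouped by their type $\IN_w(v) := \{u \in \IN(w,B) : (u,v) \in E(G(B))\} \subseteq \IN(w,B)$. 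By the definition of $\OUT(w,B)$, each realized type is non-empty, so leaves with a common type $\tau$ share some $u \in \tau$ as a common neighbor in $G(B)$.

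This is what gives the contraction radius. A type class $T \subseteq S_w$ satisfies $T \subseteq N_1^{G(B)}[u]$ for any $u \in \tau$, so collapsing $T$ via a sibling contraction corresponds to a $1$-contraction of $G(B) = G_1$. A direct distance calculation shows that a $1$-contraction of $G$ (Phase~1) composed with a further $1$-contraction of $G_1$ (Phase~2) yields at most a $5$-contraction of $G$, since traversing any single $G_1$-edge costs at most three steps in $G$ (an edge step plus at most two internal steps inside the incident preimage cluster); this is well within the claimed $8$-contraction budget, giving some slack for additional technical maneuvering. After collapse, $|\OUT(w,B')|$ equals the number of distinct realized types at $w$, so the task reduces to controlling this count.

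The main obstacle is bounding the number of realized types by $t(\epsilon)\cdot|G|^\epsilon$. The plan is to exploit the sparsity of the star closure $B^*$: by construction $B^*$ contains the complete bipartite graph between $\IN(w)$ and all descendants of $w$, while Lemma~\ref{lem:summary} together with Proposition~\ref{prop:sparsity} yields almost-nowhere-denseness of the class $\{B^*\}$. From this sparsity I expect to extract a polynomial-in-$|\IN(w)|$ bound on the number of distinct traces $\IN_w(v)$, either by a direct edge-density/Sauer--Shelah argument (using that bounded $\tilde\nabla_r(B^*)$ gives bounded VC-dimension of neighborhood set systems, a property inherited from the monadic stability of structurally nowhere dense classes) or by transplanting bounds on $\tilde\nabla_r(B^*)$ into trace-counting through the $\IN(w)$ side. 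Rescaling $\epsilon$ then yields the desired $O(|G|^\epsilon)$ bound; this is where I expect the technical heart of the argument to lie.

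Finally, the contractions at distinct hubs $w_1, w_2, \ldots$ are automatically pairwise independent in the sense of Definition~\ref{def:independent}, since each $S_{w_i}$ lives below a distinct parent and consists of distinct leaves, so the corresponding leaf sets are pairwise disjoint. Lemma~\ref{lem:sc} then shows $B'^* \sqsubset \cp(B^*)$, and Lemma~\ref{lem:cp_sparse} bounds $\wcol_r(B'^*)$ up to a factor of~$2$. Lemma~\ref{lem:preserving_coathanger_freeness} preserves coat-hanger-freeness; upwards closure persists by a direct inspection of the sibling-contraction rules (or, if needed, can be re-imposed via the construction behind Lemma~\ref{lem:transitivity} without harming sparsity). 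A key observation is that sibling contractions only ever decrease $|\IN(\cdot,\cdot)|$ and $|\OUT(\cdot,\cdot)|$ at every node: contracted leaves have their pointers merged upward and edges from collapsed leaves to external vertices can only identify with each other, never create new neighbors. Consequently no new hubs appear, the invariants propagate cleanly, and a single pass suffices to produce $B'$ with the claimed properties.
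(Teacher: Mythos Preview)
Your proposal has a genuine gap at the step you yourself flag as ``the technical heart'': bounding the number of realized types $\IN_w(v)$ by $t(\epsilon)\cdot|G|^\epsilon$. None of the mechanisms you invoke deliver this. The types are traces on $\IN(w,B)$ of adjacencies in $G(B)$, and $G(B)$ is (a $1$-contraction of) a graph from a structurally nowhere dense class, hence possibly dense; no VC-dimension or neighborhood-complexity bound applies to it. Sauer--Shelah would at best give $O(|\IN(w)|^d)$ many types for some fixed $d$, but in the regime of interest $|\IN(w)|$ may be $\Theta(|G|)$, so you obtain $|G|^{d}$, not $|G|^\epsilon$; there is no $\epsilon$ to rescale. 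Your attempt to transfer the argument to $B^*$ also fails: the star closure of Definition~\ref{def:starClosure} adds, for each pointer $(v,w)$, only edges between \emph{ancestors} of $v$ and \emph{ancestors} of $w$, so $B^*$ does \emph{not} contain the complete bipartite graph between $\IN(w)$ and the descendants of $w$ that you assert. A secondary issue is that contracting one sibling set per type class below the same parent $w$ yields sibling sets that share a parent and are therefore not independent in the sense of Definition~\ref{def:independent}, so Lemma~\ref{lem:sc} does not bound the sparsity of the result.

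The paper's argument avoids type-counting entirely. It processes the non-leaf nodes $w$ in the ancestor-respecting order $\prec$, which immediately gives $|\IN_\prec(w)|\le\wcol_1(B^*,\prec)\le c|G|^\epsilon$ (these vertices are weakly $1$-reachable from $w$). For the remaining $\IN_\succ(w)$ it performs a \emph{single} maximal sibling contraction: the children $a$ of $w$ such that fewer than half of $\IN_\succ(w)$ point to $a$ form a set $A$ whose leaves lie in a common $2$-neighborhood of $G(B)$ (any pair $a,b\in A$ shares a vertex of $\IN_\succ(w)$ pointing to neither, hence adjacent in $G(B)$ to all leaves below both), so $A$ can be collapsed to one vertex. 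The remaining children $\bar A$ each receive pointers from at least half of $\IN_\succ(w)$; if $|\bar A|$ and $|\IN_\succ(w)|$ were both larger than $4\wcol_1(B^*)\cdot|G|^\epsilon$, this bipartite graph would contain a subgraph of minimum degree exceeding $\wcol_1(B^*)$, a contradiction. Hence at most one contraction per parent suffices, the sibling sets are pairwise independent, and Lemma~\ref{lem:sc} applies cleanly.
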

\begin{proof}
Let $G\in\Cc$. We first show how to construct the corresponding quasi-bush $B\STAR$ and then prove its properties.

\paragraph*{Construction.}
We first apply \Cref{lem:summary} to obtain a coat hanger free upwards closed quasi-bush $B = (T,D,\lambda)$ representing a 1-contraction of $G$ together with an ancestor respecting order $\prec$ of~$V(T)$ such that for all $r\in \N$ and $\epsilon>0$ we have $\wcol_r(B^*,\prec) \le c'(r,\epsilon) \cdot |G|^\varepsilon$.

We iteratively process the non-leaf nodes of $T$ in the order given by $\prec$ and modify the quasi-bush~$B$ in every step as follows.
Let $B_i = (T_i,D_i,\lambda_i)$ be the quasi-bush right before the $i$th processing step and let $W_i = \{w_1,\ldots,w_{i-1}\}$ be set of nodes processed so far.
We start with $B_1 = B$ and $W_1 = \varnothing$.

In the $i$th processing step, we choose $w_i$ as the smallest (with respect to $\prec$) non-leaf node of~$T_i$ that is is contained in $V(T)$ and larger (with respect to $\prec$) than all nodes of $W_i$.
Note that $w_1$ is the root of $T$.
If no such $w_i$ exists we finish the processing by setting $B\STAR := B_i$ and $W\STAR := W_i$.
Otherwise, we set $W_{i+1} := W_i \cup \{w_i\}$ and describe in the following how to derive $B_{i+1}$:

Choose the sibling set $S_i \subseteq V(T_i)$ to be a maximal (but possibly empty) set of children of $w_i$ in $T_i$
such that there exists some neighborhood of radius~$2$ in $G(B)$ that contains all leaves $\leaves{B_i}(S_i)$ below $S_i$ in $T_i$.
If $S_i$ is empty, we simply set $B_{i+1} := B_i$.
Otherwise, we perform a sibling contraction and set $B_{i+1} := B_i \contract{\cp(w_i) \leftarrow S_i}$ 
and continue with the next processing step.
Note that this is a valid sibling contraction since $\cp(w_i)$ is not contained in $B_i$.
This completes the description of the construction of $B\STAR$.

\paragraph*{Claims.}
Let $k$ be the number of processing steps performed on $B$ such that $W\STAR = \{w_1,\ldots,w_k\}$.
We control the above construction using the following statements.

\begin{claim}\label{claim:tree}
For all $i \in [k]$ and $v\in V(T_i) \cap V(T)$ with $w_i \preceq v$ we have that 
$T_i(v) = T(v)$.
In particular, $v$ is a leaf in $T_i$ if and only if it is a leaf in $T$.
\end{claim}

\begin{claimproof}
Assume towards a contradiction that $w_i \preceq v$ and $T_i(v)$ differs from $T(v)$.
Since the sibling contractions only modify the subtree of the nodes that is currently processed, $T(v)$ must have been altered during the processing of a node $w_j \prec w_i \preceq v$, which, because $\prec$ is ancestor respecting, must be an ancestor of $v$ in $T$.
During the sibling contraction below $w_j$, every child contained in~$S_j$ is deleted together with its subtree, while the subtrees of children not contained in $S_j$ are left untouched.
The existence of $v$ in $T_i$ then proves that the subtree containing $v$ was not modified. A contradiction.
\end{claimproof}

\begin{claim}\label{claim:pointers}
    For all $i\in [k]$ and $u,v\in V(T_i) \cap V(T)$ we have $(u,v) \in D_i$ if and only if $(u,v) \in D$. Additionally, we have $\lambda_i(u,v) = \lambda(u,v)$.
\end{claim}

\begin{claimproof}
    The only pointers that are modified by the sibling contractions point towards and away from newly inserted nodes, which are not contained in $V(T)$.
    The pointers between nodes from $V(T_i) \cap V(T)$ therefore remain unmodified.
\end{claimproof}

\begin{claim}\label{claim:independence}
    The sets $S_1,\ldots,S_k$ are pairwise independent in $B$. Hence,
    \[B\STAR = B\scontract{\cp(w_1) \leftarrow S_1,\ldots,\cp(w_k) \leftarrow S_k}.\]
\end{claim}


\begin{claimproof}
Each set $S_i$ contains children of $w_i$ in $B_i$.
By \Cref{claim:tree} we have $T_i(w_i) = T(w_i)$.
Hence, $w_i$ is not only the parent of $S_i$ in $T_i$, but also in $T$.

Fix $i < j \in [k]$ and let us show that $S_i$ and $S_j$ are independent in $B$.
Clearly, $S_i$ and $S_j$ have different parent nodes $w_i \prec w_j$ in $T$.
Let $a_i \in S_i$ and $a_j \in S_j$.
Since the subtree $T_i(a_i) = T(a_i)$ was removed during the processing of $w_i$, it has no overlap with $T_j(a_j) = T(a_j)$, which was processed later.
This implies $\leaves{B}(a_i) \cap \leaves{B}(a_j) = \emptyset$.
By \Cref{def:independent}, $S_i$ and $S_j$ are independent in $B$.
\end{claimproof}

\begin{claim}\label{claim:transitive_no_coat_hangers}
    $B_i$ is upwards closed and contains no coat hangers for all $i \in [k]$. 
    In particular, $\OUT(w,B_i)$ contains only children of $w$.
\end{claim}
\begin{claimproof}
    $B_i$ is obtained from $B$ via a sequence of sibling contractions.
    As $B$ originates from \Cref{lem:summary}, $B$ is upwards closed and contains no coat hangers.
    For each contraction, \Cref{lem:preserving_transitivity} preserves upwards closure and \Cref{cor:no-coathangers} preserves coat hanger freeness.
\end{claimproof}

\begin{claim}\label{claim:w_contains_all_non_leaves}
    $W\STAR$ is the set of non-leaf nodes of $B\STAR$.
    Furthermore, $W\STAR$ is contained in the set of non-leaves of $B_i$ for every $i \in [k]$.
\end{claim}
\begin{claimproof}
    Let $w_i \in W\STAR$ be the non-leaf of $B_i$ chosen at the beginning of the $i$th processing step. 
    During the $i$th processing step, either some children of $w_i$ were merged into a leaf, or $B_i$ was not changed.
    Either way, $w_i$ remains a non-leaf in $B_{i+1}$.
    In the following processing steps only nodes that appear after $w_i$ in $\prec$ are processed.
    Since $\prec$ is ancestor respecting, in particular, no ancestor of $w_i$ is processed and $w_i$ must have survived as a non-leaf in $B\STAR$.

    Let $w$ be a non-leaf in $B\STAR$.
    During the processing steps that transform $B$ into $B\STAR$,
    all nodes that are newly inserted are leaves, and since only sibling sets below non-leaf are contracted, these nodes stay leaves.
    Therefore, $w$ exists as a non-leaf in $B_i$ for every $i\in[k]$.
    In particular, $w$ appears in $B_1 = B$ and is therefore ordered by~$\prec$.
    If $w$ is the root node of $T$, then $w$ is trivially contained in~$W\STAR$. Hence, assume $w$ is not the root node of $T$.
    Then there exists a node $w_{i-1} \in W\STAR$ that is maximal with $w_{i-1} \prec w$.
    In the $i$th processing step, we try to pick $w_i$ as the smallest (with respect to $\prec$) non-leaf node of $T_i$ that is is contained in $V(T)$ and larger (with respect to $\prec$) than $w_{i-1}$,
    or terminate if no such $w_i$ exists.
    Note that $w$ is a candidate for $w_i$ and thus $w_i$ exists and $w_i \preceq w$.
    If $w_{i}\prec w$ we get a contradiction to our choice of $w_{i-1}$.
    We therefore have $w = w_{i} \in W\STAR$.
\end{claimproof}

\begin{claim}\label{claim:pfs_monotone}
    For every $w \in W\STAR$ and $i\in[k-1]$, we have $|\IN(w,B_i)| \geq |\IN(w,B_{i+1})|$.
\end{claim}
\begin{claimproof}
    By \cref{claim:w_contains_all_non_leaves}, $w$ is a non-leaf in both $B_i$ and $B_{i+1}$.
    We have that $B_{i+1} = B_i \contract{S_i}$.
    If $B_{i+1} = B_i$ there is nothing to show.
    Otherwise, there exists a new leaf $\cp(w_i)$ below $w_i$ in $B_{i+1}$.
    Let $u \in \IN(w,B_{i+1})$.
    By \Cref{def:in-out} and since $w$ is a non-leaf, there exists a leaf $v \neq w$ with $v \in \OUT(w,B_{i+1})$.
    By \Cref{claim:transitive_no_coat_hangers} and \Cref{lem:coat_hanger_free}, $v$ is a child of $w$ in $B_{i+1}$.
    In summary, $u$ has a positive pointer to $w$ and no pointer to a leaf-child $u$ of $w$.

    First assume $u \neq \cp(w_i)$.
    If $u$ has a pointer to every leaf below $w$ in $B_{i}$, then the same must be true in $B_{i+1}$.
    This is a contradiction to our assumption that $u$ has no pointer to $v$.
    Therefore, also in $B_i$ there exists a leaf below $w$ which $v$ does not point to and we have $u \in \IN(w,B_{i})$.
    Now assume $u = \cp(w_i)$. Since $u$ has a positive pointer to $w$, by definition of the sibling contraction, 
    there must exist some contracted node $u' \in \IN(w,B_i)$ that is no longer present in $B_{i+1}$ so we have $u' \notin \IN(w,B_{i+1})$.
    Combining both cases, we get $|\IN(w,B_i)| \geq |\IN(w,B_{i+1})|$.
\end{claimproof}

We will now use these claims to prove the desired properties of $B\STAR$.
By \Cref{claim:transitive_no_coat_hangers}, $B\STAR$ is upwards closed and contains no coat hangers.
The remaining three properties are shown in the remaining three paragraphs.

\paragraph{8-Contraction.}
We argue that $G(B\STAR)$ is an $8$-contraction of $G$.
By repeated application of \Cref{lem:bush_contractions}, we have
\[
G(B') = G(B)\contract{\leaves{B_1}(S_1)}\ldots\contract{\leaves{B_l}(S_l)}.
\]

We guaranteed during the construction that each set $\leaves{B_i}(S)$ is contained in a radius $2$ neighborhood in $G(B)$. 
By \Cref{claim:tree}, $\leaves{B_i}(S_i) = \leaves{B}(S_i)$.
The sets $S_1,\dots,S_l$ are by \Cref{claim:independence} independent sibling sets in $B$ and thus by \Cref{def:independent},
$\leaves{B}(S_1),\dots,\leaves{B}(S_l)$ are pairwise disjoint.
According to \Cref{def:contraction},
$G(B') = G(B)\contract{\leaves{B}(S_1)}\ldots \contract{\leaves{B}(S_l)}$ is a $2$-contraction of $G(B)$.
Since $G(B)$ itself is a $1$-contraction of~$G$, $G(B\STAR)$ is an $8$-contraction of $G$.


\paragraph{Sparsity.}
We know that for every $r \in \N$ and $\epsilon>0$ that
$\wcol_r(B^*,\prec) \le c'(r,\epsilon) \cdot |G|^\varepsilon$,
where $c'(r,\epsilon)$ is the function originating from the invocation of \Cref{lem:summary}
in the paragraph \emph{Construction} at the beginning of the proof.
We set $c(r,\epsilon)=2c'(r,\epsilon)$.
Then 
By \Cref{lem:cp_sparse}, we have for the copy product $\cpuc$,
\[\wcol_r\bigl(\cpuc,\cp(\prec)\bigr) \le c(r,\epsilon) \cdot |G|^\varepsilon.\]
%
As shown in \Cref{claim:independence}, $B\STAR$ is obtained from $B$ by a sibling contraction of pairwise independent sibling sets.
By \Cref{lem:sc}, the Gaifman graph of $B\STAR$ is a subgraph of $\cpuc$,
and thus $\wcol_r(B\STAR) \le \wcol_r(\cpuc)$.
This yields the desired bound on the weak coloring numbers of $B\STAR$.

\paragraph*{Size of In and Out Sets.}
In the following, remember that $W\STAR = \{w_1,\ldots,w_k\}$ is the set of processed nodes, and for $i\in[k]$, $B_i$ is the quasi-bush right before processing $w_i$.
As shown in \Cref{claim:independence}, we further have that $B\STAR = B\scontract{S_1,\ldots,S_k}$ is a sibling contraction of pairwise independent sibling sets $S_1,\ldots,S_k$ in $B$.
Choose an arbitrary $\eps >0$. Let $p:=c(1,\epsilon) \in \N$.
As observed above,
\begin{equation} \label{eq:cp_wcol}
    \wcol_1\bigl(\cpuc,\cp(\prec)\bigr) \le p \cdot |G|^\varepsilon. 
\end{equation}
We want to show that for every node $w$ of $B\STAR$
either 
\[
|\IN(w, B\STAR)| \le 5p \cdot |G|^\varepsilon
\quad\text{ or }\quad
|\OUT(w,B\STAR)| \le 4p \cdot |G|^\varepsilon + 1.
\]

We can then choose $t(\epsilon)$ such that either $\IN(w,B') \leq t(\epsilon) \cdot |G|^\eps$ or  
$\OUT(w,B') \leq t(\epsilon) \cdot |G|^\eps$.
If~$w$ is a leaf, then $|\OUT(w, B\STAR)| \leq 1$.
Hence, assume $w$ is a non-leaf and $|\IN(w, B\STAR)| > 5p \cdot |G|^\varepsilon$, as otherwise there is nothing to show.

By \Cref{claim:transitive_no_coat_hangers}, \Cref{lem:coat_hanger_free} and since $w$ is a non-leaf, we get that $\OUT(w, B\STAR)$ is a subset of the children of $w$ in $B\STAR$.
In order to bound $\OUT(w, B\STAR)$, it therefore suffices to show that $w$ has at most $4p \cdot |G|^\varepsilon + 1$ children in $B\STAR$. 
Since $w$ is a non-leaf in $B\STAR$, by \Cref{claim:w_contains_all_non_leaves} we must have $w = w_i \in W\STAR$ for some $i\in[k]$.
Remember that $B_i$ is the quasi-bush right before $w_i$ was processed, and~$B_{i+1}$ is the quasi-bush after $w_i$ was processed.
It is easy to see that $w_i$ has at least as many children in $B_{i+1}$ as in $B\STAR$.
\[
\textit{It therefore suffices to show that $w_i$ has at most $4p \cdot |G|^\varepsilon + 1$ children in~$B_{i+1}$.}
\]

We now want to partition the nodes in $\IN(w_i,B_i)$, using the order $\prec$.
However, $\prec$ is only defined for nodes of $B$.
We obtained $B_i = B\contract{\cp(w_1) \leftarrow S_1,\ldots,\cp(w_{i-1}) \leftarrow S_{i-1}}$ from $B$ by performing sibling contractions.
For every $j < i$, $B_i$ contains a leaf $\cp(w_j)$, created during the contraction of~$S_j$, which is not present in $B$ and therefore not ordered by $\prec$.
To circumvent this problem, we turn to the extended order $\cp(\prec)$ given by \Cref{def:copyproduct}.
We notice that all the nodes in $B_i$ are also present in $\cp(B)$.
Therefore, $\cp(\prec)$ completely orders the nodes of $B_i$, 
such that $V(T) \cap V(T_i)$ is ordered as in $\prec$ and every node $\cp(w_j) \in V(T) \setminus V(T_i)$ is the immediate predecessor of its parent~$w_j$.

We know by (\ref{eq:cp_wcol}) that, in terms of coloring numbers, $\cp(\prec)$ is a good ordering for~$\cpuc$.
\Cref{lem:sc} states that $B_i^* \sqsubset \cpuc$ and thus
the bounds given by (\ref{eq:cp_wcol}) transfer to $B_i$, that is,

\begin{equation}\label{eq:bi_wcol}
\wcol_1\bigl(B_i,\cp(\prec)\bigr) \le p \cdot |G|^\varepsilon.
\end{equation}

We can now partition the nodes in $\IN(w_i,B_i)$ into sets $\IN_\prec(w_i,B_i)$ and $\IN_\succ(w_i,B_i)$ depending on whether they are smaller or greater than $w_i$ with respect to $\cp(\prec)$.
Let us first bound the size of $\IN_\prec(w_i,B_i)$.
By \Cref{def:in-out},
every node in $\IN_\prec(w_i,B_i) \subseteq \IN(w_i,B_i)$ has a pointer to $w$ in $B_i$.
All nodes in $\IN_\prec(w_i,B_i)$ are smaller than $w_i$, and thus $w_i$ weakly $1$-reaches all of $\IN_\prec(w_i,B_i)$. 
It follows by (\ref{eq:bi_wcol}) that 
\[|\IN_\prec(w_i,B_i)| \leq \wcol_1\big(B_i, \cp(\prec)\big) \leq p \cdot |G|^\eps.\]
We assumed $|\IN(w, B\STAR)| \ge 5p \cdot |G|^\varepsilon$.
By \Cref{claim:pfs_monotone},
$|\IN(w,B_i)| > |\IN(w, B\STAR)|$
and thus
\begin{equation}\label{eq:insucclarge}
|\IN_\succ(w_i,B_i)| = |\IN(w,B\STAR)| - |\IN_\prec(w,B\STAR)| > 4p \cdot |G|^\eps.
\end{equation}

We finally arrive at the central argument behind this proof.
For this, we partition the children of $w_i$ in $T_i$ into a set $A$ containing every child $a$ such that less than half of the nodes in $\IN_\succ(w_i,B_i)$ have a pointer to $a$
and a remaining set $\bar{A}$.
See also \Cref{fig:phasetwo}.

\begin{figure}[h]%
\begin{center}%
\includegraphics{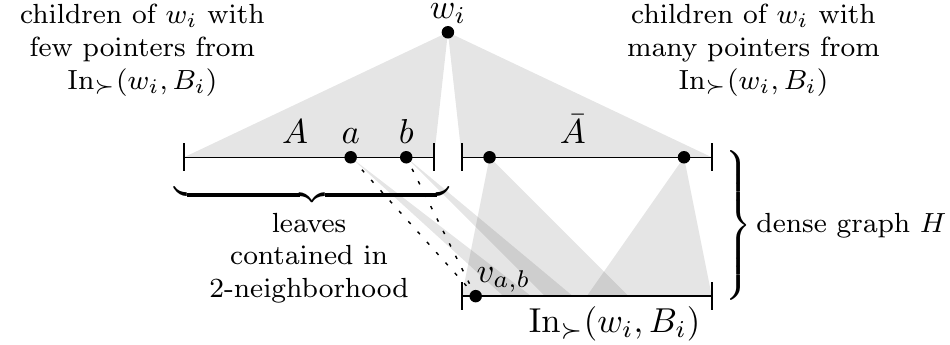}%
\end{center}%
\vspace{-0.5cm}
\caption{The central argument behind the proof of \Cref{lem:skeleton}. The leaves below $A$ are contained in a 2-neighborhood and hence can be contracted.
Either the remaining children $\bar A$ of $w_i$ or the set $\IN_\succ(w_i,B_i)$ must be small, since otherwise a large dense graph $H$ arises. }\label{fig:phasetwo}%
\end{figure}

Let us argue that all the leaves below $A$ are contained in a neighborhood of radius $2$ in $G(B)$.
By definition of $A$, for every pair $a,b \in A$, there must be $v_{a,b}\in\IN_\succ(w_i,B_i)$, neither pointing to $a$ nor $b$.
By upwards closure, it follows that~$v_{a,b}$
points to no node from $T_i(a)$ and $T_i(b)$.
Since $v_{a,b}$ has a positive pointer to $w_i$,
we know that $v_{a,b}$ is connected in $G(B_i)$ to all the leaves 
$\leaves{B_i}(a)$ and $\leaves{B_i}(b)$ below $a$ and $b$.

We argue that this is the case also in $G(B)$.
In the order $\cp(\prec)$, $v_{a,b}$ comes after $w_i$.
As argued before, this means $v_{a,b} \in V(T)$ and $w_i\prec v_{a,b}$.
Applying \Cref{claim:tree} to $v_{a,b}$ gives us that $v_{a,b}$ is not only a leaf in $T_i$ but also in $T$. 
\Cref{claim:tree} also states that $T_i(w_i) = T(w_i)$,
and thus in particular, $\leaves{B_i}(a)=\leaves{B}(a)$ and $\leaves{B_i}(b)=\leaves{B}(b)$.
By \Cref{claim:pointers}, the pointers in $B$ between $T(w_i)$ and $v_{a,b}$ are the same as in $B_i$.
It follows that also in $G(B)$, $v_{a,b}$ is connected to all of $\leaves{B_i}(a)$ and $\leaves{B_i}(b)$.
Since this holds for every pair $a,b\in A$, the leaves below $A$ (which are the same in $T_i$ as in $T$) are contained in a neighborhood of radius $2$ in $G(B)$.

Remember that $S_i \subseteq V(T_i)$ is chosen as a maximal (but possibly empty) set of children of $w_i$ in $T_i$
such that there exists some neighborhood of radius~$2$ in $G(B)$ that contains all leaves $\leaves{B_i}(S_i)$ below $S_i$ in $T_i$.
We just proved that $A$ is a candidate for $S_i$
and thus, when processing $w_i$, we have contracted a set $S_i$ containing at least $|A|$ many of its children into a single new node $\cp(w_i)$ below $w_i$.
This means the number of children of~$w_i$ in~$B_{i+1}$ is at most $|\bar{A}| + 1$.
\[
\textit{It therefore suffices to show that $|\bar A| \le 4p \cdot |G|^\varepsilon$.}
\]

We bound the size of $\bar A$ next.
Assume towards contradiction $|\bar A| > 4p \cdot |G|^{\varepsilon}$.
By (\ref{eq:insucclarge}), also $|\IN_\succ(w_i,B_i)| > 4p \cdot |G|^\eps$,
Consider the bipartite graph $H$ defined by the pointers between $\bar A$ in one part (called the upper part) and $\IN_\succ(w_i,B_i)$ in the other part (called the lower part).
See also \Cref{fig:phasetwo} for a depiction of $H$.
Remember that by definition, every node in $\bar A$ has an incoming pointer from at least half of the nodes in $\IN_\succ(w_i,B_i)$.
Thus, every node from the upper part is connected to at least half of the nodes of the lower part. 
We bound the number of edges per vertex in $H$ by

\[
\frac{|\bar A| \cdot \frac{1}{2}|\IN_\succ(w_i,B_i)|}{|\bar A| + |\IN_\succ(w_i,B_i)|}
\ge \frac{|\bar A| \cdot \frac{1}{2}|\IN_\succ(w_i,B_i)|}{2 \max(|\IN_\succ(w_i,B_i)|,|\bar A|)} =
 \frac{1}{4}\min(|\bar A|,|\IN_\succ(w_i,B_i)|) > p \cdot |G|^{\varepsilon}.
\]

The graph $H$ therefore has an average degree larger than $2p \cdot |G|^{\varepsilon}$ and by 
\Cref{lem:mindeg} a subgraph~$H'$ with minimum degree larger than
$p \cdot |G|^{\varepsilon}$.
This implies $\wcol_1(B_i) \geq \wcol_1(H') > p \cdot |G|^{\varepsilon}$, which is a contradiction to (\ref{eq:bi_wcol}).

Thus, we must have $|\bar{A}| \le 4p \cdot |G|^{\varepsilon}$ and can conclude that 
$w_i$ has at most $4p \cdot |G|^\eps + 1$ many children in $B_{i+1}$, which gives us the desired upper bound on $|\OUT(w_i,B\STAR)|$. 
\end{proof}

\subsection{Sparsity of the Contraction}\label{sec:sparsity-contraction}

We will now show that the constructed $8$-contraction is sparse. We first show that it does not contain large subdivided cliques. 
In the following, we will write $\IN(w)$ and $\OUT(w)$ instead of $\IN(w,B)$ and $\OUT(w,B)$,
when the quasi-bush $B$ will be clear from the context.

\begin{lemma}\label{lem:noLargeMinorsM}
    For every $r,m\in \N^+$ with $m\geq 20r$ and every
    upwards closed, coat hanger free
    quasi-bush $B$, if
    \begin{itemize}
    \item $\wcol_{9r}(B, \prec) \leq m$, and 
    \item for every $w\in V(T)$ either $\IN(w) \leq m$ or $\OUT(w) \leq m$,
    \end{itemize}
    then $G(B)$ does not contain an $r$-shallow topological clique minor of size $m^{7}$.
\end{lemma}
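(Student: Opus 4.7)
I will proceed by contradiction: assume $G(B)$ contains an $r$-shallow topological $K_{m^7}$-minor with principals $p_1, \dots, p_{m^7}$ and internally disjoint paths $P_{ij}$ of length at most $2r+1$. The first step is to lift each path to a walk in $B$'s Gaifman graph. Since $B$ is upward closed, every arc $(u,v) \in E(G(B))$ is witnessed by a pointer $(u, c(u,v)) \in D$, and by \Cref{lem:coat_hanger_free}, $v \in \OUT(c(u,v))$ is either equal to $c(u,v)$ or a child of it. Hence the tunnel $u \to c(u,v) \to v$ has length at most $2$ in $B$'s Gaifman graph, and each $P_{ij}$ lifts to a walk $\omega_{ij}$ of length at most $4r+2 \leq 9r$.

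Next, I use $\wcol_{9r}(B,\prec) \leq m$ to locate a hub $w^* \in V(T)$. For each $(i,j)$, the $\prec$-minimum vertex $\ell(i,j)$ of $\omega_{ij}$ lies in $\wreach_{9r}[B,\prec,p_i] \cap \wreach_{9r}[B,\prec,p_j]$, a set of size at most $m$ on each side. Double counting over all $\binom{m^7}{2}$ pairs yields a vertex $w^*$ serving as $\ell(i,j)$ for at least $\Omega(m^6)$ pairs. A careful choice of walk realization (for each edge selecting the $\prec$-smaller of $c(u,v)$ and $c(v,u)$) combined with the observation that leaves can be $\ell$-values of at most $O(m^7)$ pairs each (as principals, since $p_k$ can only appear on $\omega_{ij}$ when $k \in \{i,j\}$) or of a single pair (as internal vertices, by internal disjointness), together with the slack $m \geq 20r$, forces $w^*$ to be a non-leaf internal node of $T$. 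Consequently, for each of these $\Omega(m^6)$ pairs, $w^*$ is the connection point of some edge $(u_{ij}, v_{ij})$ of $P_{ij}$, yielding distinct pairs $(u_{ij}, v_{ij}) \in \IN(w^*) \times \OUT(w^*)$.

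The hardest step is to conclude that \emph{both} $|\IN(w^*)| > m$ and $|\OUT(w^*)| > m$; the product bound $|\IN(w^*)| \cdot |\OUT(w^*)| \geq \Omega(m^6)$ alone is consistent with the standing assumption, which requires only $\min(|\IN(w^*)|, |\OUT(w^*)|) \leq m$. To extract both bounds simultaneously, I will exploit the diversity of principals: the set $R(w^*) = \{k : w^* \in \wreach_{9r}[B,\prec,p_k]\}$ has cardinality at least $\Omega(m^3)$, since the $\Omega(m^6)$ pairs embed into $\binom{|R(w^*)|}{2}$. Using that each path contributes unique internal vertices to both the $\IN$-side and the $\OUT$-side of its chosen edge (by internal disjointness), and carefully handling the few edges whose endpoints are shared principals (for which the relevant $u_{ij}$ or $v_{ij}$ can repeat across paths sharing an endpoint), one obtains $\Omega(m^3)$ distinct elements in each of $\IN(w^*)$ and $\OUT(w^*)$, contradicting the hypothesis.

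The principal obstacle I anticipate is precisely this final bookkeeping, distinguishing the ``generic'' edges with internal endpoints from those adjacent to shared principals; here the slack $m \geq 20r$ absorbs the $O(r)$ factors arising from path length $2r+1$ and tunnel length $2$, ensuring the generic case dominates.
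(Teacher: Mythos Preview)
Your overall strategy—lift paths to walks in the Gaifman graph of $B$, use $\wcol_{9r}(B,\prec)\le m$ to find a popular hub $w^*$, and then contradict the $\IN/\OUT$ hypothesis at $w^*$—is natural but diverges from the paper and, as you yourself anticipate, breaks down at the final step.

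The decisive gap is the claim that $|R(w^*)|\ge\Omega(m^3)$ yields $\Omega(m^3)$ \emph{distinct} elements in \emph{each} of $\IN(w^*)$ and $\OUT(w^*)$. Consider the scenario where $\IN(w^*)$ consists of at most $m$ \emph{principal} vertices. For every pair $(i,j)$ with $\ell(i,j)=w^*$, the in-side $a_{ij}$ of the chosen edge lies in $\IN(w^*)$; if $a_{ij}$ is a principal it must be an endpoint of $P_{ij}$, so one of $i,j$ lies among these $\le m$ indices. But that still permits up to $m\cdot(m^7-1)\approx m^8\gg m^6$ such pairs, so your pair count gives no lower bound on $|\IN(w^*)|$ at all. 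Pushing further, one can indeed extract many distinct out-side vertices $b_{ij}$ from the $\ge m^3-m$ principals outside $\mathrm{Index}(\IN(w^*))$, but that only shows $|\OUT(w^*)|>m$; since the hypothesis merely asserts $\min(|\IN(w^*)|,|\OUT(w^*)|)\le m$, you have not produced a contradiction. The ``diversity of principals'' is simply compatible with one side of the hub being tiny. (There is also a smaller issue earlier: the pigeonhole that forces $w^*$ to be a non-leaf is not justified, since each principal leaf $p_k$ may serve as $\ell(i,j)$ for up to $m^7-1$ pairs, and $m^7$ such leaves can absorb all $\binom{m^7}{2}$ pairs.)

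The paper does \emph{not} try to violate the $\IN/\OUT$ hypothesis directly. Instead it uses that hypothesis as a tool to control overlaps: it thins the principals to a non-overlapping set $S$ of size $m^5$, then uses the $\IN/\OUT$ bound to show each subdivision path overlaps at most $2rm$ others and that all but $m$ paths at each principal ``privately connect'' to it. This yields $>m^6$ pairwise non-overlapping, privately-connecting paths among $m^5$ principals, which assemble into an $(r{+}1)$-shallow minor of the Gaifman graph of $B$ with edge density exceeding $m$. The contradiction is then with $\wcol_{9r}(B,\prec)\le m$ via $\nabla_{r+1}(B)\le\wcol_{4r+5}(B)$, not with the $\IN/\OUT$ bound at any single node.
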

\begin{proof}
  Assume towards a contradiction that $G(B)$ contains an $r$-shallow topological clique minor of size $m^7$. It consists of $m^{7}$ \emph{principal vertices} and $\binom{m^{7}}{2}$ pairwise vertex disjoint \emph{subdivision paths} of length at most $(2r+1)$ 
  connecting all pairs of principal vertices. 
  We orient each subdivision path $P_{uv} = (u,\ldots,v)$ if $u\prec v$. 

  We assign each principal vertex $u$ a node $p(u) \in V(T)$:
  If the parent $w$ of $u$ has $u \in \OUT(w)$ and $|\OUT(w)| \le m$, we set $p(u) = w$; otherwise, we set $p(u)=u$.
  We set $A_u := \{u,p(u)\}$ and say that two distinct vertices $u$ and $v$ \emph{overlap} if $A_u \cap A_v \neq \emptyset$.
  Let us count how many other principal vertices $v$ may overlap with $u$.
  Since $u \neq v$, we know that the sets $A_u$ and $A_v$ intersect in a vertex $p(v)=p(u)$ different from $u$ and $v$.
  By definition, $v \in \OUT(p(v)) = \OUT(p(u))$ and $|\OUT(p(u))| \le m$ and thus every vertex $u$ overlaps with at most $m$ other vertices.
  We greedily pick from the $m^7$ principal vertices a set of pairwise non-overlapping principal vertices $S$ of size $m^{5}$.
  This is possible since $m \ge 10$ and thus
  \[
      \frac{m^{7}}{m+1} \ge m^{5} = |S|.
  \]

  For $(u,v)\in E(G(B))$ let $q(u,v)$ be the connection point of $(u,v)$. 
  Since the bush is upwards closed and coat hanger-free, by \Cref{lem:coat_hanger_free}, we have that $q(u,v)$ is either equal to $v$ or the parent of~$v$.
  As a direct consequence of \Cref{def:in-out}, we obtain the important observation that
  \begin{equation}\label{eq:pfsab}
      u \in \IN(q(u,v)) \textnormal{ and } v \in \OUT(q(u,v)).
  \end{equation}

  We continue to work in the smaller subdivision spanned by the principal vertices from $S$.
  For a subdivision path $P = (v_1,\ldots,v_l)$, 
  we define $A_P := \{v_2,q(v_2,v_3),v_3,\dots,q(v_{l-2},v_{l-1}),v_{l-1}\}$ to be the vertices on the tunnels connecting the internal nodes of $P$.
  Note that $A_P$ is empty if and only if $P$ has length one, that is, directly connects two principal vertices via an edge.

  We say a principal vertex $u\in S$ and a subdivision path $P$ \emph{overlap} if $A_u \cap A_P \neq \varnothing$.
  Since $u$ is no internal node of $P$, if $v$ and $P$ overlap then $u \neq p(u) = q(v',v)$ for some internal nodes $v,v'$ of~$P$.
  By (\ref{eq:pfsab}) and the definition of $p(u)$, $v \in \OUT(q(v',v)) = \OUT(p(u))$ and $|\OUT(p(u))| \le m$.
  Since $v$ is an internal vertex of $P$ and all subdivision paths are internally vertex disjoint, $u$ overlaps with at most $m$ subdivision paths.
  With $m \ge 10$, the set $\PP_1$ of subdivision paths that run between two vertices from $S$, but overlap with no vertex from $S$ therefore has size at least
  \[
    |\PP_1| \geq \binom{|S|}{2} - |S|\cdot m 
    =
    \frac{m^{5} \cdot  (m^{5} - 1)}{2} - m^{5} \cdot m
    \geq
    m^{9}.
  \]

  We say two subdivision paths $P$ and $P'$ \emph{overlap}, if $A_{P} \cap A_{P'} \neq \varnothing$.
  Let us count how many other subdivision paths $P' = (v'_1,\dots,v'_{l})$ may overlap with $P$.
  We know that $P$ and $P'$ are internally vertex disjoint, and thus the sets $A_{P}$ and $A_{P'}$ intersect at a vertex $q=q(v'_{j-1},v'_j)$.
  We use (\ref{eq:pfsab}) and distinguish two cases:
  \begin{itemize}
  \item 
  $|\OUT(q)| \le m$. Since $v'_{j} \in \OUT(q)$ and all subdivision paths are internally vertex disjoint,
  there are at most $m$ possible choices for $P'$ such that $A_P$ and $A_{P'}$ intersect in $q$.
  \item 
  $|\IN(q)| \le m$. Since $v'_{j-1} \in \IN(q)$ and all subdivision paths are internally vertex disjoint,
  again, there are at most $m$ possible choices for $P'$ such that $A_P$ and $A_{P'}$ intersect in~$q$.
  \end{itemize}

  Since there are at most $2r$ possible choices of $q$,
  in total, there can be at most $2rm$ other paths~$P'$ that overlap with $P$.
  It follows with $m \ge 20r$ that we can greedily pick a maximal subset $\PP_2 \subseteq \PP_1$ of pairwise non-overlapping paths of size at least
  \[
    |\PP_2| 
    \geq 
    \frac{|\PP_1|}{2rm+1}
    \geq
    \frac{m^{9}}{2rm+1}
    \geq
    m^{7}.
  \]

  If $P$ is a subdivision path and $v\in P$ is a start- or endpoint of $P$ we say that $P$ is \emph{incident} to~$v$.
  Let $v'$ be the neighbor of $v$ in $P$.
  Since $v$ is a start- or endpoint, $v'$ is uniquely defined.
  Note that if $P$ contains no inner vertices, that is, $P$ has length one, then $v'$ is the other start- or endpoint of the path.
  We say $P$ \emph{privately connects} to $v$, if $q(v',v) \in A_v$.

  Remember that $q(v',v)$ is either equal to $v$ or the parent $w$ of $v$.
  If $q(v',v) \not\in A_v$ then $q(v',v) = w$ and $p(v)=v$.
  This means by definition of $p(v)$ that $v \not \in \OUT(w)$ or $|\OUT(w)| > m$.
  By (\ref{eq:pfsab}), $v \in \OUT(q(v',v)) = \OUT(w)$ and thus $|\OUT(w)| > m$, which implies $|\IN(w)| \le m$.
  Again by (\ref{eq:pfsab}), $v' \in \IN(q(v',v)) = \IN(w)$.
  Note that the subdivision paths that are incident to $v$ all differ in their neighbor $v'$ of $v$.
  It follows that for every principal vertex $v\in S$, all but at most $m$ of the subdivision paths incident to $v$ privately connect to it.
  With $m \ge 10$, the subset $\PP_3 \subseteq \PP_2$ of subdivision paths that privately connect to both of their endpoints in $S$ therefore has size at least
  \[
    |\PP_3|
    \geq
    m^{7} - |S| \cdot m
    =
    m^{7} - m^{5} \cdot m
    >
    m^{6}.
  \]

  Let us now take a look at the auxiliary graph $H$ whose vertex set is $S$ and where two vertices $u\prec v$ are connected if $P_{uv} \in \PP_3$.
  We will argue that $H$ is an $(r+1)$-shallow minor (but not necessarily a topological minor) of the Gaifman graph of $B$:
  note that the sets $A_v$ and $A_P$ are all pairwise disjoint for all $v \in S$ and $P \in \PP_3$.
  Let $P = P_{uv} \in \PP_3$.
  By our choice of $\PP_3$, if $A_{P} \neq \varnothing$ there are pointers from $A_{P}$ to $A_u$ and from $A_P$ to $A_v$.
  If $A_{P_{uv}} = \varnothing$, then there are pointers between $A_u$ and $A_v$.
  Since $A_{P}$ contains at most $2(r-1)$ vertices and $A_u$ and $A_v$ contain at most two vertices,
  each vertex in $A_{P}$ has distance at most $r+1$ either to $u$ or to $v$ in the Gaifman graph of $B$.
  Thus, the sets $A_v$ and $A_P$ for $v \in S$ and $P \in \PP_3$ together witness that $H$ is an $(r+1)$-shallow minor of the Gaifman graph of $B$.
  The density of $H$ is
  \[
      \frac{|\PP_3|}{|S|} 
    >
    \frac{m^{6}}{m^{5}}
    = 
    m,
  \]
  and thus
  $\nabla_{r+1}(B) > m$.
  By (\ref{ineq:nablawcol}) from \Cref{prop:sparsity},
  \[\wcol_{9r}(B) \ge \wcol_{4r+5}(B) = \wcol_{4(r+1)+1}(B)  \ge \nabla_{r+1}(B) > m.\]

  This violates our assumption and finishes the proof of the lemma. 
\end{proof}

We will use the following tool that lets us build large topological clique minors in sufficiently dense graphs.
\begin{lemma}[{\cite[Lemma 3.15]{dvovrak2007asymptotical}}]\label{lem:dvorak2}
    Let $\rho' \in \N$. There exist $n_0'$ and $\rho''$ such that all
    graphs $G$ on $n \ge n_0'$ vertices with minimum degree at least $n^{1/\rho'}$ contain a $\rho''$-shallow topological clique minor of size at least $n^{1/\rho''}$.
\end{lemma}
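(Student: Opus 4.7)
The plan is to construct the topological clique minor greedily: first select principal vertices that are pairwise far apart, then connect each pair via an internally vertex-disjoint short path found by BFS.

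Let $\delta := n^{1/\rho'}$ be the minimum degree. The key tool is a BFS growth estimate: for any vertex $v$ and any set $F \subseteq V(G)$ of ``forbidden'' vertices with $|F| \le \delta/2$, the BFS from $v$ in $G - F$ expands at each layer by a factor of roughly $\delta/2$, so the ball $N_r^{G-F}[v]$ has at least order $(\delta/2)^r$ vertices, as long as $v$ is not engulfed by $F$. I would choose $\rho''$ as a constant multiple of $\rho'$ large enough that
\[
(\delta/2)^{\rho''} \;=\; \Omega\bigl(n^{\rho''/\rho'}\bigr) \;\gg\; n,
\]
which forces any BFS ball of radius $\rho''$ to fill up most of $G$.

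Next, I would select $k := \lfloor n^{1/\rho''}\rfloor$ principal vertices $v_1,\dots,v_k$ greedily so that their pairwise distances are at least $2\rho''+2$; the fact that each $(2\rho''+1)$-ball covers at most $\delta^{2\rho''+1} = o(n)$ vertices (for appropriately tuned $\rho''$) ensures such a selection exists for $n \ge n_0'$. Then for each pair $(v_i, v_j)$ in turn, I would expand BFS balls simultaneously from $v_i$ and $v_j$, forbidding the internal vertices of the paths already constructed. The total number of previously used internal vertices is at most $\binom{k}{2}(2\rho''+1) = O(k^2 \rho'')$, which, by our choice of parameters, is dominated by the size of each BFS ball of radius $\rho''$. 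Hence the two BFS trees must meet within radius $\rho''$ each, yielding an $(v_i, v_j)$-path of length at most $2\rho''+1$ disjoint from all previously used internal vertices.

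Putting it all together, we obtain $k = n^{1/\rho''}$ principal vertices pairwise connected by internally vertex-disjoint paths of length at most $2\rho''+1$, i.e., a $\rho''$-shallow topological clique minor of size $n^{1/\rho''}$. The main obstacle is balancing three quantities: the number $k$ of principals, the BFS radius $\rho''$, and the number of forbidden vertices $O(k^2 \rho'')$; the delicate point is verifying that the BFS growth survives the removal of all previously used internal vertices, which dictates the choice $\rho'' = \Theta(\rho')$ and the threshold $n_0'$ below which the estimates degrade.
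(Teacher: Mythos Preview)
The paper does not prove this lemma; it is quoted verbatim from Dvo\v{r}\'{a}k's thesis and used as a black box. So there is no ``paper's own proof'' to compare against, and your sketch has to stand on its own.

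Your sketch has an internal contradiction in the parameter tuning. You first choose $\rho''$ large enough that the radius-$\rho''$ BFS ball around any vertex has size $(\delta/2)^{\rho''} \gg n$, i.e.\ fills the whole graph. A few lines later you want to greedily pick $k$ principals at pairwise distance $\ge 2\rho''+2$, arguing that ``each $(2\rho''+1)$-ball covers at most $\delta^{2\rho''+1}=o(n)$ vertices''. These two claims cannot hold simultaneously: if a radius-$\rho''$ ball already contains $\gg n$ vertices, then a radius-$(2\rho''+1)$ ball certainly does not have $o(n)$ vertices. More fundamentally, $\delta$ is the \emph{minimum} degree, so it gives you a \emph{lower} bound on ball growth, not an upper bound; the graph could have diameter $O(\rho')$ (think of a random graph at the right density), in which case no two vertices are at distance $2\rho''+2$ for any $\rho''>\rho'$.

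The fix is to drop the ``far apart'' requirement entirely. Pick any $k=n^{1/\rho''}$ vertices as principals, and when connecting $v_i$ to $v_j$ simply add the other principals to the forbidden set along with the previously used internal vertices. The total number of forbidden vertices is still $O(k^2\rho'')=O(n^{2/\rho''}\rho'')$, which for $\rho''>2\rho'$ is $o(\delta)$, so the BFS expansion argument survives and the two balls of radius $\rho''$ must meet. This is essentially the Koml\'os--Szemer\'edi style argument that underlies Dvo\v{r}\'{a}k's lemma; your sketch has the right ingredients but the spurious distance constraint makes it break.
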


We will also need the following transitivity observation about shallow topological minors.

\begin{lemma}[See Proposition 4.2 of \cite{nevsetvril2012sparsity}]\label{obs:shallowMinorTransitive}
    If a graph $A$ contains a graph $B$ as $b$-shallow topological minor and $B$ itself contains a graph $C$ as a $c$-shallow topological minor,
    then $A$ contains~$C$ as $4bc$-shallow topological minor.
\end{lemma}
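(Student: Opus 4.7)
The plan is to compose the two topological embeddings in the natural way. Let $p_1\colon V(B)\to V(A)$ together with internally vertex disjoint paths $\{P_1(u,v)\}_{\{u,v\}\in E(B)}$ of length at most $2b+1$ witness that $A$ contains $B$ as a $b$-shallow topological minor; similarly, let $p_2\colon V(C)\to V(B)$ with internally vertex disjoint paths $\{P_2(x,y)\}_{\{x,y\}\in E(C)}$ of length at most $2c+1$ witness $B$'s $c$-shallow topological embedding of $C$. I define principal vertices $p(x):=p_1(p_2(x))$ for $x\in V(C)$, and for each edge $\{x,y\}\in E(C)$, writing $P_2(x,y)=(v_0,v_1,\dots,v_k)$ with $k\le 2c+1$, I form the path $P(x,y)$ in $A$ as the concatenation of the expansions $P_1(v_{i-1},v_i)$ for $i=1,\dots,k$.

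The length bound is routine: $P(x,y)$ has length at most $(2b+1)(2c+1)=4bc+2b+2c+1\le 8bc+1=2(4bc)+1$, where the inequality $2b+2c\le 4bc$ holds for $b,c\ge 1$. So each constructed path has length at most $2(4bc)+1$, as required for a $(4bc)$-shallow topological minor.

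The main step is verifying internal-vertex-disjointness across the family $\{P(x,y)\}_{\{x,y\}\in E(C)}$. I partition the internal vertices of each $P(x,y)$ into two kinds: \emph{expansion-internal} vertices, meaning internal vertices of some constituent $P_1(v_{i-1},v_i)$, and \emph{junction} vertices $p_1(v_i)$ for $1\le i\le k-1$. For two distinct edges of $C$, a short case analysis settles disjointness. Two expansion-internal vertices lie in distinct $P_1$'s and are thus disjoint by hypothesis, once one notes the useful fact that two internally vertex disjoint paths in $B$ cannot reuse an edge (if they shared an edge, its far endpoint would be interior to both, contradicting disjointness). An expansion-internal vertex of one path and a junction $p_1(v_i)$ of another are disjoint because junctions are principal vertices of the embedding of $B$ in $A$, which are disjoint from the subdivision interiors of all $P_1$'s by definition of a topological minor. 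Finally, two junctions $p_1(v_i)=p_1(v'_j)$ coincide only if $v_i=v'_j$, but these are interior to two distinct paths $P_2(x,y)$ and $P_2(x',y')$ in $B$, contradicting internal disjointness there. A similar check confirms that principal vertices $p(x)$ are not swallowed as interior points of any $P(x',y')$.

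The only real obstacle is this bookkeeping of disjointness, which is driven entirely by the simple observation that two internally vertex disjoint paths never share an edge. Everything else is immediate from the definitions of $b$-shallow and $c$-shallow topological minors.
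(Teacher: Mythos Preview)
The paper does not give its own proof of this lemma; it is stated with a pointer to Proposition~4.2 of the Ne\v{s}et\v{r}il--Ossona de Mendez sparsity book and used as a black box. Your argument is the standard composition of the two subdivisions and is correct: the length bound $(2b+1)(2c+1)\le 2(4bc)+1$ holds for $b,c\ge 1$ (which is the paper's convention $\N=\{1,2,\dots\}$), and your case analysis for internal-vertex-disjointness goes through once one uses that principal vertices are never interior to any subdivision path---a property implicit in the notion of topological minor even though the paper's definition does not spell it out.
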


Using these two lemmas, we lift \Cref{lem:noLargeMinorsM} towards thresholds that are polynomial in the number of leaves.

\begin{lemma}\label{lem:noLargeMinorsWCol}
    For every $\rho \in \N$ there exists $\mu, n_0 \in \N$
    such that for every $n$-leaf upwards closed, coat hanger free quasi-bush~$B$ with $n \ge n_0$, if
    \begin{itemize}
    \item $\wcol_\mu(B) \leq n^{1/\mu}$, and
    \item for every $w\in V(T)$ either $\IN(w) \leq n^{1/\mu}$ or $\OUT(w) \leq n^{1/\mu}$,
    \end{itemize}
    then $\wcol_\rho(G(B)) \leq n^{1/\rho}$.
\end{lemma}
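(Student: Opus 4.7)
The plan is to argue by contradiction. Fix $\rho \in \N$, and suppose that, for a sufficiently large $\mu$ to be chosen below and all sufficiently large $n \ge n_0$, a quasi-bush $B$ satisfies both hypotheses but $\wcol_\rho(G(B)) > n^{1/\rho}$. I will extract a shallow topological clique minor in $G(B)$ large enough to violate Lemma~\ref{lem:noLargeMinorsM}.

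First, I translate the lower bound on $\wcol_\rho(G(B))$ into a lower bound on $\tilde\nabla_{\rho-1}(G(B))$ using Proposition~\ref{prop:sparsity}. Chaining (\ref{ineq:wcoladm}) and (\ref{ineq:admnabla}),
\[
n^{1/\rho} < \wcol_\rho(G(B)) \le 1 + \rho \bigl(6\rho \lceil \tilde\nabla_{\rho-1}(G(B))\rceil^3\bigr)^{\rho^2},
\]
so for $n$ large enough there is a constant $c_\rho > 0$ depending only on $\rho$ with $\tilde\nabla_{\rho-1}(G(B)) \ge c_\rho \cdot n^{1/(3\rho^3)}$. Hence there is an $(\rho-1)$-shallow topological minor $H$ of $G(B)$ with average degree at least $2 c_\rho \cdot n^{1/(3\rho^3)}$, and Lemma~\ref{lem:mindeg} yields a subgraph $H'$ of $H$ on some $n' \le n$ vertices with minimum degree at least $c_\rho \cdot n^{1/(3\rho^3)}$. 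In particular $n' \ge c_\rho \cdot n^{1/(3\rho^3)} + 1$, which is polynomially large in $n$.

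Next, I apply Lemma~\ref{lem:dvorak2} with $\rho' := 3\rho^3 + 1$. For $n \ge n_0$ large enough, $n' \ge n_0'$ and the minimum degree of $H'$ is at least $c_\rho \cdot n^{1/(3\rho^3)} \ge n^{1/\rho'} \ge n'^{1/\rho'}$ since $n' \le n$. This produces a constant $\rho''$ depending only on $\rho$ and a $\rho''$-shallow topological clique minor of $H'$ of size at least $n'^{1/\rho''} \ge c_\rho^{1/\rho''} \cdot n^{1/(3\rho^3 \rho'')}$. Since $H'$ is a subgraph of the $(\rho-1)$-shallow topological minor $H$ of $G(B)$, Lemma~\ref{obs:shallowMinorTransitive} produces a $4(\rho-1)\rho''$-shallow topological clique minor of $G(B)$ of size at least $c_\rho^{1/\rho''} \cdot n^{1/(3\rho^3 \rho'')}$.

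Setting $r := 4(\rho-1)\rho''$ and $m := n^{1/\mu}$, I invoke Lemma~\ref{lem:noLargeMinorsM}. Provided $\mu \ge 9r$, the bound $\wcol_{9r}(B,\prec) \le m$ follows from $\wcol_\mu(B) \le n^{1/\mu}$ for a witnessing order $\prec$ by monotonicity of $\wcol$ in the radius; the second hypothesis is exactly that of the current lemma; and $m \ge 20r$ for $n \ge n_0$. Hence $G(B)$ has no $r$-shallow topological clique minor of size $m^7 = n^{7/\mu}$. Choosing $\mu$ so that $7/\mu < 1/(3\rho^3 \rho'')$ and $\mu \ge 9r$ -- for instance $\mu := \max\{36(\rho-1)\rho'',\, 22 \rho^3 \rho''\}$ -- and $n_0$ large enough to absorb the constant $c_\rho$ and all earlier ``$n$ large enough'' assumptions, we get $c_\rho^{1/\rho''} \cdot n^{1/(3\rho^3 \rho'')} > m^7$, contradicting Lemma~\ref{lem:noLargeMinorsM}. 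The main obstacle is purely bookkeeping: ensuring that $n'$ is large enough for Lemma~\ref{lem:dvorak2} to apply and that each polynomial comparison between powers of $n$ really goes through once $n_0$ is chosen, but this is routine.
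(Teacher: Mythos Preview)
Your argument is correct and follows exactly the paper's route: convert the lower bound on $\wcol_\rho(G(B))$ into a dense shallow topological minor via Proposition~\ref{prop:sparsity} and Lemma~\ref{lem:mindeg}, apply Lemma~\ref{lem:dvorak2} to extract a shallow topological clique, lift it back into $G(B)$ by transitivity, and contradict Lemma~\ref{lem:noLargeMinorsM} for a suitable choice of $\mu$. Your tracking of $n'$ versus $n$ is in fact slightly more careful than the paper's.

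One small bookkeeping caveat: for $\rho = 1$ your formula $r = 4(\rho-1)\rho''$ gives $r = 0 \notin \N^+$, which Lemma~\ref{lem:noLargeMinorsM} does not admit (and the transitivity bound $4bc$ in Lemma~\ref{obs:shallowMinorTransitive} is not valid at $b=0$ anyway). The paper sidesteps this by passing from $\tilde\nabla_{\rho-1}$ to the weaker $\tilde\nabla_\rho$ and setting $r = 4\rho\rho'' \ge 4$; you could equally well just take $r := \max\{4(\rho-1)\rho'',\rho''\}$, since for $\rho = 1$ your $H'$ is already a subgraph of $G(B)$ and the clique is directly a $\rho''$-shallow topological minor. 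Also, strictly speaking $m$ must be a positive integer in Lemma~\ref{lem:noLargeMinorsM}, so take $m := \lfloor n^{1/\mu} \rfloor$ as the paper does.
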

\begin{proof}
We fix $\rho \in \N$.
We will start with $n_0=1$ and increase it over the course of this proof as needed.
The constant $\mu \in \N$ will be chosen later on.
Let us pick a quasi-bush $B$ with $n \ge n_0$ leaves satisfying the prerequisites of the lemma.
Let $G = G(B)$ be the represented graph, which has~$n$ vertices.
Assume towards a contradiction that $\wcol_\rho(G) > n^{1/\rho}$.
Combining the inequalities (\ref{ineq:wcoladm}) and (\ref{ineq:admnabla}) of \Cref{prop:sparsity},
yields
\[
n^{1/\rho} < \wcol_\rho(G) \le 1+ \rho(6\rho)^{\rho^2} \bigl( \lceil \tilde\nabla_\rho(G) \rceil \bigr) ^{3\rho^2}.
\]

We can therefore choose 
$\rho'$ and $n_0$ (as a function of $\rho$) such that with $n \ge n_0$,
$\tilde\nabla_{\rho}(G)\ge n^{1/\rho'}$.
In other words, $G$ contains a $\rho$-shallow topological minor with at least $n^{1/\rho'}$ edges per vertex, and thus with average degree at least $2n^{1/\rho'}$.
By \Cref{lem:mindeg}, there exists a subgraph $H$ of this $\rho$-shallow topological minor with minimum degree at least $n^{1/\rho'}$.
We next want to apply \Cref{lem:dvorak2} to $H$.
Choose $\rho''$ and $n_0'$ as a function of $\rho'$ according to this lemma.
We update $n_0$ such that $n_0^{1/\rho'} \ge n_0'$. Since $H$ has at least $n_0^{1/\rho'} \ge n_0'$ vertices, the prerequisites of \Cref{lem:dvorak2} are met.
Thus, $H$ contains a $\rho''$-shallow topological clique minor of size at least $n^{1/\rho''}$.
Since $H$ itself is a $\rho$-shallow topological minor $G$ and by transitivity of the shallow topological minor relation (\Cref{obs:shallowMinorTransitive}),
$G$ contains an $4\rho\rho''$-shallow topological clique minor of size $n^{1/\rho''}$.

We choose $r:=4\rho\rho''$, $\mu := 9r$ and $m := \lfloor n^{1/\mu} \rfloor$.
Thus, $G$ contains a $r$-shallow topological clique minor of size at least $n^{1/\rho''} \ge n^{1/r} >  n^{7/9r} = n^{7/\mu} \ge m^{7}$.
By our assumptions, for every $w\in V(T)$ either $\IN(w) = \lfloor \IN(w) \rfloor \le \lfloor n^{1/\mu} \rfloor = m$ or $\OUT(w) = \lfloor \OUT(w) \rfloor \leq\lfloor n^{1/\mu}\rfloor = m$.
We increment $n_0$ such that $m = \lfloor n^{1/9r} \rfloor \ge \lfloor n_0^{1/9r} \rfloor \ge 20r$.
Since all the prerequisites of \Cref{lem:noLargeMinorsM} are met,
$\wcol_{9r}(B) > m = \lfloor n^{1/\mu} \rfloor$.
Since $\wcol_{9r}(B)$ is an integer, also $\wcol_{9r}(B) > n^{1/\mu}$.
This contradiction to our assumption finishes the proof of the lemma. 
\end{proof}

\subsection{Wrapping Up}

We are ready to prove \Cref{thm:snd-wcol}, which we restate for convenience. 

\sndwcol*

\begin{proof}
    We first show that for every $\rho \in \N$ there exists $n_0(\rho)$ such that for all $G \in \Cc$ with $|G| \ge n_0(\rho)$
    we have 
    $\wcol_\rho(G(B)) \le |G|^{1/\rho}$,
    where $B$ is the upwards closed, coat hanger-free quasi-bush representing an $8$-contraction of $G$ that we obtain via \Cref{lem:skeleton}.
    Fix $\rho \in \N$.
    Let $\mu$ and $n_0$ (as a function of $\rho$) be the corresponding constants from \Cref{lem:noLargeMinorsWCol}.
    As we obtained $B$ via \Cref{lem:skeleton},
    \begin{itemize}
        \item $\wcol_\mu(B) \le c(\mu,1/2\mu) \cdot |G|^{1/2\mu}$,
        \item for every node $w$ of $B$ either $\IN(w) \leq t(1/2\mu) \cdot  |G|^{1/2\mu}$ or $\OUT(w) \leq t(1/2\mu) \cdot |G|^{1/2\mu}$.
    \end{itemize}
    We can increase $n_0$ (as a function of $c(\mu,1/2\mu)$ and $t(1/2\mu)$) such that for $|G| \ge n_0$
    \begin{itemize}
        \item $\wcol_\mu(B) \le n^{1/\mu}$,
        \item for every node $w$ of $B$ either $\IN(w) \leq n^{1/\mu}$ or $\OUT(w) \le n^{1/\mu}$.
    \end{itemize}
    Since $B$ is upwards closed and contains no coat hangers, 
    we have by \Cref{lem:noLargeMinorsWCol} for $|G| \ge n_0$,
    \[
    \wcol_{\rho}(G(B)) \le |G|^{1/\rho}.
    \]
    
    Now for $r\in \N$ and $\epsilon>0$ let $\rho=\max\{r,\epsilon^{-1}\}$ and $c(r,\epsilon)=n_0(\rho)$. 
    Then for all graphs $G\in \Cc$ with $|G| \ge n_0(\rho)$,
    \[
    \wcol_r(G(B))\leq \wcol_{\rho}(G(B)) \le |G|^{1/\rho} \leq |G|^{\epsilon} \leq c(r,\epsilon)\cdot |G|^{1/\epsilon}.
    \]
    For all graphs $G\in \Cc$ with $|G| < n_0(\rho)$, we trivially obtain the same bound
    \[
    \wcol_r(G(B))\leq |G(B)| \le |G| < n_0(\rho) = c(r,\epsilon) \le c(r,\epsilon) \cdot |G|^{1/\epsilon}.\qedhere
    \]
\end{proof}

\Cref{lem:wcolCover} states that graphs with small coloring numbers admit neighborhood covers with low degree.
Combining \Cref{thm:snd-wcol} and \Cref{lem:wcolCover} therefore directly gives us the following.

\begin{lemma}\label{lem:cover8contraction}
    Let $\Cc$ be a structurally nowhere dense class of graphs. For every $r \in \N$ and $\varepsilon > 0$ there exists $c(r,\epsilon)$ such that for every $G\in \Cc$, there exists an 8-contraction $G'$ of $G$ that
    admits an $r$-neighborhood cover with degree at most $c(r,\epsilon)\cdot |G|^\varepsilon$ and spread at most $2r$.
\end{lemma}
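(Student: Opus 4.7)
The plan is to combine \Cref{thm:snd-wcol} with \Cref{lem:wcolCover} in a straightforward manner. The paragraph immediately preceding the lemma already announces this as the intended route, so the task reduces to spelling out which parameters get plugged into which result.

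First, I would invoke \Cref{thm:snd-wcol} at radius $2r$ (rather than $r$) with the given $\varepsilon > 0$. This provides, for every $G \in \Cc$, an $8$-contraction $G' := \textnormal{contract}(G)$ of $G$ together with a function $c'(2r,\varepsilon)$ such that $\wcol_{2r}(G') \le c'(2r,\varepsilon) \cdot |G|^{\varepsilon}$. Next, I would feed $G'$ into \Cref{lem:wcolCover} with parameter $r$, obtaining an $r$-neighborhood cover of $G'$ with spread at most $2r$ and degree at most $\wcol_{2r}(G')$. Chaining the two bounds yields a cover of $G'$ with spread at most $2r$ and degree at most $c'(2r,\varepsilon)\cdot |G|^{\varepsilon}$, and setting $c(r,\varepsilon) := c'(2r,\varepsilon)$ finishes the argument.

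I do not anticipate any serious obstacle, since this is a direct corollary of two already-established results. The only minor subtlety worth noting is that the degree bound is measured in terms of $|G|$ (the original graph) rather than $|G'|$; but this is exactly the formulation delivered by \Cref{thm:snd-wcol} and the one the conclusion of \Cref{lem:cover8contraction} asks for, so no extra conversion (beyond the trivial observation $|G'|\le |G|$) is required.
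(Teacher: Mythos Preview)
Your proposal is correct and matches the paper's approach exactly: the paper states the lemma as a direct combination of \Cref{thm:snd-wcol} and \Cref{lem:wcolCover}, and your instantiation (applying \Cref{thm:snd-wcol} at radius $2r$, then \Cref{lem:wcolCover} at radius $r$) is precisely the intended argument.
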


We now immediately obtain our main result with the help of \Cref{lem:contractionsHelpWithCovers}.
Flip-closed sparse neighborhood covers (with \(\sigma(r)\) universally set to \(34r\)) follow by observing that each flipped class \(\CC_\ell\) is again structurally nowhere dense, and we may therefore apply the first part of \Cref{thm:sndcovers} to it.

\sndcovers*

\bibliographystyle{alpha}
\bibliography{references}

\end{document}